\documentclass[11pt]{article}
\pdfoutput=1

\usepackage[utf8]{inputenc}
\usepackage{multirow}
\usepackage{amsmath, amsfonts, amssymb}
\usepackage{comment}
\usepackage{graphicx}
\usepackage{float}
\usepackage{psfrag}
\usepackage{amsthm}
\usepackage{transparent}
\usepackage{caption}
\usepackage{mathtools}
\usepackage{diagbox}
\usepackage{subcaption}
\usepackage{empheq}
\usepackage{lscape}
\usepackage{faktor}

\usepackage{import}
\usepackage{mathrsfs}
\usepackage{bm}
\usepackage[dvipsnames,svgnames,table]{xcolor}
\usepackage{enumerate}
\usepackage{arydshln}
\usepackage{soul}
 \usepackage{slashed}
 \usepackage{adjustbox}
\usepackage{mathrsfs}
\usepackage{mathalfa}
\usepackage{makecell}
\usepackage{longtable}
\allowdisplaybreaks
 \usepackage{a4wide}
  \usepackage{tikz}
  \usepackage{tikz-cd}
  \usetikzlibrary{shapes.geometric,arrows}
  \usepackage[most]{tcolorbox}
\tikzstyle{startstop} = [rectangle, rounded corners, minimum width=3cm, minimum height=1cm,text centered, draw=black, fill=red!30]
\tikzstyle{io} = [trapezium, trapezium left angle=70, trapezium right angle=110, minimum width=3cm, minimum height=1cm, text centered, draw=black, fill=blue!30]
\tikzstyle{process} = [rectangle, minimum width=3cm, minimum height=1cm, text centered, draw=black, fill=orange!30]
\tikzstyle{decision} = [diamond, minimum width=3cm, minimum height=1cm, text centered, draw=black, fill=green!30]
\tikzstyle{arrow} = [thick,->,>=stealth]
\tikzcdset{row sep/normal=0.75cm}

\usepackage[labelformat=simple]{subcaption}    
  \usepackage{color}
  \definecolor{dark-gray}{gray}{0.20}
  \definecolor{gray}{gray}{0.30}
  \definecolor{light-gray}{gray}{0.80}
  \definecolor{dark-red}{rgb}{0.7,0,0}
  \definecolor{dark-green}{rgb}{0,0.5,0}
  \definecolor{dark-blue}{rgb}{0.3,0.3,0.7}
  \definecolor{light-blue}{rgb}{0.8,0.8,1}
  \definecolor{swamp}{RGB}{240, 199, 197}
  \definecolor{myred}{RGB}{242,0,0}
  \definecolor{myyellow}{RGB}{255,171,0}
  \definecolor{myblue}{RGB}{13,117,255}
    \definecolor{mygreen}{RGB}{0,128,102}
  \definecolor{mypurple}{RGB}{156,82,242}

  \usepackage{pifont}

\usepackage{newunicodechar} 

\usepackage{setspace}

\newcommand{\be}{\begin{equation}}
\newcommand{\ee}{\end{equation}}

\def\be{\begin{equation}}
\def\ee{\end{equation}}
\def\bea{\begin{eqnarray}}
\def\eea{\end{eqnarray}}

\newcommand{\beq}{\begin{equation}}  \newcommand{\eeq}{\end{equation}}
\newcommand{\bal}{\begin{aligned}}   \newcommand{\eal}{\end{aligned}}
\def\beqa{\begin{eqnarray}}
\def\eeqa{\end{eqnarray}}

\newcommand{\dd}{\mathrm{d}}

\newcommand{\I}{\text{Im}\,}

\DeclareMathOperator{\rank}{rank}

\usepackage{stackengine}
\usepackage{calc}
\newlength\shlength
\newcommand\vv[2][0]{\setlength\shlength{#1pt}%
  \stackengine{-5.6pt}{$#2$}{\smash{$\kern\shlength%
    \stackengine{7.55pt}{$\mathchar"017E$}%
      {\rule{\widthof{$#2$}}{.57pt}\kern.4pt}{O}{r}{F}{F}{L}\kern-\shlength$}}%
      {O}{c}{F}{T}{S}}

\newcommand{\C}{\mathbb{C}}

\newcommand{\abs}[1]{\lvert #1 \rvert}

\def\simleq{\; \raise0.3ex\hbox{$<$\kern-0.75em
      \raise-1.1ex\hbox{$\sim$}}\; }
   \def\simgeq{\; \raise0.3ex\hbox{$>$\kern-0.75em
      \raise-1.1ex\hbox{$\sim$}}\; }

\numberwithin{equation}{section}

\usepackage{jheppub}
\usepackage{hyperref}
\usepackage{cleveref}

\hypersetup{
	colorlinks=true,
	linkcolor=dark-blue,
	citecolor=dark-red,
	urlcolor=dark-green,
	linktoc=page
}
\newtheorem{thm}[equation]{Theorem}

\newtheorem{cor}[equation]{Corollary}
\theoremstyle{remark}

\crefname{appendix}{Appendix}{Appendices}

\title{\centering Bordisms between 9d type IIB supergravities and commutator widths of duality groups}

\author[a]{Camilo las Heras}

\affiliation[a]{Instituto de F\'{i}sica y Astronom\'{i}a, Universidad de Valpara\'{i}so, A. Gran Bretaña 1111, Valpara\'{i}so, Chile}
\emailAdd{camilo.lasheras@uv.cl}

\author[b]{and Ignacio Ruiz}

\affiliation[b]{Theoretical Physics Department, CERN, 1211 Geneva 23, Switzerland}
\emailAdd{ignacio.ruiz.garcia@cern.ch}

\preprint{CERN-TH-2026-107}

\abstract{
We study the topological properties of bordisms interpolating between different 9d gauged supergravities obtained from compactification of type IIB string theory on $\mathbb{S}^1$ with a non-trivial $\mathsf{SL}(2,\mathbb{Z})$ bundle. We describe how such bordisms implement the needed monodromies through stacks of $[p,q]$ 7-branes or gravitational solitons of non-trivial topology. For the later mechanism, we see that the topology of the bordism becomes increasingly complicated for large monodromies, which results in the associated bordisms being arbitrarily suppressed, against expectations on the breaking of global symmetries in Quantum Gravity. Motivated by this, we propose a refinement of the Swampland Cobordism Conjecture for the first bordism group $\Omega_1({\rm B}G)$ with a $G$ duality bundle. We argue that even if gravitational solitons can realize the monodromies associated with elements of the commutator subgroup of $G$, if the number of needed commutators is unbounded (in other words, the \textit{commutator width} of $G$ diverges) then an infinite number of duality defects realizing elements in $G$ need to be included. We test this proposal for different duality groups $G$, and see that our expectations are realized, often in non-trivial ways.
}

\begin{document}
\hypersetup{pageanchor=false}
\makeatletter
\let\old@fpheader\@fpheader

\makeatother
\maketitle

\newcommand{\remove}[1]{\textcolor{red}{\sout{#1}}}

\pagenumbering{roman}
\newpage
\pagenumbering{arabic}
\setcounter{page}{1} 
\section{Introduction and motivation}
\label{sec:intro}

Consistency of the bosonic worldsheet quantization result in a single critical bosonic string theory in $D=26$ dimensions. Analogous considerations for the supersymetric worldsheet in $D=10$ result in two $\mathcal{N}=2$ theories (type IIA and type IIB), three $\mathcal{N}=1$ (type I and heterotic $\mathsf{SO}(32)$ and $\mathsf{E}_8\times\mathsf{E}_8$) as well as five non-tachyonic $\mathcal{N}=0$ theories ($\mathsf{USp}(32)$ or Sagnotti string, $\mathsf{U}(32)$ or type O$'$B and $\mathsf{SO}(16)^2$ heterotic), and nine additional tachyonic theories (type 0B, type 0A, an orientifold of the later and six additional heterotic theories). Additionally, there is a unique 11d $\mathcal{N}=1$ Supergravity, which is expected to describe the low-energy dynamics of M-theory. This seems to point to a quite reduced set of 10d theories, and while naively one would expect them to form disconnected components in the QG landscape (as opposed to the unique bosonic string theory), there exists a rich web of dualities connecting the different theories \cite{Font:1990gx,Sen:1994fa,Sen:1994wr,Hull:1994ys,Witten:1995ex,Townsend:1995kk,Horava:1995qa,Horava:1996ma,Sen:1998kd}.\footnote{See \cite{Bergman:1997rf,Blum:1997cs,Blum:1997gw,Kachru:1998yy,Blumenhagen:1999ad,Bergman:1999km,Fabinger:2000jd,Dudas:2001wd,Bossard:2024mls,Fraiman:2025yrx,Baykara:2026gem,Altavista:2026evd,Baykara:2026vdc} for additional efforts trying to connect some of the tachyonic non-supersymetric theories to this duality web.}

However, this small number of theories in $D=10$ grows spectacularly once we consider the effective field theories resulting from dimensionally reducing them along compact manifolds $X_n$ down to $d=10-n$. Not only does the number of available manifolds dramatically increase with $n$, but also the possibility of adding brane arrangements and fluxes along these compact dimensions results in an astoundingly large \emph{Landscape} of potential theories \cite{Susskind:2003kw,Douglas:2003um,Taylor:2015xtz,Baykara:2025nnc}. As we will review in this work, already in $d=9$ (down to which we can only compactify the 10d theory on $\mathbb S^1$ or $\mathbb S^1/\mathbb{Z}_2$) there is a phenomenologically rich set of theories that are built exploiting the $\mathsf{SL}(2,\mathbb{Z})$ self-duality group of 10d Type IIB string theory.\\

This exceptionally large number of EFTs derived from higher-dimensional String/M-theory led to \emph{Landscape problem}, used by critics of String Theory to argue that any apparently consistent low-energy theory could probably be obtained from some string compactification, what would result in string theory not having any predictive power on the real world. However, as pointed out in \cite{Vafa:2005ui}, this is not the case at all! As it turns out, the set of EFTs weakly coupled to Einstein gravity with a QG completion (i.e., the \emph{Landscape}) is much smaller than those not admitting such completion (and said to be in the \emph{Swampland}). Determining the precise set of constrains a QG completion (be it String/M-theory or some other hypothetical theory) is the goal of the \emph{Swampland Program} \cite{Vafa:2005ui,Brennan:2017rbf,Palti:2019pca,vanBeest:2021lhn,Grana:2021zvf,Harlow:2022ich,Agmon:2022thq,VanRiet:2023pnx}. Said restrictions that are expected to be satisfied by a consistent theory of gravity are known as \emph{Swampland constrains}, forming a wide net of interconnected conjectures with various levels of concreteness and phenomenological implications.

One of the most established Swampland conjectures is the \emph{No Global Symmetries conjecture} \cite{Banks:2010zn} (see \cite{Hawking:1975vcx,Zeldovich:1976vq,Zeldovich:1977be,Banks:1988yz,Kallosh:1995hi} for earlier historical developments), which postulates that a consistent theory of Quantum Gravity cannot have global symmetries, as these must be either broken or gauged. This conjecture has been extensively tested (and in some particular settings, even proven \cite{Polchinski:1998rr,Harlow:2018fse,McNamara:2019rup}), and is probably the Swampland conjecture that enjoys the largest support and evidence. One of said symmetries is known as \emph{bordism class}. One expected feature in Quantum Gravity is spacetime topology change \cite{Giddings:1987cg,Greene:1990ud,Aspinwall:1993yb,Witten:1993yc,Giveon:1994fu,Strominger:1995cz,Greene:1995hu,Witten:1998zw,Atiyah:2000zz,Greene:2000yb,Acharya:2004qe,Garcia-Etxebarria:2018ajm,Demulder:2023vlo}, such as that of the compact $n$-manifold $\mathcal{X}_n$ on which we compactify. The mathematical objects that tell us whether two $n$-manifolds $\mathcal{X}_n$ and $\mathcal{Y}_n$ can be connected to one another through a series of dynamically allowed topology changes are known as \emph{cobordism classes} $\Omega_n^{\mathsf{g}}({\rm B}G)$,\footnote{Here $\mathsf{g}$ refers to a given structure on the tangent bundle, and ${\rm B}G$ denotes the \emph{classifying space} \cite[Section 2.3]{HatcherTOPOLOGY} of the group $G$, which for discrete groups simply corresponds to a path-connected topological space with $\pi_1({\rm B}G)\simeq G$ and trivial higher homotopy groups. If such group is trivial then ${\rm B}G={\rm pt}$.} which consists in the set of equivalence classes defined by those $n$-dimensional $\mathsf{g}$-manifolds $\mathcal{X}_n$ and $\mathcal{Y}_n$ with a (possibly non-trivial) $G$-bundle for which there exists a $n+1$-dimensional $\mathsf{g}$-manifold $\mathcal{B}_{n+1}$ equipped with a $G$-bundle with $\partial\mathcal{B}_{n+1}=\mathcal{X}_n\sqcup\overline{\mathcal{Y}}_n$. The $\mathsf{g}$-structure and $G$-bundle of $\mathcal{B}_{n+1}$ must be such that, restricted to its boundary, recovers those of $\mathcal{X}_n$ and $\mathcal{Y}_n$. As depicted in Figure \ref{fig.cobord}, from the lower $d=D-n$ dimensional EFT point of view, moving across the bordism is equivalent to crossing a codimension-one domain wall connecting two different $d$-dimensional theories (or nothing, if $\mathcal{Y}_n$ is the empty set). Since we can understand the different classes of $\Omega_n^{\mathsf{g}}({\rm B}G)$ as a global charge (as this could not change under allowed topology changes), which would be in tension with the \emph{No Global Symmetries conjecture} mentioned above, the \emph{Swampland Cobordism Conjecture} proposes

\begin{tcolorbox}[enhanced jigsaw,breakable,pad at break*=1mm,colback=green!5!white,colframe=Green,title={Swampland Cobordism Conjecture \cite{McNamara:2019rup}}]
For any $D$-dimensional consistent theory of Quantum Gravity compactified on a $n$-dimensional internal manifold, all cobordism classes must vanish,
\begin{equation}\label{eq.cobord}
    \Omega^{\rm QG}_n=0\quad\forall\,n\,,
\end{equation}
as otherwise this can be associated with a $(D-n-1)$-global symmetry given by the bordism class in $\Omega^{\rm QG}_n$. This means that every compactification of any consistent theory of gravity should be able to be dynamically connected to \emph{nothing}, i.e., the trivial theory without any fields or spacetime.\\
In general, since it is not known what the full non-perturbative structure of QG is, the procedure is to approximate such structure by some $\mathsf{g}$, in such a a way that if $\Omega_n^{\mathsf{g}}\neq 0$, we either need to refine the spacetime structure, so that $\Omega_n^{\mathsf{g}'}=0$, or include the needed defect that kill said cobordism classes, $\Omega_n^{\mathsf{g}+\rm defects}=0$.
\end{tcolorbox}
There has been extensive work in trying to compute the cobordism groups for different known theories derived from String theory and M-theory, which in turn has resulted in deeper insights about the properties of these theories, as well as in the prediction of new (often non-supersymmetric) defects \cite{McNamara:2019rup,Montero:2020icj,Dierigl:2022reg,Kaidi:2023tqo,Kaidi:2024cbx,Fukuda:2024pvu,Heckman:2025wqd}.

    \begin{figure}[ht]
\begin{center}
\begin{subfigure}[b]{0.55\textwidth}
\center
\includegraphics[width=0.90\textwidth]{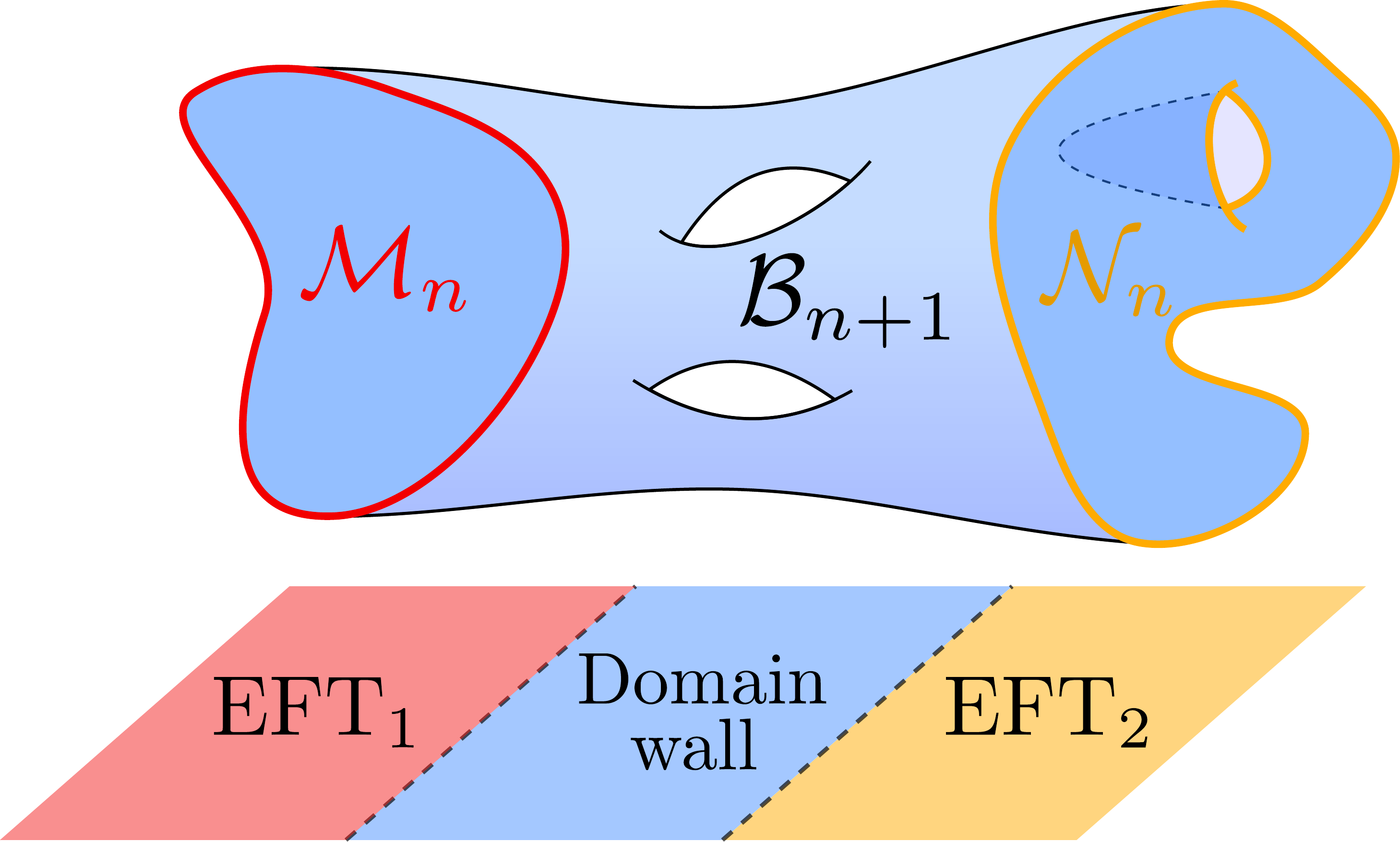}
\caption{\hspace{-0.3em}) Cobordism between two manifolds.} \label{ff.cob1}
\end{subfigure}
\begin{subfigure}[b]{0.44\textwidth}
\center
\includegraphics[width=0.87\textwidth]{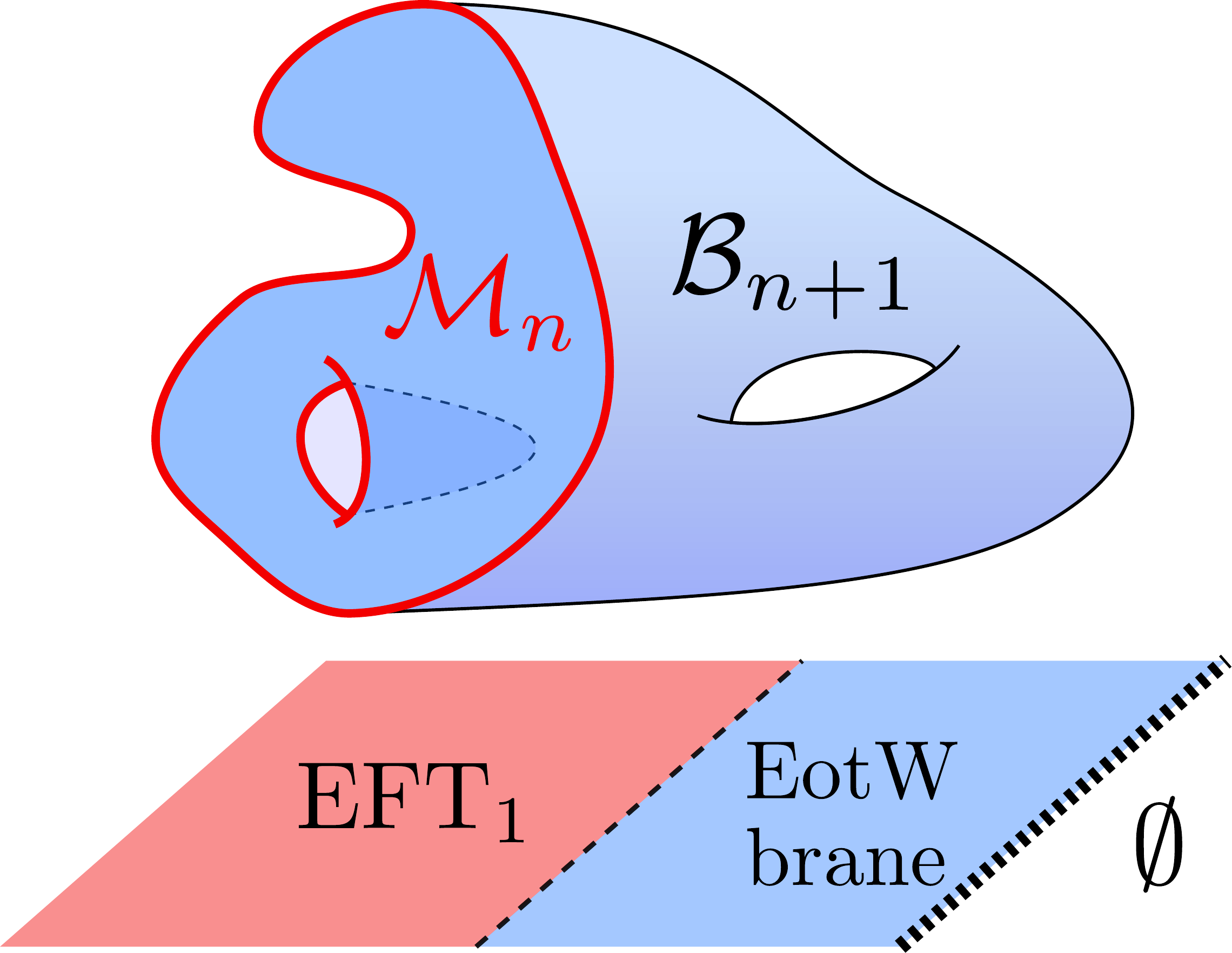}
\caption{\hspace{-0.3em}) Cobordism from a manifold to nothing.} \label{ff.cob2}
\end{subfigure}
\caption{Illustration of bordisms between two $n$-manifolds. In subfigure \ref{ff.cob1} the boundary of the $\mathcal{B}_{n+1}$ corresponds to the manifold $\mathcal{M}_n$ and $\mathcal{N}_n$ (with opposite orientations), and traversing $\mathcal{B}_{n+1}$ be seen as crossing a domain wall between the two EFTs, each obtained from the compactification of the higher-dimensional theory respectively on  $\mathcal{M}_n$ or $\mathcal{N}_n$. In subfigure \ref{ff.cob2} the bordism has a single boundary component, $\mathcal{M}_n=\partial\mathcal{B}_{n+1}$, so that it can be understood as connecting the given EFT to nothing. From the lower-dimensional point of view this is seen as a boundary of spacetime, in what it is known as \emph{End of the World brane}, see e.g., \cite{Angius:2022aeq}.
\label{fig.cobord}}
\end{center}
\end{figure}

However, one crucial aspect of the Swampland Cobordism Conjecture is that is \emph{not constructive}, this is, it only tells us that there should be some configuration interpolating between any pair of two $d$-dimensional theories and/or nothing. While we know that the appropriate $\mathsf{g}$-structure or $G$ bundles must be extended to the $\mathcal{B}_{n+1}$, bulk, their precise configuration, or the topological properties of the bordism are generically not known. These can be of great importance for phenomenological purposes, if for example one wants to estimate the tension of the domain wall connecting two given lower-dimensional theories, or the decay rate of the theory into nothing. Several top-down constructions in the literature \cite{Witten:1981gj,Fabinger:2000jd,Dine:2004uw,Horowitz:2007pr,Yang:2009wz,Blanco-Pillado:2010xww,Blanco-Pillado:2010vdp,Brown:2010mf,Blanco-Pillado:2011fcm,Brown:2011gt,deAlwis:2013gka,Brown:2014rka,Blanco-Pillado:2016xvf,Ooguri:2017njy,Dibitetto:2020csn,GarciaEtxebarria:2020xsr,Bomans:2021ara,Draper:2021ujg,Draper:2021qtc,Petri:2022yhy,Bandos:2023yyo,Ookouchi:2024tfz,Heckman:2024zdo} show that generically $\mathcal{B}_{n+1}$ can be relatively involved, and in \cite{Ruiz:2024jiz} a program aiming to understand the topological properties of $\mathcal{B}_{n+1}$ knowing those of the boundary $\partial\mathcal{B}_{n+1}$ was started.

In this paper, we will use the 9d type IIB supergravities as a non-trivial arena were we obtain information about the topological properties of the bordism $\mathcal{B}_2$ connecting them and/or nothing. As we will see, a non-trivial $\mathsf{SL}(2,\mathbb{Z})$-bundle on the compact $\mathbb{S}^1$ translates to properties of the arrangement of bordism (duality) defects and the topology of $\mathcal{B}_2$ (in the case the genus of said 2-manifold). If said topology becomes arbitrarily complicated, this would have implications that are in tension with Swampland/QG expectations regarding the scale at which global symmetries are broken, which motivates us to propose a sharpening of the Swampland Cobordism conjecture for $\Omega_1^{\mathsf g}({\rm B}G)$ in terms of an algebraic property of $G$ known as \emph{commutator width}, where $G$ is the duality group of the effective theory. We will further test this proposal for other theories where both $G$ and the spectrum of defects is known.\\

The structure of the paper is as follows. In Section \ref{sec:9dIIBgSUGRA} we will give a review of type IIB 9d supergravities and their origin in 10d Type IIB string theory, together with the appropriate F-theory picture. In Section \ref{sec:bordisms} we explain how the topology of the bordism can be obtained from the $\mathbb{S}^1$ boundary and its specific $\mathsf{SL}(2,\mathbb{Z})$ bundle. In turn, In Section \ref{sec: com len} we discuss how some algebraic properties of the duality group (here $\mathsf{SL}(2,\mathbb{Z})$) can result in problematic properties for our bordism, and propose a conjecture affecting $\Omega_1^{\mathsf g}({\rm B}G)$, which we later test in Section \ref{sec: other groups} for other known duality groups. Finally, in Section \ref{s. conc} we conclude and point towards possible future directions, followed by a series of appendices containing different technical results and computations.

\section{A review of type IIB 9d gauged supergravities}
\label{sec:9dIIBgSUGRA}

Type II supergravities in $D=9$ have been completely classified, see \cite{Bergshoeff5,Hull4,Hull8,Melgarejo}, corresponding to a unique maximal supergravity, and eight additional massive (gauged) supergravities, as shown in Figure \ref{fig1}.\footnote{See \cite{AbouZeid} to take into account subtleties in the massive sector of this statement} Kaluza-Klein (KK) or (generalized) Scherk-Schwarz (GSS) reductions of higher-dimensional theories can yield these eight gauge supergravities in nine dimensions. Alternatively, they can also be produced as a deformation of the 9d maximal supergravity, using the embedding tensor formalism.

Type IIA supergravities in 10 dimensions (maximal \cite{PhysRevD.30.325,CAMPBELL1984112,Huq:1983im} or massive \cite{Romans:1985tz,Howe:1997qt}), can yield four out of these eight gauge supergravities in nine dimensions. Therefore, they can be said to be on the type IIA sector. The four remaining gauged supergravities are on the type IIB sector, being associated with the gauging of the global symmetry $\mathsf{GL}(2,\mathbb{R})=\mathbb{R}\times \mathsf{SL}(2,\mathbb{R})$. The gauging of $\mathbb{R}$ corresponds to the gauging of a trombone symmetry of the equations of motion, which will not be further examined in this work. The breaking of $\mathsf{SL}(2,\mathbb{R})$ to any of its one-parameter subgroups ($\mathbb{R}$, $\mathsf{SO}(2)$, or $\mathsf{SO}(1,1)^+$ ) is connected to the other 3 type IIB gauge supergravities in 9d. These cases were analyzed in \cite{Bergshoeff8,Pope,Kaloper2} for particular elements of $\mathsf{SL}(2,\mathbb{R})$ and in \cite{Ortin, Meessen,Cowdall} for a general description.

In addition to the eight massive deformations of type II supergravities in 9d depicted in Figure \ref{fig1}, it is possible to combine mass parameters, in a way that is consistent with supersymmetry. Such cases were studied in \cite{Bergshoeff9,Bergshoeff5} and will not be considered in this work. Certain domain wall solutions have been found and presented in terms of type IIB string theory branes \cite{Bergshoeff9,Cowdall}.

In this section, we will briefly review some relevant features of type IIB supergravity in $D=10$, as well as the description of type IIB supergravities in 9d, both maximal and gauged, also discussing the interpretation in terms of F-theory \cite{Hull4,Hull8} on twisted 3-tori. In these $\mathcal{N}=2$ supergravities,  the breaking of the global symmetry group introduces mass parameters in the reduced theory modifying the Lagrangian, inducing the appearance of covariant derivatives, and changing the field strengths of the background fields, as well as flux potentials that spontaneously break supersymmetry.

\subsection{A lightning review of Type IIB 10d supergravity\label{ss.typeIIB}}
Before describing the properties of (gauged) Type IIB 9d supergravities, let us quickly review the 10d parent theory. Type IIB 10d Supergravity, which arises as the low-energy limit of Type IIB string theory, is a $\mathcal{N}=2$, chiral theory with the following bosonic spectrum:
\begin{equation}
    \big\{\underbrace{e_M^a,B_{MN},\Phi}_{\rm NSNS},\underbrace{C,C_{MN},C_{MNPQ}}_{\rm RR}\big\}\,,
\end{equation}
where both the NSNS and RR sectors feature scalars and 2-forms. We can partner each pair in the so-called \emph{axio-dilaton} $\tau=C_0+ie^{-\Phi}$ and doublet $\vec C_2=(B_2,C_2)^\intercal$, and write an effective \emph{pseudo-action}\footnote{The 5-form field strength $G_5=\dd C_4-\frac{1}{2}\epsilon_{ij}C_2^{i}\wedge \dd C_2^j$ must satisfy the self-duality condition $G_5=\star G_5$, and thus prevents us from writing a kinetic term $-\frac{1}{2}G_5\wedge\star G_5$. One then considers a pseudo-action where such self-duality is later imposed in the equations of motion.}
 of the form
\begin{equation}\label{eq.10dAction}
    S_{\rm IIB}=\frac{1}{2\kappa^{2}_{10}}\int\left\{R\star1-\frac{\dd \tau\wedge\star\dd\tau}{2({\rm Im}\,\tau)^2}-\frac{1}{2}\Lambda_{ij}H_3^{i}\wedge\star H_3^{j}-\frac{1}{4}G_5\wedge\star G_5\right\}+S_{\rm CS}\,,
\end{equation}
where $\kappa_{10}^2=M_{\rm Pl,10}^{-8}$ is the gravitational strength and
\begin{equation}\label{eq.LAMBDAij}
    \Lambda_{ij}=\frac{1}{\rm Im\,\tau}\begin{pmatrix}
        1&\rm Re\,\tau\\\rm Re\,\tau&|\tau|^2
    \end{pmatrix}=e^{\Phi}\begin{pmatrix}
        1&C_0\\C_0&C_0^2+e^{-2\Phi}
    \end{pmatrix}
\end{equation}
is a $\mathsf{SL}(2,\mathbb{R})$-symmetric matrix parameterizing the coset space $\mathsf{SL}(2,\mathbb{R})/\mathsf{SO}(2)$. The action is invariant under a $\mathsf{SL}(2,\mathbb{R})$ transformation (broken to $\mathsf{SL}(2,\mathbb{Z})$ upon instanton charge quantization \cite{Hull})\footnote{\label{fn.presentationSL2Z}For completeness and future reference, the presentation of $\mathsf{SL}(2,\mathbb{Z})$ is given by
\begin{equation}\label{eq.presSL2Z}
    \mathsf{SL}(2,\mathbb{Z})=\big\langle U,S|S^4=1,\,S^2=U^3\big\rangle\simeq\mathbb{Z}_6\ast_{\mathbb{Z}_2}\mathbb{Z}_4\,.
\end{equation}
Of great importance is also the element $T=SU^{-1}$, which generates a subgroup of infinite order. In terms of $2\times 2$ matrices, we can write the above elements as
\begin{equation}\label{eq.matrix UST}
   U=\begin{pmatrix}
        1&1\\-1&0
    \end{pmatrix}\,,\quad S=\begin{pmatrix}
        0&1\\-1&0
    \end{pmatrix}\,,\quad   T=\begin{pmatrix}
        1&1\\0&1
    \end{pmatrix}
\end{equation}\,.
}
\begin{equation}
    \tau\to\frac{a\tau+b}{c\tau+d}\,,\quad\vec C_2\to \mathcal{M}\vec{C}_2\,,\quad C_4\to C_4\,,\quad g_{MN}\to g_{MN}\,,
\end{equation}
with $\mathcal{M}=\begin{psmallmatrix}
        a&b\\c&d
    \end{psmallmatrix}\in \mathsf{SL}(2,\mathbb{Z})$. Magnetically charged with respect to the axion $C_0$ we find the D7-brane, on which fundamental type IIB strings end, with unit charge $\int_{\mathbb{S}^1}\star F_9=\int\dd C_0=1$. Preservation of (half) supersymmetry requires the equation of motion $\bar\partial\tau=0$ away from the D7-brane location $z_0$, with solution $\tau(z)\approx\frac{1}{2\pi i}\log(z-z_0)$ \cite{Greene:1989ya}, and a monodromy
    \begin{equation}
        \tau\to \tau+1\,,\quad (B_2,C_2)^{\intercal}\to
            (B_2+C_2,C_2)^{\intercal}\,,
    \end{equation}
    precisely corresponding to a $\mathsf{SL}(2,\mathbb{Z})$ transformation $T=\begin{psmallmatrix}
        1&1\\0&1
    \end{psmallmatrix}$, identified with $T$ in the presentation \eqref{eq.matrix UST} from Footnote \ref{fn.presentationSL2Z}. Apart from the fundamental F1-string (charged under the NSNS $B_2$ field), type IIB string theory also contains D1-strings, charged under the RR $C_2$ form. We can thus consider general $(p,q)$-strings (with $p$ and $q$ coprime \cite{Witten:1995im}), charged under $pB_2+qC_2$, consistent in a BPS bound state of $p$ F1 (0,1)-strings, and $q$ D1 (1,0)-strings. These precisely end on $[p,q]$ 7-branes. Being codimension-2 objects, a closed loop around them result in a $\mathsf{SL}(2,\mathbb{Z})$ monodromy given by
    \begin{equation}\label{eq.monpq}
        \mathcal{M}_{[p,q]}=\begin{pmatrix}
            1+pq&p^2\\-q^2&1-pq
        \end{pmatrix}\,,
    \end{equation}
    see \cite[Section 2.1]{Weigand:2018rez} for more details.

\subsection{Type IIB (maximal and gauged) supergravities in 9d\label{ss.9dIIB}}
KK reduction on a circle of any of the type II (IIB or massless IIA) supergravities in 10 dimensions leads to type II maximal supergravity in $d=9$. Alternatively, it can also be obtained from a toroidal compactification of eleven-dimensional supergravity. Its (bosonic) field content is given by
\begin{equation}
    \left\lbrace e_{\mu}^{a}, \rho,\tau=C_0+ie^{-\Phi}, A_\mu, A_\mu^{(1)}, A_\mu^{(2)},   B_{\mu\nu}^{(1)},B_{\mu\nu}^{(2)},C_{\mu\nu\rho} \right\rbrace\,,
\end{equation}
where from the Type IIB reduction perspective we have the (canonically normalized) radion $\rho=\sqrt{\frac{8}{7}}\log\frac{R}{\ell_{10}}$ (with $R$ the radius of the circle compactification), $\tau=C_0+ie^{-\Phi}$ is the inherited axio-dilaton, $A_\mu$ is the graviphoton, the doublets $\vec A_\mu$ and $\vec B_{\mu\nu}$ come from the dimensional reduction of the 10d 2-form doublet in 10d Type IIB (both with and without a leg on $\mathbb{S}^1$), and finally the 3-form $C_{\mu\nu\rho}$ results from the reduction of the self-dual 10d 4-form with one leg along $\mathbb{S}^1$. Analogous interpretation from the reduction of the Type IIA massless spectrum is also straightforward, see e.g., \cite[Section 5.1.1]{Ibanez:2012zz}.
The reduction of the \eqref{eq.10dAction} action results in
\begin{align}\label{eq.9dungauged}
    S_{\rm II}^{(\rm 9d)}&=\frac{1}{2\kappa_9^2}\int\bigg\{\mathcal{R}\star1-\dd\rho\wedge\star\dd\rho-\frac{\dd\tau\wedge\star\dd\tau}{2(\rm Im\,\tau)^2}-\frac{e^{4\sqrt{\frac{2}{7}}\rho}}{2}F_2\wedge\star F_2-\frac{e^{-3\sqrt{\frac{2}{7}}\rho}}{2}\Lambda_{ij}F_2^{i}\wedge\star F_2^j\notag\\&~~~~~~~~~~~~~~~~~-\frac{e^{\sqrt{\frac{2}{7}}\rho}}{2}\Lambda_{ij}H_3^i\wedge\star H_3^j-\frac{e^{-\frac{5}{\sqrt{14}}\rho}}{2}G_4\wedge\star G_4\bigg\}+S_{\rm CS}^{(\rm 9d)}\,,
\end{align}
where $\kappa_9^2=(2\pi R)^{-1}\kappa_{10}^2$ is the 9d gravitational coupling.

\begin{figure}[ht]
    \centering
    \resizebox{0.9\textwidth}{!}{%
    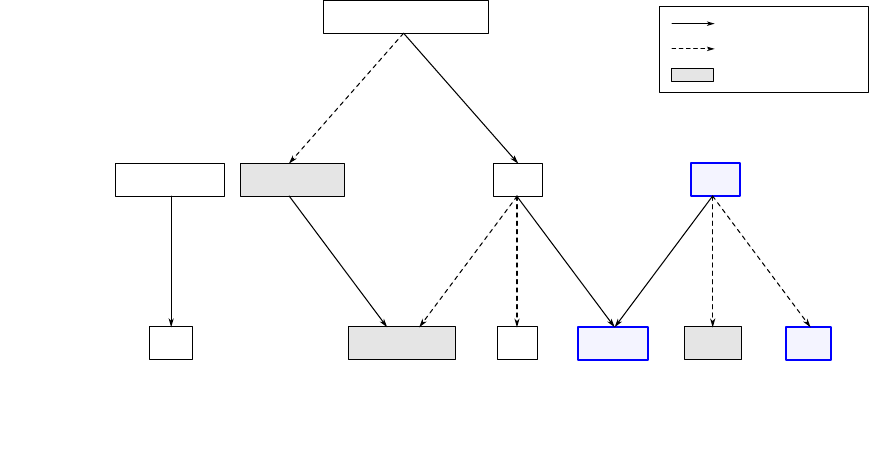
    }
    \caption{Type II (gauged) supergravities in $D=9$, together with their construction and massive deformations. We highlight in blue the 10d type IIB theory and the associated 9d type II SUGRAs we will study in this paper. Figure adapted from \cite{Bergshoeff5}. Note that in the highlighted case the gauge group is broken once the non-perturbative effects break $\mathsf{SL}(2,\mathbb{R})\to\mathsf{SL}(2,\mathbb{Z})$ in the 10d theory.}
    \label{fig1}
\end{figure}

Symmetries from the ten-dimensional theories are inherited by the massless theory in nine dimensions, corresponding to $\mathbb{R}^+\times \mathsf{SL}(2,\mathbb{R})$ from type IIB supergravity.
The nontrivial action of the $\mathsf{SL}(2,\mathbb{R})$ global symmetry in nine dimensions (again, broken to $\mathsf{SL}(2,\mathbb{Z})$ with the inclusion of non-perturbative effects) on the bosonic sector is given by
\begin{eqnarray}
     \tau &\rightarrow& \frac{a\tau + b}{c\tau+d} \,, \qquad \vec{B}\rightarrow \mathcal{M}\vec{B} \,, \qquad \vec{A}\rightarrow \mathcal{M}\vec{A} \,,
 \end{eqnarray}
 where $\mathcal{M}\in \mathsf{SL}(2,\mathbb{Z})$. Additionally the expected gauge symmetries associated to the dynamical $p$-forms $\{A_1,\vec A_1,\vec B_2,C_3\}$ are present.

\vspace{0.5cm}

The 9d type II SUGRA theory we have just described can be directly obtained from compactifying on a circle $\mathbb{S}^1$, in such a way that the 10d fields have trivial profiles along the internal dimension. In contrast, type IIB gauged supergravities in nine dimensions can be obtained by means of Generalized Scherk--Schwarz (GSS) reduction of the 10d type IIB supergravity in ten dimensions. The main idea of the Scherk--Schwarz (SS) mechanism \cite{Schwarz5} is to introduce mass parameters by considering a reduction of the higher-dimensional fields such that they explicitly depend on the internal coordinate $y\in[0,1]$ parameterizing $\mathbb{S}^1$. Following \cite{Hull4}, we will consider the generalized Scherk-Schwarz reduction (GSS), where a twist with an element of the global symmetry group $G$, one takes
\begin{eqnarray}
    \phi (x^\mu,y)=g_y(\phi^i(x^\mu)),
\end{eqnarray}
where $g_y=g(y)$ is a symmetry transformation contained in a subgroup of $G$. The above map is not periodic, but rather has a monodromy $\mathcal{M}(g)=\exp M$, where $\mathcal{M}\in G$ and $M=g^{-1}\partial_y g$ where $M \in \mathfrak{sl}(2,\mathbb{R})$ is an element of the Lie algebra, with explicit expression $g_y=\exp(My)$. The element $M$ generating the transformation can be written as linear combination of the generators of $\mathfrak{sl}(2,\mathbb{R})$ in terms of the Pauli matrices $\{\sigma_1,i\sigma_2,\sigma_3\}$, 
\begin{equation}\label{massmatrix}
    M=m_1\sigma_3+m_2\sigma_1+m_3(i\sigma_2)= \begin{pmatrix}
      m_1   & m_2+m_3 \\
        m_2 - m_3 &  -m_1
    \end{pmatrix}, 
\end{equation}

and consequently, the monodromy is given by $\mathcal{M}=g(y=1)=e^M\in\mathsf{SL}(2,\mathbb{R})$.\footnote{\label{fn.SL2Zsimplyconnected}
Note that, since $\mathsf{SL}(2,\mathbb{R})$ is not simply connected, $\exp\mathfrak{sl}(2,\mathbb{R})$ does not cover all $\mathsf{SL}(2,\mathbb{R})$. Since elements $X\in\mathfrak{sl}(2,\mathbb{R})$ are $2\times 2$ matrices with zero trace, the sum of their eigenvalues $\lambda_{\pm}=\frac{1}{2}\big[{\rm Tr}\,X\pm\sqrt{({\rm Tr}\,X)^2-4\det(X)}\big]=\pm i\sqrt{\det(X)}$
is 0, i.e., of the form $(\lambda,-\lambda)$, with $\lambda\in\mathbb{R}\cup i\mathbb{R}$ (since $\det(X)\in\mathbb{R}$). This means then that ${\rm Tr}(X)=e^{\lambda}+e^{-\lambda}=2\cosh(\lambda)\geq-2$ for $\lambda\in\mathbb{R}\cup i\mathbb{R}$. Thus the elements of $\mathsf{SL}(2,\mathbb{R})$ with ${\rm Tr}(\mathcal{M})<-2$ lie in a patch outside $\exp\mathfrak{sl}(2,\mathbb{R})$ and cannot be accessed through the exponential map around the identity matrix.} The values $\{m_i\}_{i=1}^3$ correspond to the \emph{mass parameters} of the reduced theory. The interpretation as ``masses'' of the lower-dimensional scalars will become more clear once we see how such parameters enter in the scalar potential.

The elements of $\mathsf{SL}(2,\mathbb{R})$ can be classified according to their trace in parabolic, elliptic, or hyperbolic elements for $\abs{{\rm Tr}(\mathcal{M})}=2$, $\abs{{\rm Tr}(\mathcal{M})}<2$, or $\abs{{\rm Tr}(\mathcal{M})}>2$, respectively. This classification, which precisely corresponds to the three different factors in the Iwasawa/KAN decomposition of $\mathsf{SL}(2,\mathbb{R})$, is based on subsets and does not generally correspond to subgroups. Nevertheless, it is known that each matrix of any of these subsets is conjugate to a member of one of the above three one-parameter subgroups \cite{Hull4}, which can be written via the exponential map in terms of the mass matrix \eqref{massmatrix}, as they appear in Table \ref{tab:param subgroups}. Such reductions were considered in \cite{Bergshoeff8,Pope,Kaloper2} for particular elements and in \cite{Ortin, Meessen,Cowdall} for general elements of $\mathsf{SL}(2,\mathbb{R})$. We note that, even though the set of hyperbolic matrices $\abs{{\rm Tr}(\mathcal{M})}>2$ has two component, only the one with positive trace is in $\exp\mathfrak{sl}(2,\mathbb{R})\in\mathsf{SL}(2,\mathbb{R})$, with the negative trace one not appearing in GSS reductions. Nonetheless, since $\exp\mathfrak{sl}(2,\mathbb{R})$ is not a subgroup, hyperbolic monodromies with negative traces will appear later along the text when considering bordisms of our 9d theories.

Therefore, the GSS reduction of the type IIB supergravity in ten dimensions corresponds to considering a monodromy matrix $\mathcal{M}$ as those from Table \ref{tab:param subgroups}, with their corresponding mass matrix $M$ given by \eqref{massmatrix} with appropriate values of $m_1$, $m_2$, and $m_3$.

Nevertheless, these type IIB gauged supergravities in nine dimensions, can also be obtained using the embedding tensor formalism, which is not a higher-dimensional approach but a method to gauge the global symmetry group of the massless theory in the same dimension. Let us consider som Lie algebra $\mathfrak{g}_0$ \cite{Samtleben}, with generators $\{t_i\}_{i=1}^{\dim (\mathfrak{g}_0)}$ and structure constants defined through $\left[t_i,t_j\right] = f_{ij}{}^k t_k$. The embedding tensor formalism couples the gauge vector fields $A_\mu^M$ to the $\mathfrak{g}_0$ symmetry generators $t_\alpha$ in the form of a covariant derivative
\begin{eqnarray}\label{eq.cov}
    {\rm D}_\mu = \partial_\mu - g A_\mu^M\Theta_M^it_i ,
\end{eqnarray}
where $\mu$ is space-time index, $g$ is the gauge coupling constant, $M$ labeled the representation of the global symmetry group and $\Theta_M^i$ is the embedding tensor. The latter is a rectangular matrix coupling the vector fields with the symmetry generators, and can be understood as a map from the representation of the global symmetry group to the gauge group of local symmetry. Indeed, $X_M=\Theta_M^it_i$ are interpreted as the generators of the gauge group that will appear on the gauge transformations of the fields. By considering a particular gauge, the embedding tensor breaks down the invariance from the global symmetry group to the local gauge group.

\begin{table}[htb]
    \centering
    \resizebox{\textwidth}{!}{
    \begin{tabular}{|c||c|c|c|}\hline
        Subset & Parabolic $\mathbb{R}$ & Elliptic $\mathsf{SO}(2)$ & Hyperbolic $\mathsf{SO}(1,1)^+$ \\\hline\hline
        ${\rm Tr}(\mathcal{M})$ &$2$ & $\in(-2,2)$ & $>2$ \\\hline
        Representative & $\begin{pmatrix}
            1&k\\0&1
        \end{pmatrix}$, $k\in \mathbb{R}$ & $\begin{pmatrix}
      \cos(\theta)   &  \pm\sin(\theta) \\
        \mp\sin(\theta) &  \cos(\theta)
    \end{pmatrix}$, $\theta\in[0,2\pi)$ & $\begin{pmatrix}
       e^a  & 0 \\
        0 & e^{-a}
    \end{pmatrix}$, $a\in\mathbb{R}_{(\geq 0)}$ \\\hline
        $\vec m=(m_1,m_2,m_3)$ & $(0,\frac{k}{2},\frac{k}{2})$  & $(0,0,\pm\theta)$ &  $(a,0,0)$\\\hline
        ${\rm D}\tau$ & $\dd\tau-kA$ & $\dd\tau\mp\theta(1+\tau^2)A$ & $\dd\tau-2a\tau A$ \\\hline
        $V(\rho,\tau)$  &$\frac{k^2}{2}e^{-4\sqrt{\frac{2}{7}}\rho+2\Phi}$ & $\frac{\theta^2}{2}e^{-4\sqrt{\frac{2}{7}}\rho-2\Phi}\left[1+2e^{2\Phi}(C_0^2-1)+e^{4\Phi}(C_0^2+1)^2\right]$& $2a^2e^{-4\sqrt{\frac{2}{7}}\rho}\left(1+e^{2\Phi}C_0^2\right)$\\\hline
    \end{tabular}}
    \caption{Different subsets of $\exp\mathfrak{sl}(2,\mathbb{R})\subset\mathsf{SL}(2,\mathbb{R})$ in terms of their traces, together with their representatives and mass parameters $\vec m$ from which they can be generated. We also depict the form of the covariant derivative for the axio-dilaton and the scalar potential. It is immediate that setting $\vec{m}=\vec{0}$ recovers the ungauged case. Note that in hyperbolic matrices we choose the representative with $a\geq 0$, since $a\to-a$ leaves the trace and eigenvalues (and thus the conjugacy class) invariant. As discussed in Footnote \ref{fn.SL2Zsimplyconnected}, since $\mathsf{SL}(2,\mathbb{R})$ is not simply connected, only the elements with ${\rm Tr}\,\mathcal{M}\geq -2$ are in $\exp\mathfrak{sl}(2,\mathbb{R})$, which means that the above description does not cover the hyperbolic matrices with ${\rm Tr}\,\mathcal{M}< -2$}
    \label{tab:param subgroups}
\end{table}

In the particular case of 9d type IIB gauged supergravities, we have that the nonzero components of the embedding tensor are given by $\Theta_0^n=m_n$ with $m_n$ the mass parameters of the matrix \eqref{massmatrix} and the only 1-form entering the covariant derivative \eqref{eq.cov} being the graviphoton. 

These gauged supergravities can be produced by both methods: GSS reduction and the Embedding Tensor formalism. The corresponding action, will be modified from \eqref{eq.9dungauged} through the introduction of mass parameters, gauge couplings and scalar potentials, with the general form

\begin{equation}\label{eq.9dgauged}
    S_{\rm II}^{(\rm 9d)}=\frac{1}{2\kappa_9^2}\int\bigg\{\mathcal{R}\star1-{\rm D}\rho\wedge\star{\rm D}\rho-\frac{{\rm D}\tau\wedge\star{\rm D}\tau}{2(\rm Im\,\tau)^2}-2\kappa_9^2V(\rho,\tau)\star1\bigg\}+S_{\rm gauge}^{(\rm 9d)}+S_{\rm CS}^{(\rm 9d)}\,.
\end{equation}
We note that the above action is modified from the ungauged one \eqref{eq.9dungauged} (to which it reduces when all the mass parameters are zero, $\vec m=(0,0,0)$) in two ways. First of all the usual exterior derivatives are replaced by the covariant ${\rm D}=d-g \Theta_M^it_i A^M\wedge$, which for scalar fields translates in \cite{Cowdall} $D_\mu\varphi=\partial_\mu\varphi-\delta_{\mathcal{M}}\varphi A_\mu$, in particular 
\begin{equation}
    {\rm D}_\mu\rho=\partial_\mu\rho\,,\quad {\rm D}_\mu\tau=\partial_\mu\tau-\big[(m_2+m_3)+2m_1\tau+(m_3-m_2)\tau^2\big]A_\mu\,,
\end{equation}
since the 10d spacetime metric (and in particular the radion $\rho$) is invariant under the $\mathsf{SL}(2,\mathbb{R})$ that we are gauging. In Table \ref{tab:param subgroups} we depict the covariant derivative for the axio-dilaton for the different classes of $\mathsf{SL}(2,\mathbb{R})$ subsets.

As for the $p$-form gauge fields, this introduces a St\"uckelberg/Higgs-like coupling between scalars, graviphoton 1-form and other $p$-forms, through which some of them can become massive. The gauge couplings of the form $e^{\alpha\rho}$ to the radion are the same as in the ungauged case in \eqref{eq.9dungauged}. The modified field strengths and the dynamics of the $p$-form fields will not be the focus of this work, and will not be discussed further.

Secondly, a scalar potential $V(\rho,\tau)$ acting on the radion and axio-dilaton is introduced, with general form 
\begin{equation}
    V(\rho,\tau)=\frac{1}{2}e^{-4\sqrt{\frac{2}{7}}\rho}{\rm Tr}\big(M^2+M\Lambda^{-1}M^{\intercal}\Lambda\big)\,,
\end{equation}
with $M$ given by \eqref{massmatrix} and $\Lambda$ the $\mathsf{SL}(2,\mathbb{R})$-symmetric matrix from \eqref{eq.LAMBDAij}. We depict the potential for the different cases in Table \ref{tab:param subgroups}.

We notice that for the parabolic and hyperbolic monodromies, $V(\rho,\tau)>0$, so that supersymmetry is spontaneously broken. The potential has runaway direction towards large radius and small coupling, with the axion $C_0$ having a flat direction in the parabolic case, and being stabilized at $C_0=0$, with a light mass $m_{C_0}^2\sim 4M_{\rm Pl,9}^2a^2 e^{-4\sqrt{\frac{2}{7}}\rho}\to 0$ for hyperbolic monodromies. On the other hand, for the elliptic case there exists a (non-supersymmetric) Minkowski minimum at tree level, located at $\tau=i$ (i.e., $C_0=\Phi=0$) \cite{Cowdall}, with $m_{C_0}^2\sim m_{\Phi}^2\sim 8M_{\rm Pl,8}^2\theta^2e^{-4\sqrt{\frac{2}{7}}\rho}$, with the radion $\rho$ having a flat direction.

\subsection{The F-theory picture}
\label{ss:Fth}

In the previous section, when studying the type IIB gauged supergravities in 9d, we have treated the 10d type IIB supergravity as having a $\mathsf{SL}(2,\mathbb{R})$ continuous global symmetry, which allowed us to consider any monodromy $\mathcal{M}\in \mathsf{SL}(2,\mathbb{R})$ in the GSS reduction.  However, we know that the symmetry group $\mathsf{SL}(2,\mathbb{R})$ breaks into the discrete subgroup $\mathsf{SL}(2,\mathbb{Z})$ in the quantum theory, such that it guarantees the quantization of the charges \cite{Hull}. Consequently, the quantum-consistent generalized Scherk-Schwarz reductions of type IIB superstring are those for which the monodromy resides in the conjugacy classes of $\mathsf{SL}(2,\mathbb{Z})$ \cite{Zwiebach,Zwiebach2}. This can be seen as a restriction on the choice of $M\in\mathfrak{sl}(2,\mathbb{R})$ such that $\mathcal{M}=\exp M$ has integer entries. Therefore, the compact or non-compact parameters of the subgroups of $\mathsf{SL}(2,\mathbb{R})$ in Table \ref{tab:param subgroups} become restricted in the quantum theory:
\begin{enumerate}
    \item \textbf{Parabolic monodromies}: There is an infinite number of parabolic $\mathsf{SL}(2,\mathbb{Z})$ conjugacy classes, with representatives
    \begin{eqnarray}
        \mathcal{M}_k = \begin{pmatrix}
           1  & k \\
            0 & 1
        \end{pmatrix} ,\label{parabolicmonodromyUV}\qquad\text{with }\, k\in\mathbb{Z}\,.
    \end{eqnarray}
The real parameter of the parabolic supergravity is quantized in the higher-dimensional theory, breaking $\mathbb{R}\to\mathbb{Z}$.
\item \textbf{Elliptic monodromies}: There are four elliptic $\mathsf{SL}(2,\mathbb{Z})$ conjugacy classes, with traces ${\rm Tr}\mathcal{M}\in\{-2,-1,0,1\}$,\footnote{The elliptic integer monodromy with $\rm Tr\,\mathcal{M}=2$ corresponds to the identity matrix, which results in the ungauged, maximally supersymmetric 9d SUGRA. The same happens to the parabolic monodromy with $k=0$. While the $ \mathcal{M}_{\mathbb{Z}_2} = \begin{psmallmatrix}
           -1  & 0 \\
            0 & -1
\end{psmallmatrix}$ has trace -2, it does not correspond to a parabolic monodromy, since it is conjugate to a $\pi$ $\mathsf{SO}(2)$ rotation.} represented by
\begin{equation}
     \mathcal{M}_{\mathbb{Z}_2} = \begin{pmatrix}
           -1  & 0 \\
            0 & -1
\end{pmatrix}, \;        \mathcal{M}_{\mathbb{Z}_3} = \begin{pmatrix}
           -1  & 1 \\
            -1 & 0
        \end{pmatrix} 
, \;      \mathcal{M}_{\mathbb{Z}_4} = \begin{pmatrix}
           0  & 1 \\
            -1 & 0
\end{pmatrix}, \;      \mathcal{M}_{\mathbb{Z}_6} = \begin{pmatrix}
           1  & 1 \\
            -1 & 0
\end{pmatrix}, \label{ellipticmonodromyUV} 
\end{equation}
labeled by the group the generate. These matrices are conjugate to particular rotations of the $\mathsf{SO}(2)$ group. Indeed, these monodromies correspond to $\mathsf{SO}(2)$ gaugings at specific angles. Indeed, from the elliptic classes in $\mathsf{SL}(2,\mathbb{R})$ from Table \ref{tab:param subgroups} we see that they are respectively conjugate to $\pm\theta\mod 2\pi\in\{\pi,\frac{2\pi}{3},\frac{\pi}{2},\frac{\pi}{3}\}$, to which the compact parameter $\theta$ of the $\mathsf{SO}(2)$ must be restricted. 
\item \textbf{Hyperbolic monodromies}: There are an infinite number of hyperbolic conjugacy classes with ${\rm Tr}(\mathcal{M})>2$ for $n\geq 3$ (notice again that only the component of hyperbolic matrices with positive trace appears), generated by
\begin{eqnarray}
\mathcal{M}_{h} = \begin{pmatrix}
           n  & 1 \\
            -1 & 0
\end{pmatrix},
\label{hiperbolicmonodromyUV}
\end{eqnarray}
and it can be seen they are conjugate to the hyperbolic matrices from Table \ref{tab:param subgroups}. It is not difficult to see that it requires the parameter $a$ to be expressed in terms of $n$ as
\begin{eqnarray}
    a=\ln{\left( \frac{n}{2}+\sqrt{\frac{n^2}{4}-4}\right)}>0, 
\end{eqnarray} 
such that $n$ is discrete. Once again, we can observe that the parameter $a$ becomes restricted to be particular integers.
\end{enumerate}

In \cite{Hull4,Hull8} the authors proposed that the GSS reduction of type IIB string theory on a circle should correspond to F-theory on a three-dimensional twisted torus. As explained in the appendix \ref{Ape1}, twisted tori in three dimensions are orientable solvmanifolds that admit a global description in terms of torus bundles over a circle with monodromy in $\mathsf{SL}(2,\mathbb{Z})$. These are in one-to-one correspondence with the three-dimensional solvable Lie algebras. It can be checked that, for parabolic monodromies, the twisted tori corresponds to a nilmanifold, known as the Heisenberg nilmanifold in three dimensions, while for the other monodromies, they are just solvmanifolds, see \cite{Aschieri,Hull5}.

In \cite{mpgm2,mpgm7,mpgm23,mpgm24} it was also shown that type IIB gauged supergravities are in correspondence with the inequivalent classes of M2-branes with non-vanishing winding on a torus and nontrivial $\mathsf{SL}(2,\mathbb{Z})$ monodromy. These are M2-branes, whose corresponding Hamiltonian has an $\mathsf{SU}(N)$ regularized model with a discrete supersymmetric spectrum. The global description is given in terms of twisted torus bundles with monodromy contained in $\mathsf{SL}(2,\mathbb{Z})$ \cite {mpgm10}. The fiber is a Heisenberg nilmanifold, and the base corresponds to a Riemann surface of genus 1 associated with the spatial directions of the M2-brane worldvolume. The group transformation between equivalent twisted torus bundles for a given flux and monodromy corresponds to the restricted gauge symmetries at low energies. This allows us to give a $D=11$ interpretation to the discrete values that the mass parameters take in the quantum description for each case. The global symmetry will correspond to the quantization of $\mathsf{SL}(2,\mathbb{R})$, and it will be, in general, a discrete subgroup contained in $\mathsf{SL}(2,\mathbb{Z})$.

\section{Bordisms between type IIB 9d gauged SUGRAs \label{sec:bordisms}}

The different 9d type IIB gauged supergravities described in the previous section have a clear geometrical origin through Scherk-Schwarz compactifications of 10d type IIB supergravity, with different theories being associated with distinct conjugacy classes within $\mathsf{SL}(2,\mathbb{Z})$. More interestingly, from an F-theory perspective, this corresponds to a twisted 3-torus $\mathbb{E}\hookrightarrow T^3_\mathcal{M}\to\mathbb{S}^1$ with some appropriate monodromy $\mathcal{M}\in\mathsf{SL}(2,\mathbb{Z})$ along the $\mathbb{S}^1$ base. While these theories make sense on their own, from the Swampland Cobordism Conjecture \ref{eq.cobord}, we expect the existence of topology changing processes that dynamically connect them. Assuming that an F-theory description remains valid during this process, bordisms from a given type IIB 9d gauged SUGRA to nothing or to analogous theories can be understood as moving along an elliptically fibered manifold with boundary
\begin{equation}\label{eq. B4 fibration}  
\mathbb{E}\hookrightarrow\mathcal{B}_4\to\tilde{\mathcal{B}}_2\,,
\end{equation}
where $\partial\mathcal{B}_4=T^3_\mathcal{M}$ and $\partial\tilde{\mathcal{B}}_2=\mathbb{S}^1_\mathcal{M}$ for bordisms to nothing, and $\partial\mathcal{B}_4=T^3_\mathcal{M}\sqcup T^3_{\mathcal{N}^{-1}}$ and $\partial\tilde{\mathcal{B}}_2=\mathbb{S}^1\sqcup\overline{\mathbb{S}^{1}}$ for bordisms between the 9d gauged SUGRAs associated to monodromies $\mathcal{M}$ and $\mathcal{N}$.\footnote{We write the second boundary, $T^3_{\mathcal{N}^{-1}}$, with the inverse monodromy, $\mathcal{N}^{-1}$, rather than $\mathcal{N}$ due to the inverse orientation this boundary has with respect to the first one, $T^3_{\mathcal{M}}$. }\\

The Cobordism Conjecture is not a \emph{constructive} statement, since it only tells us that such $\mathcal{B}_4$ \emph{must exist}, rather than telling us \emph{how it looks}. However, as shown in \cite{Ruiz:2024jiz}, quite some information about its topology (for our purposes, integral homology) can be said. In this section we will show how one can indeed obtain information about the F-theoretical bordism $\mathcal{B}_4$ simply by knowing its boundary $\partial\mathcal{B}_4$, and indeed give the topological properties (in the form of the homology groups and possible Kodaira singularities) of our elliptic fibrations for the different nullbordisms and interpolation between different theories.

\subsection{Recovering the homology of the bordism\label{sec.top}}

Before focusing in the elliptic fibrations of the form \eqref{eq. B4 fibration}, we can review the general argument from  \cite[Section 3.1]{Ruiz:2024jiz} to show how one can \emph{bound} the homology\footnote{Given a $n$-manifold $X_n$, we respectively denote by $b_k(X_n)$ and $T_k(X_n)$ the Betti numbers and (integer) torsion part of the $k$-th homology group 
\begin{equation}\label{eq. prime exp tors}
    H_k(X_n;\mathbb{Z})\simeq \mathbb{Z}^{b_k(X_n)}\oplus T_k(X_n)\simeq\mathbb{Z}^{b_k(X_n)}\bigoplus_pT_{k,(p)}(X_n)\simeq\mathbb{Z}^{b_k(X_n)}\bigoplus_p\Big(\bigoplus_{i}\mathbb{Z}_{p^{a_{p,i}}}\Big)\,,
\end{equation}
with uniquely defined $a_{p,i}\in\mathbb{Z}_{\geq 0}$,
where we have further decomposed the torsion part in terms of its prime expansion.
} of a manifold $\mathcal{B}_{n+1}$ knowing that of its boundary $\mathcal{C}_n=\partial\mathcal{B}_{n+1}$ (with one or more connected components). One starts with the long exact sequence involving the relative integral homology:
\begin{equation}\label{eq. relative homo}    \dots\xrightarrow{\partial_*}H_k(\mathcal{C}_n;\mathbb{Z})\xrightarrow{i_*}H_k(\mathcal{B}_{n+1};\mathbb{Z})\xrightarrow{j_*}H_k(\mathcal{B}_{n+1},\mathcal{C}_{n};\mathbb{Z})\xrightarrow{\partial_*}H_{k-1}(\mathcal{C}_n;\mathbb{Z})\rightarrow\dots\,,
\end{equation}
which using Lefschetz duality, $H_k(\mathcal{B}_{n+1},\partial\mathcal{B}_{n+1};\mathbb{Z})\simeq H^{n+1-k}(\mathcal{B}_{n+1};\mathbb{Z})$, and the Universal Coefficient Theorem, $H_k(\mathcal{B}_{n+1};\mathbb{Z})\simeq\mathbb{Z}^{\oplus b_k(\mathcal{B}_{n+1})}\oplus T_k(\mathcal{B}_{n+1})$ and $H^k(\mathcal{B}_{n+1};\mathbb{Z})\simeq\mathbb{Z}^{\oplus b_k(\mathcal{B}_{n+1})}\oplus T_{k-1}(\mathcal{B}_{n+1})$, can be rewritten as
{\begin{equation}\label{eq.les}
    \dots\to\mathbb{Z}^{\oplus b_{n-k}(\mathcal{B}_{n+1})}\oplus T_{n-k-1}(\mathcal{B}_{n+1})\to\mathbb{Z}^{\oplus b_{k}(\mathcal{C}_{n})}\oplus T_{k}(\mathcal{C}_{n})\to\mathbb{Z}^{\oplus b_{k}(\mathcal{B}_{n+1})}\oplus T_{k}(\mathcal{B}_{n+1})\to\dots\,.
\end{equation}}
As explained in \cite{Ruiz:2024jiz}, after tensoring by $\mathbb{Q}$ one can see that the boundary of torsion-less cycles is also torsion-free, what allows us to separate the above long exact sequence in terms of free and torsion parts. Regarding the former, the exactness of the sequence allows us to obtain the following bounds on the Betti numbers of $\mathcal{B}_{n+1}$ and $\mathcal{C}_n$,
\begin{equation}\label{eq.betti ineq}
    \tcbhighmath{b_k(\mathcal{B}_{n+1})+b_{n-k}(\mathcal{B}_{n+1})\geq b_k(\mathcal{C}_n)=b_{n-k}(\mathcal{C}_n)\quad\forall\;k=0,\dots, n\,.}
\end{equation}
This implies that in general the free part of the bordism $\mathcal{B}_{n+1}$ can be arbitrarily ``complicated'', since the above inequalities only impose lower bounds on the Betti numbers of $\mathcal{B}_{n+1}$. Considering $\mathcal{B}_{n+1}$ to have a single connected component, this further imposes $b_0(\mathcal{B}_{n+1})=1$ and $b_{n+1}(\mathcal{B}_{n+1})=0$ (since the unique component has a boundary).\\

As for the torsion part, things are not as straightforward as for the Betti numbers. Since for our purposes $n=3$, the relative low dimensionality of the manifolds involved will be of some help,\footnote{Similar relations could be obtained for higher dimensional boundaries $\mathcal{C}_n$ and bordisms $\mathcal{B}_{n+1}$ (with $n>3$). However, in general more torsion groups will be non-vanishing and the subsequent bounds might not be as sharp.} and we can take torsion parts of \eqref{eq. relative homo},
\begin{subequations}
    \begin{align}
    T_3(\mathcal{C}_3)\simeq 0\to &T_3(\mathcal{B}_4)\to T_0(\mathcal{B}_4)\simeq 0 \label{3.4a}\\
        \dots\to T_1(\mathcal{C}_3)\to &T_1(\mathcal{B}_4)\to T_2(\mathcal{B}_4)\to T_0(\mathcal{C}_3)\simeq 0 \label{3.4b} \\
        T_2(\mathcal{C}_3)\simeq0\to &T_2(\mathcal{B}_4)\to  T_1(\mathcal{B}_4)\to T_1(\mathcal{C}_3)\to\dots\ \label{3.4c}\,,
    \end{align}
\end{subequations}
where some of the torsion groups are automatically trivial, such as the top and bottom homology, as well $T_2(\mathcal{C}_3)\simeq T_0(\mathcal{C}_3)=0$ through Poincar\'e duality, since $\partial T^3=\emptyset$. This in turn implies that $T_3(\mathcal{B}_4)=0$. Considering the prime expansion \eqref{eq. prime exp tors} for the above groups, one recovers 
\begin{equation}
    T_{2,(p)}(\mathcal{B}_4)\simeq\frac{T_{1,(p)}(\mathcal{B}_4)}{{\rm im}[T_{1,(p)}(\mathcal{C}_3)\to T_{1,(p)}(\mathcal{B}_4)]}\,,\quad {\rm im}[T_{1,(p)}(\mathcal{B}_4)\to T_{1,(p)}(\mathcal{C}_3)]\simeq\frac{T_{1,(p)}(\mathcal{B}_4)}{T_{2,(p)}(\mathcal{B}_4)}\,,
\end{equation}
for all $p$ prime.
Since both ${\rm im}[T_{1,(p)}(\mathcal{C}_3)\to T_{1,(p)}(\mathcal{B}_4)]$ and ${\rm im}[T_{1,(p)}(\mathcal{B}_4)\to T_{1,(p)}(\mathcal{C}_3)]$ are isomorphic to subgroups of $T_{1,(p)}(\mathcal{B}_4)$, this implies that, given the expansion
\begin{equation}
    T_1(\mathcal{C}_3)=\bigoplus_p\Big(\bigoplus_i\mathbb{Z}_{p^{a_{p,i}}}\Big)\,,\quad
        T_1(\mathcal{B}_4)=\bigoplus_p\Big(\bigoplus_i\mathbb{Z}_{p^{b_{p,i}}}\Big)\,,\quad
        T_2(\mathcal{B}_4)=\bigoplus_p\Big(\bigoplus_i\mathbb{Z}_{p^{c_{p,i}}}\Big)\,,
\end{equation}
with some unique $a_{p,i}$, $b_{p,i}$ and $c_{p,i}$, one has
\begin{equation}\label{eq.prime dec}
    \tcbhighmath{\sum_ib_{p,i}=x_p+\sum_i c_{p,i}\quad \text{for some }x_p\leq\sum_ia_{p,i} \,,\quad \forall\,p\;\text{ prime}\,.}
\end{equation}
Note that, similarly to \eqref{eq.betti ineq}, this again only sets ``lower bounds'' on the torsion of the bordism $\mathcal{B}_4$, but the obtained expression does not treat differently $T_{1}(\mathcal{B}_4)$ and $T_{2}(\mathcal{B}_4)$. The above expression is therefore not very useful in order to obtain information regarding $H_\bullet(\mathcal{B}_4;\mathbb{Z})$.\\

However, as previously mentioned, our bordism $\mathcal{B}_4$ and boundary $\mathcal{C}_3=\partial\mathcal{B}_4$ are not just any arbitrary 4- and 3-manifolds. Since we expect a F-theory description, we ask both manifolds to be elliptically fibered. As shown in Appendix \ref{app.leraySerre} (to where we refer for more details of the appropriate computations through Leray-Serre spectral sequence), this fibration structure allows us to obtain sharper expressions for $H_\bullet(\mathcal{B}_4;\mathbb{Z})$, which also satisfy \eqref{eq.betti ineq} and \eqref{eq.prime dec}. 

As described in Section \ref{ss:Fth}, our boundary to be the disjoint sum of $n_b$ twisted 3-torus, each with $\mathsf{SL}(2,\mathbb{Z})$ monodromy $\mathcal{M}_i$:
\begin{equation}\label{Boundaries}
\mathcal{C}_3=\bigsqcup_{i=1}^{n_b}\big(\mathbb{T}^2\hookrightarrow T^3_{\mathcal{M}_i}\to \mathbb{S}^1\big)\;.
\end{equation}
The bordism to nothing will correspond to $n_b=1$, while the bordism between two different  type IIB 9d gauged Supergravities will have $n_b=2$. The homology of $\mathcal{C}_3$ will be the direct sum of the homologies of each $T^3_{\mathcal{M}_i}$, computed in \eqref{eq. hom T3} from Appendix \ref{app.leraySerre}, so that
\begin{equation}\label{eq. hom C3}
\tcbhighmath{\begin{array}{ll}
     H_0(\mathcal{C}_3;\mathbb{Z})\simeq\mathbb{Z}^{\oplus n_b}\,,& H_1(\mathcal{C}_3;\mathbb{Z})\simeq\mathbb{Z}^{\oplus n_b}\bigoplus_{i=1}^{n_b}\frac{\mathbb{Z}^{\oplus2}}{{\rm im}\,(\mathcal{M}_i-{\rm Id}_2)} \,,\\
     H_2(\mathcal{C}_3;\mathbb{Z})\simeq\mathbb{Z}^{\oplus n_b}\bigoplus_{i=1}^{n_b} \ker(\mathcal{M}_i-{\rm Id}_2)\,,\quad& H_3(\mathcal{C}_3;\mathbb{Z})\simeq\mathbb{Z}^{\oplus n_b}
\end{array}\;.   }
\end{equation}
The expressions for $H_0(\mathcal{C}_3;\mathbb{Z})$ and $H_3(\mathcal{C}_3;\mathbb{Z})$ are straightforward from the counting of connected components. However, the twisting of the fiber due to the $\mathcal{M}_i$ monodromy along the $\mathbb{S}^1$ basis can reduce the homology with respect to that of the untwisted 3-torus with trivial monodromy, $H_1(\mathbb{T}^3;\mathbb{Z})\simeq H_2(\mathbb{T}^3;\mathbb{Z})\simeq \mathbb{Z}^{\oplus 3}$. This is due to the fact that 1-cycles of the fiber $\mathbb{E}$ might not return to themselves after the monodromy, which can induce torsion to $H_1(T^3_{\mathcal{M}_i};\mathbb{Z})$, while the $\mathcal{M}_i$-twisted boundary operator $\partial_1:C_1\otimes\mathbb{Z}^{\oplus2}\to C_0\otimes\mathbb{Z}^{\oplus2}$ can further lower the rank of $H_2(T^3_{\mathcal{M}_i};\mathbb{Z})\simeq\mathbb{Z}\oplus H_1(\mathbb{S}^1;\mathbb{Z}^2_{\mathcal{M}_i})$ (this time without introducing any torsion). Again, we refer to Appendix \ref{app.leraySerre} for more details.\\
In Table \ref{tab. hom T3} we present the homology groups for $T^3_{\mathcal{M}}$ with the $\mathsf{SL}(2,\mathbb{Z})$ monodromies discussed in Section \ref{ss:Fth}. Note again that, since the homology of the disjoint union is the direct product of the individual homologies, this allows us to compute $H_\bullet(\mathcal{C}_3;\mathbb{Z})$ for any number of connected components.

\begin{table}[htb!]
    \centering
    \resizebox{\textwidth}{!}{
    \begin{tabular}{|c||c|c|c|c|c|c|c|}\hline
        $\mathcal{M}$ & {\scriptsize$\begin{pmatrix}1&0\\0&1\end{pmatrix}$} &  {\scriptsize$\begin{pmatrix}1&n\\0&1\end{pmatrix}$}&{\scriptsize$\begin{pmatrix}-1&0\\0&-1\end{pmatrix}$}  &{\scriptsize$\begin{pmatrix}-1&1\\-1&0\end{pmatrix}$}  &{\scriptsize$\begin{pmatrix}0&1\\-1&0\end{pmatrix}$}  &{\scriptsize$\begin{pmatrix}1&1\\-1&0\end{pmatrix}$}  &{\scriptsize$\begin{pmatrix}n&1\\-1&0\end{pmatrix}$},  $|n|\geq 3$\\\hline\hline
         $H_1(T_{\mathcal{M}};\mathbb{Z})$ & $\mathbb{Z}^{\oplus 3}$ &$\mathbb{Z}^{\oplus 2}\oplus\mathbb{Z}_{|n|}$  & $\mathbb{Z}\oplus\mathbb{Z}_2^{\oplus 2}$  & $\mathbb{Z}\oplus\mathbb{Z}_3$  & $\mathbb{Z}\oplus\mathbb{Z}_2$  &$\mathbb{Z}$& $\mathbb{Z}\oplus\mathbb{Z}_{|n-2|}$ \\\hline
         $H_2(T_{\mathcal{M}};\mathbb{Z})$& $\mathbb{Z}^{\oplus 3}$ & $\mathbb{Z}^{\oplus 2}$  & $\mathbb{Z}$ & $\mathbb{Z}$ & $\mathbb{Z}$ & $\mathbb{Z}$ & $\mathbb{Z}$ \\\hline
          \textbf{Solvable?}& Abelian & Nil. & Solv. & Solv. & Solv. & Solv. & Solv.\\\hline
    \end{tabular}
        }
    \caption{Middle (integer) homology groups for the twisted 3-tori $\mathbb{E}\hookrightarrow T^3_\mathcal{M}\to \mathbb{S}^1$ in terms of the monodromies $\mathcal{M}\in\mathsf{SL}(2,\mathbb{Z})$ along the base, described in Section \ref{ss:Fth}. Additional information regarding the Abelian/nilmanifold/solvmanifold property of $T^3_\mathcal{M}$ is also included, see Appendix \ref{Ape1}.}
    \label{tab. hom T3}
\end{table}

\begin{figure}[hbt]
    \centering
     \begin{subfigure}{\textwidth}
     \centering
    \resizebox{0.66\textwidth}{!}{%
\begingroup%
  \makeatletter%
  \providecommand\color[2][]{%
    \errmessage{(Inkscape) Color is used for the text in Inkscape, but the package 'color.sty' is not loaded}%
    \renewcommand\color[2][]{}%
  }%
  \providecommand\transparent[1]{%
    \errmessage{(Inkscape) Transparency is used (non-zero) for the text in Inkscape, but the package 'transparent.sty' is not loaded}%
    \renewcommand\transparent[1]{}%
  }%
  \providecommand\rotatebox[2]{#2}%
  \newcommand*\fsize{\dimexpr\f@size pt\relax}%
  \newcommand*\lineheight[1]{\fontsize{\fsize}{#1\fsize}\selectfont}%
  \ifx\svgwidth\undefined%
    \setlength{\unitlength}{267.03948734bp}%
    \ifx\svgscale\undefined%
      \relax%
    \else%
      \setlength{\unitlength}{\unitlength * \real{\svgscale}}%
    \fi%
  \else%
    \setlength{\unitlength}{\svgwidth}%
  \fi%
  \global\let\svgwidth\undefined%
  \global\let\svgscale\undefined%
  \makeatother%
  \begin{picture}(1,0.29498378)%
    \lineheight{1}%
    \setlength\tabcolsep{0pt}%
    \put(0,0){\includegraphics[width=\unitlength,page=1]{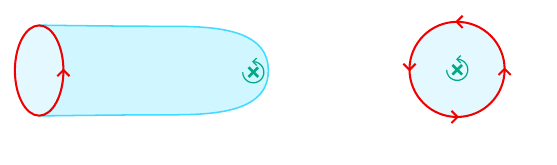}}%
    \put(0.56698518,0.14499831){\color[rgb]{0,0,0}\makebox(0,0)[lt]{\lineheight{1.25}\smash{\begin{tabular}[t]{l}$\Longleftrightarrow$\end{tabular}}}}%
    \put(0.4557016,0.04929168){\color[rgb]{0,0,0}\makebox(0,0)[lt]{\lineheight{1.25}\smash{\begin{tabular}[t]{l}${\color{myred}\mathcal{M}_1}={\color{mygreen}\mathcal{M}_{a}}$\end{tabular}}}}%
    \put(0.12700809,0.1519684){\color[rgb]{0,0,0}\makebox(0,0)[lt]{\lineheight{1.25}\smash{\begin{tabular}[t]{l}${\color{myred}\mathcal{M}_1}$\end{tabular}}}}%
    \put(0.36376233,0.15148564){\color[rgb]{0,0,0}\makebox(0,0)[lt]{\lineheight{1.25}\smash{\begin{tabular}[t]{l}${\color{mygreen}\mathcal{M}_a}$\end{tabular}}}}%
  \end{picture}%
\endgroup%

    }
    \caption{\hspace{-0.3em}) Nullbordism from $\mathbb{S}^1_{\mathcal{M}_1}$ to nothing, with a codimension-two brane of monodromy $\mathcal{M}_a=\mathcal{M}_1$ as cobordism defect.}
    \label{fig.ex1}
    \end{subfigure}
    \begin{subfigure}{\textwidth}
    \centering
    \resizebox{0.66\textwidth}{!}{%
\begingroup%
  \makeatletter%
  \providecommand\color[2][]{%
    \errmessage{(Inkscape) Color is used for the text in Inkscape, but the package 'color.sty' is not loaded}%
    \renewcommand\color[2][]{}%
  }%
  \providecommand\transparent[1]{%
    \errmessage{(Inkscape) Transparency is used (non-zero) for the text in Inkscape, but the package 'transparent.sty' is not loaded}%
    \renewcommand\transparent[1]{}%
  }%
  \providecommand\rotatebox[2]{#2}%
  \newcommand*\fsize{\dimexpr\f@size pt\relax}%
  \newcommand*\lineheight[1]{\fontsize{\fsize}{#1\fsize}\selectfont}%
  \ifx\svgwidth\undefined%
    \setlength{\unitlength}{267.03948734bp}%
    \ifx\svgscale\undefined%
      \relax%
    \else%
      \setlength{\unitlength}{\unitlength * \real{\svgscale}}%
    \fi%
  \else%
    \setlength{\unitlength}{\svgwidth}%
  \fi%
  \global\let\svgwidth\undefined%
  \global\let\svgscale\undefined%
  \makeatother%
  \begin{picture}(1,0.29498378)%
    \lineheight{1}%
    \setlength\tabcolsep{0pt}%
    \put(0,0){\includegraphics[width=\unitlength,page=1]{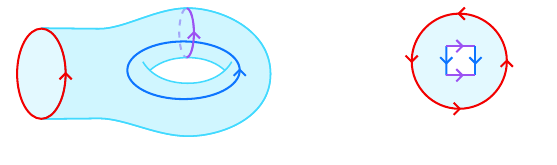}}%
    \put(0.53761991,0.16683531){\color[rgb]{0,0,0}\makebox(0,0)[lt]{\lineheight{1.25}\smash{\begin{tabular}[t]{l}$\Longleftrightarrow$\end{tabular}}}}%
    \put(0.40756299,0.00887336){\color[rgb]{0,0,0}\makebox(0,0)[lt]{\lineheight{1.25}\smash{\begin{tabular}[t]{l}${\color{myred}\mathcal{M}_1}=[{\color{myblue}\mathcal{M}_{a}},{\color{mypurple}\mathcal{M}_b}]$\end{tabular}}}}%
    \put(0.13883531,0.14900373){\color[rgb]{0,0,0}\makebox(0,0)[lt]{\lineheight{1.25}\smash{\begin{tabular}[t]{l}${\color{myred}\mathcal{M}_1}$\end{tabular}}}}%
    \put(0.29949442,0.08137843){\color[rgb]{0,0,0}\makebox(0,0)[lt]{\lineheight{1.25}\smash{\begin{tabular}[t]{l}${\color{myblue}\mathcal{M}_{a}}$\end{tabular}}}}%
    \put(0.36492547,0.23185663){\color[rgb]{0,0,0}\makebox(0,0)[lt]{\lineheight{1.25}\smash{\begin{tabular}[t]{l}${\color{mypurple}\mathcal{M}_{b}}$\end{tabular}}}}%
  \end{picture}%
\endgroup%

    }
    \caption{\hspace{-0.3em}) Nullbordism from $\mathbb{S}^1_{\mathcal{M}_1}$ to nothing, consisting in a gravitational soliton with appropriate monodromies along non-trivial 1-cycles.}
    \label{fig.ex2}
    \end{subfigure}
    \begin{subfigure}{\textwidth}
    \centering
    \resizebox{0.66\textwidth}{!}{%
\begingroup%
  \makeatletter%
  \providecommand\color[2][]{%
    \errmessage{(Inkscape) Color is used for the text in Inkscape, but the package 'color.sty' is not loaded}%
    \renewcommand\color[2][]{}%
  }%
  \providecommand\transparent[1]{%
    \errmessage{(Inkscape) Transparency is used (non-zero) for the text in Inkscape, but the package 'transparent.sty' is not loaded}%
    \renewcommand\transparent[1]{}%
  }%
  \providecommand\rotatebox[2]{#2}%
  \newcommand*\fsize{\dimexpr\f@size pt\relax}%
  \newcommand*\lineheight[1]{\fontsize{\fsize}{#1\fsize}\selectfont}%
  \ifx\svgwidth\undefined%
    \setlength{\unitlength}{267.03948734bp}%
    \ifx\svgscale\undefined%
      \relax%
    \else%
      \setlength{\unitlength}{\unitlength * \real{\svgscale}}%
    \fi%
  \else%
    \setlength{\unitlength}{\svgwidth}%
  \fi%
  \global\let\svgwidth\undefined%
  \global\let\svgscale\undefined%
  \makeatother%
  \begin{picture}(1,0.29498378)%
    \lineheight{1}%
    \setlength\tabcolsep{0pt}%
    \put(0,0){\includegraphics[width=\unitlength,page=1]{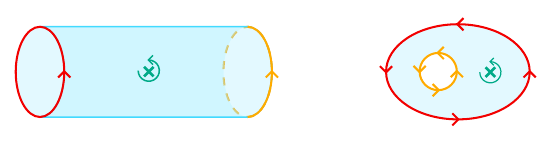}}%
    \put(0.54991185,0.16391363){\color[rgb]{0,0,0}\makebox(0,0)[lt]{\lineheight{1.25}\smash{\begin{tabular}[t]{l}$\Longleftrightarrow$\end{tabular}}}}%
    \put(0.412903,0.01546136){\color[rgb]{0,0,0}\makebox(0,0)[lt]{\lineheight{1.25}\smash{\begin{tabular}[t]{l}${\color{myred}\mathcal{M}_1}={\color{mygreen}\mathcal{M}_{a}}{\color{myyellow}\mathcal{M}_2}$\end{tabular}}}}%
    \put(0.13422677,0.15063486){\color[rgb]{0,0,0}\makebox(0,0)[lt]{\lineheight{1.25}\smash{\begin{tabular}[t]{l}${\color{myred}\mathcal{M}_1}$\end{tabular}}}}%
    \put(0.24095282,0.20399777){\color[rgb]{0,0,0}\makebox(0,0)[lt]{\lineheight{1.25}\smash{\begin{tabular}[t]{l}${\color{mygreen}\mathcal{M}_a}$\end{tabular}}}}%
    \put(0.40595662,0.14923088){\color[rgb]{0,0,0}\makebox(0,0)[lt]{\lineheight{1.25}\smash{\begin{tabular}[t]{l}${\color{myyellow}\mathcal{M}_{2}}$\end{tabular}}}}%
  \end{picture}%
\endgroup%

    }
    \caption{\hspace{-0.3em}) Bordism between $\mathbb{S}^1_{\mathcal{M}_1}$ and $\mathbb{S}^1_{\mathcal{M}_2}$, with a codimension-two brane of monodromy $\mathcal{M}_a=\mathcal{M}_1\mathcal{M}_2^{-1}$ as cobordism defect.}
    \label{fig.ex3}
    \end{subfigure}
    \begin{subfigure}{\textwidth}
    \centering
    \resizebox{0.66\textwidth}{!}{%
    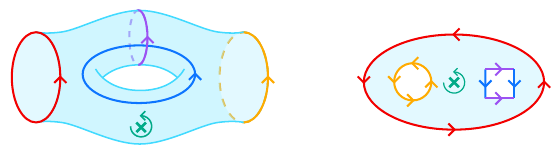
    }
    \caption{\hspace{-0.3em}) Bordism between $\mathbb{S}^1_{\mathcal{M}_1}$ and $\mathbb{S}^1_{\mathcal{M}_2}$, with both non-trivial topology (in terms of a gravitational soliton with appropriate monodromies) and a codimension-two brane defect.}
    \label{fig.ex4}
    \end{subfigure}
    \caption{Sketch of various bordisms between different theories resulting from compactifying on $\mathbb{S}^1$ with non-trivial monodromy and/or nothing. We show different possibilities, where the trivialization of the monodromies is achieved either through gravitational solitons (i.e, base $\tilde{\mathcal{B}}_2$ with genus $g>0$) or codimension-two defects. We show the closing relation \eqref{eq.closing} along the different monodromies, as well as the fundamental polygons through which is is made more explicit.}
    \label{fig:examples}
\end{figure}

We now turn to the elliptically fibered bordism, $\mathbb{E}\hookrightarrow\mathcal{B}_4\xrightarrow{\pi}\tilde{\mathcal{B}}_2$. By the classification of Riemann surfaces, we have that $\tilde{\mathcal{B}}_2$ is characterized by its genus $g\in\mathbb{Z}_{\geq 0}$ and the number $n_b$ boundaries (i.e., disks/points removed, $n_b\in\{1,2\}$ for our purposes).\footnote{Since we are considering the F-theory picture of $\Omega_1^{\mathsf{Spin}}({\rm B}\mathsf{SL}(2,\mathbb{Z}))$, in order for the bordism to keep the $\mathsf{Spin}$ structure $\tilde{\mathcal{B}}_2$ must be orientable.} Requiring the overall $\mathcal{B}_4$ to be smooth, the elliptic fiber will be smooth, $\pi^{-1}(x)\simeq\mathbb{T}^2$ in all points of $\tilde{\mathcal{B}}_2$ save for a finite number of them, $\{x_i\}_{i=1}^{n_s}\subset \tilde{\mathcal{B}}_2$ where $\pi^{-1}(x_i)$ is singular. We can then consider a $\tilde{\mathcal{B}}_2^*\simeq \tilde{\mathcal{B}}_2-\{x_i\}_{i=1}^{n_s}$ genus-$g$ Riemann surface with $n=n_b+n_s$ punctures, around which we have monodromies $\{\mathcal{M}_j\}_{j=1}^{n=n_b+n_s}$. Similarly, for $g>1$, around the non-trivial $2g$ 1-cycles spanned by the $g$ handles of our surface we can have non-trivial monodromies $\{\mathcal{M}_{a_i},\mathcal{M}_{b_i}\}_{i=1}^g$.\footnote{For our purposes, we will study the $\mathsf{SL}(2,\mathbb{Z})$ action as monodromies the 1-cycles of the base generating $H_1(\tilde{\mathcal{B}}_2^*;\mathbb{Z})$. One could also take $\mathsf{SL}(2,\mathbb{Z})$ transition function along the Poincar\'e dual 1-cycles of $\tilde{\mathcal{B}}_2^*$. While along this paper directly working with the monodromies is more straightforward, for higher dimensionalities of the base transition functions might be more tractable, see e.g., \cite{McNamara:2022lrw,Debray:2023yrs,Braeger:2025kra,Chakrabhavi:2025bfi}.} This generates a $\mathsf{SL}(2,\mathbb{Z})$ action on $H_1(\mathbb{T}^2;\mathbb{Z})$ of the non-singular fiber,
\begin{equation}\label{eq. action}
\begin{array}{rccc}
\rho:&\pi_1\big(\tilde{\mathcal{B}}_2^*\big)\simeq\mathbb{Z}^{\ast2g+n-1}&\longrightarrow
    &\mathsf{SL}(2,\mathbb{Z})  \\
     & \gamma&\longmapsto&\mathcal{M}_\gamma
\end{array}
\end{equation}
generated by the monodromy along the 1-cycles spanning $\pi_1\big(\tilde{\mathcal{B}}_2^*\big)$. Now, in order for the above action to be well-defined, the following closing relation must hold:

\begin{equation}\label{eq.closing}
\prod_{j=1}^n\mathcal{M}_{c_j}\prod_{i=1}^g[\mathcal{M}_{a_i},\mathcal{M}_{b_i}]=\mathrm{Id}_2\,.
\end{equation}
For our purposes in order to describe nullbordisms and cobordisms interpolating between different theories with monodromies $\mathcal{M}_1$ and $\mathcal{M}_2$ along the boundaries, we write
\begin{equation}
    \left\{
    \begin{array}{rll}
       \mathcal{M}_1=\prod_{i=1}^g[\mathcal{M}_{a_i},\mathcal{M}_{b_i}]  & \text{for nullbordisms} \\
       \mathcal{M}_1\mathcal{M}_2^{-1}=\prod_{i=1}^g[\mathcal{M}_{a_i},\mathcal{M}_{b_i}]  & \text{for interpolating bordisms}
    \end{array}
    \right.\quad,
\end{equation}
where the inverse $\mathcal{M}_1$ comes from the fact that we are entering the base $\tilde{\mathcal{B}}_2$ from ``outside'', and thus has an opposite orientation to the rest, see Figure \ref{fig:examples}.

As it was the case for the twisted tori $\mathbb{E}\hookrightarrow T^3_{\mathcal{M}}\to\mathbb{S}^1$ in \eqref{eq. hom C3}, a non-trivial action $\rho:\pi_1(\tilde{\mathcal{B}}_2^*)\to\mathsf{SL}(2,\mathbb{Z})$ can reduce the homology of the generic fiber $\mathbb{T}^2$, so that our bordism is not a Cartesian product and K\"unneth formula does not apply, $H_k(\mathcal{B}_4;\mathbb{Z})\not\simeq \oplus_{p+l=k}H_p(\mathbb{T}^2;\mathbb{Z})\oplus H_k(\tilde{B}_2;\mathbb{Z})$. Instead, through the use of Leray-Serre spectral sequence (see Appendix \ref{app.leraySerre} for more details), one obtains

\begin{equation}\label{eq. hom bord4}
    \tcbhighmath{\begin{array}{l}
         H_0(\mathcal{B}_4;\mathbb{Z})\simeq\mathbb{Z}\,,\qquad
    H_1(\mathcal{B}_4;\mathbb{Z})\simeq\mathbb{Z}^{\oplus2g+n_b+n_s-1}\oplus(\mathbb{Z}^{\oplus 2})_{\rho}\,,\\
    H_2(\mathcal{B}_4;\mathbb{Z})\simeq\mathbb{Z}^{\oplus4g+2(n_b+n_s)-3}\oplus(\mathbb{Z}^{\oplus 2})_{\rho}\,,\\
    H_3(\mathcal{B}_4;\mathbb{Z})\simeq\mathbb{Z}^{\oplus2g+n_b+n_s-1}\,,\qquad
    H_4(\mathcal{B}_4;\mathbb{Z})\simeq0\,,
    \end{array}   }
\end{equation}
where we define the coinvariant of the $\rho:\pi_1(\tilde{\mathcal{B}}_2^*)\to\mathsf{SL}(2,\mathbb{Z})$ action as\footnote{
As explained in Appendix \ref{app.leraySerre}, the closing relation \eqref{eq.closing} makes one of monodromy matrices redundant. Furthermore, in order to compute the coinvariant $(\mathbb{Z}^{\oplus 2})_\rho$, we employ the Smith normal form of the $2\times(2g+n)$-matrix
\begin{equation}
    \hat{\mathcal{M}}=\begin{pmatrix}
        \mathcal{M}_{a_1}-{\rm Id}_2|\cdots|\mathcal{M}_{a_g}-{\rm Id}_2|\mathcal{M}_{b_1}-{\rm Id}_2|\cdots|\mathcal{M}_{b_g}-{\rm Id}_2|\mathcal{M}_{c_1}-{\rm Id}_2|\cdots|\mathcal{M}_{c_n}-{\rm Id}_2
    \end{pmatrix}\,,
\end{equation}
so that for some matrices $\hat{\mathcal{S}}$ and $\hat{\mathcal{T}}$, $\hat{\mathcal{S}}\hat{\mathcal{M}}\hat{\mathcal{T}}=\begin{psmallmatrix}d_1&0&\cdots &0\\
        0&d_2&\dots &0\end{psmallmatrix}\,$. This way $${\langle(\rho(\gamma)-{\rm Id}_2)v\,:\,\gamma\in\pi_1(\Sigma_{g,n}),\,v\in\mathbb{Z}^{\oplus 2}\rangle}=d_1\mathbb{Z}\oplus d_2\mathbb{Z}\,,$$ and as a result$(\mathbb{Z}^{\oplus 2})_\rho\simeq\mathbb{Z}_{d_1}\oplus\mathbb{Z}_{d_2}$, where we denote $\mathbb{Z}_0\equiv\mathbb{Z}$ and $\mathbb{Z}_1\equiv\{0\}$.
}
\begin{equation}\label{eq. coinv}
   (\mathbb{Z}^{\oplus 2})_\rho=\frac{\mathbb{Z}^{\oplus 2}}{\bigoplus_{\gamma\in\pi_1(\tilde{\mathcal{B}}_2^*)}{\rm im}(\mathcal{M}_\gamma-{\rm Id}_2)} \,.
\end{equation}
Note that then, knowing the genus $g$ of the base and the number of points over which the fiber is singular, the homology of $\mathcal{B}_4$ its set by $(\mathbb{Z}^{\oplus 2})_\rho$.\\

Before continuing with the explicit bordisms between Type IIB 9d gauged SUGRAs, we can explicitly check how \eqref{eq. hom C3} and \eqref{eq. hom bord4} fulfill the general bounds \eqref{eq.betti ineq} and \eqref{eq.prime dec} for the Betti numbers and torsion of $\mathcal{B}_4$. Starting with the free part, we find
\begin{subequations}
    \begin{align}
        b_0(\mathcal{B}_4)+{b_3(\mathcal{B}_4)}&=2g+n_b+n_s\geq n_b=b_0(\mathcal{C}_3)=b_3(\mathcal{C}_3)\\
        b_1(\mathcal{B}_4)+ {b_2(\mathcal{B}_4)}&=3(2g+n_b+n_s)+2(\rank (\mathbb{Z^{\oplus}})_\rho-2)\notag\\&\geq n_b+\sum_{j=1}^{n_b}\rank \ker(\mathcal{M}_i-{\rm Id}_2)\geq b_1(\mathcal{C}_3)=b_2(\mathcal{C}_3)\,,\label{eq. second free ineq}
    \end{align}
\end{subequations}
where the second inequality follows from the closing relation \eqref{eq.closing}.\footnote{To see this, notice that, since $\rank\ker(\mathcal{M}_i-{\rm Id}_2)\leq 2$, we have that $n_b\leq b_1(\mathcal{C}_3)=b_2(\mathcal{C}_3)\leq 3n_b$, which is saturated when all the monodromies of the boundary components are trivial. It is clear that, regardless of the value of $g$, $n_s$ and $\rank (\mathbb{Z^{\oplus}})_\rho$, the second inequality in \eqref{eq. second free ineq} holds if there are two or more $\mathcal{M}_i$ matrices different from the identity.

Suppose then that all the such matrices are $\rm Id_2$. Then we must have that $\rank (\mathbb{Z^{\oplus}})_\rho=2$ unless there is some monodromy along some internal cycle or some singularity of the fiber. The former case means $g\geq 1$, which automatically implies inequality \eqref{eq. second free ineq}. As for the later, we have that the closing relation \eqref{eq.closing} means that either we have a single singularity ($n_s=1$) with trivial monodromy, or at least two singularities, which again implies \eqref{eq. second free ineq}.

Consider finally the case where there is a single monodromy along the boundary which is non-trivial. This means that $b_1(\mathcal{C}_3)=b_2(\mathcal{C}_3)\leq 3n_b-1$ or $3n_b-2$. Now, this implies $\rank (\mathbb{Z^{\oplus}})_\rho\leq 1$, but further, due to the closing relation \eqref{eq.closing}, there must be at least one singularity or pair of internal 1-cycles with monodromy, which means $n_s$ or $g\geq 1$, which in turn results in \eqref{eq. second free ineq}. 

} It is easy to see that the only way to saturate the above inequalities is if $g=n_s=0$ and $\rank (\mathbb{Z^{\oplus}})_\rho=\rank \ker (\mathcal{M}_i-{\rm Id}_2)=2$ for all $i$. In other words, the different boundaries have trivial monodromy and $\mathcal{B}_4\simeq\mathbb{T}^2\times(\mathbb{S}^2-\{x_i\}_{i=1}^{n_b})$, so we are interpolating between ungauged 9d type IIB supergravities (and/or nothing).

As for the torsion, we have that the torsion parts entering in \eqref{eq.prime dec} are
\begin{subequations}
    \begin{align}
        T_1(\mathcal{C}_3)&\simeq \bigoplus_{i=1}^{n_b}{\rm Tor}\left(\frac{\mathbb{Z}^{\oplus 2}}{{\rm im}(\mathcal{M}_i-{\rm Id}_2)}\right)\\
        T_1(\mathcal{B}_4)\simeq T_2(\mathcal{B}_4)&\simeq {\rm Tor}(\mathbb{Z}^{\oplus 2})_\rho= {\rm Tor} \frac{\mathbb{Z}^{\oplus 2}}{\bigoplus_{\gamma\in\pi_1(\tilde{\mathcal{B}}_2)}{\rm im}(\mathcal{M}_\gamma-{\rm Id}_2)}\,.
    \end{align}
\end{subequations}
Since $T_1(\mathcal{B}_4)\simeq T_2(\mathcal{B}_4)$, it is clear that \eqref{eq.prime dec} is trivially fulfilled by choosing $x_p=0$ for all primes $p$. Note how in general there is no clear relation between the torsion of the boundary and the torsion of the bordism, since the sum of the torsions of the individual ${\mathbb{Z}^{\oplus 2}}/{{\rm im}(\mathcal{M}_i-{\rm Id}_2)}$ of the boundary will generically be larger than that of the whole coinvariant $(\mathbb{Z}^{\oplus 2})_\rho$ \eqref{eq. coinv} once one considers the additional contributions from the singularities in the bulk of $\tilde{\mathcal{B}}_2$ and monodromies along the internal 1-cycles.

\subsection{Gravitation solitons and the commutator subgroup}

After having computed the general expression of the homology of our $\mathbb{E}\hookrightarrow\mathcal{B}_4\to\tilde{\mathcal{B}}_2$ bordisms and before studying the particular interpolating $\mathcal{B}_4$'s between different type IIB 9d gauged SUGRAs and/or nothing, we take a quick detour to comment some nuances on the perks of allowing for a non-trivial base $\tilde{\mathcal{B}}_2$.

As discussed on the previous section, having a non-trivial (discrete) duality $G$-bundle (with $G\equiv\mathsf{SL}(2,\mathbb{Z})$ for our purposes) induces a $G$-action \eqref{eq. action} $\rho:\pi_1(\tilde{\mathcal{B}}_2^*)\to G$ (where we allow duality defects on $\tilde{\mathcal{B}}_2$ inducing additional monodromies) acting on the fields of our theory (from the F-theory perspective of course this implies a $\mathsf{SL}(2,\mathbb{Z})$ transformation on the elliptic fiber), together with a closing relation \eqref{eq.closing}. Furthermore, if $\tilde{\mathcal{B}}_2$ has non-contractible (pairs of) 1-cycles $\{\gamma_{a_i},\gamma_{b_i}\}_{i=1}^g$, $G$-monodromies $\{\mathcal{M}_{a_i},\mathcal{M}_{b_i}\}_{i=1}^g$ around them can realize $\{[\mathcal{M}_{a_i},\mathcal{M}_{b_i}]\}_{i=1}^g$ as one winds around each of the $g$ handles of $\tilde{\mathcal{B}}_2$, which would be equivalent to a duality defect with the appropriate $\mathcal{M}_i=[\mathcal{M}_{a_i},\mathcal{M}_{b_i}]$ monodromy.\\

This means that there exists a subset of monodromies of $G$ that can be directly realized by non-trivial topology (and appropriate monodromies) on the base $\tilde{\mathcal{B}}_2$ without the need of codimension-2 duality defects. This precisely corresponds to the \emph{commutator} or \emph{derived} group
\begin{equation}\label{eq. comm group}
    G':=[G,G]=\langle\{[a,b]=aba^{-1}b^{-1}\,|\,a,b\in G\}\rangle\,,
\end{equation}
whose elements consist in commutators of elements in $G$.
This way, we expect that Quantum Gravity allows for topology changes that break the symmetry from $G$ by quotienting out the commutator subgroup \cite{McNamara:2021cuo}, down to the \textbf{abelianization} of $G$, defined as
\begin{equation}\label{eq. abel def}
    G^{\rm ab}=\faktor{G}{G'}\,.
\end{equation}
By definition, $G^{\rm ab}$ is Abelian and thus admits a decomposition $G^{\rm ab}\simeq\mathbb{Z}^r\oplus\bigoplus_{p}\Big(\bigoplus_i\mathbb{Z}_{p^{s_{p,i}}}\Big)$. We will see in Section \ref{sec: other groups} that for theories with a lot of supersymmetry $G^{\rm ab}$ is pure torsion, while for lower number of supercharges it can have free part.

The abelianization $G^{\rm ab}$ corresponds with the first integral homology of $G$, $H_1(G;\mathbb{Z})=G^{\rm ab}$, see e.g., \cite[Chapter II]{Brown1982} and, since the monodromies of $G^{\rm ab}$ \emph{cannot} be realized through pure gravitational solitons, $\mathbb{S}^1$ with a monodromy on $G^{\rm ab}$ represent non-trivial classes of the first cobordism group. More precisely, for $\mathsf{Spin}$-bordisms with $G$-duality bundle, one can use that \cite{anderson1967structure,Yonekura:2022reu}
\begin{equation}\label{eq.reduced bord}
    \Omega_k^{\mathsf{Spin}}({\rm B}G)\simeq \Omega_k^{\mathsf{Spin}}({\rm pt})\oplus\widetilde{\Omega}_k^{\mathsf{Spin}}({\rm B}G)\;,\quad \text{with}\qquad
        \begin{array}{|c|cccccccc|}
        \hline
           k  & 0&1&2&3&4&5&6&7 \\\hline
            \Omega_k^{\mathsf{Spin}}({\rm pt}) &\mathbb{Z} &\mathbb{Z}_2&\mathbb{Z}_2&0&\mathbb{Z}&0&0&0\\\hline 
        \end{array}\;,
\end{equation}
where $\widetilde{\Omega}_k^{\mathsf{Spin}}({\rm B}G)$ is the \emph{reduced bordism} group, which can be understood as the remainder of $\Omega_k^{\mathsf{Spin}}({\rm B}G)$ after quotienting out the $\mathsf{Spin}$ defects. In the case $k=1$ one has \cite{Yonekura:2022reu}
\begin{equation}\label{eq.1spinBord}
    \Omega_1^{\mathsf{Spin}}({\rm B}G)\simeq \underbrace{H_0(G,\mathbb{Z}_2)}_{ \Omega_1^{\mathsf{Spin}}({\rm pt})\simeq \mathbb{Z}_2}\oplus \underbrace{H_1(G,\mathbb{Z})}_{ \widetilde{\Omega}_1^{\mathsf{Spin}}({\rm B}G)}\simeq \mathbb{Z}_2\oplus G^{\rm ab}\;,
\end{equation}
with precisely the needed bordism defects being the $\mathsf{Spin}$ one killing $\mathbb{S}^1$ with periodic boundary conditions (see \cite{McNamara:2019rup,Hamada:2025duq}) and the codimension-2 duality defects generating the $G^{\rm ab}$ monodromies.\\

For the concrete $G=\mathsf{SL}(2,\mathbb{Z})$ case of interest, it is not difficult to show (see \cite{ConradSL2Z} for a series of relevant results on $\mathsf{SL}(2,\mathbb{Z})$) that $\mathsf{SL}(2,\mathbb{Z})'$ is isomorphic to the rank-2 free group $F_2$, generated by, e.g.,
\begin{align}\label{eq.sl2z com}
    \mathsf{SL}(2,\mathbb{Z})'&=\left\langle\text{\small$\begin{pmatrix}
        1&1\\1&2
    \end{pmatrix}=\left[\begin{pmatrix}
        3&2\\1&1
    \end{pmatrix},\begin{pmatrix}
        2&1\\-1&0
    \end{pmatrix}\right],\begin{pmatrix}
        1&-1\\-1&2
    \end{pmatrix}=\left[\begin{pmatrix}
        3&-1\\-2&   1
    \end{pmatrix},\begin{pmatrix}
        3&-1\\1&0
    \end{pmatrix}\right]$}\right\rangle\;,
\end{align}
Note that $\mathsf{SL}(2,\mathbb{Z})$ is not a \textit{perfect group}, this is, $\mathsf{SL}(2,\mathbb{Z})\neq  \mathsf{SL}(2,\mathbb{Z})'$,\footnote{Given that $\mathsf{SL}(2,\mathbb{Z})'$ is a free group, it does not have torsion elements such as minus the identity matrix, which is in $ \mathsf{SL}(2,\mathbb{Z})$.} and thus $\mathsf{SL}(2,\mathbb{Z})^{\rm ab}\neq0$ and there are elements that cannot be expressed as a the product of commutators. Following the $\mathsf{SL}(2,\mathbb{Z})$ presentation introduced in Footnote \ref{fn.presentationSL2Z}, we find
\begin{equation}\label{eq. ab SL2Z}
    \mathsf{SL}(2,\mathbb{Z})^{\rm ab}=\faktor{\mathsf{SL}(2,\mathbb{Z})}{\mathsf{SL}(2,\mathbb{Z})'}=\Big\langle\left.\overline{T}=\overline{\begin{psmallmatrix}
        1&1\\0&1    \end{psmallmatrix}}\right|\overline{T}^{12}=e\Big\rangle\simeq\mathbb{Z}_{12}\simeq\mathbb{Z}_4\oplus\mathbb{Z}_3\,,
\end{equation}
precisely generated by the lift of the D7-brane monodromy.\footnote{\label{fn.all same class}Note that, since the monodromy \eqref{eq.monpq} of any $[p,q]$ 7-brane is conjugated to that of a D7-brane,
\begin{equation}
        \mathcal{M}_{[p,q]}=\begin{pmatrix}
            1+pq&p^2\\-q^2&1-pq
        \end{pmatrix}=\begin{pmatrix}
            p&r\\-q&s
        \end{pmatrix}\begin{pmatrix}
            1&1\\0&1
        \end{pmatrix}\begin{pmatrix}
            p&r\\-q&s
        \end{pmatrix}^{-1}\,,
    \end{equation}
    all $[p,q]$ 7-branes (including the D7-brane) have the same image under ${\rm ab}:\mathsf{SL}(2,\mathbb{Z})\to \mathsf{SL}(2,\mathbb{Z})^{\rm ab}$. This means that the monodromy around any $[p,q]$ 7-brane can be obtained by a single D7-brane and some set of gravitational solitons.
}

From \eqref{eq.1spinBord}, this means  
\begin{equation}
    \tilde{\Omega}_1^{\mathsf{Spin}}(B\mathsf{SL}(2,\mathbb{Z}))\simeq H_1(\mathsf{SL}(2,\mathbb{Z}),\mathbb{Z})\simeq\mathsf{SL}(2,\mathbb{Z})^{\rm ab}\simeq \mathbb{Z}_{12}
\end{equation}
generated by $\mathbb{S}^1$ with $T={\begin{psmallmatrix}
    1&1\\0&1
\end{psmallmatrix}}$ monodromy, see also \cite{Debray:2023yrs}, which cannot be bounded by a pure gravitational soliton, regardless of the monodromy along the internal 1-cycles.

In Appendix \ref{App. real comm} we show the expression of the monodromies that need to be realized by the bordism connecting the boundaries from Table \ref{tab. hom T3} and/or nothing, in terms of $\mathsf{SL}(2,\mathbb{Z})'$ elements and ``remainders'' given by the monodromies associated to at most 11 (anti-)~D7 branes. We also show the realization of said elements when larger monodromy groups, such as $\mathsf{GL}(2,\mathbb{Z})$ are considered, something that will be discussed in more detail in Section \ref{ss. mp gl}.
\begin{tcolorbox}[enhanced jigsaw,breakable,pad at break*=1mm,colback=cyan!5!white,colframe=cyan!90!black,title={Showing $T^{12}\in\mathsf{SL}(2,\mathbb{Z})^{\rm ab}$}]
We can show this explicitly by noting that
\begin{equation}\label{eq. T12 mono}
    T^{12}=\begin{pmatrix}
        1&12\\0&1
    \end{pmatrix}=\Bigg[\!\begin{pmatrix}
        1&1\\1&2
    \end{pmatrix},\begin{pmatrix}
        3&1\\-1&0
    \end{pmatrix}\!\Bigg]^2\,,
\end{equation}
so that $T^{12}$ indeed can be written as the product of (the same two) commutators, which from now on we denote collectively by $\mathbf{T^{12}}$.

Now, one could wonder whether two is the minimal number of commutators needed, or if a single one is possible. To show that indeed this is not the case, by \emph{reductio ad absurdum} consider $X,Y\in\mathsf{SL}(2,\mathbb{Z})$, with $[X,Y]=T^{12}$, so that $XY=T^{12}YX$. Consider now $YX=\begin{psmallmatrix}
    a&b\\c&d
\end{psmallmatrix}\in\mathsf{SL}(2,\mathbb{Z})$, such that
\begin{equation}
    {\rm Tr}(XY)={\rm Tr}(YX)=a+d\,,\qquad{\rm Tr}\big(T^{12}YX\big)={\rm Tr}\begin{psmallmatrix}
        a&12a+b\\c&12c+d
    \end{psmallmatrix}=a+12c+d\,,
\end{equation}
which forces $d=0$, in such a way that $YX=\begin{psmallmatrix}
    \pm1&b\\0&\pm1
\end{psmallmatrix}$ is upper-triangular, where we have used that $YX\in\mathsf{SL}(2,\mathbb{Z})$. On the other hand, since $[X,Y]YX=X(YX)X^{-1}=T^{12}YX$ must be also upper-triangular, if $X=\begin{psmallmatrix}
    \alpha&\beta\\\gamma&\delta
\end{psmallmatrix}\in\mathsf{SL}(2,\mathbb{Z})$, then 
\begin{equation}
    X(YX)X^{-1}=\begin{pmatrix}
    \pm\alpha  \delta -\gamma  (\alpha  b\pm\beta ) & \alpha ^2 b \\
 -b \gamma ^2 & \pm\alpha  \delta +\gamma  (\alpha  b\mp\beta )
\end{pmatrix}\,,
\end{equation}
which forces $\gamma=0$, making $X$ also upper-diagonal. Since upper-diagonal matrices are a subgroup of $\mathsf{SL}(2,\mathbb{Z})$, then $Y=X^{-1}T^{12}YX$ would also be upper triangular. But any pair of upper-diagonal matrices commute, which would result in $[X,Y]=T^{12}={\rm Id}_2$, which is certainly not the case! We thus conclude that $T^{12}$ cannot be written as a single commutator (but it is indeed the square of one).\\

We have just shown that $T^{12}$ can be expressed as a minimum of two commutators (actually the same commutator squared). One might then feel tempted to conclude that $T^{12k}$, for $k\in\mathbb{Z}_{>0}$, needs $2k$ commutators to be expressed, as $T^{12k}=\begin{psmallmatrix}
        1&12k\\0&1
    \end{psmallmatrix}=\big[\!\begin{psmallmatrix}
        1&1\\1&2
    \end{psmallmatrix},\begin{psmallmatrix}
        3&1\\-1&0
    \end{psmallmatrix}\!\big]^{2k}$. However, as shown in \cite[Example 2.6]{CULLER1981133}, for any group $G$ and $u,\,v\in G$, $[u,v]^k$ can be expressed as the product of $\big\lfloor\frac{k}{2}\big\rfloor+1$ commutators (generally different ones). This way it is clear that generally, 
    \begin{equation}\label{eq.kplus1comm}
        \textit{$T^{12k}$ can be expressed as the product of $k+1$ commutators.}
    \end{equation}   
    We will denote such arrangement (of which we have no closed expression) as $(\mathbf{T^{12}})^{k}$.
\end{tcolorbox}

\subsection{Elliptically fibered bordisms, branes, and gravitational solitons\label{eq. gen results}} 

As described in previous subsections, interpolating from a 9d type IIB gauged SUGRA associated to a twisted $T^3_{\mathcal{M}}$ to another with $T^3_{\mathcal{M}'}$ (equivalently nothing, if $\mathcal{M}'={\rm Id}_2$, since the base circle has no obstruction to close) need the bulk of the bordism $\tilde{\mathcal{B}}_2$ to implement a non-trivial $\mathsf{SL}(2,\mathbb{Z})$ monodromy $\mathcal{M}\mathcal{M}'{}^{-1}$ in order for the $\mathsf{SL}(2,\mathbb{Z})$ action \eqref{eq. action} (and thus the bordism) to be well defined. In Table \ref{tab.int} we show how such bordism monodromies can be classified in four different sets, which include parabolic, elliptic and hyperbolic matrices from Section \ref{ss:Fth}, as well as their products and inverses. Note that in general the monodromy that needs to be implemented by the bordism can be hyperbolic, i.e., $|\rm Tr(\mathcal{M}\mathcal{M}'{}^{-1})|>2$, and thus is not realized in the Kodaira classification of Table \ref{tab:kodaira pq}

\begin{table}[htb]
    \centering
    \begin{tabular}{|c||c|c|c|}\hline
       \diagbox{$\mathcal{M}'$}{$\mathcal{M}$}  & \cellcolor{yellow!15}\scriptsize$\begin{pmatrix}1&n\\0&1\end{pmatrix}$& \cellcolor{green!15}\scriptsize$\begin{pmatrix}-1&0\\0&-1\end{pmatrix}$&\cellcolor{blue!15} \scriptsize$\begin{pmatrix}n&1\\-1&0\end{pmatrix}$ \\\hline\hline
       \cellcolor{yellow!15} \scriptsize$\begin{pmatrix}1&m\\0&1\end{pmatrix}$&\cellcolor{yellow!15}\scriptsize$\begin{pmatrix}1&n-m\\0&1\end{pmatrix}$ &\cellcolor{green!15}\scriptsize$\begin{pmatrix}-1&m\\0&-1\end{pmatrix}$ &\cellcolor{blue!15}\scriptsize$\begin{pmatrix}n&1-nm\\-1&m\end{pmatrix}$\\\hline
       \cellcolor{green!15}\scriptsize$\begin{pmatrix}-1&0\\0&-1\end{pmatrix}$&\cellcolor{green!15}\scriptsize$\begin{pmatrix}-1&-n\\0&-1\end{pmatrix}$ &\cellcolor{yellow!15}\scriptsize$\begin{pmatrix}1&0\\0&1\end{pmatrix}$ &\cellcolor{red!15}\scriptsize$\begin{pmatrix}-n&-1\\1&0\end{pmatrix}$\\\hline
       \cellcolor{blue!15}\scriptsize$\begin{pmatrix}m&1\\-1&0\end{pmatrix}$&\cellcolor{red!15}\scriptsize$\begin{pmatrix}n&nm-1\\1&m\end{pmatrix}$ & \cellcolor{blue!15}\scriptsize$\begin{pmatrix}0&1\\-1&-m\end{pmatrix}$ &\cellcolor{yellow!15}\scriptsize$\begin{pmatrix}1&m-n\\0&1\end{pmatrix}$\\\hline       
    \end{tabular}
    \caption{Monodromies $\mathcal{M}\mathcal{M}'{}^{-1}$ that need to be implemented by the bordisms $\tilde{\mathcal{B}}_2^*$ in order to interpolate from $\mathbb{S}^1_\mathcal{M}$ to $\mathbb{S}^1_{\mathcal{M}'}$, with the nullbordism case is captured by $\mathcal{M}'=\begin{psmallmatrix}
        1&0\\0&1\end{psmallmatrix}$. We arrange the different matrices by their structure, identifying it by different colors. For example, it is clear that for the representatives from Section \ref{ss:Fth}, parabolic monodromies appear in \colorbox{yellow!15}{yellow}, while parabolic elliptic and hyperbolic in \colorbox{blue!15}{blue} for $n\in\{-1,0,1\}$ and $|n|>3$, respectively.
    \label{tab.int}}
\end{table}

In order to realize such monodromies, two approaches can be considered. First of all, we can take a trivial base with zero genus, so that $\tilde{\mathcal{B}}_2\simeq\mathbb{S}^1\times[0,1]$ for bordisms between theories and $\mathbb{D}^2$ for nullbordisms. Since $\tilde{\mathcal{B}}_2$ lacks non-trivial 1-cycles, we must relay on codimension-2 duality defects \cite{Delgado:2024skw} that perform $\mathsf{SL}(2,\mathbb{Z})$ monodromies around them. This is precisely the role of (stacks of) $[p,q]$-7-branes, with
\begin{equation}
    \mathcal{M}_{[p,q]}=\begin{pmatrix}
        1+pq&p^2\\-q^2&1-pq
    \end{pmatrix}\,,
\end{equation}
with $[1,0]$-7-branes the usual D7-branes. Since we can realize $U=\begin{psmallmatrix}
    1&1\\-1&0
\end{psmallmatrix}=\mathcal{M}_{[1,0]}\mathcal{M}_{[1,1]}$ and $T=\begin{psmallmatrix}
    1&1\\0&1
\end{psmallmatrix}=\mathcal{M}_{[1,0]}$ in the presentation from Footnote \ref{fn.presentationSL2Z}, any element of $\mathsf{SL}(2,\mathbb{Z})$ can thus be realized by winding around an arrangement of (anti)$[p,q]$-7-branes. From an F-theory point of view, their location corresponds to the point of the base over which the elliptic fiber becomes singular (even if $\mathcal{B}_4$ as a whole remains smooth). Since $\mathbb{E}\hookrightarrow\mathcal{B}_4\to\tilde{\mathcal{B}}_2$ can be understood as a smooth elliptic surface (with boundaries), such singular fibers fall under the classification of Kodaira and N\'eron \cite{Kodaira+2015+1142+1156,Neron}, see also \cite[Chapter 4]{Weigand:2018rez} in the minimal case where $\mathcal{B}_4$ does not have -1-curves. See Table \ref{tab:kodaira pq} for the set of Kodaira singularities, together with the associated monodromy $\mathcal{M}$, ADE gauge group realized and realization in terms of a basis of $[p,q]$-7-branes.

\begin{table}[hbt!]
    \centering
    \resizebox{\textwidth}{!}{
    \begin{tabular}{|c||c|c|c|c|c|c|c|c|}\hline
        Kodaira & $I_n$ &$II$  &$III$  &$IV$  &$I_n^*$  &$IV^*$  &$III^*$  & $II^*$  \\\hline
        $\mathcal{M}$ & $\begin{psmallmatrix}1&n\\0&1\end{psmallmatrix}$ & $\begin{psmallmatrix}1&1\\-1&0\end{psmallmatrix}$ & $\begin{psmallmatrix}0&1\\-1&0\end{psmallmatrix}$ & $\begin{psmallmatrix}-1&1\\-1&0\end{psmallmatrix}$& $\begin{psmallmatrix}-1&-n\\0&-1\end{psmallmatrix}$ & $\begin{psmallmatrix}-1&-1\\1&0\end{psmallmatrix}$ & $\begin{psmallmatrix}0&-1\\1&0\end{psmallmatrix}$ & $\begin{psmallmatrix}1&-1\\1&0
        \end{psmallmatrix}$   \\\hline        
        ADE group & $\mathsf{SU}(n)$  & 1  & $\mathsf{SU}(2)$  &$\mathsf{SU}(3)$  & $\mathsf{SO}(8+2n)$  & $\mathsf{E}_6$ & $\mathsf{E}_7$  &   $\mathsf{E}_8$ \\\hline
        $[p,q]$-stack & $A^n$  & $AC$  & $A^2C$ &$A^3C$  & $A^{n+4}BC$  &$A^5BC^2$  &$A^6BC^2$  & $A^7BC^2$  \\\hline
    \end{tabular}}
    \caption{Set of Kodaira singularities (with $n\geq 0$), together with the associated ADE group, monodromy and realization in term of a stack of $[p,q]$-7-branes, where $A\equiv[1,0]$-7-brane, $B\equiv[3,1]$-7-brane and $C\equiv[1,1]$-7-brane, respectively with monodromies $\mathcal{M}_A=\begin{psmallmatrix}1&1\\0&1\end{psmallmatrix}$, $\mathcal{M}_B=\begin{psmallmatrix}4&9\\-1&-2\end{psmallmatrix}$ and $\mathcal{M}_C=\begin{psmallmatrix}2&1\\-1&0\end{psmallmatrix}$.}
    \label{tab:kodaira pq}
\end{table}

Given some monodromy $\mathcal{M}=\begin{psmallmatrix}
    a&b\\c&d
\end{psmallmatrix}\in\mathsf{SL}(2,\mathbb{Z})$, we can associate a measure of its ``complexity'',\footnote{Note that the trace of the matrix is not a good indicator of this, since all parabolic matrices $\begin{psmallmatrix}
    1&k\\0&1
\end{psmallmatrix}$ have trace 2, but it is clear that for large $k$ they implement larger monodromies, and thus they do not have the same ``complexity'' or charge.} given by the spectral norm $\|\mathcal{M}\|_2$ (this is, the square root of the largest eigenvalue of $\mathcal{M}^\dagger\mathcal{M}$, which for large entries can be approximated by the Frobenius norm $\|\mathcal{M}\|_2\approx\sqrt{a^2+b^2+c^2+d^2}$), in such a way that for the monodromies of Table \ref{tab.int}
\begin{subequations}
  \begin{align}
      \left\|\begin{psmallmatrix}
        1&n\\0&1
    \end{psmallmatrix}\right\|_2&=\left[1+\tfrac{n^2}{2}\Big(1+\sqrt{1+\tfrac{4}{n^2}}\Big)\right]^{\frac{\text{sign}(n)}{2}}=\left\{\begin{array}{ll}
       1+\mathcal{O}(|n|)  & \text{for } |n|\sim 1
         \\ 
       |n|+\mathcal{O}(|n|^{-1})  & \text{for } n\to\pm\infty
    \end{array}\right.\;,\\\left\|\begin{psmallmatrix}
        n&mn-1\\1&m
    \end{psmallmatrix}\right\|_2&=\tfrac{\sqrt{m^2 n^2+(m-n)^2+2+\sqrt{\left(m^2 n^2+(m-n)^2+2\right)^2-4}}}{\sqrt{2}}\notag\\
    &=\left\{\begin{array}{ll}
       1+\mathcal{O}(|m|,|n|)  & \text{for } |n|,|m|\sim 1
         \\ 
       |m n|+\mathcal{O}(|m|^{-1},|n|^{-1})  & \text{for } |n|\text{ or }|m|\to\pm\infty
    \end{array}\right.\;,
  \end{align}  
\end{subequations}
and as expected monodromies $\mathcal{M}$ with large entries will have a large spectral norm $\|\mathcal{M}\|_2$. We can relate this with the ``charge'' of the $[p,q]$-branes needed to realize such monodromy. Barring elliptic monodromies, with $|\rm Tr(\mathcal{M})|<2$, which from Table \ref{tab:kodaira pq} are immediately realized by a single brane stack, let us consider parabolic $|\rm Tr(\mathcal{M})|=2$ and hyperbolic, $|\rm Tr(\mathcal{M})|>2$ monodromies. The former can be realized by (a) stack(s) of identical or mutually local (i.e., with commuting monodromy matrices) 7-branes, themselves parabolic. Now, for a fixed $\mathcal{M}_{[p,q]}$ and large $k>0$, $\|\mathcal{M}^k_{[p,q]}\|_2\approx k \|\mathcal{M}_{[p,q]}\|_2=\mathcal{O}(k)$. On the other hand, for hyperbolic matrices we cannot have mutually local 7-brane stacks, but precisely this allows us to obtain the desired trace/spectral norm with less brane stacks. Take for example  $\mathcal{M}_{[1,0]}=\begin{psmallmatrix}1&1\\0&1\end{psmallmatrix}$ and  $\mathcal{M}_{[0,1]}=\begin{psmallmatrix}1&0\\-1&1\end{psmallmatrix}$. We find that $\mathcal{M}_*=\mathcal{M}_{[1,0]}^3\mathcal{M}_{[0,1]}^2=\begin{psmallmatrix}-5&3\\-2&1\end{psmallmatrix}$, which is clearly hyperbolic. Now, since
\begin{equation}
    {\rm Tr}(\mathcal{M}_*^k)=\big(-\sqrt{3}-2\big)^k+\big(\sqrt{3}-2\big)^k\sim (-1)^k(2+\sqrt{3})^{k}\,,\quad \|\mathcal{M}_*^k\|_2\sim\sqrt{\tfrac{37}{12}}(2+\sqrt{3})^k\,,
\end{equation}
for large $k$. It is then clear that, unlike in the parabolic case, it is possible to have an exponential growth of the trace/spectral norm of the monodromy in terms of the number of 7-branes.\footnote{While on Table \ref{tab:trivbase} we only consider realizations of monodromies through Kodaira singularities, but as just discussed one can consider general words written by $[p,q]$ 7-branes, which allow for large hyperbolic norm much faster.} The above approach is in general not possible for parabolic monodromies, since the inclusion of non-local brane stacks usually kills the ${\rm Tr}(\mathcal{M})=2$ condition.\\
We thus conclude that, for a given monodromy $\mathcal{M}$, the number $N_{\mathcal{M}}$ of $[p,q]$ 7-branes needed to realize such monodromy grows like
\begin{equation}\label{eq. bound no branes}
    \mathcal{O}(\log\|\mathcal{M}\|_2)\lesssim N_{\mathcal{M}}\lesssim\mathcal{O}(\|\mathcal{M}\|_2)\;,
\end{equation}
which, as we will see in the following section, will have important physical consequences.\\

So far we have only discussed a trivial base $\tilde{B}_2$ for the bordisms, where the realization of the monodromies along the boundary is fully performed by $[p,q]$ 7-branes. However, as discussed in the previous section, we can also consider the effect of \emph{gravitational solitons} by taking $\tilde{B}_2$ to have non-trivial topology, i.e., genus $g>0$, in such a way that non-trivial $\mathsf{SL}(2,\mathbb{Z})$ monodromies along the $g$ pairs dual 1-cycles of the base realizes elements of $\mathsf{SL}(2,\mathbb{Z})'$. With this in mind, only defects associated to $\mathsf{SL}(2,\mathbb{Z})^{\rm ab}\simeq\mathbb{Z}_{12}$ would be needed, associated, e.g., to the monodromy of D7-branes, with $\mathcal{M}_{[1,0]}=T=\begin{psmallmatrix}
    1&1\\0&1
\end{psmallmatrix}$. One then has that under this consideration the complete set of $[p,q]$ 7-branes would not be needed, and at most 11 D7- (or $\overline{\text{D7}}$-)branes suffice, since $T^{12}$ can be expressed as the product of two commutators (i.e., two handles in $\tilde{B}_2$), see \eqref{eq. T12 mono}.

Now, regarding the genus needed to realize a given monodromy $\mathcal{M}$ (modulo the at most 11 $[p,q]$ 7-branes, see Footnote \ref{fn.all same class}), write $\mathcal{M}=\overline{\mathcal{M}}\,T^{a}$, with $T^a$ a representative in $\{T^a\}_{a=0}^{11}$ of ${\rm ab}:\mathsf{SL}(2,\mathbb{Z})\to \mathsf{SL}(2,\mathbb{Z})^{\rm ab}$,\footnote{As explained in Footnote \ref{fn.all same class}, we could take as representative any $[p,q]$ 7-brane. We will take D7-branes (equivalently [1,0]-branes), since their monodromies are the one with smallest spectral norm $\|\mathcal{M}_{[1,0]}\|=\|\mathcal{M}_{[0,1]}\|=\frac{1+\sqrt{5}}{2}$, and we want the ``heavy lifting'' of the monodromy realization to be done by the gravitational solitons.} and $\overline{\mathcal{M}}\in \mathsf{SL}(2\,,\mathbb{Z})'$. Take then
\begin{equation}\label{eq. min alpha0}
    X=\overline{\mathcal{M}}T^{-12 \alpha_0}\,,\quad \text{with}\quad \alpha_0\,\text{ solution to }\,\text{minimize }\|X\|_2=\|\overline{\mathcal{M}}T^{-12 \alpha}\|_2\;\text{   for }\; \alpha\in\mathbb{Z}\,,
\end{equation}
where the above minimization problem is well defined since $T^{12}$ is not elliptic and $\|\overline{\mathcal{M}}T^{-12 \alpha}\|_2$ is a convex function.\footnote{As a matter of fact, solving \eqref{eq. min alpha0} is quite straightforward. Taking $\overline{\mathcal{M}}=\begin{psmallmatrix}
    a&b\\c&d
\end{psmallmatrix}$ and considering for simplicity $\alpha\in\mathbb{R}$, we have that
\begin{equation}
    \|\overline{\mathcal{M}}T^{-12\alpha}\|_2^2\approx a^2+(b-12\alpha a)^2+c^2+(d-12\alpha c)^2\Longrightarrow\partial_\alpha\|\overline{\mathcal{M}}T^{12\alpha}\|_2^2\approx 24[(ab+cd)-12(a^2+c^2)\alpha]\,,
\end{equation}
so that $\alpha_{0}=\lfloor-\frac{ab+cd}{12(a^2+c^2)}\rfloor$ or $\lceil-\frac{ab+cd}{12(a^2+c^2)}\rceil$, now an integer. Plugging back shows that $ \|\overline{\mathcal{M}}T^{-12\alpha_0}\|_2^2\geq 2$.
} Now, using $\|AB\|_2\leq\|A\|_2\|B\|_2$ and that $\|X\|_2^2\gtrsim2$, we find that the genus needed to realize $\overline{\mathcal{M}}$, $g_\mathcal{M}\geq\frac{|\alpha_0|}{6}$ (since at least we know how it would grow in terms of the two handles needed to realize $T^{12}$),  grows like
\begin{equation}\label{eq. growth of genus}
    g_{\mathcal{M}}\gtrsim \mathcal{O}(\|\overline{\mathcal{M}}\|_2)=\mathcal{O}(\|{\mathcal{M}}\|_2)\,.
\end{equation}
As a sanity check, for $\mathcal{M}=T^n$ we have $g_{T^n}=\lfloor\frac{n}{12}\rfloor+1=\mathcal{O}(n)=\mathcal{O}(\|T^n\|_2)$ for any $n$, see . We thus conclude that the genus $g$ of the bordism base $\tilde{\mathcal{B}}_2$ grows linearly with the spectral norm of monodromy $\mathcal{M}$ we want to realize, and thus is not bounded. We will see in Section \ref{sec: com len} why this becomes problematic for $\|{\mathcal{M}}\|_2\to\infty$.\\

In Tables \ref{tab:trivbase} and \ref{tab:nontrivbase} we show the realization of the different monodromies allowing the interpolation between 9d gauged supergravities associated to monodromies $\mathcal{M}_1$ and $\mathcal{M}_2$ (with nullbordisms to nothing, $\emptyset$ associated to $\mathcal{M}_2={\rm Id}_2$). We do this for trivial base $\tilde{\mathcal{B}}_2\simeq\mathbb{S}^1\times[0,1]$ or $\mathbb{D}^2$ in terms of different Kodaira singularities, while for non-trivial $\tilde{\mathcal{B}}_2$ we specified the needed genus $g>0$ and number of D7-branes. We additionally show the contribution of the coinvariants $(\mathbb{Z}^{\oplus 2})_\rho$ \eqref{eq. coinv} to the homology \eqref{eq. hom B4} of the bordism $\mathcal{B}_4$, since from \eqref{eq. hom bord4} $H_\bullet(\mathcal{B}_4;\mathbb{Z})$ is set by the number of punctures (be them boundaries or fiber singularities) and the action $\rho:\pi_1\big(\tilde{\mathcal{B}}_2^*\big)\to \mathsf{SL}(2,\mathbb{Z})$. It can be seen that said torsion contribution is trivial in the majority of cases once enough 7-brane defects or gravitation solitons are added.

\subsubsection*{A couple of comments on flat tori \texorpdfstring{$\mathbb{T}^3/\Gamma$}{T3/Gamma} quotients}

In \cite{Montero}, the authors considered nullbordisms of freely acting quotients of a flat torus $T^3/\Gamma$ as elliptic fibrations over a disk with punctures. There are six possible geometries $G1,G2,\dots,G6$, and 28 different possibilities considering spin structures, see \cite{Pfaffle2000,Acharya:2019mcu}. They explicitly discussed the nullbordisms and the Weierstrass models of the $G1$ and $G3$ geometries, see Table \ref{tab:geom}. The former is given by a rational elliptic surface $dP_9$, also known as half-K3,\footnote{Unlike in our case, where the elliptic fibration appears in the context of the F-theory interpretation of out type IIB theories, in reference \cite{Montero} the three compact dimensions are part of spacetime. Furthermore, \cite{Montero} explicitly deals with the spin structure of the geometries, and considers $G_1$ (i.e. $\mathbb{T}^3$) with periodic spin structure. Since $\Omega_3^{\mathsf{Spin}}(\rm pt)=0$ \cite{anderson1967structure}, such nullbordism exists without the need of considering spin defects, but such spin structure prevents $\mathcal{B}_4$ from simply consisting in one of the 1-cycles shrinking to a point, rather consisting in the above geometry with 12 $I_1$ singularities.} with a disk removed, while the latter is given by an elliptic fibration over a disk with a $IV^*$ singularity. 

  The majority of the flat torus quotients are given by torus bundles over a circle with quotient action in $\mathsf{PSL}(2,\mathbb{Z})$. However, these are not the three-dimensional solvmanifolds that we are considering in this work. Indeed, while the flat torus quotients admit a Riemann-flat metric, twisted tori in three dimensions are, in general, not Ricci-flat, and the corresponding algebra is solvable, not abelian. Additionally, while in the geometries in Table \ref{tab:geom} the quotient action can fix a constant complex structure $\tau$ for the fiber, this is not the case for our twisted tori, where the monodromy acts on the type IIB axio-dilaton.

\begin{landscape}
\begin{table}
    \centering
    \resizebox{22.5cm}{!}{
    \begin{tabular}{|c||c|c|c|c|c|c|}\hline
         \diagbox{$\mathcal{M}_2$}{$\mathcal{M}_1$}& {\scriptsize$\begin{pmatrix}1&n\\0&1\end{pmatrix}\,,\quad n\geq 0$} &{\scriptsize$\begin{pmatrix}-1&0\\0&-1\end{pmatrix}$}  &{\scriptsize$\begin{pmatrix}-1&1\\-1&0\end{pmatrix}$}  &{\scriptsize$\begin{pmatrix}0&1\\-1&0\end{pmatrix}$}  &{\scriptsize$\begin{pmatrix}1&1\\-1&0\end{pmatrix}$}  &{\scriptsize$\begin{pmatrix}n&1\\-1&0\end{pmatrix}\,,\quad |n|\geq 3$} \\\hline\hline
         {\scriptsize$\begin{pmatrix}1&m\\0&1\end{pmatrix}$}& $\scriptstyle\begin{array}{c}\left\{\begin{array}{ll}
             \mathbb{Z}\oplus\mathbb{Z}_{\gcd(n,m)} & \text{if }n\geq m \\
             0 & \text{if }n<m
         \end{array}\right.\\\left\{\begin{array}{ll}
             I_{n-m} & \text{if }n\geq m \\
             (II+III)^{m-n}+(I_0^*)^{m-n\mod 2} & \text{if }n<m
         \end{array}\right.\end{array}$ &\cellcolor{gray!15}  &\cellcolor{gray!15}  &\cellcolor{gray!15}  &\cellcolor{gray!15} &\cellcolor{gray!15} \\\hline
         {\scriptsize$\begin{pmatrix}-1&0\\0&-1\end{pmatrix}$}&$\scriptstyle\begin{array}{c}\mathbb{Z}_2\oplus\mathbb{Z}_{\gcd(n,2)}
             \\I_n+I_0^*\end{array}$  & $\scriptstyle\begin{array}{c}\mathbb{Z}_2\oplus\mathbb{Z}_2\\-\end{array}$ &\cellcolor{gray!15}  & \cellcolor{gray!15} &\cellcolor{gray!15}  &\cellcolor{gray!15} \\\hline
         {\scriptsize$\begin{pmatrix}-1&1\\-1&0\end{pmatrix}$}& $\scriptstyle\begin{array}{c}\mathbb{Z}_{\gcd(n,3)}\\I_n+IV+IV\end{array}$  &  $\scriptstyle\begin{array}{c}0\\III+II+III^*\end{array}$& $\scriptstyle\begin{array}{c}\mathbb{Z}_3\\-\end{array}$ &  \cellcolor{gray!15}&  \cellcolor{gray!15}&\cellcolor{gray!15} \\\hline
         {\scriptsize$\begin{pmatrix}0&1\\-1&0\end{pmatrix}$}&$\scriptstyle\begin{array}{c}\mathbb{Z}_{\gcd(n,2)}\\I_n+III^*\end{array}$  & $\scriptstyle\begin{array}{c}\mathbb{Z}_2\\III\end{array}$ &$\scriptstyle\begin{array}{c}0\\I\end{array}$  &$\scriptstyle\begin{array}{c}\mathbb{Z}_2\\-\end{array}$  &\cellcolor{gray!15}  &\cellcolor{gray!15} \\\hline
         
         {\scriptsize$\begin{pmatrix}1&1\\-1&0\end{pmatrix}$}& $\scriptstyle\begin{array}{c}0\\I_n+II+\stackrel{5}{\cdots}+II\end{array}$ &$\scriptstyle\begin{array}{c}0\\III+I_1\end{array}$  & $\scriptstyle\begin{array}{c}0\\I_2\end{array}$ &$\scriptstyle\begin{array}{c}0\\I\end{array}$  & $\scriptstyle\begin{array}{c}0\\-\end{array}$ & \cellcolor{gray!15}\\\hline
         
         {\scriptsize$\begin{pmatrix}m&1\\-1&0\end{pmatrix}$}& 
         $\scriptstyle\begin{array}{c}\left\{\begin{array}{ll}
             \mathbb{Z}_{\gcd(2,n,m)} & \text{if }m\geq 3 \\
             0 & \text{if }m\leq-3
         \end{array}\right. \\\left\{\begin{array}{cc}
             I_n+III^*+I_m &\text{if }m\geq 3  \\
             I_n+(III+II)^{-m}+(I_{0}^*)^{m-1\mod 2} &\text{if }m\leq-3 
         \end{array}\right.\end{array}$

         &

         $\scriptstyle\begin{array}{c}\left\{
         \begin{array}{ll}
             \mathbb{Z}_{\gcd(2,m)} & \text{if }m\geq 3 \\
             0 & \text{if }m\leq-3
         \end{array}\right.\\\left\{\begin{array}{ll}
            III+I_m  &\text{if }m\geq3  \\
            (III+II)^{-m}+III+(I_0^*)^{m\mod 2}  &\text{if }m\leq -3 
         \end{array}\right.\end{array}$

         &  $\scriptstyle\begin{array}{c}\left\{\begin{array}{ll}
              \mathbb{Z}_{\gcd(3,m+1)}& \text{if }m\geq 3 \\
              0& \text{if }m\leq-3 
         \end{array}\right.\\\left\{\begin{array}{ll}
            I_{m+1}  &\text{if }m\geq 3  \\
            (II+III)^{-m-1}+(I_0^*)^{m+1\mod 2}  &\text{if }m\leq-3 
         \end{array}\right.\end{array}$&$\scriptstyle\begin{array}{c}\left\{
         \begin{array}{ll}
             \mathbb{Z}_{\gcd(2,m)} & \text{if }m\geq 3 \\
             0 & \text{if }m\leq-3
         \end{array}\right. \\\left\{\begin{array}{cc}
            I_m  &\text{if }m\geq 3  \\
           (II+III)^{-m}+(I_0^*)^{m\mod 2}  &\text{if }m\leq -3
         \end{array}\right.\end{array}$  & $\scriptstyle\begin{array}{c}0\\\left\{\begin{array}{cc}
            I_{m-1}  &\text{if }m\geq 3  \\
           (II+III)^{1-m}+(I_0^*)^{m-1\mod 2}  &\text{if }m\leq -3 
         \end{array}\right.\end{array}$ & $\scriptstyle\begin{array}{c}\left\{\begin{array}{ll}
            \mathbb{Z}_{\gcd(n,m,2)}  & \text{if }n>m \\
            \mathbb{Z}_{n-2}  &\text{if } n=m\\
            0&\text{if }n<m
         \end{array}\right.\\\left\{\begin{array}{cc}
            I_{m-n}  &\text{if }m\geq n  \\
            (II+III)^{n-m}+(I_0^*)^{m-m\mod 2}  &\text{if }m<n 
         \end{array}\right.\end{array}$\\\hline
         
        $\emptyset$ &$\scriptstyle\begin{array}{c}\mathbb{Z}\oplus\mathbb{Z}_n\\I_n\end{array}$  & $\scriptstyle\begin{array}{c}\mathbb{Z}_2\oplus\mathbb{Z}_2\\I_0^*\end{array}$ &$\scriptstyle\begin{array}{c}\mathbb{Z}_3\\IV\end{array}$  & $\scriptstyle\begin{array}{c}\mathbb{Z}_2\\III\end{array}$ &$\scriptstyle\begin{array}{c}0\\II\end{array}$  &
        $\scriptstyle\begin{array}{c}0\;\text{ if }n>0\,,\qquad \mathbb{Z}_{\gcd(n,2)}\;\text{ if }n<0\\ \left\{\begin{array}{cc}
II+(III+II)^{n-1}+(I_0^*)^{n-1\mod 2}  &\text{if }n\geq 3  \\
            I_{-n}+III  &\text{if }n\leq -3
    \end{array}\right.\end{array}$

    \\\hline
    \end{tabular}
    }
    \caption{
    coinvariants $(\mathbb{Z}^{\oplus2})_\rho$ and required Kodaira singularities considering a trivial basis, i.e., $\tilde{\mathcal{B}}_2\simeq\mathbb{D}^2$ for nullbordisms and $\tilde{\mathcal{B}}_2\simeq\mathbb{S}^1\times[0,1]$ for bordisms between gauged 9d supergravities. We denote $(II+III)^n=II+III+\stackrel{n}{\cdots}+II+III$, similar for $(III+II)^n$. We recall that $\mathbb{Z}_1\simeq 0$ and $\mathbb{Z}_0\simeq \mathbb{Z}$.
    } 
    \label{tab:trivbase}
\end{table}

\begin{table}
    \centering
    \resizebox{22.5cm}{!}{
    \begin{tabular}{|c||c|c|c|c|c|c|}\hline
         \diagbox{$\mathcal{M}_2$}{$\mathcal{M}_1$}& {\scriptsize$\begin{pmatrix}1&n\\0&1\end{pmatrix}\,,\quad n\geq 0$} &{\scriptsize$\begin{pmatrix}-1&0\\0&-1\end{pmatrix}$}  &{\scriptsize$\begin{pmatrix}-1&1\\-1&0\end{pmatrix}$}  &{\scriptsize$\begin{pmatrix}0&1\\-1&0\end{pmatrix}$}  &{\scriptsize$\begin{pmatrix}1&1\\-1&0\end{pmatrix}$}  &{\scriptsize$\begin{pmatrix}n&1\\-1&0\end{pmatrix}\,,\quad |n|\geq 3$} \\\hline\hline
         {\scriptsize$\begin{pmatrix}1&m\\0&1\end{pmatrix}$}& $\scriptstyle\begin{array}{c}\left\{\begin{array}{ll}
           \mathbb{Z}\oplus\mathbb{Z}_{\gcd(n,m)}   &\text{if }\,12+m>n\geq m  \\
           \mathbb{Z}_2\oplus\mathbb{Z}_{\gcd(2,n,m)}&\text{if }\,12+n>m>n  \\
             0 &\text{otherwise} 
         \end{array}\right.\\g=\left|2\left\lfloor\frac{n-m}{12}\right\rfloor\right|\,,\quad I_{n-m\mod 12}\end{array}$ &\cellcolor{gray!15}  &\cellcolor{gray!15}  &\cellcolor{gray!15}  &\cellcolor{gray!15} &\cellcolor{gray!15} \\\hline
         {\scriptsize$\begin{pmatrix}-1&0\\0&-1\end{pmatrix}$}&$\scriptstyle\begin{array}{c}0\,,\quad g=2+2\left|\left\lfloor\frac{n-6}{12}\right\rfloor\right|\\\quad I_{n-6\mod 12}\end{array}$  & $\scriptstyle\begin{array}{c}\mathbb{Z}_2\oplus\mathbb{Z}_2\\g=0\,,\quad-\end{array}$ &\cellcolor{gray!15}  & \cellcolor{gray!15} &\cellcolor{gray!15}  &\cellcolor{gray!15} \\\hline
         {\scriptsize$\begin{pmatrix}-1&1\\-1&0\end{pmatrix}$}& $\scriptstyle\begin{array}{c}0\,,\quad g=1+2\left|\left\lfloor\frac{n-4}{12}\right\rfloor\right|\\\quad I_{n-4\mod 12}\end{array}$  &  $\scriptstyle\begin{array}{c}0\,,\quad g=1\\I_2\end{array}$& $\scriptstyle\begin{array}{c}\mathbb{Z}_3\\g=0\,,\quad-\end{array}$ &  \cellcolor{gray!15}&  \cellcolor{gray!15}&\cellcolor{gray!15} \\\hline
         {\scriptsize$\begin{pmatrix}0&1\\-1&0\end{pmatrix}$}&$\scriptstyle\begin{array}{c}0\,,\quad g=2+2\abs{\left\lfloor\frac{n-3}{12}\right\rfloor}\\\quad I_{n-3\mod 12}\end{array}$  & $\scriptstyle\begin{array}{c}0\,,\quad g=1\\I_2\end{array}$ &$\scriptstyle\begin{array}{c}0\,,\quad g=0\\I_1\end{array}$  &$\scriptstyle\begin{array}{c}\mathbb{Z}_2\\g=0\,,\quad-\end{array}$  &\cellcolor{gray!15}  &\cellcolor{gray!15} \\\hline
         {\scriptsize$\begin{pmatrix}1&1\\-1&0\end{pmatrix}$}& $\scriptstyle\begin{array}{c}0\,,\quad g=1+2\abs{\left\lfloor\frac{n-4}{12}\right\rfloor}\\\quad I_{n-4\mod 12}\end{array}$ &$\scriptstyle\begin{array}{c}0\,,\quad g=1\\I_4\end{array}$  & $\scriptstyle\begin{array}{c}0\,,\quad g=0\\I_2\end{array}$ &$\scriptstyle\begin{array}{c}0\,,\quad g=1\\I\end{array}$  & $\scriptstyle\begin{array}{c}0\\g=0\,,\quad-\end{array}$ & \cellcolor{gray!15}\\\hline
         {\scriptsize$\begin{pmatrix}m&1\\-1&0\end{pmatrix}$}& $\scriptstyle\begin{array}{c}0\,,\quad g=1+2\abs{\left\lfloor\frac{n+m-3}{12}\right\rfloor}\\\quad I_{n+m-3\mod 12}\end{array}$ &$\scriptstyle\begin{array}{c}0\,,\quad g=1+2\abs{\left\lfloor\frac{9-m}{12}\right\rfloor}\\\quad I_{9-m\mod 12}\end{array}$ &  $\scriptstyle\begin{array}{c}0\,,\quad g=2\abs{\left\lfloor\frac{m+1}{12}\right\rfloor}\\\quad I_{m+1\mod 12}\end{array}$ &$\scriptstyle\begin{array}{c}\left\{\begin{array}{ll}
           \mathbb{Z}_2   &\text{if }\,|m|<12 \text{ and }m \text{ even} \\
             0 &\text{otherwise} 
         \end{array}\right.\\g=2\abs{\left\lfloor\frac{m}{12}\right\rfloor}\,,\quad I_{m \mod 12}\end{array}$  & $\scriptstyle\begin{array}{c}\left\{\begin{array}{ll}
           \mathbb{Z}_2   &\text{if }\,|m|<12 \text{ and }m \text{ odd} \\
             0 &\text{otherwise} 
         \end{array}\right.\\g=2\abs{\left\lfloor\frac{m-1}{12}\right\rfloor}\,,\quad I_{m-1\mod 12}\end{array}$ & $\scriptstyle\begin{array}{c}\left\{\begin{array}{ll}
           \mathbb{Z}_{\gcd(n-2,m-2)}   &\text{if }\,|n-m|<12 \\
             0 &\text{otherwise} 
         \end{array}\right.\\g=2\abs{\left\lfloor\frac{n-m}{12}\right\rfloor}\,,\quad I_{n-m\mod 12}\end{array}$\\\hline
        $\emptyset$ &$\scriptstyle\begin{array}{c}\left\{\begin{array}{ll}
           \mathbb{Z}\oplus\mathbb{Z}_n   &\text{if }\,n<12 \\
             0 &\text{otherwise} \end{array}\right.\\g=2\left\lfloor\frac{n}{12}\right\rfloor\,,\quad I_{n\mod 12}\end{array}$  & $\scriptstyle\begin{array}{c}0\,,\quad g=2\\I_6\end{array}$ &$\scriptstyle\begin{array}{c}0\,,\quad g=1\\I_4\end{array}$ & $\scriptstyle\begin{array}{c}0\,,\quad g=1\\I_3\end{array}$  &$\scriptstyle\begin{array}{c}0\,,\quad g=1\\I_2\end{array}$  &$\scriptstyle\begin{array}{c}0\,,\quad g=1+2\abs{\left\lfloor\frac{n-3}{12}\right\rfloor}\\I_{3-n\mod 12}\end{array}$\\\hline
    \end{tabular}}
    \caption{coinvariants $(\mathbb{Z}^{\oplus2})_\rho$ and required Kodaira singularities considering a basis of genus $g$, i.e., $\tilde{\mathcal{B}}_2\simeq\Sigma_g-\{\rm pt\}$ for nullbordisms and $\tilde{\mathcal{B}}_2\simeq\Sigma_g-\{{\rm pt}_i\}_{i=1}^2$ for bordisms between gauged 9d supergravities. Note that for all cases with $g>1$ the coinvariant is trivial. We recall that $\mathbb{Z}_1\simeq 0$ and $\mathbb{Z}_0\simeq \mathbb{Z}$.}
    \label{tab:nontrivbase}
\end{table}
\end{landscape}

\begin{table}[hbt!]
        \centering
        \begin{tabular}{|c|c|c|c|}\hline
             Class & $\mathsf{SL}(2,\mathbb{Z})$ action & Kodaira sing. & Fixed $\tau$ \\ \hline \hline 
             $G1$  & $\begin{psmallmatrix}
                 1 & 0 \\
                 0 & 1
             \end{psmallmatrix}$ & - / $(I_1)^{12}$ & - \\ 
             \hline                          $G2$   &\small $\begin{psmallmatrix}
                 -1 & 0 \\
                 0 & -1
             \end{psmallmatrix}$ & $I_0^*$ & - \\ 
             \hline
             $G3$   &$\begin{psmallmatrix}
                 -1 & 1 \\
                 -1 & 0
             \end{psmallmatrix}$ , $\begin{psmallmatrix}
                 -1 & -1 \\
                 1 & 0
             \end{psmallmatrix}$ & $IV,IV^*$ &$e^{\pm\frac{\pi}{3}i }$,  $e^{\pm\frac{2\pi}{3} i}$\\ 
             \hline
             $G4$   &$\begin{psmallmatrix}
                 0 & -1 \\
                 1 & 0
             \end{psmallmatrix}$ , $\begin{psmallmatrix}
                 0 & 1 \\
                 -1 & 0
             \end{psmallmatrix}$ & $III,III^*$ & $\pm i$\\ 
             \hline
             $G5$   & $\begin{psmallmatrix}
                 1 & -1 \\
                 1 & 0
             \end{psmallmatrix}$ , $\begin{psmallmatrix}
                 1 & 1 \\
                 -1 & 0
             \end{psmallmatrix}$ & $IV,IV^*$& $e^{\pm\frac{\pi}{3}i }$, $e^{\pm\frac{2\pi}{3} i}$\\ 
             \hline
            $G6$& $\begin{psmallmatrix}
                 -1 & 0 \\
                 0 & -1
             \end{psmallmatrix}$ and $\begin{array}{rccl}
                  \omega:&\mathbb{S}^1\times\mathbb{T}^2 &\longrightarrow&\mathbb{S}^1\times\mathbb{T}^2 \\
                 &(\theta,z) &\longmapsto& (-\theta,\tfrac{1+\tau}{2}+\overline{z}) 
             \end{array}$& $I_1+I_1+I_1+I_1$&-
             \\\hline
        \end{tabular}
        \caption{Flat tori quotients geometries from \cite{Acharya:2019mcu}, with their interpretation of torus fibration with a $\mathsf{SL}(2,\mathbb{Z})$ action. The table also shows the Kodaira singularity of the corresponding nullbordisms, as well as the fixed complex structure of the fiber. For $G_1$ the $\mathbb{T}^3$ with at least one 1-cycle with anti-periodic boundary conditions needs no singularities in the bordism, while if all circles are periodic then 12 singularities are needed. Notice that, due to the additional $\omega$ $\mathbb{Z}_2$-action, $G6$ is not topologically a twisted torus but rather a Klein bottle fibrated over an interval. Adapted from Table 1 of \cite{Montero}.}
        \label{tab:geom}
    \end{table}

\subsection{Weierstrass models and breaking of supersymmetry\label{ss. Weierstrass}}
We finish this section by noting that, while we have been able to qualitatively describe the topology of the bordisms $\mathcal{B}_4$ between the type IIB 9d gauged supergravities, we have not given an explicit metric to $\mathcal{B}_4$ or solved the equations of motion (see \cite{Montero} for detailed computation for a setting similar to ours, but where the base $\tilde{\mathcal{B}}_2$ has genus 0). It is easy to imagine that for a generic choice of the bordisms above described such task is quite complicated and might not have analytic expressions. However, as we will argue now, in general our bordisms will not correspond to a SUSY-preserving geometry.

In order to see this, we take into account that, if one wants to preserve SUSY and solve Einstein's equations, see e.g. \cite{Weigand:2018rez}, the (elliptically fibered) bordism $\mathbb{E}\hookrightarrow\mathcal{B}_4\to\tilde{\mathcal{B}}_2$ must be Calabi-Yau. Considering the holomorphic line bundle $\mathcal{L}$ defined by the (1,1)-form $\dd A$, with $A=\frac{i}{2}(\partial\Phi-\bar\partial\Phi)$ (here $\Phi$ is the 10d type IIB dilaton), then its first Chern class must match that of the base, $ c_1(\mathcal{L})=c_1(\tilde{\mathcal{B}}_2)$, so that $\mathcal{L}=-K_{\tilde{\mathcal{B}}_2}$, with
\begin{equation}
   \int_{\tilde{\mathcal{B}}_2}c_1(\tilde{\mathcal{B}}_2)=\chi(\tilde{\mathcal{B}}_2)=2-2g-n\,,
\end{equation}
through Chern-Gauss-Bonnet theorem, where $g$ is the genus of the base and $n\in\{0,1\}$ the number of points removed. On the other hand, given a Weierstrass model
\begin{equation}
    y^2=x^3+f(z)x+g(z)\quad\text{with}\quad f\in H^0(\tilde{\mathcal{B}}_2,-4K_{\tilde{\mathcal{B}}_2})\,,\quad g\in H^0(\tilde{\mathcal{B}}_2,-6K_{\tilde{\mathcal{B}}_2})\,,
\end{equation}
the discriminant $\Delta=4f^3+27g^2\in H^0(\tilde{\mathcal{B}}_2,-12K_{\tilde{\mathcal{B}}_2})$ is a meromorphic function inducing a divisor
\begin{equation}
    (\Delta)=\sum_{p\in \tilde{\mathcal{B}}_2} {\rm ord}_p(\Delta) p=\sum_{\rm zeros} {\rm ord}_p(\Delta) p- \sum_{\rm poles} |{\rm ord}_q(\Delta)|q\,,
\end{equation}
where we allow $\Delta$ to have poles (with negative order) at the points removed     . Now, $(\Delta)$ is a representative of the first Chern class $c_1(-12K_{\tilde{\mathcal{B}}_2})$ by taking, e.g., $c_1(-12K_{\tilde{\mathcal{B}}_2})=\frac{i}{2\pi}\partial\bar\partial\log|\Delta|^2$. Through the Calabi-Yau condition, we would have $c_1(-12K_{\tilde{\mathcal{B}}_2})=12c_1(\tilde{\mathcal{B}}_2)$. Integrating now on both sides, we have
\begin{equation}
    \sum_{\rm zeros} {\rm ord}_p(\Delta) - \sum_{\rm poles} |{\rm ord}_q(\Delta)|=12(2-2g-n)\,,
\end{equation}
in order for $\mathbb{E}\hookrightarrow\mathcal{B}_4\to\tilde{\mathcal{B}}_2$ to admit a Calabi-Yau metric.\footnote{As a sanity check, note that taking $n=g=0$, so that $\tilde{\mathcal{B}}_2\simeq\mathbb{S}^2\simeq\mathbb{P}^1$ (since there are no removed points we cannot have poles), results in $\sum_{\rm zeros} {\rm ord}_p(\Delta)=24$, this is, a $\mathbb{E}\hookrightarrow{\rm K3}\to\mathbb{P}^1$ geometry.} Back to our geometries of interest, the above translates in the following inequality for the number and order of the $\Delta$ zeroes (since the removed points in principle allow for the axio-dilaton flux to slip away to infinity)
\begin{equation}
    0\leq\sum_{\rm zeros} {\rm ord}_p(\Delta)\leq12(2-2g-n)=\left\{
    \begin{array}{rl}
        12-24g &  \text{if }n=1\\
      -24g   & \text{if }n=2
    \end{array}
    \right.\quad.
\end{equation}
It is clear that the above inequality cannot be fulfilled for $n>0$ if $g\neq 0$, meaning that all gravitational solitons break supersymmetry.\footnote{If we allow for a compact $\tilde{\mathcal{B}}_2$ without boundary (this is, $n=0$), then we have a solution with $g=1$ if $\sum_{\rm zeros} {\rm ord}_p(\Delta)=0$, what results in $\mathcal{B}_4$ being the trivial fibration $\mathcal{B}_4\simeq \mathbb{T}^2\times \mathbb{E}\simeq \mathbb{T}^4$, trivially Calabi-Yau.}

For nullbordisms (this is, $n=1$) we find that no Calabi-Yau geometry is possible for $g\geq 1$. On the other hand, for $g=0$ (i.e., $\tilde{\mathcal{B}}_2\simeq\mathbb{D}^2\simeq \mathbb{C}$) we can have at most a total order of 12 for $\Delta$.\footnote{For example, this would amount to a maximum of 12 D7-branes. Since each D7-brane has an effective deficit angle $\frac{\pi}{6}$, see e.g., \cite{Blumenhagen:2013fgp}, this means that fewer than 12 D7-branes would result in the deficit angle being less than $2\pi$, and thus the radius of the bordisms would grow linearly with the radial distance $\rho$ to the branes, $R\sim \frac{\pi}{6}(12-n)\rho$, which can be be directly linked with the existence of a pole at the point removed. This effectively prevents supersymmetric compactifications unless one saturates $\sum_{\rm zeros} {\rm ord}_p(\Delta)=12$. This precisely corresponds to the explicit construction of \cite{Montero}. Only the $\mathbb{S}^1$ compactification with $T^{12}$ monodromy saturates such inequality, and would in principle be allowed without explicitly breaking SUSY.} As for bordisms between two 9d supergravities (where $n=2$), we see that the only way to respect the Calabi-Yau condition is to have $g=0$ and $\sum_{\rm zeros} {\rm ord}_p(\Delta)=0$. Since this results in the trivial bordism, one concludes that it is \emph{not} possible to connect two \emph{different} 9d gauged SUGRAs in a \emph{supersymmetric way}. See \cite{Kleban:2007kk} for a related study on how type IIB settings with too many branes (thus breaking SUSY) result in time-dependent expanding solutions.

\section{Commutator lengths in gravitational solitons and possible problems} 
\label{sec: com len}

As discussed in the previous section, the fact that \eqref{eq.1spinBord} (again, see \cite{Yonekura:2022reu})
\begin{equation}
    \Omega_1^{\mathsf{Spin}}({\rm B}G)\simeq\mathbb{Z}_2\oplus G^{\rm ab}\,,
\end{equation}
where the first $\mathbb{Z}_2$ factor is generated by $\mathbb{S}^1$ with periodic boundary conditions, results from the possibility of realizing any element of $x=[a_1,b_1]\dots[a_g,b_g]\in G':=[G,G]$ by gravitational solitons of genus (at least) $g$, with each pair of monodromies from the $g$ commutators implemented along pairs of dual 1-cycles. This implies that only those monodromies that are not realized as products of commutators, (i.e., those of $G^{\rm ab}=G/G'$) cannot be realized by pure geometry of the bordism, and UV defects need to be included in order to trivialize the bordism.

For the case $G=\mathsf{SL}(2,\mathbb{Z})$, we then showed in \eqref{eq. growth of genus} that the genus of the gravitational soliton needed to realize a given monodromy $\mathcal{M}$ (trying to minimize the contribution from $\mathsf{SL}(2,\mathbb{Z})^{\rm ab}$ representatives) grows linearly with the spectral norm $\|\mathcal M\|_2$. Since for parabolic and hyperbolic monodromies in $\mathsf{SL}(2,\mathbb{Z})$ the norm is not bounded from above, this results in the need to have arbitrarily convoluted bordisms if we want to only use the defects required to trivialize $\mathsf{SL}(2,\mathbb{Z})^{\rm ab}\simeq \mathbb{Z}_{12}$, e.g., at most 11 D7-branes. In this section we will try to give more intuition to when an arbitrarily large genus are needed, as well as the problems it might entail.

\vspace{0.5cm}

Consider for this some group $G$ and its commutator subgroup $G'=[G,G]$. The \emph{commutator length} of a given element $g\in G'$ is defined as 
\begin{equation}\label{eq. com len}
    {\rm cl}_G(g)=\min\left\{n\,|\,g=[a_1,b_1]\dots[a_n,b_n]\,,\;\text{with}\; a_i,\,b_i\in G\right\}\;.
\end{equation}
For a given set $R\subset G$ of representatives for $G^{\rm ab}=G/G'$ (which from a cobordism conjecture point of view we can identify with the specific defects trivializing $\tilde{\Omega}_1^{\mathsf{Spin}}({\rm B}G)\simeq G^{\rm ab}$), we can extend this to any element $g\in G$ modding out the elements in $R$,\footnote{For example, for $G=\mathsf{SL}(2,\mathbb{Z})$, with $\mathsf{SL}(2,\mathbb{Z})^{\rm ab}\simeq \mathbb{Z}_{12}\simeq\langle \overline{T}\rangle$, we have as possible generators $\{T^a\}_{a\in\{1,5,7,11\}}$. We first write
\begin{equation}
    T^n=(T^{a})^{a n\ {\rm mod}\,12}(\mathbf{T^{12}})^{\tfrac{n-a(a n\ {\rm mod}\,12)}{12}}\,,
\end{equation}  
for $T^a$ ($a\in\{1,5,7,11\}$, where here $a^{-1}=a$ for all these $a$) as generator. Now, from \eqref{eq.kplus1comm} we know that $(\mathbf{T^{12}})^{\tfrac{n-a(a n\ {\rm mod}\,12)}{12}}$ can be written as the product of $\frac{n-a(a n\ {\rm mod}\,12)}{12}+1$ commutators, so ${\rm cl}_{\mathsf{SL}(2,\mathbb{Z})}(T^n)\lesssim \frac{n-a(a n\ {\rm mod}\,12)}{12}+1$ , which for large $n$ grows like ${\rm cl}_{\mathsf{SL}(2,\mathbb{Z})}(T^n)\sim \frac{n}{12}$, independently of the choice of $a$. See Theorem \ref{thm. large n} from Appendix \ref{app. results comm} to see how this is always the case for commutator subgroups of finite index.
} in such a way that we can define the \emph{commutator width} of $G$ as
\begin{equation}\label{eq. com w}
    {\rm cl}^{(R)}(G)=\sup_{g\in G'}\, {\rm cl}_G^{(R)}(g)\,.
\end{equation}
By definition, for our group of interest, we will have that ${\rm cl}(\mathsf{SL}(2,\mathbb{Z}))=\infty$ (independent of the choice of $R$ representing $\mathsf{SL}(2,\mathbb{Z})^{\rm ab}\simeq\mathbb{Z}_{12}$), which means that we will have elements in $\mathsf{SL}(2,\mathbb{Z})$ expressed as the product of an arbitrarily large number of commutators (and possibly up to 11 $\mathsf{SL}(2,\mathbb{Z})^{\rm ab}$ defects).\footnote{As we have seen, this is actually true for the smaller (and for us more physically relevant) $\exp\mathfrak{sl}(2,\mathbb{R})\cap \mathsf{SL}(2,\mathbb{Z})$, since $T^n$ belongs to said set.} This means that the gravitational solitons acting as cobordism for $\mathbb{S}^1$ with $\mathsf{SL}(2,\mathbb{Z})$ monodromies will have arbitrarily large genus if we only allow for $\mathsf{SL}(2,\mathbb{Z})^{\rm ab}\simeq \mathbb{Z}_{12}$ defects, as we already showed in \eqref{eq. growth of genus}.

\subsection{Bubbles of nothing with large \texorpdfstring{$\mathsf{SL}(2,\mathbb{Z})$}{SL(2,Z)} monodromy and their decay rate\label{ss. decay rate}}

We will now give an qualitative argument for why requiring gravitational instantons of arbitrarily high genus might pose important problems. Consider our tree-level action (including the boundary Gibbons-Hawking-York term),
\begin{equation}
    S_{\rm E}=\underbrace{-\frac{1}{2\kappa_{d+1}^2}\int_\mathcal{X}\dd^{d+1}x\sqrt{G}\left(\mathcal{R}_G-\frac{\partial_M\tau\partial^M\bar\tau}{2({\rm Im}\,\tau)^2}\right)}_{S_{\rm E}^{\rm (bulk)}}-\underbrace{\frac{1}{\kappa_{d+1}^2}\int_{\partial \mathcal{X}}\dd ^{d}x\sqrt{H}(K-K_0)}_{_{S_{\rm E}^{\rm (GHY)}}}\,,
\end{equation}
with $\tau$ the axio-dilaton and $K$ and $K_0$ the extrinsic curvature of the spacetime boundary $\mathcal{X}$, $\partial\mathcal{X}=\mathbb{S}^{d-1}\times\partial\mathcal{B}_2\simeq\mathbb{S}^{d-1}\times\mathbb{S}^1$. We will study the vacuum decay from a given 9d gauged SUGRA (i.e., compactification of 10d type IIB on $\mathbb{S}^1$ with a given monodromy $\mathcal{M}\in\mathsf{SL}(2,\mathbb{Z})$) to nothing, i.e., the nullbordism dynamically described by a generalization of Witten's Bubble of Nothing \cite{Witten:1981gj}. As we saw in Section \ref{ss.9dIIB}, the generalized Scherk–Schwarz reduction induces a non-trivial positive potential on the radion $\rho$ and axio-dilaton $\tau=C_0+ie^{-\Phi}$ that spontaneously breaks supersymmetry and drives our fields to the large volume and weak coupling limit for parabolic and hyperbolic monodromies, as well as a tree-level Minkowski vacuum with $\tau=i$ and flat $\sigma$ for parabolic monodromies, see Table \ref{tab:param subgroups}. For our purposes, we will study the bordism solution in the 10d theory, with the 9d theories corresponding to the low energy, lower-dimensional asymptotic description far from the bordism, which in 9d is interpreted as a domain wall. In line with what is expected from the interpretation as \emph{local dynamical cobordism} and \emph{End of the World Branes} \cite{Angius:2022aeq} (see also \cite{Dudas:2000ff,Dudas:2002dg,Dudas:2004nd,Hellerman:2010dv,Basile:2018irz,Mourad:2021roa,Buratti:2021yia,Buratti:2021fiv,Angius:2022mgh,Blumenhagen:2022mqw,Blumenhagen:2023abk,Huertas:2023syg,Hassfeld:2023kpu,Angius:2023uqk,Angius:2024pqk,Makridou:2026jzy}), in the limit where the $\mathbb{S}^1$ shrinks to a point, i.e., $\rho=\sqrt{\frac{8}{7}}\log\frac{R}{\ell_{10}}\to-\infty$, the scalar potential in the 9d theory diverges as $V(\rho)\sim e^{-4\sqrt{\frac{2}{7}}\rho}\to\infty$. However, from the 10d perspective such potential is not present, and it is enough to study the dynamics of the metric $G_{MN}$ and the axiodilaton $\tau=C_0+ie^{-\Phi}$.\\

For simplicity we will consider this setting with a single boundary component, but we expect that our results extrapolate to bordisms between two different theories. We will provide the general expression in $d$-dimensions, with $d=9$ for reduction to our 9d gauged supergravities.  In the same spirit as \cite{Montero,Ruiz:2024jiz}, we take a $\mathsf{SO}(d-1)$-symmetric ansatz to parameterize our geometry $\mathcal{X}$
\begin{equation}\label{eq. bord param}
    \dd s^2_\mathcal{X}=G_{MN}\dd x^M\dd x^N=W(y)^2R^2\dd\Omega^2_{d-1}+h_{\alpha\beta}(y)\dd y^\alpha\dd y^\beta\,,
\end{equation}
where we parameterize $\mathcal{B}_2$ by $y^1$ and $y^2$, with metric $h_{\alpha\beta}$, while $W(y)$ acts as a warping function depending on $y$, and $R$ finally is the nucleation scale, which is expected to be of the order of the KK scale. Taking the radial direction $r=RW(y)\in[0,\infty)$, close to the boundary of $\partial\mathcal{X}\simeq\mathbb{S}^{d-1}\times \partial\mathcal{B}_2\simeq \mathbb{S}^{d-1}\times \mathbb{S}^1$, this is $r\to\infty$, we can express the above metric through a Schwarzschild-like expression
\begin{equation}
    \dd s^2_\mathcal{X}\approx r^2\dd\Omega_{d-1}^2+\underbrace{f(r)^{-1}\dd r^2+f(r)\dd \theta^2}_{h_{\alpha\beta}(y)\dd y^\alpha\dd y^\beta}\;,
\end{equation}
with $\theta\in[0,2\pi R)$ an angular coordinate along the boundary $\partial\mathcal{B}_2\simeq\mathbb{S}^1$. The explicit expression for $f(r)$ is given in Appendix \ref{ap. onshell}.

Evaluating on-shell, the bulk part of the Euclidean action vanishes, $S_{\rm E}^{\rm (bulk)}=0$, so that $S_E=S_{\rm E}^{\rm (GHY)}$. Unlike in the original work by Witten \cite{Witten:1981gj}, the explicit expression of the $\mathcal{B}_2$ metric or the axio-dilaton profile is not known, what prevents us from obtaining a closed, analytic expression for $S_E$. However, as argued in Appendix \ref{ap. onshell}, we can infer the scaling with the genus $g$ and the $\{\mathcal{M}_{a_i},\mathcal{M}_{b_i}\}_{i=1}^{g}$ monodromies along the $2g$ 1-cycles $\{a_i,b_i\}_{i=1}^g$
\begin{equation}\label{eq. onshell action}
    S_{\rm E}=(M_{{\rm Pl},d+1}R)^{d-1}\mathcal{O}\bigg(1-2g+\underbrace{\sum_{i=1}^g\log\big|{\rm Tr}([\mathcal{M}_{a_i},\mathcal{M}_{b_i}])\big|}_{{\mathsf{a}(\mathcal{B}_2)}}\bigg)\,.
\end{equation}
Here $\mathsf{a}(\mathcal{B}_2)\gtrsim g$ gives an approximation of the area of image of $\mathcal{B}_2$ through the map $\tau:\mathcal{B}_2\to\mathbb{H}$. As a sanity check we recover $S_{\rm E}\sim(M_{{\rm Pl},d+1}R)^{d-1}\sim(M_{{\rm Pl},d}R)^{d-2}$ for Witten's BoN \cite{Witten:1981gj}, where $g=0$ and there are no monodromies. For the conversion between Planck scales we have used $M_{{\rm Pl},d}^{d-2}=M_{{\rm Pl},d+1}^{d-1}R$.

Now, we see that for genus $g>0$, the contribution $1-2g$ from the curvature (or topology, through Gauss-Bonnet theorem) and the dynamics of the axio-dilaton (through the monodromies) have opposite signs. We would then have, rewriting \eqref{eq. onshell action} as
\begin{align}\label{eq. onshell action 2}
    S_{\rm E}&\sim(M_{{\rm Pl},d+1}R)^{d-1}\big[C_{\rm grav}(1- 2g)+C_{ \tau}\mathsf{a}(\mathcal{B}_2)\big]\notag\\
    &\sim(M_{{\rm Pl},d}R)^{d-2}\big[C_{\rm grav}(1- 2g)+C_{ \tau}\mathsf{a}(\mathcal{B}_2)\big]\quad\text{for }\; g>0\,,
\end{align}
with both $C_{\rm grav}$ and $C_{ \tau}$ positive. In order for the BoN to be stable, we would need $S_{\rm E}>0$, i.e., $C_{\rm grav}(1- 2g)+C_{ \tau}\mathsf{a}(\mathcal{B}_2)>0$. For large $g$, this translates into a condition for the average trace of the monodromies,
\begin{equation}\label{eq. mean trace}
   \overline{\log\big|{\rm Tr}([\mathcal{M}_{a},\mathcal{M}_{b}])\big|}=\frac{\sum_{i=1}^g\log\big|{\rm Tr}([\mathcal{M}_{a_i},\mathcal{M}_{b_i}])\big|}{g}>\frac{2 C_{\rm grav}}{C_\tau} \quad\text{for}\;g\gg1\,.
\end{equation}
In general it is not straightforward whether this is indeed the case. As an example, for a compactification to 9d on $\mathbb{S}^1$ with monodromy $T^{12k}=\begin{psmallmatrix}
    1&12k\\0&1
\end{psmallmatrix}$, which as we have seen from \eqref{eq. T12 mono} and \eqref{eq.kplus1comm} can be bounded by a genus $g=2k$ or $k+1$ $\mathcal{B}_2$, we have that $\overline{\log\big|{\rm Tr}([\mathcal{M}_{a},\mathcal{M}_{b}])\big|}\leq\log 2\approx 0.69315$, but explicitly computing the ratio $2 C_{\rm grav}C_\tau^{-1}$ is out of the reach of this paper, so we will assume that the inequality \eqref{eq. mean trace} is satisfied, and $ S_{\rm E}\approx C\,(M_{{\rm Pl},d+1}R)^{d-1}g>0$ for large $g$.

As mentioned before, based on different explicit constructions of bordisms to nothing (such as bubbles of nothing \cite{Witten:1981gj,Blanco-Pillado:2016xvf,Montero,Draper:2021qtc,Hassfeld:2023kpu,Ookouchi:2024tfz}), we expect the characteristic length of the gravitational soliton to be that of the KK scale. Now, in a similar approach as before, using the equations of motion, see Appendix \ref{ap. onshell}, we see that the mean curvature of the $\mathcal{B}_2$ bordism scales as 
\begin{equation}
   |\overline{\mathcal{R}}_{\mathcal{B}_2}|=\left|\frac{\int_{\mathcal{B}_2}\sqrt{h_2} \dd^2x\mathcal{R}_{\mathcal{B}_2}}{\int_{\mathcal{B}_2}\sqrt{h_2} \dd^2x}\right|=\mathcal{O}\bigg(m_{\rm KK}^2\times \underbrace{\sum_{i=1}^g\log\big|{\rm Tr}([\mathcal{M}_{a_i},\mathcal{M}_{b_i}])\big|}_{\mathsf{a}(\mathcal{B}_2)}\bigg)\,.
\end{equation}
Considering the generic form of the higher-derivative expansion of the gravitational EFT describing the low energy regime of a QG theory \cite{vandeHeisteeg:2023ubh,vandeHeisteeg:2023dlw,Castellano:2023aum,Bedroya:2024ubj,Castellano:2024bna,Calderon-Infante:2025ldq},
\begin{align}
    S_{{\rm EFT,} D}&\supset\frac{1}{2\kappa_D^2}\int\dd^{D}x\sqrt{-G_D}\sum_{n>2}\frac{\mathcal{O}_n(\mathcal{R})}{\Lambda_{{\rm QG},D}^{n-2}}\notag\\&=\frac{1}{\kappa_D^2}\int\dd^D x\sqrt{-G_D}\,\Lambda_{{\rm QG},D}^2\sum_{n>2}\mathcal{O}\left(\sqrt{\mathsf{a}(\mathcal{B}_2)}\frac{m_{\rm KK}}{\Lambda_{{\rm QG},D}}\right)^{n}\;,
\end{align}
where $\Lambda_{{\rm QG},D}$ is the scale at which Quantum Gravity effects become important (also known as \emph{species scale} in the higher-dimensional $D=d+1$ theory, see e.g., \cite{Han:2004wt,Dvali:2007wp, Dvali:2007hz,Anber:2011ut,vandeHeisteeg:2022btw,Calderon-Infante:2023ler,Caron-Huot:2024lbf,ValeixoBento:2025bmv}), and no local EFT description weakly coupled to gravity is possible, in such a way that a full UV QG description is needed.
While in general, specially if working in asymptotic regions of moduli space (such as the large radius regime $\rho\to 0$ to which the 9d potentials from Table \ref{tab:param subgroups} generically drives us) we have a parametric separation\footnote{In the bulk of moduli space, specially close to the so-called \emph{desert point} \cite{Long:2021jlv,vandeHeisteeg:2022btw,vandeHeisteeg:2023ubh,vandeHeisteeg:2023dlw}, we can have $m_{\rm KK}\sim M_{{\rm Pl,}d+1}\sim M_{{\rm Pl,}d}$, so that $\frac{m_{\rm KK}}{\Lambda_{{\rm QG},D}}=\mathcal{O}(1)$.}
\begin{equation}
    m_{\rm KK}\sim\frac{1}{R}\ll\Lambda_{{\rm QG},d}\lesssim\Lambda_{{\rm QG},d+1}\lesssim M_{{\rm Pl},d+1}=M_{{\rm Pl},d}(M_{{\rm Pl},d+1}R)^{-\frac{1}{d-2}}\ll M_{{\rm Pl},d}\,,
\end{equation}
if ${\rm cl}(G)=\infty$ then we will find monodromies along the compact $\mathbb{S}^1$ with arbitrarily large commutator length. Then the associated bordism $\mathcal{B}_2$ that bounds it has an arbitrarily large genus, so that for
\begin{equation}
    \mathsf{a}(\mathcal{B}_2)=\sum_{i=1}^g\log\big|{\rm Tr}([\mathcal{M}_{a_i},\mathcal{M}_{b_i}])\big|\gtrsim\left(\frac{\Lambda_{{\rm QG},d+1}}{m_{\rm KK}}\right)^{\frac{1}{2}}
\end{equation}
higher curvature corrections dominate the (Euclidean) action, and the nucleation rate  is arbitrarily suppressed,
\begin{equation}\label{eq. decay rate genus}
    \Gamma\, V_{d}^{-1}\sim e^{-S_{\rm E}}\sim\left\{\begin{array}{ll}
        \exp\Big[-C\,(\frac{m_{\rm KK}}{M_{{\rm Pl},d}})^{-(d-2)}g\Big] &  \text{for } g\lesssim\mathsf{a}(\mathcal{B}_2)\ll\left(\frac{\Lambda_{{\rm QG},d+1}}{m_{\rm KK}}\right)^{1/2}\\
        0 & \text{for } \mathsf{a}(\mathcal{B}_2)\gtrsim g\gtrsim\left(\frac{\Lambda_{{\rm QG},d+1}}{m_{\rm KK}}\right)^{1/2}
    \end{array}
    \right.\quad,
\end{equation}
with $C>0$, again provided that the condition \eqref{eq. mean trace} is met.

\vspace{0.5cm}

The picture es qualitatively different if we consider a stack of $[p,q]$ 7-branes as bordism defect for the $\mathbb{S}^1$ going to nothing over a $\mathbb{D}^1$ base. Generalizing the analysis from \cite{Hassfeld:2023kpu,Witten:1981gj}, we have that for a $d$-dimensional generalization of Witten's BoN with a defect generating an angular defect $\delta\theta$ the decay rate is given by
\begin{equation}
    \Gamma\, V_{d}^{-1}\sim \exp\left\{-2\pi^{d-2}\Gamma\left(\tfrac{d}{2}\right)^{-1}\left(1-\frac{\delta\theta}{2\pi}\right)^{-(d-2)}\left(\frac{m_{\rm KK}}{M_{{\rm Pl},d}}\right)^{-(d-2)}\right\}\;,
\end{equation}
with $d=9$ for our particular 9d gauge supergravities. Individual D7 branes induce an angular defect of $\frac{\pi}{6}$, see e.g., \cite{Blumenhagen:2013fgp}, so that at most 12 D7 branes can be stacked in a supersymmetric bordism of $\mathbb{S}^1$ to nothing, see Section \ref{ss. Weierstrass}. However, this is no longer the case when considering non-BPS arrangements of $[p,q]$ 7-branes, located along different points of the bordism $\mathcal{B}_2$, where the individual $p_i+\tau q_i$ vectors are no longer aligned and supersymmetry is no longer preserved, see \cite{Gaberdiel:1997ud,Blumenhagen:2013fgp}. Furthermore, since our bordism manifold is non-compact (we can see it as a Riemann surface with one or two punctures), the $\tau$ tadpole needs not be canceled, which allows for an arbitrary number of such $[p,q]$ 7-branes once SUSY is broken. 

Even in the cases where there is a well defined $\delta\theta$, from the lower-dimensional perspective, we can estimate the BoN decay rate, where again the bulk term of the Euclidean action cancels on-shell, with Israel conditions \cite{Israel:1966rt} implying that the GHY term is proportional to the tension of the defect, $K-K_0=\frac{d-1}{d-2}\mathcal{T}_{\rm stack}$, in such a way that \cite{Hassfeld:2023kpu}
\begin{equation}
     \Gamma\, V_{d}^{-1}\sim\exp\left\{-C\,\mathcal{T}_{\rm defect}R^{d-1}\right\}=\exp\left\{-C' \frac{\mathcal{T}_{\rm defect}}{M_{{\rm Pl},d+1}^{d-1}}\left(\frac{m_{\rm KK}}{M_{{\rm Pl,}d}}\right)^{-(d-2)}\right\}\,,
\end{equation}
with $C$ and $C'>0$ some $\mathcal{O}(1)$ numbers. Now, for our $\mathsf{SL}(2,\mathbb{Z})$ monodromies the defects are given, as already discussed, by stacks of $[p,q]$ 7-branes realizing said action. We can estimate the total tension of said stack, where each $[p_i,q_i]$ 7-brane is located at $\vec{x}_i\in\mathbb{D}^2$ by \cite{Schwarz6}
\begin{align}
    \mathcal{T}_{\rm defect}\approx& \sum_{\rm stack}\mathcal{T}_{\rm stack}-\tfrac{1}{\kappa_{10}^2}\sum_{i\neq j}\tfrac{{\rm Re}\big[(p_i+\tau(\vec x_i) q_i)(p_j+\bar\tau(\vec x_j) q_j)\big]}{{\rm Im}\tau(\vec x_i)}\log\tfrac{r_{ij}}{R}+\tfrac{1}{\kappa_{10}^2}\int_{\mathbb{D}^2}\tfrac{\partial_M\tau\partial^M\bar\tau}{2({\rm Im}\,\tau)^2}\,,
\end{align}
where the different terms come from the individual BPS tension of the separate 7-branes, the interaction between each pair (where the logarithmic dependence comes from the codimension-2 of the branes and the KK scale $R$ acts as an IR regulator, and we consider interactions between non-mutually local 7-branes of different stacks), and finally the contribution from the kinetic energy of the $\tau$ due to the monodromies induced by the different branes. We can approximate the value of $\tau$ at each brane stack by taking the value of $\tau$ left invariant by the appropriate monodromy, depicted in Table \ref{tab:stacks}, where for stacks containing more than one brane the total $[p,q]$ can be approximated by the sum of those from the individual components. The middle term, which has the more complicated structure due to the explicit dependence on the brane positions, can be effectively ignored if we only care about the scaling of $\mathcal{T}_{\rm defect}$ with the number of stacks, since it is expected that the brane arrangement is such that minimizes the overall tension, and thus at most will be of the order of the other two (positive) contributions. Additionally, similar to in the $g>0$ case previously studied in this section, we can approximate
\begin{equation}
    \frac{1}{\kappa_{10}^2}\int_{\mathbb{D}^2}\frac{\partial_M\tau\partial^M\bar\tau}{2({\rm Im}\,\tau)^2}=M_{\rm Pl,10}^8\mathcal{O}\bigg(\sum_{\rm stacks}\log|{\rm Tr}(\mathcal{M}_{\rm stack})|\bigg)\,,
\end{equation}
since the above integral is nothing but the area in $\mathbb{H}$ of the image of $\mathbb{D}^2$ through $\tau:\mathbb{D}^2\to \mathbb{H}$, and thus grows linearly with the length of its sides. 
\begin{table}[htb!]
    \centering
    \begin{tabular}{|c||c|c|c|c|c|c|c|c|c|c|}\hline
    Brane stack & $[p,q]$ &$I_n$ &$II$  &$III$  &$IV$&$I_0^*$  &$I_n^*$  &$IV^*$  &$III^*$  & $II^*$  \\\hline
    $\mathcal{T}_{\rm stack}M_{\rm Pl,10}^{-8}$ & $\frac{\pi}{6}$ &$\frac{n\pi }{6}$ &$\frac{\pi}{3}$  &$\frac{\pi}{2}$  &$\frac{2\pi}{3}$ &$\pi$ &$\frac{(n+6)\pi}{6}$  &$\frac{4\pi}{3}$  &$\frac{3\pi}{2}$  & $\frac{5\pi}{3}$  \\\hline
    Fixed $\tau$ & $-\frac{p}{q}$ if $q\neq 0$ &$i\infty$ &$e^{\frac{2\pi i}{3}}$  &$i$  &$e^{\frac{\pi i}{3}}$ & - &$i\infty$  &$e^{\pm\frac{2\pi i}{3}}$  &$i$  & $e^{\frac{\pi i}{3}}$  \\\hline
\end{tabular}
    \caption{Tensions and fixed axio-dilaton for different brane stacks associated to Kodaira singularities.}
    \label{tab:stacks}
\end{table}

We thus can conclude that, following the analysis of Section \ref{eq. gen results} around \eqref{eq. bound no branes}, we have that
\begin{equation}\label{eq. tension defect}
    \mathcal{T}_{\rm defect}\sim M_{\rm Pl,10}^8\mathcal{O}(N_{\mathcal{M}})\lesssim M_{\rm Pl,10}^8\mathcal{O}(\|\mathcal{M}\|_2)\,,
\end{equation}
where $N_{\mathcal{M}}$ is the number of branes needed to realize the monodromy $\mathcal{M}$ of the compact cycle, with $\|\mathcal{M}\|_2$ its (spectral) norm. This way, we can estimate the decay rate through this channel as
\begin{equation}\label{eq. decay branes}
     \Gamma\, V_{d}^{-1}\gtrsim\exp\left\{-\mathcal{O}(\|\mathcal{M}\|_2)\left(\frac{m_{\rm KK}}{M_{{\rm Pl,}d}}\right)^{-(d-2)}\right\}\,,
\end{equation}
where the decay rate can be actually higher than the above expression when the number of branes needed is lower than $\mathcal{O}(\|\mathcal{M}\|_2)$, see \eqref{eq. bound no branes}. Furthermore, since away from the 7-brane stacks we expect our $\mathcal{B}_2\simeq\mathbb{D}^2$ bordism to have a controlled curvature $|\mathcal{R}_{\mathcal{B}_2}|\sim R_{\rm KK}^{-2}$, in general the higher derivative terms that dominated the Euclidean action for $\mathcal{B}_2$ with large genus do not appear in this setting.

Comparing the decay rates \eqref{eq. decay branes} (where $\mathcal{B}_2$ has trivial topology and the needed defects consist in stacks of 7-branes) with \eqref{eq. decay rate genus} (where we try to minimize the number of 7-brane by allowing $\mathcal{B}_2$ to have genus $g>0$ realizing the needed monodromies), and using \eqref{eq. growth of genus}, we conclude that the former channel dominates over the later, both for intermediate and large values of the monodromy norm $\|\mathcal{M}\|_2$.

\subsection{\textit{A conjecture}\label{ss. conjecture}}

In the above section we have seen how decay channels involving stacks of $[p,q]$ 7-branes dominate over those in which the monodromy of the boundary is (mostly) realized through gravitational solitons. While our computations explicitly used the properties of the $\mathsf{SL}(2,\mathbb{Z})$ groups, we can see that they rely in generic properties which allow the generalization to other duality groups $G$.
\begin{itemize}
    \item The main problematic of our results derives from the fact that we need bordisms $\mathcal{B}_2$ of arbitrarily large genus in order to implement elements from $G'\equiv \mathsf{SL}(2,\mathbb{Z})'$, which resulted in decay channels to nothing to be arbitrarily suppressed for $g\gtrsim\sqrt{\Lambda_{{\rm QG},d+1}m_{\rm KK}^{-1}}$. As we discussed in \eqref{eq. com len} and \eqref{eq. com w}, this is set by the commutator width ${\rm cl}(G)$ of the duality group. This is a purely algebraic property of $G$, and as we have seen is infinite for $\mathsf{SL}(2,\mathbb{Z})$. We will study it for other duality groups in Section \ref{sec: other groups}.
    \item When computing the on-shell Euclidean action \eqref{eq. onshell action}, there are two main contributions. The first is purely gravitational, being a function of the genus $g$ of the bordism through Gauss-Bonnet theorem. The second one comes from the kinetic term of the scalars (in this case the axio-dilaton) experiencing the monodromy. As explained in more detail in Appendix \ref{ap. onshell}, $\int_{\mathcal{B}_2}\dd^2 x\,\sqrt{h_2}\frac{\partial_\mu\tau\partial^\mu\bar\tau}{2({\rm Im}\,\tau)^2}=\frac{i}{2}\int_{\tau(\mathcal{B}_2)\subseteq\mathbb{H}}\frac{\dd \tau\wedge\dd \bar\tau}{({\rm Im}\,\tau)^2}$ corresponds to the area of the image of $\mathcal{B}_2$ through $\tau:\mathcal{B}_2\to\mathbb{H}/\mathsf{SL}(2,\mathbb{Z})$. This generalizes, for an arbitrary duality group $G$ and scalars $\vec{\varphi}$ parameterizing the moduli space $\mathcal{M}$, to
    \begin{equation}
        \int_{\mathcal{B}_2}\dd^2x\sqrt{h_2}\,\mathsf{G}_{ij}\partial_\mu\varphi^i\partial^{\mu}\varphi^j=2{\rm Area}(\vec{\varphi}(\mathcal{B}_2))\quad\text{under}\quad \vec\varphi:\mathcal{B}_2\to\mathcal{M}/G
    \end{equation}
    on shell, where $\mathsf{G}_{ij}$ is the moduli space metric of $\mathcal{M}$. Now, under large monodromy transformations, the image of $\quad \vec\varphi:\mathcal{B}_2\to\mathcal{M}/G$ wraps the fundamental domain of the duality group several times. As argued in \cite{Delgado:2024skw}, the moduli spaces of consistent theories of quantum gravity are argued to be \emph{compactifiable} (this is, its volume grows no faster than Euclidean space with the geodesic distance). Even if $\mathcal{M}$ itself is not compact, for most studied examples this results in $\mathcal{M}/G$ having finite volume (see \cite[Sections 3.4 and 4.3.3]{Delgado:2024skw} for examples where this is not the case, though the compactifiability condition still holds), with the moduli space metric $\mathsf{G}_{ij}$ being (approximately) hyperbolic. This implies that, as in our case, ${\rm Area}(\vec{\varphi}(\mathcal{B}_2))$ scales with the length of its sides, respectively corresponding to the monodromies implemented, as in Figure \ref{fig:examples}. Considering a simplified case where $\mathcal{M}/G\simeq \mathbb{H}^n/G$, then under some monodromy $\mathcal{N}\in G$, we can approximate the length along $\mathbb{H}^n$ traveled by $\vec\varphi\to\mathcal{N\vec\varphi}$ by $\sim\log|{\rm Tr}(\mathcal{N})|$, \cite{bridson_haefliger_1999}. This way we can estimate the area of $\vec{\varphi}(\mathcal{B}_2)$ as in \eqref{eq. onshell action}, and a similar effective action as \eqref{eq. onshell action 2} is expected for gravitational solitons with large genus.\footnote{Note that in principle here we are not requiring any restriction on the topology of $\mathcal{M}/G$, only on its volume. However, as argued in \cite{Ooguri:2006in,Dierigl:2020lai,Raman:2024fcv,Rudelius:2024vmc}, it is expected that closed paths on $\mathcal{M}/G$ are contractible, and there are problematic implications when this is not the case, as the image $\vec\varphi(\mathbb{S}^1)\subset \mathcal{M}/G$ could be mapped to a non-trivial 1-cycle in $\mathcal{M}/G$, which would prevent $\vec\varphi(\mathcal{B}_2)\subset \mathcal{M}/G$ from being well-defined.}
    \item On the other hand, when computing the effective action of the BoN with $[p,q]$ 7-brane stacks as the defects realizing the monodromy, we used that the brane stacks individually we expected to have a tension of $\mathcal{O}(M_{\rm Pl,8}^8)$, the same order as the contributions from the scalar kinetic terms (precisely of the order of the ${\rm Area}(\vec{\varphi}(\mathcal{B}_2))$ discussed above), while corrections coming from the interactions between stacks expected to be also $\mathcal{O}(M_{\rm Pl,8}^8)$, though in general their precise expressions might be difficult to compute. It is generally expect that codimension-2 objects in $D$ dimensions have a tension directly related to the angle deficit they create at asymptotic infinity, $\delta\theta=M_{{\rm Pl},D}^{-(D-2)}\mathcal{T}$, see \cite[Section 5]{Nevoa:2025xiq} for more details. This way, we expect $\mathcal{T}=\mathcal{O}(M_{{\rm Pl},D}^{D-2})$. Additional contributions to the effective tension $\mathcal{T}_{\rm defect}$ of the defect are expected to be subleading or of the same order, following a similar argument to that above \eqref{eq. tension defect}.
    \item Finally, given some discrete duality group $G$, arguments similar to those from Section \eqref{eq. gen results} allow us to estimate the growth of the genus $g_{\mathcal{M}}$ needed to realize some $\mathcal{M}\in G$ (modulo representatives of $G^{\rm ab}$), together with the number $N_\mathcal{M}$ of individual codimension-2 brane defects needed to give the analogous realization of $\mathcal{M}$:
    \begin{equation}
        g_{\mathcal{M}}\gtrsim\mathcal{O}(\|\mathcal{M}\|_2)\qquad\text{and}\qquad \mathcal{O}(\log\|\mathcal{M}\|_2)\lesssim N_{\mathcal{M}}\lesssim\mathcal{O}(\|\mathcal{M}\|_2)\;,
    \end{equation}
    as in \eqref{eq. growth of genus} and \eqref{eq. bound no branes}. The above follows from the expression of $\|\mathcal{M}\|_2$ as Frobenius norm for large entries, i.e. $\|\mathcal{M}\|_2\approx\sqrt{\sum_{ij}|\mathcal{M}_{ij}|^2}$, together with the growth of the norm of $\mathcal{M}$ as a product of $N_\mathcal{M}$ parabolic or hyperbolic matrices in $G$.
\end{itemize}

As a result of the above discussion, we then expect that the results obtained in the previous section for $\mathbb{S}^1$ compactifications with $\mathsf{SL}(2,\mathbb{Z})$ monodromies extend for arbitrary (discrete) duality groups $G$ appearing in gravitational EFTs with a consistent QG completion, in such a way that if ${\rm cl}(G)=\infty$ then, while all elements of $G'=[G,G]$ can be realized by gravitational solitons, and in principle one would only need defects associated to $G^{\rm ab}$ to kill the $\tilde\Omega_1^{\mathsf{Spin}}({\rm B}G)$ bordism classes, in practice gravitational instantons with too large genus are arbitrarily suppressed.\\

An intuitive argument on why bordisms with non-trivial topology are more suppressed that those which have trivial topology (but contain duality defects) follows from the size of region where gravity becomes strongly coupled and QG UV effects become important. Bordism defects, such as $[p,q]$ 7-branes, are UV objects, and it is expected that in their core gravity becomes strongly coupled and the EFT description breaks down. However, it can be argued that up to a cut-off region of size $\Lambda_{\rm QG}^{-1}$, spacetimes remains weakly coupled and with small curvature, \cite{Lanza:2020qmt,Lanza:2021udy}. On the other hand, as argued before, the characteristic size of the bordism core is $\mathcal{O}(m_{\rm KK}^{-1})$, and for large genus the mean curvature is quite large, which can eventually also go above the QG cutoff $\Lambda_{\rm QG}$ and break the EFT. Again, since we expect $m_{\rm KK}\ll \Lambda_{\rm QG}$, the region of spacetime where gravity is strongly coupled is much larger for the later than for the former case. This is illustrated in Figure \ref{fig:two walls}.

 \begin{figure}[hbt!]
    \centering
    \resizebox{0.9\textwidth}{!}{%
    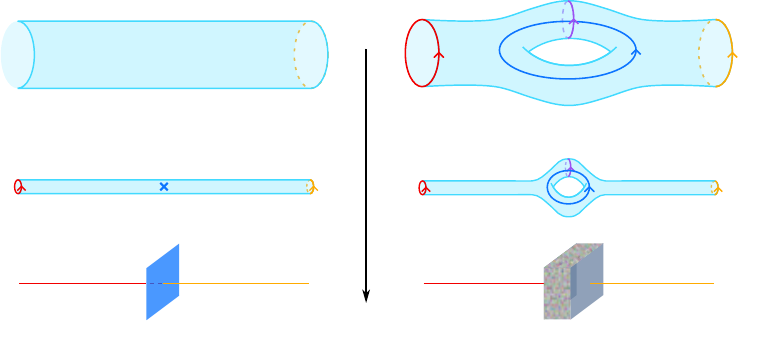
    }
    \caption{Sketch of the difference between interpolating bordisms where the topological defect consists in a codimension-2 singular object (defect killing $\tilde\Omega_1^{\mathsf{Spin}}({\rm B}G)$), which the lower-dimensional theory inherits as a domain wall consisting in the same object upon compactification, and a gravitational soliton with the appropriate monodromies, which upon dimensional reduction can be see as a ``thick'' domain wall where gravity is strongly coupled.}
    \label{fig:two walls}
\end{figure}

In general, it is expected that \emph{global symmetries} in EFTs consistently coupled to Quantum Gravity are broken by symmetry-breaking operators with suppression factor no smaller than $\exp\left(-C\,M_{{\rm Pl},d}^{d-2} \Lambda_{\rm QG}^{-(d-2)}\right)$ for some positive $C=\mathcal{O}(1)$, \cite{Fichet:2019ugl,Daus:2020vtf,Cordova:2022rer}. This bound in the decay rate is respected by decay channels to nothing where the monodromy defect is given by a codimension-two branes, see \eqref{eq. decay branes}, directly implied by the hierarchy $m_{\rm KK}\lesssim \Lambda_{\rm QG}$ in a given $d$-dimensional theory. Conversely, such bound is grossly violated by decay channels relying on gravitational solitons of large genus, see \eqref{eq. decay rate genus}. We know that, while the Cobordism Conjecture applied to $\tilde\Omega_1^{\mathsf{Spin}}({\rm B}\mathsf{SL}(2,\mathbb{Z}))\simeq \mathsf{SL}(2,\mathbb{Z})^{\rm ab}\simeq\mathbb{Z}_{12}$ only requires at most 11 D7-brane defects, the existence of a full spectrum of $[p,q]$ 7-branes implies that the former channel always exists. As discussed in Footnote \ref{fn.all same class}, all $[p,q]$ 7-branes are conjugate to the D7-brane, and thus map to the same element of $\mathsf{SL}(2,\mathbb{Z})^{\rm ab}\simeq\mathbb{Z}_{12}$ through abelianization. While 11 branes for a single choice of $[p_0,q_0]$ would seem to suffice, what we are saying here is that an unbounded number of the \emph{complete} $[p,q]$ spectrum is needed.\\

However, this might not be necessarily the case for a different theory with duality group $G$. In order not to have $\mathbb{S}^1$ compactifications with $G$ monodromy whose decay is arbitrarily suppressed (which would amount to an approximated global symmetry, arbitrarily weakly violated), we propose the following \emph{refinement} or \emph{conjecture} for the Cobordism Conjecture applied to $\tilde\Omega_1^{\mathsf{Spin}}({\rm B}G)$:

\newtcolorbox[use counter=myexample,number format=\Alph]{mybox}[2][]{%
colback=green!5!white,colframe=green!55!black,fonttitle=\bfseries, title=Conjecture \thetcbcounter #2,#1}

\begin{tcolorbox}[enhanced jigsaw,breakable,pad at break*=1mm,colback=YellowGreen!5!white,colframe=YellowGreen!90!black,title={\textbf{A conjecture for the commutator width of the duality group}}]

Given a gravitational EFT with a consistent completion to Quantum Gravity and duality group $G$, then if ${\rm cl}(G)=\infty$, consistency requires the existence of \textbf{\emph{infinitely many}} duality defects killing  elements in $G$, rather than only those associated to $\tilde{\Omega}_1^{\mathsf{Spin}}(BG)\simeq G^{\rm ab}$. Otherwise there will exist \emph{arbitrarily suppressed} bordisms between different $\mathbb{S}^1$ compactifications of the theory and/or nothing.
\end{tcolorbox}
As shown in Theorem \ref{th.inf rep} from Appendix \ref{app. results comm}, for finitely generated discrete infinite groups (we cannot have ${\rm cl}(G)=\infty$ for finite groups), for any upper bound on the number of allowed commutators, one still needs to include an infinite number of representatives (i.e., duality defects), even if $G^{\rm ab}$ is finite.

As already mentioned several times, type IIB string theory already satisfies this refinement through the existence of $[p,q]$ 7-branes. In the following section we will consider other theories with different duality groups, and see what our proposal has to say about them.

\section{Other duality groups}
\label{sec: other groups}

In the preceding sections of the paper we have discussed the fact that the group of monodromies of type IIB on $\mathbb{S}^1$ has infinite commutator width, what requires the whole spectrum of $[p,q]$ 7-branes to be used as bordism defects, rather than at most 11 7-branes. We can apply the same rationale to other duality groups in different effective theories, and see how the above conjecture is fulfilled (or equivalently, what would be required in order to do so). There are two main important properties of our duality group $G$ that must be taken into consideration:
\begin{itemize}
    \item If $G$ is \textbf{perfect}, this is, $G=[G,G]$ and $G^{\rm ab}$ is the trivial group, then all possible monodromies on $\mathbb{S}^1$ can be bounded by gravitational instantons with the appropriate monodromies. In principle, this would imply that at least from the cobordism conjecture perspective, no defects are required. Examples of perfect groups include $\mathsf{SL}(n,\mathbb{Z})$ for $n\geq 3$, $\mathsf{SL}(n,F)$ for $n\geq 2$ and $F$ any field except $\mathsf{SL}(2,\mathbb{F}_2)$ and $\mathsf{SL}(2,\mathbb{F}_3)$, the alternating groups $A_n$ for $n\geq5$, $\mathsf{SO}(n,n;\mathbb{Z})$ for $n=5,6$ and 7 \cite{Braeger:2025kra}, or $\mathsf{Sp}(2n,\mathbb{Z})$ for $n\geq 3$ \cite{bridson2014actions}. By definition non-trivial Abelian groups are not perfect.
    \item If $G$ has \textbf{finite commutator width} ${\rm cl}(G)<\infty$, then we have an upper bound on the genus of the bordism $\mathcal{B}_2$. Examples of groups with finite commutator width include \cite{Calegari2009scl} torsion groups, solvable and amenable groups, and $\mathsf{SL}(n,\mathbb{Z})$ for $n\geq 3$. On the other hand, examples of groups with infinite commutator width include free and hyperbolic groups, or groups of area-preserving diffeomorphisms on surfaces, such as $\mathsf{SL}(2,\mathbb{Z})$ (as we have seen firsthand).
    The computation of ${\rm cl}(G)$ is NP-hard \cite{heuer2020computingcommutatorlengthhard}, so sometimes it is easier to check ${\rm cl}(G)=\infty$ by computing the \emph{stable commutator length} of typical elements ${\rm scl}_G(g)=\lim_{n\to\infty}\frac{{\rm cl}_G(g^n)}{n}<\infty$ \cite{Calegari2009scl}. It is easy to see that if any $g\in G$ has ${\rm scl}_G(g)>0$ then ${\rm cl}(G)=\infty$. For example, as we have seen in Section \ref{sec: com len}, ${\rm scl}_{\mathsf{SL}(2,\mathbb{Z})}(T)=\frac{1}{12}$, see \eqref{eq.kplus1comm}. See \cite{Calegari2009scl} for extensive results on the stable commutator length for different groups.
\end{itemize}

In general the two concepts are independent, though as we will see for certain classes of groups they are related. In the following subsections we will study the perfectness and commutator width of different duality groups relevant in diverse theories.

\subsection{\texorpdfstring{$\mathsf{Mp}(2,\mathbb{Z})$}{Mp(2,Z)} and \texorpdfstring{$\mathsf{GL}^+(2,\mathbb{Z})$}{GL+(2,Z)}\label{ss. mp gl}}
Along the previous sections of the paper, we have focused on the bosonic sector of 10d type IIB string theory, featuring an $\mathsf{SL}(2,\mathbb{Z})$ duality group. As argued in Appendix A from \cite{Tachikawa:2018njr}, this duality group must be ``enlarged'' or ``refined'' when considering additional properties of the theory.

Before considering fermions, which will make us work with non-matrix groups, let us first go back to the F-theory interpretation of type IIB, seen as a limit of M-theory on an $\mathbb{T}^2$-fibered spacetime, where $\mathsf{SL}(2,\mathbb{Z})$ acts on the torus basis with transformations of determinant 1. Now, since M-theory is invariant under parity, we can consider the action of flipping individual $\mathbb{T}^2$ coordinates $y^1$ and $y^2$ (complexified in $z=y^1+iy^2$). In the F-theory limit, these only act on the internal torus, rather than the spacetime coordinates. From a type IIB worldsheet point of view, these are identified as the worldsheet orientation reversal, $\mathsf{R}_1=\Omega={\begin{psmallmatrix}
    -1&0\\0&1
\end{psmallmatrix}}$, and left-moving fermion number, $\mathsf{R}_2=(-1)^{\mathsf{F}_L}={\begin{psmallmatrix}
    1&0\\0&-1
\end{psmallmatrix}}$. Their action on the different type IIB bosonic fields is \cite{Tachikawa:2018njr}
\begin{equation}
    \begin{tabular}{l|cccccc}
         & $g_{\mu\nu}$&$\Phi$&$B_2$&$C_0$&$C_2$&$C_4$ \\\hline
         $\mathsf{R}_1=\Omega$&+&+&$-$&$-$&+&$-$\\
         $\mathsf{R}_2=(-1)^{\mathsf{F}_L}$&+&+&+&$-$&$-$&$-$     
    \end{tabular}
\end{equation}
It is clear that such transformations break SUSY, since for example D3-branes are mapped to $\overline{\text{D3}}$-branes under either transformation. See also that both send $\tau\to-\bar\tau$, which is clearly not an $\mathsf{SL}(2,\mathbb{Z})$ transformation. Now, since $\mathsf{R}_1^2=\mathsf{R}_2^2=1$ (but $\mathsf{R}_1,\,\mathsf{R}_2\not\in\mathsf{SL}(2,\mathbb{Z})$, while $\mathsf{R}_1\mathsf{R}_2\in\mathsf{SL}(2,\mathbb{Z})$), we have that $\mathsf{R}_1\mathsf{R}_2$ is identified with the non-trivial element of the center $Z(\mathsf{SL}(2,\mathbb{Z}))=\{\pm{\rm id}\}\simeq \mathbb{Z}_2$.\footnote{ We could have identified $\mathsf{R}_2\mathsf{R}_1$ as the non-trivial element in $Z(\mathsf{SL}(2,\mathbb{Z}))$. However, since both do not commute, with
\begin{equation}\label{eq.commR1R2}    [\mathsf{R}_1,\mathsf{R}_2]=\Omega(-1)^{\mathsf{F}_L}\Omega(-1)^{\mathsf{F}_L}=(-1)^{\mathsf{F}_R+\mathsf{F}_L}=(-1)^{\mathsf{F}}\;,
\end{equation} 
they would differ in a $(-1)^{\mathsf{F}}$ term that would be important once considering fermions.} Considering such transformations thus enlarges $\mathsf{SL}(2,\mathbb{Z})$ to $\mathsf{GL}(2,\mathbb{Z})$, which we can define via the following presentation
\begin{align}\label{eq. GL2Z}
    \mathsf{GL}(2,\mathbb{Z})&=\big\langle U,\,S,\,R\,|R^2=2,\,\,S^4=e,\, S^2=U^3,\,RSR^{-1}=S^{-1},\,RUR^{-1}=U^{-1}\big\rangle\notag\\&\simeq \mathsf{D}_{12}*_{\mathsf{D}_4}\mathsf{D}_8\;,
\end{align}
where we denote $\mathsf{D}_n$ as the dyhedral group of $n$ elements. Taking the reflection operator $R={\begin{psmallmatrix}
    0&1\\1&0
\end{psmallmatrix}}$, we then have $\mathsf{R}_1=SR$ and $\mathsf{R}_2=RS$. 

It is clear that $\mathsf{SL}(2,\mathbb
Z)$ is a subgroup of $\mathsf{GL}(2,\mathbb{Z})$, and thus considering the later as our enlarged (and SUSY breaking) transformation group, we could now use additional monodromies along the topologically non-trivial 1-cycles of the bordism $\mathcal{B}_2$ to nothing. We have that $\mathsf{GL}(2,\mathbb{Z})^{\rm ab}\simeq\mathbb{Z}_2\oplus\mathbb{Z}_2$, generated by $R$ and $S$, \cite{Debray:2023yrs}. Since all elements in $\mathsf{GL}(2,\mathbb{Z})'$ have determinant 1, but $S\in\mathsf{SL}(2,\mathbb{Z})$ while $S\not\in \mathsf{GL}(2,\mathbb{Z})'$, we have that $\mathsf{GL}(2,\mathbb{Z})'$ is a \emph{proper} subgroup of $\mathsf{SL}(2,\mathbb{Z})$, with index $2$.

Now, since $\mathsf{GL}(2,\mathbb{Z})$ (same for $\mathsf{SL}(2,\mathbb{Z})$) is \emph{virtually free group},\footnote{This is, they have a free subgroup of finite index. Considering the free group
\begin{equation}
    \Gamma(2)=\Big\langle\begin{pmatrix}
        1&2\\0&1
    \end{pmatrix},\,\begin{pmatrix}
        1&0\\2&1
    \end{pmatrix}\Big\rangle\simeq F_2\,,
\end{equation}
we have $\Gamma(2)=\ker({\rm mod}\; 2:\mathsf{SL}(2,\mathbb{Z})\to\mathsf{SL}(2,\mathbb{Z}_2))$. Since $\mathsf{SL}(2,\mathbb{Z}_2)\simeq S_3$ (one can simply check that there are only 6 matrices of such type), then $[\mathsf{SL}(2,\mathbb{Z}):\Gamma(2)]=6$, and thus $[\mathsf{GL}(2,\mathbb{Z}):\Gamma(2)]=12$. For the last step we have used that $[\mathsf{GL}(2,\mathbb{Z}):\mathsf{SL}(2,\mathbb{Z})]=2$, since $\mathsf{SL}(2,\mathbb{Z})=\ker(\det:\mathsf{GL}(2,\mathbb{Z})\to\{0,1\}\simeq\mathbb{Z}_2)$.

} by Corollary \ref{cor. virt free} from Appendix \ref{app. results comm} has \emph{infinite commutator width}. We could then think that we might find ourself in the same situation as before with $\mathsf{SL}(2,\mathbb{Z})$. However, one thing is crucial. For the constructions of the type IIB 9d gauged supergravities, we were not considering monodromies from the \emph{full} $\mathsf{GL}(2,\mathbb{Z})$ group (actually not even the full $\mathsf{SL}(2,\mathbb{Z})$), but rather \emph{only} the parabolic, elliptic and hyperbolic monodromies in $\exp\mathfrak{sl}(2,\mathbb{Z})\cap\mathsf{SL}(2,\mathbb{Z})$, with representatives described in Section \ref{ss:Fth}. After including products of said elements, we find ourselves with the type of elements depicted in Table \ref{tab.int}. Maybe surprisingly, as shown in Appendix \ref{App. real comm}, all such monodromies have ${\rm cl}_{\mathsf{GL}(2,\mathbb{Z})}\leq 2$, modulo a \emph{single} defect (such as one D7-brane) accounting for the $[\mathsf{SL}(2,\mathbb{Z}):\mathsf{GL}(2,\mathbb
Z)']=2$. From a more phenomenological perspective in line with the discussion from Section \ref{sec: com len}, this means that if we allow for $\mathsf{GL}(2,\mathbb{Z})$ monodromies along the non-trivial 1-cycles of our bordism $\mathcal{B}_2$, then these would be enhanced with respect to those with only $\mathsf{SL}(2,\mathbb{Z})$ monodromies, where a large number of brane defects or a topologically involved $\mathcal{B}_2$ were necessary.\footnote{While one could worry that including $\mathsf{GL}(2,\mathbb{Z})$ monodromies along our bordism would break supersymmetry, SUSY is generically spontaneously broken by the positive potential generated by the GSS compactification and depicted in Table \ref{tab:param subgroups}, as well as being the case when having gravitational solitons with non-zero genus.}

\subsubsection*{Including fermions}
So far we have only considered how spacetime bosonic degrees of freedom are affected by the elements of $\mathsf{SL}(2,\mathbb{Z})$ and $\mathsf{GL}(2,\mathbb{Z})$. As noted in \cite{Pantev:2016nze}, considering fermions things become a little bit more tricky, as under a $\begin{psmallmatrix}
    a&b\\c&d
\end{psmallmatrix}\in\mathsf{SL}(2,\mathbb{Z})$ element a chiral spinor transforms as $\psi\to\pm(c\tau+d)^{-1/2}\psi$. Since the sign of the transformation is not determined by the element $\begin{psmallmatrix}
    a&b\\c&d
\end{psmallmatrix}$, the action is not by $\mathsf{SL}(2,\mathbb{Z})$, but rather by a $\mathbb{Z}_2$ extension of it. Following again Appendix A from \cite{Tachikawa:2018njr}, we see that the element $\mathsf{R}_1\mathsf{R}_2\in Z(\mathsf{SL}(2,\mathbb{Z}))$, where both directions of the F-theory elliptic fiber are  reflected, amounts to a $\pi$ rotation of said $\mathbb{T}^2$ fiber. Thus $(\mathsf{R}_1\mathsf{R}_2)^2$ is a $2\pi$ turn, acting on bosons and fermions as $(-1)^{\mathsf{F}}$. Equivalently, since $\mathsf{R}_1^2=\mathsf{R}_2^2=1$, we have $(\mathsf{R}_1\mathsf{R}_2)^2=[\mathsf{R}_1,\mathsf{R}_2]=(-1)^{\mathsf{F}}$, see \eqref{eq.commR1R2}.

This means that, when fermions are being considered, we must extend $\mathsf{SL}(2,\mathbb{Z})$ by the \emph{metapletic} double cover
\begin{equation}
    1\to\mathbb{Z}_2\simeq\{1,(-1)^{\mathsf{F}}\}\to\mathsf{Mp}(2,\mathbb{Z})\to\mathsf{SL}(2,\mathbb{Z})\to 1\,,
\end{equation}
defined through the following presentation: 
\begin{equation}
    \mathsf{Mp}(2,\mathbb{Z})=\Big\langle\hat{U},\hat{S}\;|\; \hat{S}^8=e,\,\hat{S}^2=\hat{U}^3\Big\rangle\simeq \mathbb{Z}_{12}\ast_{\mathbb{Z}_4}\mathbb{Z}_{12}\;,
\end{equation}
which can be understood as the ``smallest'' extension of $\mathsf{SL}(2,\mathbb{Z})$ considering SUSY, with the Spin lift given by $\hat{S}^4=\hat{U}^6=(-1)^{\mathsf{F}}$. Note that, unlike $\mathsf{SL}(2,\mathbb{Z})$, $ \mathsf{Mp}(2,\mathbb{Z})$ is \emph{not} a matrix group.

Since in general it is not expected in F-theory constructions for spacetime to retain a Spin structure, but rather this can be correlated with duality transformations, the relevant structure to consider now is $({\rm Spin}(k)\times {\mathsf{Mp}(2,\mathbb{Z})})/\mathbb{Z}_2$ on $k$-manifolds. Thus, for us the relevant bordism group is now $\Omega_1^{\rm Spin-\mathsf{Mp}(2,\mathbb{Z})}(\rm pt)$. Due to the twisted Spin structure, we cannot perform a splitting as in \eqref{eq.reduced bord}. Nonetheless, one has \cite{Debray:2023yrs}
\begin{equation}
    \Omega_1^{\rm Spin-\mathsf{Mp}(2,\mathbb{Z})}(\rm pt)\simeq\mathsf{Mp}(2,\mathbb{Z})^{\rm ab} \simeq \mathbb{Z}_{24}\simeq \mathbb{Z}_{8}\oplus \mathbb{Z}_3\;,
\end{equation}
generated by $\mathbb{S}^1$ with monodromy $\hat{T}=\hat{S}^{-1}\hat{U}$. The discrepancy of a 2 factor in the order of the bordism group comes from the fact that in $\mathsf{SL}(2,\mathbb{Z})$ we have $S^4=U^6$, while here it is only the case for the bosonic sector. Since $\mathsf{Mp}(2,\mathbb{Z})$ is simply a double cover of $\mathsf{SL}(2,\mathbb{Z})$, by Corollary \ref{cor.double} from Appendix \ref{app. results comm}, we again have ${\rm cl}(\mathsf{Mp}(2,\mathbb{Z}))=\infty$, and the full spectrum of $[p,q]$-7-branes allows us to kill all non-trivial bordisms in a way that is not arbitrarily suppressed.

Finally, we take the reflections on the F-theory torus that resulted in $\mathsf{GL}(2,\mathbb{Z})$, see \eqref{eq. GL2Z}, together with spacetime fermions. The operations $\mathsf{R}_1$ and $\mathsf{R}_2$, which generated  $\mathbb{Z}_2\times\mathbb{Z}_2$, follow $[\mathsf{R}_1,\mathsf{R}_2]=(-1)^{\mathsf{F}}$, see \eqref{eq.commR1R2}. In order to take fermions into account, we have to consider the $\mathrm{Pin}^+$ double cover (see \cite{Debray:2023yrs} for more details)
\begin{equation}
    1\to\{1,(-1)^{\mathsf{F}}\}\simeq\mathbb{Z}_2\to \mathsf{GL}^+(2,\mathbb{Z})\to \mathsf{GL}(2,\mathbb{Z})\to 1\,,
\end{equation}
defined as
\begin{align}
\mathsf{GL}^+(2,\mathbb{Z})&=\Big\langle\hat{U},\hat{S},\hat{R}\,|\,\hat S^8=e,\, \hat S^2=\hat U^3,\, \hat R\hat S\hat R^{-1}=\hat S^{-1},\, \hat R\hat U\hat R^{-1}=\hat U^{-1},\, \hat R^2=e \Big\rangle\notag\\&\simeq \mathsf{D}_{16}*_{\mathsf{D}_8}\mathsf{D}_{24}\;,
\end{align}
with the lift choice $\hat{R}^2=1$ that takes the $\rm Pin^+$ lift, since M-theory is $\rm Pin^+$ (otherwise $\hat{R}^2=(-1)^{\mathsf{F}}$ would be $\rm Pin^-$). The same way as before, the above is not a matrix group.

As in the metaplectic case, again there is a non-trivial interplay between the Spin structure and the duality group, resulting in the relevant cobordism group to compute being $\Omega_1^{{\rm Spin}-\mathsf{GL}^+(2,\mathbb{Z})}(\rm pt)$. Again we are lucky and we have \cite{Debray:2023yrs}
\begin{equation}
    \Omega_1^{\rm Spin-\mathsf{GL}^+(2,\mathbb{Z})}(\rm pt)\simeq\mathsf{GL}^+(2,\mathbb{Z})^{\rm ab}\simeq \mathsf{GL}(2,\mathbb{Z})^{\rm ab}\simeq\mathbb{Z}_2\oplus\mathbb{Z}_2\;,
\end{equation}
generated by $\hat S$ and $\hat R$, respectively with bordism defects given by $[p,q]$-7-branes and the so-called \textbf{R7-brane} \cite{Dierigl:2022reg, Debray:2023yrs,Dierigl:2023jdp}, which consists in a non-supersymmetric defect, that nonetheless, might be stable, see \cite{Heckman:2025wqd}. 

Following the argument from Corollary \ref{cor.double}, we see that while $\mathsf{GL}^+(2,\mathbb Z)$ has infinite commutator width (since it is a double cover of $\mathsf{GL}(2,\mathbb Z)$), which would imply that defects associated to 9d compactifications on $\mathbb{S}^1$ with arbitrary $\mathsf{GL}^+(2,\mathbb Z)$ monodromy would require from the full spectrum of $[p,q]$-7- and R7-branes in order not to have compactifications with $\mathcal{B}_2$ arbitrarily suppressed. On the other hand, as previously argued in this section and explicitly computed in Appendix \ref{App. real comm}, the $\mathsf{SL}(2,\mathbb{Z})$ monodromies (or, for our purposes, their $\rm Pin^+$ lifts) used in the construction of type IIB 9d SUGRAs can all be obtained by at most gravitational instantons of genus-2 and a D7-brane.\footnote{While, as discussed in Theorem \ref{th. iso} and Corollary \ref{cor.double} from Appendix \ref{app. results comm}, the commutator width of the double cover can grow, this is not the case here, as the lift to $\mathsf{GL}(2,\mathbb{Z})$ of the monodromies resulting in gauged 9d supergravities must act trivially on the fermions (since SUSY is spontaneously, rather than explicitly, broken), and thus will only contain the trivial element of $\{1,(-1)^{\mathsf F}\}$.}

\subsection{U-duality and maximal supergravity}

In the previous sections we have considered the $\mathsf{SL}(2,\mathbb{Z})$ duality group of $D=10$ type IIB string theory (as well as its generalizations), and studied the bordisms $\mathcal{B}_2$ after compactifying the theory on $\mathbb{S}^1$ with the appropriate monodromy down to $d=9$. Similarly, we can study additional theories in diverse dimensions, with different duality groups $G$, and see what their abelianization $G^{\rm ab}$ and commutator widths ${\rm cl}(G)$ have to say about the bordisms upon compactification on $\mathbb{S}^1$ with $G$-monodromies.

Following the more general work of \cite{Braeger:2025kra}, a natural set of candidates for this are maximal supergravity theories with 32 supercharges in $D$ dimensions obtained by compactifying M-theory on $\mathbb{T}^{11-D}$. Their duality groups were first computed by Cremmer and Julia \cite{Julia:1980gr,Cremmer1981}, which after flux quantization are reduced to discrete groups $G_U^D$ represented in Table \ref{tab.Udual}.

\begin{table}[hbt!]
    \centering
    \begin{tabular}{|c|c|c|c|}\hline
        $D$ & $G_U^D$ & $\Omega^{\mathsf{Spin}}_1(BG_U^D)$&${\rm cl}(G_U^D)$ \\\hline
        10 & 1    & $\mathbb{Z}_2$  &  0 \\
        9 & $\mathsf{SL}(2,\mathbb{Z})$    & $\mathbb{Z}_2\oplus\mathbb{Z}_{12}$  &  $\infty$ \\
        8 & $\mathsf{SL}(2,\mathbb{Z})\times\mathsf{SL}(3,\mathbb{Z})$    & $\mathbb{Z}_2\oplus\mathbb{Z}_{12}$  &  $\infty$ \\
        7 & $\mathsf{SL}(5,\mathbb{Z})$    & $\mathbb{Z}_2$  &  $\leq 41$ \\
        6 & $\mathsf{SO}(5,5,\mathbb{Z})$    & $\mathbb{Z}_2$  &  $<\infty$ \\
        5 & $\mathsf{E}_{6(6)}(\mathbb
        Z)$    & $\mathbb{Z}_2$  &  $< \infty$ \\
        4 & $\mathsf{E}_{7(7)}(\mathbb
        Z)$    & $\mathbb{Z}_2$  &  $< \infty$ \\
        3 & $\mathsf{E}_{8(8)}(\mathbb
        Z)$    & $\mathbb{Z}_2$  &  $<\infty$ \\\hline
    \end{tabular}
    \caption{U-duality groups $G_U^D$ for maximal supergravity theories from $D=10$ to 3, together with their first Spin-bordism groups $\Omega_1^{\mathsf{Spin}}({\rm B}G_U^D)$ and their commutator widths ${\rm cl}(G_U^D)$}
    \label{tab.Udual}
\end{table}

Following \cite{Obers:1998fb}, the U-duality groups follow the general structure
\begin{equation}
    G_U^{11-l}=\mathsf{E}_{l(l)}(\mathbb{Z})=\mathsf{SL}(l,\mathbb{Z})\bowtie\mathsf{SO}(l-1,l-1,\mathbb{Z})\;,
\end{equation}
where $\bowtie$ is the co-called \emph{knit} or \emph{Zappa–Sz\'ep} product, a generalization of the direct and semidirect products denoting the group generated by the two non-commuting subgroups. In the above product, the first factor is associated to the modular group of $\mathbb{T}^l$. while the second cna be thought of as coming from the T-dualities of type IIA string theory on $\mathbb{T}^{l-1}$.

We first consider the first Spin-bordism groups associated to these groups, which from \eqref{eq.1spinBord} reduces to $\Omega_1^{\mathsf{Spin}}({\rm B}G_U^D)\simeq\mathbb{Z}_2\oplus(G_U^D)^{\rm ab}$. See that the $\mathbb{Z}_2$ factor from $\Omega_1^{\mathsf{Spin}}(\rm pt)\simeq\mathbb{Z}_2$ always remains. As explained in more detail in Section 7 from \cite{Braeger:2025kra}, $G_U^D$ is perfect for $D\leq 7$, being also known that $\mathsf{SL}(n,\mathbb{Z})$ is perfect for $n\geq 3$. Since abelianization respects the direct product, we have that the only problematic terms in Table \ref{tab.Udual} are the $\mathsf{SL}(2,\mathbb{Z})$ factors in $G_U^9$ and $G_U^8$, translating to a $\mathbb{Z}_{12}\simeq\mathbb{Z}_3\oplus\mathbb{Z}_4$ factor in $\Omega_1^{\mathsf{Spin}}({\rm B}G_U^D)$, in both cases generated by $\mathbb{S}^1$ with non-trivial $\mathsf{SL}(2,\mathbb{Z})$ (the same way as in 10d type IIB).

   As argued in \cite{Braeger:2025kra}, the cases with $(G_U^D)^{\rm ab}=0$ need no cobordism defects to kill $\Omega_1^{\mathsf{Spin}}({\rm B}G_U^D)$, which we have further corroborated by noticing that in such cases the commutator width of the U-duality groups remains finite. On the other hand, this is not the case when a $\mathsf{SL}(2,\mathbb{Z})$ factor is present in $G_U^9$ and $G_U^8$. While in the former case the defects are just 7-branes wrapped along the compact $\mathbb{S}^1$ compactifying from 10 to 9d, for the later things are not as straightforward. While as described above and in \cite{Braeger:2025kra}, these are given by singular geometries of the type $(\mathbb{T}^2\times \mathbb{C})/\mathbb{Z}_k$, with $k=3$ and 4, in a $\mathcal{B}_4$ bordism for $\mathbb{T}^3$.\footnote{After further compactifying the $D=8$ maximal supergravity on $\mathbb{S}^1$, locally this singularity will be $((\mathbb{T}^2\times \mathbb{C})/\mathbb{Z}_k)\times(-\epsilon,\epsilon)$, with $(-\epsilon,\epsilon)$ a neighborhood normal to the bordism direction.} The same way such singularities can be interpreted as III$^*$ and IV$^*$ in F-theory language (due to the presence of a $\mathbb{T}^2$ fiber), the full spectrum of defects to kill every element of $\mathsf{SL}(2,\mathbb{Z})$ is realized through the appropriate singular $\mathbb{T}^2$ fiber configuration (not necessarily SUSY preserving), the same way that in 10d type IIB string theory every $\mathsf{SL}(2,\mathbb{Z})$ element is realizable as a monodromy of some 7-brane configuration, with the appropriate F-theory realization, see \cite{Weigand:2018rez}. These singular elliptic fibers will not normally fall under Kodaira-N\'eron classification \cite{Kodaira+2015+1142+1156,Neron}, since they will generally break supersymmetry.\\

After taking care of the first Spin-bordism groups, we now center our attention on the commutator width of $G_U^D$, which also appear on Table \ref{tab.Udual}. In order to see this, we remind that $\mathsf{SL}(2,\mathbb{Z})$ has infinite commutator width, as we showed in Section \ref{sec: com len}. On the other hand, $\mathsf{SL}(n,\mathbb{Z})$ has finite commutator width for $n\geq 3$. While in general the computation of a given group's commutator width is quite complicated, there exist upper bounds in the literature for this family of groups \cite{Newman1987Unimodular,Avni_Meiri_2019},
\begin{equation}\label{eq.clSLnZ}
    {\rm cl}(\mathsf{SL}(n,\mathbb{Z}))\leq \min\left\{2\log\left(\tfrac{3}{2}\right)\log n+40,87\right\}\quad\text{for }\; n\geq3\;,
\end{equation}
    with the above bounds not expected to be optimal and actually quite larger than the actual commutator width. While it can be shown that for large enough $n$ (here $n\geq 42$, see Theorem 3 from \cite{DENNIS1988150}, though again this is expected to be a  quite conservative lower bound) $ {\rm cl}(\mathsf{SL}(n,\mathbb{Z}))\leq 6$, we are interested in smaller groups like $\mathsf{SL}(3,\mathbb{Z})$ and $\mathsf{SL}(5,\mathbb{Z})$, for which we turn to \eqref{eq.clSLnZ}. For $G_U^8\simeq \mathsf{SL}(2,\mathbb{Z})\times\mathsf{SL}(3,\mathbb{Z})$, we use Corollary \ref{cor. fin prod} from Appendix \ref{app. results comm} to see that, since ${\rm cl}(\mathsf{SL}(2,\mathbb{Z}))=\infty$,
    ${\rm cl}(G_U^8)=\infty$ even if the $\mathsf{SL}(3,\mathbb{Z})$ factor is perfect and has finite commutator width (at most ${\rm cl}(\mathsf{SL}(3,\mathbb{Z}))\leq 40$).

    For other families of groups of importance for us, such as $\mathsf{SO}(l,l,\mathbb{Z})$ or $\mathsf{E}_{l(l)}(\mathbb{Z})$, we obtain similar results in a slightly more involved way. From \cite{Hazrat2012CommutatorWI,Gvozdevsky2024VerbalWI}, since  $\mathsf{E}_{l(l)}(\mathbb{Z})$ are simply connected Chevalley groups (of type $E_l$) over $\mathbb{Z}$ and rank larger than 2, every word (including the commutator word) has finite width, with a bound depending only on the root system, thus resulting in $\mathsf{E}_{l(l)}(\mathbb{Z})$ having finite commutator width for $l=6,\,7$ and 8. Similarly, while $\mathsf{SO}(l,l,\mathbb{Z})$ is not a simply Chevalley group, $\mathsf{Spin}(l,l,\mathbb{Z})$ is (of type $D_l$), and as in the exceptional case, it has finite commutator width. Now, since commutator width differs at most in a multiplicative constant under finite index and double covering, see Theorem \ref{thm. finite index} and Corollary \ref{cor.double} from Appendix \ref{app. results comm}, we conclude that ${\rm cl}(\mathsf{SO}(l,l,\mathbb{Z}))<\infty$ for $l\geq 2$. Unlike in the $\mathsf{SL}(l,\mathbb{Z})$ case, no upper bounds for the commutator width are known in the literature.\\

    We thus conclude that barring $D=9 $ and 8, all relevant U-duality groups $G_U^D$ have finite commutator width, though in all such cases the actual value of ${\rm cl}(G_U^8)$ is still unknown. From a phenomenological point of view, it would be interesting to obtain sharper bounds on them, since they would translate in bounds on the decay rate of $\mathbb{S}^1$ compactifications of these theories.

    Finally, the same way that in Section \ref{ss. mp gl} we considered the metaplectic double cover $\mathsf{Mp}(2,\mathbb{Z})$ of $\mathsf{SL}(2,\mathbb{Z})$ to account for fermions in 10d Type IIB string theory, as well as possible reflections on the F-theory torus, arriving to the Pin$^+$-lift $\mathsf{GL}^+(2,\mathbb{Z})$, we can consider similar actions in our maximal supergravity theories, this is, studying the effect of $G_U^D$ on fermions, and reflections on the $\mathbb{T}^{11-D}$ torus we compactify 11d M-theory on, respectively. Following the extensive work of \cite{Pantev:2016nze,Chakrabhavi:2025bfi}, the Spin- and Pin$^+$-lifts of our duality groups are studied, defined through the double covers
    \begin{subequations}
        \begin{align}
            1\to\mathbb{Z}_2\to&\widetilde{G}_U^D\to G_U^D\to1\\
            1\to\mathbb{Z}_2\to&\widetilde{G}_U^{D+}\to (G_U^D\rtimes\mathbb{Z}_2^{\mathsf{R}})\to 1\;,
        \end{align}
    \end{subequations}
    where $\mathbb{Z}_2^{\mathsf{R}}$ is generated by reflections. As in the $\mathsf{SL}(2,\mathbb{Z})$ case, after lifting there is a non-trivial interplay between the Spin structure and the duality bundle, in such a way that the appropriate cobordism structures are now ${\rm Spin}-\widetilde{G}_U^{D(+)}$. While for arbitrary $k$ this greatly affects the structure of $\Omega_k^{{\rm Spin}-\widetilde{G}_U^{D(+)}}(\rm pt)$ with respect to $\Omega_k^{\mathsf{Spin}}({\rm B}G_U^D)$, see \cite{Chakrabhavi:2025bfi}, for the first bordism group one simply has
    \begin{equation}
        \Omega_1^{{\rm Spin}-\widetilde{G}_U^{D(+)}}({\rm pt})\simeq \big(\widetilde{G}_U^{D(+)}\big)^{\rm ab}\,,
    \end{equation}
    which greatly simplifies things for us, since bordisms $\mathcal{\mathcal{B}}_2$ will still be given by 2-manifolds with the appropriate $\widetilde{G}_U^{D(+)}$ monodromies on their non-trivial 1-cycles. We are thus interested in the commutator widths of the groups $\widetilde{G}_U^{D(+)}$. Using the results from Table \ref{tab.Udual}, and the fact that the commutator of the double cover $\widetilde{G}_U^{D}$ differs from that of $G_U^{D}$ by a multiplicative constant, and that ${\rm cl}(G_U^D\rtimes\mathbb{Z}_2^{\mathsf{R}})<\infty$ iff ${\rm cl}(G_U^D)<\infty$ (since $\mathbb{Z}_2$ is abelian and $[G_U^D,\mathbb{Z}_2^{\mathsf{R}}]=1$), see Theorem \ref{th. semidirect}, we conclude that if $G^D_U$ has finite commutator width, so will its Spin- and Pin$^+$-lifts. Following again Table \ref{tab.Udual} we see that again the only problems are for $D=9$ and 8, precisely due to the $\mathsf{SL}(2,\mathbb{Z})$ factor. The full spectrum of duality vortices is thus needed in order not to have $\mathcal{B}_2$ of arbitrary genus, which must include the new type IIA reflection 7-brane and reflection 6-brane discussed in Section 4 from \cite{Chakrabhavi:2025bfi}.

\subsection{Examples in 4d \texorpdfstring{$\mathcal{N}=2$}{N=2}}

In the previous subsections we have considered theories with a large number of supercharges, where the maximal supersymmetry sets the U-duality group for each dimension.\footnote{See also Appendix A.3 from \cite{Etheredge:2024tok}, where it is described how the Emergent String Conjecture \cite{Lee:2019xtm,Lee:2019wij} sets the the Weyl group $W(G_U^D)$ of the different U-duality groups.} As explained above, these are constructed by compactifying M-theory on $\mathbb{T}^{11-D}$ (or equivalently Type IIA string theory on $\mathbb{T}^{10-D}$). For our final set of examples,\footnote{We thank Damian van de Heisteeg for suggesting us these particular settings.} we consider the 4d $\mathcal{N}= 2$ SUGRA theories obtained from compactification of Type IIB string theory on a Calabi-Yau threefold $X$, which can be split into vector multiplet and hypermultiplet. As we will see, the different duality/monodromy groups will directly depend on the topological properties of $X$.

\subsubsection*{Vector multiplet sector}

Starting with the vector multiplet and the complex structure $\mathcal{M}_{\rm cs}$, parameterized by complex coordinates $\{z^ i\}_{i=1}^{h^{2,1}+1}$, we find a monodromy group $\Gamma\leq \mathsf{Sp}(2h^{2,1}+2,\mathbb{Z})$, induced by the non-trivial elements of $\pi_1(\mathcal{M}_{\rm cs})$, acting on the period vector
\begin{equation}
    \vec\Pi(z)\mapsto M\cdot \vec\Pi(z)\,,\quad\text{where}\quad\Pi^ I(z)=\int_{\Gamma_ I}\Omega(z)\,,
\end{equation}
with $\Sigma_I\in H^3(X;\mathbb{Z})\simeq\bigoplus_{p+q=3} H^{p,q}(X)\simeq \mathbb{Z}\oplus\mathbb{Z}^{h^{2,1}}\oplus\mathbb{Z}^{h^{2,1}}\oplus\mathbb{Z}$ and $\Omega$ the holomorphic $(3,0)$-form. Similarly, they act on the particle states
\begin{equation}\label{eq. wrapped N2}
    q=(q_{{\rm D0}},q_{{\rm D2},1},\dots,q_{{\rm D2},h^{2,1}},q_{{\rm D4},1},\dots,q_{{\rm D4},h^{2,1}},q_{{\rm D6}})^\intercal\,,\quad\text{as}\quad q\mapsto Mq\,,
\end{equation}
where the different states are understood as D3-branes wrapped on the different 3-cycles from the $H^3(X;\mathbb{Z})\simeq\bigoplus_{p+q=3}H^{p,q}(X)$.\footnote{More intuitively, we can understand the states encoded in $q=({\color{red}q_{{\rm D0}}},{\color{orange}\vec q_{{\rm D2}}},{\color{Green}\vec q_{{\rm D4}}},{\color{blue}q_{{\rm D6}}})^\intercal$ coming from D3-branes wrapped along 3-cycles from $H^3(X;\mathbb{Z})\simeq {\color{red}H^{3,0}(X)}\oplus{\color{orange}H^{2,1}(X)}\oplus{\color{Green}H^{1,2}(X)}\oplus{\color{blue}H^{03}(X)}$ as wrapped {\color{red}D0}-, {\color{orange}D2}-, {\color{Green}D4}- and {\color{blue}D6}-branes wrapped along the appropriate $p$-cycle in the mirror type IIA description, since through Mirror Symmetry the Hodge numbers are mapped as
$$
{\scriptscriptstyle
   \begin{array}{c}
         1\\
         0\qquad0\\
         0\qquad a\qquad 0\\
         {\color{red} 1}\qquad {\color{orange} b}\qquad {\color{Green} b}\qquad {\color{blue} 1}\\
         0\qquad a\qquad 0\\
                  0\qquad0\\
                  1
    \end{array}\quad\longleftrightarrow\quad
\begin{array}{c}
         {\color{red} 1}\\
         0\qquad0\\
         0\qquad {\color{orange} b}\qquad 0\\
         1\qquad a\qquad a\qquad 1\\
         0\qquad {\color{Green} b}\qquad 0\\
                  0\qquad0\\
                  {\color{blue} 1}
    \end{array}\quad\,},    
$$
which maybe makes clearer the notation used in \eqref{eq. wrapped N2}.
} These monodromy matrices are not just any element of $\mathsf{Sp}(2h^{2,1}+2,\mathbb{Z})$, but further be \emph{quasi-unipotent}, i.e.,
\begin{equation}
    (M^l-{\rm Id})^ k\neq 0\,,\quad\text{but}\quad (M^l-{\rm Id})^{k+1}=0\,,
\end{equation}
for integers $0\leq k\leq 3$ and $l\in\mathbb{Z}$, \cite{Schmid:1973cdo,Grimm:2018ohb}.\footnote{Furthermore, as explained in \cite{Delgado:2024skw}, compactifiability of the moduli space (which as discussed in Section \ref{ss. conjecture} was assumed for our purposes) implies that $\Gamma$ should act in semisimple way.}

The same way $\mathsf{SL}(2,\mathbb{Z})$ monodromies were realized in Type IIB string theory by stacks of $[p,q]$-7-branes, here we have codimension-two defects (which for $D=4$ correspond to strings) implementing the above duality transformations, corresponding to \emph{axionic strings}. These have been studied in the literature, both as spacetime solutions to de 4d equations of motion and from top-down constructions, and have been extensively featured in the context of the Swampland program, see \cite{Lanza:2021udy,Lanza:2022zyg,Grimm:2022sbl,Marchesano:2022avb,Martucci:2022krl,Wiesner:2022qys,Cota:2022yjw,Marchesano:2022axe,Martucci:2024trp,Grieco:2025bjy}, as well as under further reduction to 4d $\mathcal{N}=1$ theories. The string theory embedding of these 4d strings is not universal. For singularities in $\mathcal{M}_{\rm cs}$ corresponding to large complex structure limits, such strings correspond to KK-monopoles wrapped on 3-cycles, with the axion corresponding to the argument of the associated complex coordinates of $\mathcal{M}_{\rm cs}$.\footnote{In the mirror type IIA description, the axionic string corresponds to NS5-branes wrapped on divisors (4-cycles), with the axion resulting from the reduction of the $B_2$-field on the dual 2-cycle.} As for the strings realizing the monodromies associated to other singularities not at finite distance (and such not admitting a perturbative description in the 4d EFT), such as Landau-Ginzburg or conifold points, these do not have a clear top-down string theory embedding in either the type IIB or mirror type IIA descriptions.

For simplicity, and following on the results from Section 3.2 from \cite{Delgado:2024skw}, we will consider in more detail the case where $h^{2,1}=1$, with $\mathcal{M}_{\rm cs}\simeq\mathbb{P}^1\backslash\{0,1,\infty\}$. The point $z=\infty$ corresponds to a large complex structure limit, while $z=1$ results in a conifold point. The final $z=0$ can correspond to either a Landau-Ginzburg point (finite distance, finite monodromy), Conifold point (finite distance, infinite monodromy), a K point or a large complex structure point (both of infinite monodromy and at infinite distance, but the former corresponding to an emergent string limit while the later results in a decompactification). There are 14 Calabi-Yau threefolds of this type \cite{Doran:2005gu,almkvist2010tablescalabiyauequations}, with the singularities of their $\mathcal{M}_{\rm cs}$ moduli spaces and monodromy group $\Gamma$ identified, see e.g., Table 2 from \cite{Delgado:2024skw}. Of said fourteen manifolds, seven of them have infinite index $[\mathsf{Sp}(4,\mathbb{Z}):\Gamma]$, and are given by
\begin{equation}\label{eq.gamman2}
    \mathbb{Z}\ast\mathbb{Z}_5\,,\quad (\mathbb{Z}\oplus\mathbb{Z}_2)\ast_{\mathbb{Z}_2}\mathbb{Z}_8\,,\quad (\mathbb{Z}\oplus\mathbb{Z}_2)\ast_{\mathbb{Z}_2}\mathbb{Z}_{12}\,,\quad \mathbb{Z}\ast\mathbb{Z}\,,
\end{equation}
with the last shared by four different manifolds (with different singularity structure in the moduli space). The respective abelianizations, relevant for $\Omega_1^{\mathsf{Spin}}({\rm B}\Gamma)\simeq\mathbb{Z}_2\oplus\Gamma^{\rm ab}$ are given by\footnote{Here we have used that 
\begin{equation}
    (A\ast_CB)^{\rm ab}\simeq(A^{\rm ab}\oplus B^{\rm ab})/\langle\{i_G^{\rm ab}(c)i_H^{\rm ab}(c)^{-1}\}_{a\in C^{\rm ab}}\rangle,
\end{equation}
where $i_G:A\hookrightarrow G$ and $i_H:A\hookrightarrow G$ are the injective morphisms defining the amalgamated product, with $i_G^{\rm ab}$ and $i_H^{\rm ab}$ their images under the abelianization functor. Concretely, we have $(A\ast B)^{\rm ab}\simeq A^{\rm ab}\oplus B^{\rm ab} $.}
\begin{equation}
    \mathbb{Z}\oplus\mathbb{Z}_5\,,\quad \mathbb{Z}\oplus\mathbb{Z}_8\,,\quad \mathbb{Z}\oplus\mathbb{Z}_{12}\,,\quad \mathbb{Z}\oplus\mathbb{Z}\,,
\end{equation}
 The defects needed to kill the bordism classes are precisely the axionic strings whose monodromies generate the individual factors in the free (amalgamated) products from \eqref{eq.gamman2}. Unlike in the $\mathsf{SL}(2,\mathbb{Z})$ case, where in principle a single type of $[p,q]$ 7-branes (such as D7) where needed to kill the whole $\mathsf{SL}(2,\mathbb{Z})^{\rm ab}\simeq\mathbb{Z}_{12}\simeq\mathbb{Z}_3\oplus\mathbb{Z}_4$, here the axionic strings associated to two of the singularities will be needed. Related to this, the commutator width of all the above $\Gamma$ groups is \emph{infinite}, since they are all \emph{virtually free}. Following our reasoning, we thus need the full spectrum of codimension-2 defects able to generate the complete $\Gamma$. As we have just explained, due to the structure of the monodromy group, these were already needed to kill $\Gamma^{\rm ab}$, corresponding to different axionic strings. We again recall that, other than the large complex structure one, these to not have in general a clear top-down construction from string theory.

Regarding the additional seven CY$_3$ with $h^{2,1}=1$, they do not have an expression for $\Gamma$ in terms of a free or amalgamated product as in \eqref{eq.gamman2}, and further have finite index $[\mathsf{Sp}(4,\mathbb{Z}):\Gamma]$, \cite{Brav_2014,HOFMANN_2015}. The expression for their elements in terms of a set of generators is present in \cite{HOFMANN_2015}, but is in general quite more complicated that those of $\Gamma$ already discussed. In general, we expect to find again non-trivial abelianizations and commutator widths as in the previous cases, with the same conclusions applying.\\

In the above paragraphs we have studied the na\"ively simpler case where $h^{2,1}=1$, since the different monodromy groups have been classified in the literature. However, as it was the case in $\mathsf{SL}(2,\mathbb{Z})$ versus $\mathsf{SL}(n,\mathbb{Z})$ for $n\geq 3$, things are quite different for $\Gamma\leq \mathsf{Sp}(2h^{2,1}+2,\mathbb{Z})$ for $h^{2,1}\geq 2$. We have that $\mathsf{Sp}(2n,\mathbb{Z})$ for $n\geq 3$ is a simply connected Chevalley group (of type $C_n$) with rank higher than 2, and thus with finite commutator width, \cite{Hazrat2012CommutatorWI,Gvozdevsky2024VerbalWI}. Now, if $\Gamma\leq\mathsf{Sp}(2h^{2,1}+2,\mathbb{Z})$ has finite index, then we will have ${\rm cl}(\Gamma)<\infty$, see Theorem \ref{thm. finite index} from Appendix \ref{app. results comm}. While $\mathsf{Sp}(2n,\mathbb{Z})$ is perfect for $n\geq 3$ \cite{bridson2014actions}, this does not necessarily extend to its subgroups (be them of finite index or not), so one would need the appropriate axionic strings to generate $\Gamma^{\rm ab}$ in order to kill $\tilde\Omega_1^{\mathsf{Spin}}({\rm B}\Gamma)$, this time without needing to worry about arbitrarily involved bordisms $\mathcal{B}_2$.

Note that in general, if $[\mathsf{Sp}(2h^{2,1}+2,\mathbb{Z}):\Gamma]=\infty$, one might still have that ${\rm cl}(\Gamma)=\infty$, and thus duality an infinite set of vortices/codimension-two defects associated elements in $\Gamma$ (not just its generators) would be needed. This was already the case in the $h^{2,1}=1$ cases explained above, where the free group structure of $\Gamma\leq \mathsf{Sp}(4,\mathbb{Z})$ implied both infinite commutator width and the existence of a larger set of necessary duality defects. Understanding the physical distinction between finite and infinite index monodromy groups $\Gamma$ in $\mathsf{Sp}(2h^{2,1}+2,\mathbb{Z})$ might help understand the nature of the different axionic strings other than those associated to large complex structure limits.

\subsubsection*{Hypermultiplet sector}

We now consider the slightly more involved hypermultiplet sector. Compactifying again type IIB string theory on a CY threefold $X$, with $H^4(X;\mathbb{Z})\simeq H^2(X;\mathbb{Z})\simeq H^{1,1}(X)\simeq \mathbb{Z}^{h^{1,1}}$ spanned by the basis of 4-cycles $\{[\omega_a]\}_{a=1}^{h^{1,1}}$. The associated moduli space $\mathcal{M}_{\rm h}$ is a quaternionic K\"ahler manifold with $\dim_{\mathbb{R}}\mathcal{M}_{\rm h}=4(h^{1,1}+1)$, spanned by \cite{Alexandrov:2008gh,Alexandrov:2011va,Alexandrov:2013yva}
\begin{itemize}
    \item A universal part given by the type IIB axiodilaton $\tau=C_0+ie^{-\Phi}$.
    \item A set of $h^{1,1}$ complex scalars $\psi^a=b_2^a+it^a$ resulting from the expansion $B_2+i J=(b_2^a+it^a)\omega_a$, where $B_2$ is the NSNS 2-form and $J$ is the K\"ahler 2-form, in such a way that $t^a=\int_{[\omega_a]} J$ is the volume (in string units) of the divisor $[\omega_a]$. The total volume of $X$ is then $\mathcal{V}_X=\frac{1}{6}\kappa_{abc}t^ at^bt^ c$, with $\kappa_{abc}$ the intersection numbers.
    \item A set of $2h^{1,1}$ axions obtained from the reduction of the RR 2- and 4-forms, $C_2=c_2^a\omega_a$ and $C_4-\frac{1}{2}B_2\wedge C_2=-c_{4,a}\tilde{\omega}_a$, on the $\{\omega_ a\}_{a=1}^{h^{1,1}}$ basis of $H_2(X;\mathbb{Z})$ and the dual $\{\tilde \omega^ a\}_{a=1}^{h^{1,1}}$ basis of $H_4(X;\mathbb{Z})$.
    \item Two axions, $c_6$ and $b_6$, respectively obtained from the reduction of the RR 6-form over the whole $X$, $C_6-B_2\wedge C_4+\frac{1}{3}B_2\wedge B_2\wedge C_2=-c_6 \star 1_X$, and similarly by the NSNS 6-form $B_6=b_6 \star 1_X$ dual to $B_2$ (or from a 4d point of view, the axion dual to the 4d $B_2$ field).
\end{itemize}
Note that this results in $3(h^{1,1}+1)$ axions and $h^{1,1}+1$ non-compact coordinates. Associated to each of the axions we have a 4d string, electrically charged under the (4d) dual 2-form. As we have seen, from a top-down point of view, said axions are obtained by reducing $p$-form fields $A_p$ (either NSNS or RR) on $p$-cycles $\Sigma_p$. The associated 4d string defects will be given by wrapping the $7-p$-brane electrically charged under the dual $\tilde{A}_{8-p}$ along the Poincar\'e dual $\tilde{\Sigma}_{6-p}=\Sigma_p^\vee$, as depicted in Table \ref{tab.hyper strings}.

\begin{table}[hbt!]
    \centering
    \begin{tabular}{c|cccccc}
         Axion& $C_0$  & $b_2^a$ & $c_2^a$ & $c_{4,a}$ & $c_6$  & $b_6$ \\\hline
        String & D7 on $X$ & NS5 on $[\tilde\omega^a]$  & D5 on $[\tilde\omega^a]$  & D3 on $[\omega_a]$ & D1 & F1 \\
    \end{tabular}
    \caption{Top-down construction of the 4d strings associated with the different 4d axions and their monodromies. Adapted from Table 3 from \cite{Delgado:2024skw}.}
    \label{tab.hyper strings}
\end{table}
Regarding the dualities/monodromies acting on our fields and spectrum,\footnote{Note that, since type IIB string theory only has differential forms $A_p$ with $p$ odd, and for the hypermultiplet sector we are only acting on the even-dimensional cycles of $X$, in the 4d $\mathcal{N}=2$ theory we will only obtain 0- (axions, electrically coupled to instantons), 2- (coupled to our strings and dual to the aforementioned axions) and 4-form fields (acting as $\theta$ angles), the only charged objects of our theory under which the prospective duality/monodromy group $\Gamma_{\rm h}$ acts are precisely the extended 4d strings, transforming under $\Gamma_{\rm h}$ through the adjoint representation ${\rm ad}:\Gamma_{\rm h}\to{\rm Aut}(\Gamma_{\rm h})$.} we have that the $\mathsf{SL}(2,\mathbb{Z})$ duality from 10d type IIB is inherited in the 4d theory, acting on the hypermultiplet scalars in a non-trivial way, see Section 2.2 from \cite{Alexandrov:2013yva}. On the other hand, both from the $\mathcal{M}_{\rm h}$ moduli space metric \cite{Alexandrov:2013yva} and the large gauge transformations of the $B_2$, $C_2$ and $C_4$ fields, a discrete Heisenberg group action ${\rm H}^{2h^{1,1}+3}(\mathbb{Z})$ acting on the axions exists \cite{Alexandrov:2010np},
\begin{equation}
   T_{\vec\alpha,\vec\beta,\gamma}: (c_{2}^a,c_{4,a},b_6)\mapsto(c_{2}^a+\alpha^ a,c_{4,a}+\beta_a,b_6+2\gamma-\beta_ac_2^a+\alpha^ac_{4,a})\,,\quad \text{with }\;\alpha^a,\,\beta_b,\,\gamma\in\mathbb{Z}\,,
\end{equation}
satisfying the Heisenberg group law, $T_{\vec\alpha,\vec\beta,\gamma}T_{\vec\alpha',\vec\beta',\gamma'}=T_{\vec\alpha +\vec\alpha',\vec\beta+\vec \beta',\gamma+\gamma'+\frac{1}{2}(\vec\beta\cdot \vec\alpha'-\vec\beta'\cdot\vec\alpha)}$.\\
Finally, the hypermultiplet moduli space $\mathcal{M}_{\rm h}$ is fibered over the complex structure moduli space $\mathcal{M}_{\rm cs,mirror}$ of the mirror Calabi-Yau $X^\vee$, with the associated monodromy group $\Gamma_{\rm mirror}\leq\mathsf{Sp}(2h^{1,1}+2,\mathbb{Z})$ acting on the RR axions $q=(c_0,c_2^1,\dots,c_2^{h^{1,1}},c_{4,1},\dots,c_{4,h^{1,1}},c_6)^\intercal$ as in \eqref{eq. wrapped N2}. This way, the full monodromy group acting on the 4d $\mathcal{N}=2$ spectrum is given by
\begin{equation}\label{eq. semiprod hyper}
    \Gamma_{\rm h}=\mathsf{SL}(2,\mathbb{Z})\ltimes\left(\Gamma_{\rm mirror}\ltimes {\rm H}^{2h^{1,1}+3}(\mathbb{Z})\right)\,,
\end{equation}
a considerably more involved expression than the $\Gamma\leq\mathsf{Sp}(2h^{2,1}+2,\mathbb{Z})$ for the vector multiplet. Now, using Theorem \ref{th. semidirect},  we conclude that ${\rm cl}(\Gamma_{\rm h})=\infty$ due to the $\mathsf{SL}(2,\mathbb{Z})$ factor.\footnote{As for the abelianization, notice that
\begin{equation}
    G^{\rm ab}=(N\rtimes H)^{\rm ab}\simeq\frac{NH}{[NH,NH]}\simeq\frac{N}{[N,N]\cdot[N,H]}\cdot\frac{H}{[H,H]}\simeq\frac{N^{\rm ab}}{[N,H]}\cdot H^{\rm ab}\,,
\end{equation}
 with $(N\rtimes H)^{\rm ab}$ being ``smaller'' than $(N\times H)^{\rm ab}\simeq N^{\rm ab}\times H^{\rm ab}$.
}
 Again, the full spectrum of $[p,q]$-7-branes will be needed in order not to have arbitrarily complicated $\mathcal{B}_2$ bordisms associated to such factor. Regarding the $\Gamma_{\rm mirror}\ltimes {\rm H}^{2h^{1,1}+3}(\mathbb{Z})$ subgroup, it could be the case that itself had infinite commutator width, what would again motivate the need of a complete spectrum of codimension-two defects. As argued for the vector multiplet case, $\Gamma_{\rm mirror}\leq\mathsf{Sp}(2h^{1,1}+2,\mathbb{Z})$ will generically have infinite commutator width for $h^{1,1}=1$, while this will be finite if $h^{1,1}\geq 2$ and $[\mathsf{Sp}(2h^{1,1}+2,\mathbb{Z}):\Gamma_{\rm mirror}]<\infty$. Regarding factor associated to the Heisenberg group ${\rm H}^{2h^{1,1}+3}(\mathbb{Z})$, we have that it has infinite commutator width.\footnote{
Consider the general presentation Heisenberg group
\begin{equation}
    {\rm H}^{2n+1}(\mathbb{Z})=\langle x_1,\dots,x_n,y_1,\dots,y_n,z\,|[x_i,y_j]=z^{\delta_{ij}},\,[x_i,x_j]=[y_i,y_j]=[x_i,z]=[y_i,z]=1\rangle\,,
\end{equation}
being immediate that $[{\rm H}^{2n+1}(\mathbb{Z}),{\rm H}^{2n+1}(\mathbb{Z})]\simeq\mathbb{Z}\simeq\langle z\rangle$ and ${\rm H}^{2n+1}(\mathbb{Z})^{\rm ab}\simeq \langle x_1,\dots,x_n,y_1,\dots,y_n\rangle$. Since the commutator group consists in powers of $z$, which cannot be reduced further, it is clear that ${\rm cl}_{{\rm H}^{2n+1}(\mathbb{Z})}(z^k)=k$, and thus ${\rm cl}\big({\rm H}^{2n+1}(\mathbb{Z})\big)=\infty$.
} Thus the duality vortex whose stacking generates such factor is precisely the one associated to the shift in the $\{c_2^a,c_{4,a}\}_{a=1}^{h^{1,1}}$ axions, i.e., the D3- and D5-branes, see Table \ref{tab.hyper strings}.\\

We thus conclude that the hypermultiplet sector of 4d $\mathcal{N}=2$ theories contains the \emph{worst of both worlds}, since the issues that appeared for the 10d type IIB $\mathsf{SL}(2,\mathbb{Z})$ and $\Gamma\leq \mathsf{Sp}(2n+2,\mathbb{Z})$ resurface here through the semiproduct structure of \eqref{eq. semiprod hyper}, together with the Heisenberg group ${\rm H}^{2n+1}(\mathbb{Z})$ which by default has infinite commutator width and requires of stacks of D3- and D5-branes as duality defects.

\section{Conclusions\label{s. conc}}
In this paper, we have studied the topological properties of bordisms connecting theories given by circle compactifications with different monodromy actions on the compact dimension. To do this, we have used $\mathbb{S}^1$ compactifications of type IIB string theory, characterizing the bordisms between type IIB maximal and gauge supergravities in 9d (also considering bordisms to nothing). We focus on those gauge supergravities with a symmetry group given by $\mathbb{R}$, $\mathsf{SO}(2)$, and $\mathsf{SO}(1,1)$, one-parameter subgroups of $\mathsf{SL}(2,\mathbb{R})$, broken to discrete subgroups upon charge quantization. These supergravities can be obtained by doing a Generalized Scherk--Schwarz reduction from the type IIB parent theory in 10d, or equivalently, from F-theory on three-dimensional twisted tori $\mathbb{E}\hookrightarrow T^3_\mathcal{M}\to\mathbb{S}^1$ with $\mathsf{SL}(2,\mathbb{Z})$ monodromy along the base. A particular monodromy and non-trivial torsion homology group is associated to each twisted torus, as can be seen from table \ref{tab. hom T3}, being different to three-dimensional torus quotients from \cite{Montero,Pfaffle2000,Acharya:2019mcu}.

The first Spin bordism group of the 10d type IIB theory is given by $\Omega_1^{\mathsf{Spin}}({\rm B}\mathsf{SL}(2,\mathbb{Z}))\simeq \mathbb{Z}_2\oplus \mathbb{Z}_{12}$, where the $\mathsf{SL}(2,\mathbb{Z})^{\rm ab}=\mathbb{Z}_{12}$ factor results from the quotienting of $\mathsf{SL}(2,\mathbb{Z})$ by the commutator subgroup, since such monodromies can be achieved from gravitational instantons. From an F-theory perspective, such bordisms are given by elliptic fibrations $\mathbb{E}\hookrightarrow\mathcal{B}_4\to\tilde{\mathcal{B}}_2$, where the base manifold $\tilde{\mathcal{B}}_2$ corresponds to a Riemann surface of genus $g$ and $n=n_b+n_s$ punctures, with $n_b$ the number of boundaries and $n_s$ the number of singularities. This induces a non-trivial $\rho:\pi_1\big(\tilde{\mathcal{B}}_2^*\big)\rightarrow \mathsf{SL}(2,\mathbb{Z})$ action around the possible punctures and non-trivial 1-cycles of the base.
 By using the Leray--Serre spectral sequence, we were able to compute the interger homology of the $\mathcal{B}_4$ manifold.

We compute the monodromies that should be implemented by the bordisms $\mathbb{E}\hookrightarrow\mathcal{B}_4\to\tilde{\mathcal{B}}_2$ in order to connect a given 9d type IIB compactification with a different one and/or nothing, together with the topological properties of the interpolating manifold. We do this for trivial base $\tilde{\mathcal{B}}_2$, where we show the arrangement of Kodaira singularities that would be needed, as well as for non-trivial $\mathcal{B}_2$, where we specify the genus $g$ and the number (at most 11) of 7-branes. We find that almost all such bordisms explicitly break supersymmetry.

We have shown how the genus $g$ of such gravitational instanton grows linearly with the spectral norm of the monodromy.  Furthermore, since the commutator width of $\mathsf{SL}(2,\mathbb{Z})$ is unbounded, i.e., $\text{cl}(\mathsf{SL}(2\mathbb{Z}))=\infty$, there are monodromies that need to be expressed as a product of an arbitrarily large number of commutators, what would require gravitational solitons of arbitrarily large genus if we only allow for defects associated to $\mathsf{SL}(2,\mathbb{Z})^{\rm ab}\simeq \mathbb{Z}_{12}$. For large genus (equivalently, commutator length of the $\mathbb{S}^1$ monodromy) $g\gtrsim\sqrt{\Lambda_{{\rm QG},d+1}m_{\rm KK}^{-1}}$, such decay channels become arbitrarily suppressed, something that goes against QG expectations \cite{Fichet:2019ugl,Daus:2020vtf,Cordova:2022rer}. On the opposite, if the monodromy defects are implemented by stacks of $[p,q]$ 7-branes, such dramatic suppression does not occur. With this motivation, we conjecture that when the commutator width of the duality group $G$ is infinite, duality defects able to kill an infinite set of $G$ monodromies should exist, even if they could in principle be trivialized through gravitational solitons.

We argue that our computations rely on generic properties, which allow generalizations to other duality groups. We indeed test our proposal in duality groups for different spacetime dimensions and amounts of supersymmetry, and find that always such defects exists or conversely, they are not present but the commutator width is finite, thus confirming our expectations. One example of interest is the study of the $\mathsf{GL}^+(2,\mathbb{Z})$ duality group once fermions and SUSY-breaking transformations are considered, since it seems that it should be possible to trivialize the $\mathsf{SL}(2,\mathbb{Z})$ monodromies through at most a single defect and a genus zero to two gravitational soliton.

There are several possible directions along which we could continue the research from this paper. First of all, many of the results used and obtained rely on numerical factors that would benefit from more accuracy in their values. For example, when computing the decay rate of gravitational solitons in Section \ref{ss. decay rate}, we obtain some inequality \eqref{eq. mean trace} relating the $\mathcal{O}(1)$ factors of the gravitational and kinetic scalar terms of the bounce action with the average logarithmic trace of the different gravitational solitons. Sharpening said numerical values would allow us to more accurately compute the decay rate of said channels. Conversely, while it is known that important groups, such as those appearing in U-duality for $D\leq 7$ or some 4d $\mathcal{N}=2$ sectors, have finite commutator widths \cite{Newman1987Unimodular,Avni_Meiri_2019,DENNIS1988150,Hazrat2012CommutatorWI,Gvozdevsky2024VerbalWI}, in general the bounds upper bounds for ${\rm cl}(G)$ are not known in the literature or are very far from saturation. Better understanding of the commutator properties of duality groups relevant in string/M-theory compactifications would be of great help for our purposes, since it would allow us to better study bordisms between $\mathbb{S}^1$ compactification of such theories and/or nothing. The stability of theories where a single compact direction is privileged with respect to the others, such as \cite{Montero:2022prj}, might be of phenomenological importance, specially for the type of $\mathbb{S}^1$ compactifications with $\mathcal{M}\in G$ monodromy, which generically spontaneously break supersymmetry. Additionally, along this project we have not attempted to explicitly solve the equations of motion to obtain an expression for the metric and the backreacted fields, restricting ourselves to topological and algebraic properties that do not require the resolution of complicated systems of coupled equations. It would be interesting to employ techniques similar to \cite{Montero} to obtain analytic approximations to the geometry of the bordism, or directly numerical approach, since this would allow us to obtain sharper expressions for quantities such as the Euclidean action or the decay rates.

Along this paper, we have focused on bordisms with one or two boundaries, respectively corresponding to nullbordisms and interpolating bordisms. Our results apply and naturally extend to cases with $n_b\geq 3$, which would correspond to \emph{junctions} between theories, similar to \cite{Hellerman:2010dv,Altavista:2026edv,Tachikawa:2026jaj}. Through our purely geometric set-up, it would be interesting to study the transition/transmission rates of fields crossing the bordism and ending in one theory or another.\\

From a more formal point of view, we could consider if similar refinements to the Swampland Cobordism Conjecture could be made for higher $\Omega^{\mathsf{g}}_k({\rm B}G)$ groups (with $k>1$). While in this paper we have only needed to consider $\mathbb{S}^1$ as the possible compact topology, for higher-dimensional compact manifolds the possibilities rapidly grow with $k$, and generically one would need to consider different $G$-bundles supported over the non-trivial 1-cycles of the compactification manifold. Already for $k=2$ there seem to be interesting implications for the arrangement and relations between monodromy defects \cite{Dierigl:2026psj}. Furthermore, we could think what is the appropriate measure of the ``complexity'' of a bordism, see e.g., \cite{korkmaz2004stable}.

Finally, back to the gauged 9d supergravities, we have restricted our study to those with a clear construction from 10d type IIB string theory. As can be seen from Figure \ref{fig1}, there are additional theories in 9d that have no type IIB embedding. If such theories admit a consistent QG completion, then by the Cobordism Conjecture they should be connected to the ones studied in this paper. It has been conjectured that other 9d gauges supergravities, such as those obtained from torus-bundle compactifications of M-theory and M2-branes, are T-dual to the ones studied in this paper \cite{Hull:1998vy,Alonso-Alberca:2002pbb,GarciadelMoral:2016ffr}. Understanding the precise transformation could offer new insights to the behavior of bordism defects under T-duality, in line with \cite{Delgado:2023uqk}.\\

Much has been learned through the Cobordism Conjecture about qualitative properties of bordisms between theories and the properties of the needed defects. Motivated by such accumulated progress, we hope that this paper makes a small step in the better understanding of some of their more quantitative aspects, similar to those made by, see e.g. \cite{Hamada:2025duq,Heckman:2025wqd,Cavusoglu:2026xiv,Torres:2026vxx,Altavista:2026edv,Anastasi:2026cus,Tachikawa:2026jaj} for an incomplete list of recent progress in directions not completely orthogonal to ours. In doing so, we might learn new insights about the properties and stability of theories of quantum gravity, regardless of whether the universe they describe is close to ours or not.

\vspace{1cm}

\textbf{Acknowledgments}: 
We are grateful to Markus Dierigl, Maria Pilar García del Moral, Elizabeth Gasparim, Damian van de Heisteeg, Miguel Montero, Michelangelo Tartaglia, Ethan Torres and Irene Valenzuela for very illuminating discussions and comments. Particularly, we are specially thankful to Markus Dierigl and Irene Valenzuela for valuable comments on the draft. The authors want to thank IFT UAM-CSIC for hospitality during the initial development of this paper, as well the Spanish Agencia Estatal de Investigaci\'on through the grant ``IFT Centro de Excelencia Severo Ochoa'' CEX2020-001007-S and the grants PID2021-123017NB-I00 and PID2024-156043NB-I00, funded by MCIN/AEI/10.13039/ 501100011033 and by ERDF ``A way of making Europe''. The work of I.R.  is supported by the European Union through ERC Starting Grant SymQuaG-101163591 StG-2024, and he acknowledges the additional funding of the Spanish FPI grant No. PRE2020-094163 and the ERC Starting Grant QGuide101042568 - StG 2021. The work of C.L.H.  was supported by FONDECYT postdoctorado No. 3250615 2025.

\appendix
\section{A primer on solvmanifolds in three dimensions}
\label{Ape1}

 A \textit{solvmanifold} is a compact homogeneous space $G/\Lambda$, where $G$ is simply connected solvable Lie group\footnote{This is, given $\mathfrak{g}$ the Lie algebra associated to $G$, the sequence
 \begin{equation}
     \mathfrak{g}\supseteq[\mathfrak{g},\mathfrak{g}]\supseteq \big[[\mathfrak{g},\mathfrak{g}],[\mathfrak{g},\mathfrak{g}]\big]\supseteq \big[\big[[\mathfrak{g},\mathfrak{g}],[\mathfrak{g},\mathfrak{g}]\big],\big[[\mathfrak{g},\mathfrak{g}],[\mathfrak{g},\mathfrak{g}]\big]\big]\supseteq\dots\supset0\,,
 \end{equation}
 eventually terminates in the zero subalgebra.
 } and $\Lambda$ a lattice in $G$, i.e., a discrete co-compact subgroup. When $G$ is nilpotent, the existence of the lattice $\Lambda$ is guaranteed, and $G/\Lambda$ is a \textit{nilmanifold}. Every nilmanifold is a solvmanifold, but some solvmanifolds are not diffeomorphic to nilmanifolds.

 Almost abelian solvmanifolds (this is, those for which $G$ has a finite-index abelian subgroup) are characterized by Lie group $G$ being a semi-direct product, $G=\mathbb{R}^{d-1}\rtimes_\varphi \mathbb{R}$, with $\varphi$ a continuous one-parameter right group action $\varphi:\mathbb{R}\rightarrow \mbox{Aut}(\mathbb{R}^{d-1})$. We can regard $\varphi$ as a matrix $\varphi_x\in \mathsf{GL}(d-1,\mathbb{R})$ for each $x\in\mathbb{R}$.  Then, $G$ admits a lattice $\Lambda_G$ if and only if there exists $x_0\in\mathbb{R}$ such that 
 \begin{equation}
     \Sigma^{-1}\varphi_{x_0}\Sigma = M\,,
 \end{equation}
with $M\in \mathsf{SL}(d-1,\mathbb{Z})$ and $\Sigma\in \mathsf{GL}(d-1,\mathbb{R})$. Indeed, this condition also ensures the compactness of the solvmanifold, \cite{Bock,Aschieri}. 

The corresponding Lie algebra is as usual defined by $\left[ E_b,E_c\right]={f^a}_{bc}E_a$, where $\left\lbrace E_a \right\rbrace_a$ is a vector basis and ${f^a}_{bc}$ are the structure constants. Additionally, there is a set of dual one-forms $\left\lbrace \eta^a\right\rbrace$ satisfying a Maurer--Cartan equation
    \begin{eqnarray}
      \dd\eta^a=-\frac{1}{2}{f^a}_{bc}\eta^b\wedge \eta^c. 
    \end{eqnarray}

A \textit{twisted torus} is given by the quotient space $G/\Gamma$, where $G$ is a locally compact group and $\Gamma$ a lattice acting on $G$ by right multiplication \cite{Aschieri} . Consequently, when $G$ is a simply connected and nilpotent (or almost abelian solvable) Lie Group, then the corresponding solvmanifold is a twisted torus. A large class of twisted tori of interest in string compactifications comes in the form of fibrations over tori. The fibrations underlying the twisted tori, are called Mostow bundles \cite{Mostow}.
    \begin{thm}[Mostow bundles]
Let $\Lambda_G$ be a lattice in a connected and simply-connected solvable Lie group $G$ and $\mathbb{T}_{\Lambda_G}=G/\Lambda_G$ the associated solvmanifold. Let $N$ be the nilradical of $G$. Then $\Lambda_G N$ is a closed subgroup of $G$, $\Lambda_N:=\Lambda_G\cap N$ is a lattice in $N$, and $G/\Lambda_G N$ is a torus. It follows that the twisted torus $\mathbb{T}_{\Lambda_G}$ is a fibration over this torus with a nilmanifold fiber given by:
\begin{equation}
     N/\Lambda_N=\Lambda_G N/\Lambda_G \longrightarrow \mathbb{T}_{\Lambda_G}\longrightarrow G/\Lambda_G N\,.
\end{equation}

\end{thm}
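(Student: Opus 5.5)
The strategy is to prove the three claims in the order they are stated: first that $\Lambda_G N$ is closed, then that $G/\Lambda_G N$ is a torus and $\Lambda_N$ is a lattice in $N$, and finally the fibration. Only the first is hard.

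\textbf{Step 1: $G/N$ is a vector group.} Let $\mathfrak{g}$ be the Lie algebra of $G$ and $\mathfrak{n}$ its nilradical. For a solvable Lie algebra the derived algebra $[\mathfrak{g},\mathfrak{g}]$ is a nilpotent ideal (Lie's theorem), so $[\mathfrak{g},\mathfrak{g}]\subseteq\mathfrak{n}$ and $\mathfrak{g}/\mathfrak{n}$ is abelian. Since $G$ is connected and simply connected and $N$ is a closed connected normal subgroup, $G/N$ is connected, simply connected and abelian, hence $G/N\simeq\mathbb{R}^k$ with $k=\dim G-\dim N$.

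\textbf{Step 2: $\Lambda_G N$ is closed.} This is equivalent to the image $p(\Lambda_G)$ being discrete in $\mathbb{R}^k$, where $p:G\to G/N$ is the projection. I expect this to be the main obstacle, since it is the genuinely non-trivial content of Mostow's theorem. I would first try the classical route through Auslander's lemma. Let $H$ be the identity component of the closure $\overline{\Lambda_G N}$. It is normalized by $\Lambda_G$ and contains $N$. Using that $\Lambda_G$ is Zariski-dense in $G$ modulo a compact torus factor (Borel density for cocompact lattices, in its solvable version), $\mathfrak{h}$ is an ideal of $\mathfrak{g}$. By Auslander's theorem the identity component of $\overline{\Lambda_G[G,G]}$, and more generally of $\overline{\Lambda_G N}$, is nilpotent, so $\mathfrak{h}$ is a nilpotent ideal. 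Maximality of $\mathfrak{n}$ then forces $\mathfrak{h}\subseteq\mathfrak{n}$, hence $H=N$. Therefore $\overline{\Lambda_G N}$ is a union of countably many $N$-cosets, so its image in $\mathbb{R}^k$ is a closed countable subgroup, which is discrete; this gives $\overline{\Lambda_G N}=\Lambda_G N$. If the density input is not directly available, the fallback is the Zassenhaus-neighbourhood argument: elements of $\Lambda_G$ close to $N$ generate a subgroup whose Zariski closure has unipotent radical contained in $N$.

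\textbf{Step 3: the torus and the lattice $\Lambda_N$.} By Step 2, $p(\Lambda_G)$ is a discrete subgroup of $\mathbb{R}^k$. The quotient $G/\Lambda_G N=\mathbb{R}^k/p(\Lambda_G)$ is a continuous image of the compact space $G/\Lambda_G$, so it is compact. Hence $p(\Lambda_G)\simeq\mathbb{Z}^k$ is a full lattice and $G/\Lambda_G N\simeq\mathbb{T}^k$. Next, $\Lambda_G N/\Lambda_G$ is a closed subset of the compact space $G/\Lambda_G$, so it is compact. The natural map $N/(\Lambda_G\cap N)\to\Lambda_G N/\Lambda_G$ is a continuous bijection, and it is a homeomorphism because $\Lambda_G N$ is closed (open mapping theorem for $\sigma$-compact groups). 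Thus $\Lambda_N=\Lambda_G\cap N$ is discrete and cocompact, i.e.\ a lattice in $N$.

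\textbf{Step 4: the fibration.} Since $\Lambda_G\subseteq\Lambda_G N$ are closed subgroups of $G$, the map $G/\Lambda_G\to G/\Lambda_G N$ is a locally trivial fibre bundle, with local sections inherited from $G\to G/\Lambda_G N$. Its fibre is $\Lambda_G N/\Lambda_G\simeq N/\Lambda_N$, which is a nilmanifold. This gives the stated Mostow bundle.
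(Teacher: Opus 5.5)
Your Steps 1, 3 and 4 are fine. The paper gives no proof to compare with; it simply quotes the result from \cite{Mostow}. The gap is Step 2, which is the whole non-trivial content of the theorem and is assumed rather than proved.

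For $H=(\overline{\Lambda_G N})^0$ you automatically have $N\subseteq H$. Also $\mathfrak{h}\supseteq\mathfrak{n}\supseteq[\mathfrak{g},\mathfrak{g}]$ makes $\mathfrak{h}$ an ideal for free, so the solvable Borel-density input is not needed there. By maximality of the nilradical, ``$H$ is nilpotent'' is then \emph{equivalent} to $H=N$. That is the same as $p(\Lambda_G)$ being discrete in $G/N\simeq\mathbb{R}^k$, i.e.\ $\Lambda_G N$ being closed. So when you invoke ``Auslander's theorem'' for $\overline{\Lambda_G N}$, you are citing the statement you want to prove.

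The Auslander theorem usually quoted in this context concerns the radical $R$. It only asserts that $(\overline{\Gamma R})^0$ is solvable, which is vacuous for solvable $G$. A nilpotency statement for $H':=(\overline{\Lambda_G[G,G]})^0$ would not extend ``more generally'' to $N$ either. It would give $H'\subseteq N$, with $\Lambda_G H'/H'$ a lattice $L$ in the vector group $G/H'$. Closedness of $\Lambda_G N$ is then equivalent to the subspace $N/H'$ being rational with respect to $L$, which is a further, genuinely arithmetic statement.

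Some input from the lattice property has to enter here, and your sketch never says where. Cocompactness is essential. Take $G=\mathbb{C}\rtimes\mathbb{R}^2$ with $(s,t)$ acting by $z\mapsto e^{2\pi i s}z$; its nilradical is $N=\mathbb{C}\times\{0\}\times\mathbb{R}$. The discrete but non-cocompact subgroup $\{(0,s,t):(s,t)\in\mathbb{Z}(1,0)\oplus\mathbb{Z}(\alpha,1)\}$, with $\alpha\notin\mathbb{Q}$, has image $\mathbb{Z}+\alpha\mathbb{Z}$ in $G/N\simeq\mathbb{R}$, which is dense.

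Your Zassenhaus fallback does not supply the missing input. Zassenhaus neighbourhoods control elements of $\Lambda_G$ close to the identity. The elements you actually need to control are those of $\Lambda_G\cap UN$, close to $N$, which form an unbounded set to which the lemma does not apply.

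A complete argument must show, using that $\Lambda_G$ is a lattice, that $\overline{p(\Lambda_G)}$ contains no non-trivial connected subgroup. One way is to show that the weights of $\mathrm{ad}$, which vanish on $\mathfrak{n}$ and hence factor through $G/N$, must vanish on the Lie algebra of such a subgroup. This is precisely what Mostow's paper does. Without such an argument, Step 2 reduces to citing Mostow's theorem, exactly as the paper does.
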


Mostow bundles correspond to nilmanifolds fibered over a torus. The structure group is given by $\Lambda_G N/\Lambda_{G_0}$, where $\Lambda_{G_0}$ is the largest subgroup of $\Lambda_G$ which is normal in $\Lambda_G N$.

The Mostow bundle associated with an almost Abelian solvmanifold is
\begin{equation}
    N/\Lambda_N \simeq\mathbb{T}^{d-1}\longrightarrow \mathbb{T}_{\Lambda_G}\longrightarrow G/\Lambda_G N  \simeq \mathbb{T}\,,
\end{equation}
which corresponds with a torus bundle over a circle with monodromy $M$ in the mapping class group $\mathsf{SL}(d-1,\mathbb{Z})$ of orientation-preserving automorphisms up to the homotopy of the torus fibers.

Let us note that if $x\in\mathbb{S}^1$ is a coordinate over the circle and $\{y^1,\dots,y^d\}$ coordinates over the torus $\mathbb{T}^d$, we have that
\begin{equation}
    \left[E_a,E_x\right]={M_a}^bE_b \qquad \text{and} \qquad \left[E_a,E_b\right] = 0
\end{equation}
with again $\{E_1,\dots,E_d,E_x\}$ the generators of the Lie algebra and $M$ is the mass matrix such that $\varphi(x)=\exp{(Mx)}$. The corresponding basis of left-invariant one-forms are
\begin{equation}
    \eta^x=\dd x\,, \quad
    \eta^a = {\varphi(x)^a}_b \dd y^b,
    \end{equation}
    such that $\dd\eta^a = -{M^a}_b\eta^x\wedge\eta^b$

    \subsection{Orientable three-dimensional solvmanifolds}
    Three-dimensional solvmanifolds $G/\Gamma$ are almost abelian solvmanifolds. Consequently, we have that $$G=\mathbb{R}^2\rtimes_\varphi\mathbb{R}.$$
    For $\varphi_x={\rm Id}_2 \, \forall \, x\in\mathbb{R}$, then $G$ is abelian and the solvmanifold is a 3-torus (case (a) in Table \ref{tab1era}). The remaining cases correspond to $\varphi:\mathbb{R}\longrightarrow \mathsf{SL}(2,\mathbb{R})$,  which are classified into conjugacy classes given by their trace: \emph{parabolic} $\abs {\rm Tr(\rho)}=2$,  \emph{hyperbolic} $\abs {\rm Tr(\rho)}>2$, and \emph{elliptic} $\abs {\rm Tr(\rho)}<2$. Indeed, three-dimensional solvmanifolds (see Table \ref{tab1era}) are in correspondence with the isomorphisms classes of Lie algebras of connected solvable Lie groups that possess lattices (see Table \ref{tab2da}) \cite{Bock}. 
\begin{table}[htb]
    \centering
    \begin{tabular}{|c||c|c|c|c|}
 \hline
     & $b_1(G/\Gamma)$ & $G$ & Nilmanifold & Completely solvable  \\
    \hline\hline
    a) & 3 & $\mathbb{R}^3$ &yes & yes \\
       \hline
    b) & 2 & $\rm H^3$ &yes & yes\\
       \hline
    c) & 1 & ${\mathsf{ISO}}(1,1)$ &no & yes \\
       \hline
    d) & 1 & ${\mathsf{ISO}}(2)$ &no & no \\
    \hline
    \end{tabular}
    \caption{Solvmanifolds in three dimensions from \cite{Bock}, where $\rm H^3$, ${\mathsf{ISO}}(1,1)$ and ${\mathsf{ISO}}(2)$ corresponds to the Heisenberg, Poincare, and Euclidean groups in three dimensions.}
    \label{tab1era}
\end{table}
\begin{table}[htb]
    \centering
    \begin{tabular}{|c||c|c|c|}
 \hline
     & Algebra & Commutation relations & Completely solvable  \\
    \hline\hline
    a) & $3g_1$ &  -& abelian \\
       \hline
    b) & $g_{3,1}$ & $\left[E_2,E_3\right]=E_1 $ & nilpotent\\
       \hline
    c) & $g_{3,4}^{-1}$ & $\left[E_1,E_3\right]=E_1 $  , $\left[E_2,E_3\right]=-E_2 $ & yes \\
       \hline
    d) & $g_{3,5}^0$ & $\left[E_1,E_3\right]=-E_2 $  , $\left[E_2,E_3\right]=E_1 $  & no \\
    \hline
    \end{tabular}
    \caption{Non-isomorphic algebras in three dimensions from \cite{Bock}}
    \label{tab2da}
\end{table}
The Mostow fibrations of these solvmanifolds are given by torus bundles over a circle with monodromy in $\mbox{SL}(2,\mathbb{Z})$,
\begin{equation}
    \mathbb T^2\longrightarrow G/\Gamma \longrightarrow \mathbb S^1\,,
\end{equation}
equipped with some action $\rho:\pi_1(\mathbb{S}^1)\longrightarrow \mathsf{SL}(2,\mathbb{Z})$.

Therefore, the straight $\mathbb T^3$ corresponds to trivial monodromy (case (a)) in Tables \ref{tab1era}, \ref{tab2da} and first entry in \ref{tab. hom T3}), the Heisenberg nilmanifold (case (b)) corresponds to parabolic monodromies, and the twisted tori with hyperbolic, and elliptic monodromies are associated with cases (c) and (d), respectively. The last two cases are also denoted in the literature by $\varepsilon_{1,1}$ and $\varepsilon_2$, respectively. 

 The first homology group of these solvmanifolds is given by
 \begin{equation}
     H_1(\mathbb{T}_{\Lambda_G};\mathbb{Z})\simeq \mathbb{Z}\oplus\mathbb{Z}^{2-r}\oplus \bigoplus_{i=1}^r\mathbb{Z}_{m_i}
 \end{equation}
where $r={\rm rank}(M-{\rm Id}_{2})$ and $m_i=\frac{d_i(A)}{d_{i-1}(A)}$, with $d_i(A)$ the greatest common divisor of all $i\times i$ minors of $A$, with $d_0(A)=1$.

Solvmanifolds are widely used in the context of string compactifications; see, e.g., \cite{Silverstein,Andriot1,Andriot2,Grana,Grana2,Caviezel:2008ik,Cribiori:2021djm,Arboleya:2025jko,VanHemelryck:2025qok,Andriot:2026lac}. All the three-dimensional solvmanifolds seen in this appendix are technically known as twisted tori. However, most authors use the name of twisted torus for the Heisenberg nilmanifold (see \cite{Aschieri}).

\section{Computing the homology of \texorpdfstring{$\mathbb{E}\hookrightarrow\mathcal{B}_4\to\tilde{\mathcal{B}}_2$}{E->B4->B2} through the Leray-Serre spectral sequence\label{app.leraySerre}}      
In this appendix we will extract the topology of an elliptically fibered 4-bordism $\mathbb{T}^2\hookrightarrow\mathcal{B}_4\to\tilde{\mathcal{B}}_2$ manifold with boundary $\mathbb{T}^2\hookrightarrow\mathcal{C}_3\to\sqcup_i\mathbb{S}^1$. Our tool for this will be the \textbf{Leray-Serre (homological) spectral sequence} (see e.g. Section 5.1 from \cite{HatcherTOPOLOGY}), which for a fibration $F\hookrightarrow X\to B$ implies
\begin{equation}
E^2_{p,q}=H_p(B,H_q(F;\mathbb{Z}))\Longrightarrow H_{p+q}(X;\mathbb{Z})\,,\quad\text{with differential }\; d_2:E_{p,q}^2\to E_{p-2,q+1}^2\,.
\end{equation}
For our purposes, $F\simeq \mathbb{T}^2$, and we can see our two-dimensional base $\tilde{\mathcal{B}}^*_2$ as a genus-$g$ Riemann surface with $n=n_b+n_s$ punctures, accounting for both the $n_b$ boundary $\partial\tilde{\mathcal{B}}_2$ components and the $n_s$ points over $\tilde{B}_2$ where the fiber degenerates, i.e., $\tilde{B}_2\simeq\Sigma_{g,n}$. Associated to the non-trivial 1-cycles of the base we have a $G$-action (for the type IIB 9d gauged supergravities of this paper $G=\mathsf{SL}(2,\mathbb{Z})$) $\rho$
\begin{equation}   \rho:\pi_1(\Sigma_{g,n})\simeq\mathbb{Z}^{\ast(2g+n-1)}\to G\trianglelefteq\mathsf{GL}(2;\mathbb{Z})\,,
\end{equation}
generated by the matrices $\{\mathcal{M}_{a_i},\mathcal{M}_{b_i}\}_{i=1}^g$ and $\{\mathcal{M}_{c_j}\}_{j=1}^n$, with the following closing relation implied by the well definition of the above action
\begin{equation}\label{eq. closing rel}
    \prod_{i=1}^g[\mathcal{M}_{a_i},\mathcal{M}_{b_i}]\prod_{j=1}^n\mathcal{M}_{c_j}=\mathrm{Id}_2\,,
\end{equation}
associated to the generators of $\pi_1(\Sigma_{g,n})$, and acting on the elements of $H_\ast(\mathbb{T}^2;\mathbb{Z})$ as
\begin{equation}
    H_q(\mathbb{T}^2;\mathbb{Z})_\rho=\left\{\begin{array}{lll}
        \mathbb{Z} &q=0&\to\text{trivial }G\text{ action}  \\
         \mathbb{Z}^{\oplus 2} &q=1&\to\text{non trivial }G\text{ action}  \\
         \mathbb{Z} &q=2&\to\text{non trivial }G\text{ action if orientation is not preserved}  \\
         0&q\geq 3
    \end{array}\right.\quad.
\end{equation}
This way, the $E^2$-page is given by
\begin{equation}
    \begin{array}{c|ccc}
        q=2& \mathbb{Z}_{\det\circ\rho}&(\mathbb{Z}_{\det\circ\rho})^{\oplus2g+n-1}&0 \\
         q=1& (\mathbb{Z}^{\oplus 2})_{\rho}& H_1(\Sigma_{g,n};\mathbb{Z}^{\oplus 2}_\rho)&0\\
         q=0&\mathbb{Z} &\mathbb{Z}^{\oplus 2g+n-1}&0\\\hline
         E^2_{p,q}&p=0&p=1&p=2
    \end{array}\quad ,
\end{equation}
where we have used that $H_\ast(\Sigma_{g,n};\mathbb{Z})=(\mathbb{Z},\mathbb{Z}^{2g+n-1},0,\dots)$. Happily for us the $E^2$-page is narrow enough that the $d_2:E_{p,q}^2\to E_{p-2,q+1}^2$ differentials all vanish, and such the page collapses, $E^2_{p,q}=E^\infty_{p,q}$!\\

Of the non-trivial terms above, the first one is the twisted second homology of the fiber under the $G$-action,
\begin{equation}
    H_2(\mathbb{T}^2;\mathbb{Z})_\rho\simeq \mathbb{Z}_{\rm det\circ\rho}=\left\{\begin{array}{ll}
       \mathbb{Z} &\text{if all }\gamma\in\pi_1(\Sigma_{g,n}) \text{ respects orientation} \\
        \mathbb{Z}_2 &\text{otherwise} 
    \end{array}\right.\quad.
\end{equation}
For the purposes of this paper, since $\mathsf{SL}(2;\mathbb{Z})$ preserves orientation, the action is trivial and $H_2(\mathbb{T}^2;\mathbb{Z})_\rho\simeq\mathbb{Z}$. Had we considered $\mathsf{GL}(2,\mathbb{Z})$ instead, as is the case in Section \ref{ss. mp gl}, the action of the group would act non-trivially on $H_2(\mathbb{T}^2,\mathbb{Z})$, since orientation needs not be preserved in that case, and the $H_2(\mathbb{T}^2;\mathbb{Z})_\rho$ would depend on the precise $\mathsf{GL}(2,\mathbb{Z})$ monodromies along the base, with $ H_2(\mathbb{T}^2;\mathbb{Z})_\rho\simeq\mathbb{Z}_2$ is there is any path not preserving orientation.\\
All that is then left to compute the coinvariant $(\mathbb{Z}^2)_{\rho}$ and the twisted $H_1(\Sigma_{g,n};\mathbb{Z}^{\oplus 2}_\rho)$. The former is straightforward, since
\begin{equation}    (\mathbb{Z}^{\oplus 2})_\rho=\faktor{\mathbb{Z}^{\oplus 2}}{\langle(\rho(\gamma)-{\rm Id}_2)v\,:\,\gamma\in\pi_1(\Sigma_{g,n}),\,v\in\mathbb{Z}^{\oplus 2}\rangle}\,.
\end{equation}
Depending on the action $\rho$, we can have $(\mathbb{Z}^{\oplus 2})_\rho$ (if all monodromies are trivial), some torsion and/or free contribution or, if $\rho$ is ``dense'' enough, $(\mathbb{Z}^{\oplus 2})_\rho\simeq 0$. The above definition of the coinvariant $(\mathbb{Z}^{\oplus 2})_\rho$ is not very clear, since the free product structure of the $\pi_1(\Sigma_{g,n})\simeq\mathbb{Z}^{\ast 2g+n-1}$ translates in generic elements $\rho(\gamma)$ being words generated by the monodromy matrices associated by the generators of $\pi_1(\Sigma_{g,n})$. Note however that, given $A,\,B\in G\trianglelefteq \mathsf{GL}(2,\mathbb{Z})$, we have $AB-{\rm Id}_2=A(B-{\rm Id}_2)+(A-{\rm Id}_2)$. Furthermore, since $A$ is invertible, ${\rm im}(A(B-{\rm Id}_2))={\rm im}(B-{\rm Id}_2)$. This, way we have
\begin{align}
    {\langle(\rho(\gamma)-{\rm Id}_2)v\,:\,\gamma\in\pi_1(\Sigma_{g,n}),\,v\in\mathbb{Z}^{\oplus 2}\rangle}=\sum_{i=1}^g&\big[{\rm im}(\mathcal{M}_{a_i}-{\rm Id}_2)+ {\rm im}(\mathcal{M}_{b_i}-{\rm Id}_2)\big]\notag\\&+\sum_{j=1}^n{\rm im}(\mathcal{M}_{c_j}-{\rm Id}_2)\,.
\end{align}
Note that due to the closing relation \eqref{eq. closing rel}, one of the monodromy matrices is not independent from the others, and thus is redundant. Now, in order to compute the above subgroup of $\mathbb{Z}^{\oplus2}$, we employ the Smith normal form, so that given the $2\times2(2g+n)$-matrix
\begin{equation}
    \hat{\mathcal{M}}=\begin{pmatrix}
        \mathcal{M}_{a_1}-{\rm Id}_2|\cdots|\mathcal{M}_{a_g}-{\rm Id}_2|\mathcal{M}_{b_1}-{\rm Id}_2|\cdots|\mathcal{M}_{b_g}-{\rm Id}_2|\mathcal{M}_{c_1}-{\rm Id}_2|\cdots|\mathcal{M}_{c_n}-{\rm Id}_2
    \end{pmatrix}\,,
\end{equation}
we obtain, for an appropriate $\hat{\mathcal{S}}$ and $\hat{\mathcal{T}}$ $2\times 2 $ and $(2g+n)\times (2g+n)$ matrices with
\begin{equation}
    \hat{\mathcal{S}}\hat{\mathcal{M}}\hat{\mathcal{T}}=\begin{pmatrix}
        d_1&0&\cdots &0\\
        0&d_2&\dots &0
    \end{pmatrix}\,,
\end{equation}
then
\begin{equation}
    {\langle(\rho(\gamma)-{\rm Id}_2)v\,:\,\gamma\in\pi_1(\Sigma_{g,n}),\,v\in\mathbb{Z}^{\oplus 2}\rangle}=d_1\mathbb{Z}\oplus d_2\mathbb{Z}\,,\quad\text{so that}\quad (\mathbb{Z}^{\oplus 2})_\rho\simeq\mathbb{Z}_{d_1}\oplus\mathbb{Z}_{d_2}\,,
\end{equation}
where $\mathbb{Z}_0\equiv\mathbb{Z}$ and $\mathbb{Z}_1\equiv\{0\}$.

As for the twisted $H_1(\Sigma_{g,n};\mathbb{Z}^{\oplus 2}_\rho)$, we have
\begin{equation}
   H_1(\Sigma_{g,n};\mathbb{Z}^{\oplus 2}_\rho)\simeq\frac{\ker(\partial_1:C_1\otimes\mathbb{Z}^{\oplus 2}\to C_0\otimes\mathbb{Z}^{\oplus 2})}{{\rm im}(\partial_2:C_2\otimes\mathbb{Z}^{\oplus 2}\to C_1\otimes\mathbb{Z}^{\oplus 2})}\,,
\end{equation}
with
\begin{equation}
    0\to C_2\otimes\mathbb{Z}^{\oplus 2}\xrightarrow{\partial_2}C_1\otimes\mathbb{Z}^{\oplus 2}\xrightarrow{\partial_1}C_0\otimes\mathbb{Z}^{\oplus 2}\to 0\,.
\end{equation}
Now, notice that since $\Sigma_{g,n}$ has a boundary, there are no 2-cycles, $C_2=0$. This results in
\begin{equation}
    0\to C_1\otimes\mathbb{Z}^{\oplus2}\xrightarrow{\partial_1}C_0\otimes\mathbb{Z}^{\oplus 2}\to 0\,,
\end{equation}
where the twisted boundary operator $\partial_1$ acts as 
\begin{align}
    \partial_1\underbrace{(v_1,\dots,v_{2g+n})}_{\sigma\in C_1\otimes\mathbb{Z}^{\oplus2}}&=\sum_{i=1}^g\Big[(\mathcal{M}_{a_i}-{\rm Id}_2)v_{a_i}+(\mathcal{M}_{b_i}-{\rm Id}_2)v_{b_i}\Big]+\sum_{j=1}^n(\mathcal{M}_{c_j}-{\rm Id}_2)v_{c_j}\notag\\&=\sum_{\gamma\in\pi_1(\Sigma_{g,n})}(\mathcal{M}_{\gamma}-{\rm Id}_2)v_{\gamma}\,,
\end{align}
where we decompose the 1-cycles $\sigma\in C_1\otimes\mathbb{Z}^{\oplus2}$ in terms of the $2g+n$ generators of $\pi_1(\Sigma_{g,n})$ (this is, before modding out by the closing relation). This way, $H_1(\Sigma_{g,n};\mathbb{Z}^{\oplus 2}_\rho)=\ker(\partial_1)$. In general such twisted homology group can have both free and torsion parts. To compute the former we can use the Universal Coefficient Theorem, tensoring by $\mathbb{Q}$, in such a way that $\rank H_1(\Sigma_{g,n};\mathbb{Z}^{\oplus2}_\rho)=\dim H_1(\Sigma_{g,n};\mathbb{Q}^{2}_\rho)$. Now, we have the Euler characteristic is independent of the twisted structure
\begin{align}
\chi(\Sigma_{g,n};\mathbb{Q}^2_\rho)&=\underbrace{\chi(\Sigma_{g,n})\dim\mathbb{Q}^2}_{2(2-2g-n)}\notag\\&=\underbrace{\dim H_0(\Sigma_{g,n};\mathbb{Q}^{2}_\rho)}_{\dim(\mathbb{Q}^2)_\rho}-\dim H_1(\Sigma_{g,n};\mathbb{Q}^{2}_\rho)+\underbrace{\dim H_2(\Sigma_{g,n};\mathbb{Q}^{2}_\rho)}_{0}\,,
\end{align}
since at previously discussed there are no 2-cycles in $\Sigma_{g,n}$.
We recall that coinvariants is defined by
\begin{equation}
    (\mathbb{Q}^2)_\rho=\frac{\mathbb{Q}^2}{\bigoplus_{\gamma\in\pi_1(\Sigma_{g,n})}{\rm im}_{\mathbb{Q}}(\mathcal{M}_\gamma-{\rm Id}_2)}\,.
\end{equation}
Since by tensoring $(\mathbb{Z}^{\oplus 2})_\rho\otimes\mathbb{Q}\simeq (\mathbb{Q}^2)_\rho$ we get rid of the possible torsion, we have $\rank (\mathbb{Z}^2)_\rho=\dim (\mathbb{Q}^2)_\rho$. This way,
\begin{equation}
    \rank H_1(\Sigma_{g,n};\mathbb{Z}^{\oplus2}_\rho)=\dim H_1(\Sigma_{g,n};\mathbb{Q}^{2}_\rho)=2(2g+n-2)+\rank (\mathbb{Z}^2)_\rho\;.
\end{equation}
As for the torsion part of $H_1(\Sigma_{g,n};\mathbb{Z}^{\oplus2}_\rho)=E_{1,1}^2$ (i.e., it will contribute to $H_2(\mathcal{B}_4;\mathbb{Z})$), it arises in the same way as in $E_{0,1}^2$, this is, as torsion elements of $(\mathbb{Z}^{\oplus 2})_\rho$, since all the 1-cycles of the base and fiber are torsion free, so that any contribution will come from non trivial actions of the $\rho:\pi_1(\Sigma_{g,n})\to G$ action on 1-cycles of the fiber $\mathbb{T}^2$ upon embedding in $\mathcal{B}_4$. This way, ${\rm Tor}\,H_1(\Sigma_{g,n};\mathbb{Z}^{\oplus2}_\rho)\simeq{\rm Tor}\,(\mathbb{Z}^{\oplus 2})_\rho$.\\

We thus conclude that, using $H_k(\mathcal{B}_4;\mathbb{Z})\simeq\bigoplus_{p+q=k}E_{p,q}^{\infty}$
\begin{subequations}\label{eq. hom B4}
    \begin{align}     H_0(\mathcal{B}_4;\mathbb{Z})&\simeq\mathbb{Z}\\
    H_1(\mathcal{B}_4;\mathbb{Z})&\simeq\mathbb{Z}^{\oplus2g+n-1}\oplus(\mathbb{Z}^{\oplus 2})_{\rho}\\
    H_2(\mathcal{B}_4;\mathbb{Z})&\simeq\mathbb{Z}^{\oplus2(2g+n-2)+\rank(\mathbb{Z}^{\oplus 2})_\rho}\oplus \mathbb{Z}_{\det\circ\rho}\oplus{\rm Tor}\,(\mathbb{Z}^{\oplus 2})_{\rho}\notag\\&\simeq\mathbb{Z}^{\oplus2(2g+n-2)}\oplus \mathbb{Z}_{\det\circ\rho}\oplus(\mathbb{Z}^{\oplus 2})_{\rho}\label{eq.H2B4}\\
    H_3(\mathcal{B}_4;\mathbb{Z})&\simeq(\mathbb{Z}_{\det\circ\rho})^{\oplus2g+n-1}\\
    H_4(\mathcal{B}_4;\mathbb{Z})&\simeq0\,,
    \end{align}
\end{subequations}
where in \eqref{eq.H2B4} we have used the Fundamental Theorem of finitely generated Abelian groups. We recall that $\mathbb{Z}_{\det\circ\rho}\simeq \mathbb{Z}$ for $G=\mathsf{SL}(2;\mathbb{Z})$.\\

In a similar way we can compute the homology of each of the boundary components, $\mathbb{T}^2\hookrightarrow T^3_{\mathcal{M}}\to\mathbb{S}^1$, where now there is no boundary, and the torus fiber simply experiences a monodromy $\rho:\pi_1(\mathbb{S}^1)\simeq\mathbb{Z}\to G$ along the base, generated by a single $\mathcal{M}\in G$. As in the $\mathcal{B}_4$ case, $\rho$ can act non-trivially on the fiber homology, with now the $E^2$-page is given by
\begin{equation}
    \begin{array}{c|ccc}
        q=2& \mathbb{Z}_{\det\circ\rho}&\mathbb{Z}_{\det\circ\rho}&0 \\
         q=1& (\mathbb{Z}^{\oplus 2})_{\rho}& H_1(\mathbb{S}^1;\mathbb{Z}^{\oplus 2}_\rho)&0\\
         q=0&\mathbb{Z} &\mathbb{Z}&0\\\hline
         E^2_{p,q}&p=0&p=1&p=2
    \end{array}\quad ,
\end{equation}
where now
\begin{equation}
    (\mathbb{Z}^{\oplus})_\rho=\faktor{\mathbb{Z}^{\oplus2}}{{\rm im}\,(\mathcal{M}-{\rm Id}_2)}\,,\quad H_1(\mathbb{S}^1;\mathbb{Z}^{\oplus 2}_\rho)\simeq \ker(\partial_1)\simeq\ker(\mathcal{M}-{\rm Id}_2)\,,
\end{equation}
with the later now being free of torsion (which goes in line with the fact that oriented close 3-manifolds do not have torsion on their second homology group). This way,
\begin{equation}\label{eq. hom T3}
\begin{array}{cc}
     H_0(T^3_\mathcal{M};\mathbb{Z})\simeq\mathbb{Z}\,,& H_1(T^3_\mathcal{M};\mathbb{Z})\simeq\mathbb{Z}\oplus\faktor{\mathbb{Z}^{\oplus2}}{{\rm im}\,(\mathcal{M}-{\rm Id}_2)} \\
H_2(T^3_\mathcal{M};\mathbb{Z})\simeq\mathbb{Z}_{\det\circ\rho}\oplus \ker(\mathcal{M}-{\rm Id}_2)& H_3(T^3_\mathcal{M};\mathbb{Z})\simeq\mathbb{Z}_{\det\circ\rho}
\end{array}\;.   
\end{equation}
Again $\mathbb{Z}_{\det\circ\rho}\simeq\mathbb{Z}$ if $G$ respects orientation, as in the $\mathsf{SL}(2;\mathbb{Z})$ case.
Note that since $\ker(\mathcal{M}-{\rm Id}_2)$ and ${\mathbb{Z}^{\oplus2}}/{{\rm im}\,(\mathcal{M}-{\rm Id}_2)}$ have the same free part, Poincar\'e duality holds.

\section{Estimating the on-shell action of a gravitational soliton\label{ap. onshell}}
In this appendix we will estimate the Euclidean action of the gravitational soliton associated to the Bubble of Nothing with topology $\mathcal{B}_2$, a Riemann surface of genus $g$ and a single boundary, with a non-trivial $\mathsf{SL}(2,\mathbb{Z})$ bundle resulting in monodromies $\{\mathcal{M}_{a_i},\mathcal{M}_{b_i}\}_{i=1}^g$ along the non-trivial $2g$ 1-cycles associated to the handles of $\mathcal{B}_2$, with the monodromy along the boundary being given by $\mathcal{M}_{\partial\mathcal{B}_2}=\prod_{i=1}^g[\mathcal{M}_{a_i},\mathcal{M}_{b_i}]$. We start by considering the Euclidean action associated to $D=d+1$-dimensional gravity weakly coupled to an axio-dilaton $\tau$, together with the boundary Gibbons-Hawking-York term:
\begin{equation}
    S_{\rm E}=\underbrace{-\frac{1}{2\kappa_{D}^2}\int_\mathcal{X}\dd^Dx\sqrt{G}\left(\mathcal{R}_G-\frac{\partial_M\tau\partial^M\bar\tau}{2({\rm Im}\,\tau)^2}\right)}_{S_{\rm E}^{\rm (bulk)}}-\underbrace{\frac{1}{\kappa_D^2}\int_{\partial \mathcal{X}}\dd ^{D-1}x\sqrt{H}(K-K_0)}_{_{S_{\rm E}^{\rm (GHY)}}}\,,
\end{equation}
with $\mathcal{X}$ our spacetime and $K$ and $K_0$ the extrinsic curvature of the asymptotic boundary of the gravitational instanton and its background, respectively. The equations of motion $\mathcal{R}_G=\frac{\partial_M\tau\partial^M\bar\tau}{2({\rm Im}\,\tau)^2}$ make the bulk term of the action vanish on-shell, so that the only contribution comes from the GHY term at infinity. Following \cite{Montero,Ruiz:2024jiz}, we parameterize our geometry $\mathcal{X}$ with a $\mathsf{SO}(d-1)$-symmetric ansatz
\begin{equation}
    \dd s^2_\mathcal{X}=G_{MN}\dd x^M\dd x^N=W(y)^2R^2\dd\Omega^2_{d-1}+h_{\alpha\beta}(y)\dd y^\alpha\dd y^\beta\,,
\end{equation}
where we parameterize $\mathcal{B}_2$ by $y^1$ and $y^2$, with metric $h_{\alpha\beta}$. Furthermore, $W(y)$ is a warping function depending on $y$ and $R$ is the nucleation scale. Now, denoting the radial direction $r=RW(y)\in[0,\infty)$, far from the bubble core (i.e., close to the boundary of $\partial\mathcal{X}\simeq\mathbb{S}^{d-1}\times \partial\mathcal{B}_2\simeq \mathbb{S}^{d-1}\times \mathbb{S}^1$), $r\to\infty$ and, assuming a finite bubble mass $M$, we can take a Schwarzschild-like ansatz
\begin{equation}
    \dd s^2_\mathcal{X}\approx r^2\dd\Omega_{d-1}^2+\underbrace{f(r)^{-1}\dd r^2+f(r)\dd \theta^2}_{h_{\alpha\beta}(y)\dd y^\alpha\dd y^\beta}\;,\quad\text{with} \quad f(r)=1-\frac{M}{r^{d-2}}\,,
\end{equation}
and $\theta\in[0,2\pi R)$ an angular coordinate along the boundary. Considering the outward normal vector $n=\sqrt{f(r)}\partial_r$, the extrinsic curvature is given by
\begin{equation}
    K=\nabla_\mu n^\mu=\sqrt{f(r)}\left(\frac{d-1}{r}+\frac{f'(r)}{2f(r)}\right)=\frac{d-1}{r}-\frac{M}{2r^{d-1}}+\mathcal{O}(M^2r^{3-2d})\,.
\end{equation}
Subtracting the background extrinsic curvature ($K_0=\frac{d-1}{r}$ for $M=0$), we obtain that the Euclidean action is then given by
\begin{align}
    S_{\rm E}=\frac{1}{\kappa_D^2}\lim_{r\to\infty}\int_0^{2\pi R}\dd y\int_{\mathbb{S}^{d-1}}\dd\Omega_{d-1}f(r)r^{d-1}\bigg(-\frac{M}{2r^{d-1}}\bigg)=-\frac{\pi\Omega_{d-1}RM}{\kappa_D^2}\,,
\end{align}
with $\Omega_{d-1}=2\pi^{d/2}\Gamma(d/2)^{-1}$ the area of the unit $\mathbb{S}^{d-1}$. All that is left to compute is precisely the value of the ``mass'' $M$, for which we will need to approach the bubble wall and study the geometry of the $\mathcal{B}_2$ bulk. In general we do not have the explicit expression of $\mathcal{B}_2$ metric $h_{\alpha\beta}$ or the profile $\tau:\mathcal{B}_2\to \mathbb{H}$, and some assumptions and approximations will be in place. Splitting $M=M_{\rm grav}+M_{\tau}$, we will consider contributions both from the geometry of the bubble and from the profile of the axio-dilaton. Take first Witten's bubble of nothing (where $g=0$ and thus no monodromies are supported), or equivalently consider the very tip of $\mathcal{B}_2$, where our spacetime locally looks like $\mathbb{D}^2\times \mathbb{S}^{d-1}$. Approximating there 
\begin{equation}
    \dd s^2_{\mathcal{B}_2}= h_{\alpha\beta}(y)\dd y^\alpha\dd y^\beta\approx \left(1-\frac{M_0}{r^{d-2}}\right)^{-1}\dd r^2+ \left(1-\frac{M_0}{r^{d-2}}\right)\dd \theta^2\;,
\end{equation}
with $r\geq M_{0}^{\frac{1}{d-2}}$ and $\theta\in[0,2\pi R)$ periodic with respect to the KK circle. Now, similar to the original argument by Witten in \cite{Witten:1981gj}, in order to avoid a conical singularity at $r^{d-2}=M_0$, we need to identify $M_0=\left(\frac{d-2}{2}R\right)^{d-2}$. On the other hand, upon compactification the integral of the internal Ricci curvature acts like a source for the lower-dimensional potential. Through Gauss-Bonnet theorem (here the metric is dimensionfull) we find
    \begin{equation}\label{eq. Gauss Bonnet}
        \int_{\mathcal{B}_2}\sqrt{h_2} \dd^2x\mathcal{R}_{\mathcal{B}_2}+2\int_{\partial\mathcal{B}_2}\kappa_{\rm g}\dd s=4\pi(2-n-2g)\kappa^2\;,
    \end{equation}
    so that with $n=1$ we expect the contribution from the gravitational part of the mass to grow proportional to the Euler characteristic $\chi(\mathcal{B}_2)=1-2g$,
    \begin{equation}
        M_{\rm grav}=\mathcal{O}(\chi(\mathcal{B}_2)M_0)=\mathcal{O}(1-2g)R^{d-2}\,.
    \end{equation}
    Note that for $g>1$ the contribution is negative, lowering the effective mass of the bubble. Additional to this, we must consider the contribution from the kinetic term of the axio-dilaton $\tau$. Due to the Riemannian metric defined on $\mathbb{H}$, such contribution will be positive. Weighed by the normalized volume,
    \begin{equation}
        M_\tau\sim R^{d-2}\int_{\mathcal{B}_2}\dd^2 x\,\sqrt{h_2}\frac{\partial_\mu\tau\partial^\mu\bar\tau}{2({\rm Im}\,\tau)^2}=R^{d-2}\frac{i}{2}\int_{\tau(\mathcal{B}_2)\subseteq\mathbb{H}}\frac{\dd \tau\wedge\dd \bar\tau}{({\rm Im}\,\tau)^2}\,.
    \end{equation}
Seeing the axio-dilaton field as a map $\tau:\mathcal{B}_2\to\mathbb{H}$ to the upper-half plane, precisely the above integral amounts to the area of the image $\tau(\mathcal{B}_2)\subseteq\mathbb{H}$ with respect to the Poincar\'e metric. Now, in general such integral is not possible to compute (save for $\tau\equiv$constant) without explicitly having the expressions of $h_{ij}$ and $\tau$. We can, however, try to estimate its value. First of all, conformal invariance on $\mathcal{B}_2$ implies that the value of such integral does not depend on the size of the different cycles, what allows us to approximate it simply by properties of the different monodromies. We notice first that the area of the fundamental domain $\mathcal{F}=\mathbb{H}/\mathsf{SL}(2,\mathbb{Z})\simeq\mathbb{S}^2$ is finite, given by ${\rm vol}(\mathcal{F})=\frac{\pi}{3}\approx 1.0472$, but since $\mathcal{B}_2$ has a boundary, $\partial\tau(\mathcal{B}_2)\neq \emptyset$ too, and thus $\tau(\mathcal{B}_2)$ will not simply consist in a series of wrappings of $\mathcal{F}$. However, in a hyperbolic space the area of a polygon, such as $\tau(\mathcal{B}_2)$ for our purposes grows linearly with the length of its sides. We then take $\mathcal{B}_2$ to be a polygon of $4g+n$ sides as in Figure \ref{fig:examples}. Now, given some point in $x\in\mathcal{F}$, the distance to its image under a $\mathsf{SL}(2,\mathbb{Z})$ transformation $\mathcal{M}$ grows like \cite{KatokSvetlana1992Fg}
\begin{equation}
    {\rm dist}(x,\mathcal{M}(x))\sim\left\{\begin{array}{ll}
       \mathcal{O}(1)  & \text{for $\mathcal{M}$ elliptic} \\
       0&\text{for $\mathcal{M}$ parabolic}\\
       2\log({\rm |Tr(\mathcal{M})|})  & \text{for $\mathcal{M}$ hyperbolic}
    \end{array}\right.\;.
\end{equation}
Parabolic transformations, of the type $T^n$, which implements $\tau\to\tau+n$, return the initial point to itself within the fundamental domain $\mathcal{F}$. However, in doing so they are wrapping the image of $\mathcal{B}_2$ $n$ times along the $i\infty\in\mathcal{F}$ point, and thus actually multiplies the total area by $n$. Adding the contribution of the different handles, we can estimate that the kinetic contribution of the axio-dilaton scales like 
\begin{equation}
    \frac{i}{2}\int_{\tau(\mathcal{B}_2)\subseteq\mathbb{H}}\frac{\dd \tau\wedge\dd \bar\tau}{({\rm Im}\,\tau)^2}\approx\mathcal{O}\bigg(\sum_{i=1}^g\log\big|{\rm Tr}([\mathcal{M}_{a_i},\mathcal{M}_{b_i}])\big|\bigg)>0\,.
\end{equation}
This finally allows us to conclude 
\begin{equation}
    S_{\rm E}=(M_{{\rm Pl},d+1}R)^{d-1}\mathcal{O}\bigg(1-2g+\sum_{i=1}^g\log\big|{\rm Tr}([\mathcal{M}_{a_i},\mathcal{M}_{b_i}])\big|\bigg)
\end{equation}

From the equations of motion, $\mathcal{R}_{\mathcal{B}_2}=\frac{\partial_\mu\tau\partial^\mu\bar\tau}{2({\rm Im}\,\tau)^2}$, assuming the gravitational soliton $\mathcal{B}_2$ size is of the order of the KK scale, we conclude that the mean Ricci scalar curvature of the geometry is given by
\begin{equation}
    |\overline{\mathcal{R}}_{\mathcal{B}_2}|=\left|\frac{\int_{\mathcal{B}_2}\sqrt{h_2} \dd^2x\mathcal{R}_{\mathcal{B}_2}}{\int_{\mathcal{B}_2}\sqrt{h_2} \dd^2x}\right|=\mathcal{O}\bigg(m_{\rm KK}^2\times \sum_{i=1}^g\log\big|{\rm Tr}([\mathcal{M}_{a_i},\mathcal{M}_{b_i}])\big|\bigg)
\end{equation}

\section{Some results on commutators widths\label{app. results comm}}
In this appendix we present a series of results regarding commutator widths that are used along the main text. We give references for those results present in the literature, while those which have a relatively straightforward derivation are shown in detail.

\begin{thm}[{\cite[Example 2.6]{CULLER1981133}}] Let $G$ be some group and $u,\,v\in G$. Then $[u,v]^k$ can be written as the product of $\lfloor\frac{k}{2}\rfloor+1$ commutators
\end{thm}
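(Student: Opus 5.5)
The plan is to work in the free group $F=F(u,v)$: any identity proved there specializes to an arbitrary $G$ under the homomorphism $F\to G$ sending the generators to the given $u,v$. So it suffices to write $c^k$, with $c=[u,v]$, as a product of $\lfloor k/2\rfloor+1$ commutators in $F$. The cases $k=0,1$ are trivial. I will reduce the even case to the odd case. If $c^{2m-1}$ is a product of $m$ commutators, then $c^{2m}=c^{2m-1}\,c$ is a product of $m+1=\lfloor 2m/2\rfloor+1$ commutators. Everything therefore comes down to the claim that $c^{2m+1}$ is a product of $m+1$ commutators. This claim costs one extra commutator for every two additional powers of $c$.

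For the odd case I plan an induction on $m$ built around a \emph{moving} commutator. I will look for words $X_m,Y_m\in F$ with
\begin{equation*}
c^{2m+1}=[X_m,Y_m]\,\prod_{i=1}^{m}[A_i,B_i],
\end{equation*}
where the last factor carries the accumulated powers. The engine of the induction is the base identity $c^3=[a_1,b_1][a_2,b_2]$. For this I will use Culler's explicit identity
\begin{equation*}
[u,v]^3=\big[uvu^{-1},\,v^{-1}uvu^{-2}\big]\,\big[v^{-1}uv,\,v^2\big],
\end{equation*}
to be checked by free reduction, or corrected by trying conjugates of $u$ and $v$ until it reduces. The inductive step should follow the same pattern. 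Write $c^{2m+3}=c^{2}\cdot c^{2m+1}$, expand the leading commutator of $c^{2m+1}$, and absorb $c^2$ together with that single commutator into two new commutators. The tool for this is the conjugation rule $g[a,b]g^{-1}=[gag^{-1},gbg^{-1}]$, combined with the fact that $c^2[x,y]$ can be rewritten as $[x',y'][x'',y'']$ whenever $x,y$ are taken from the words appearing in the base identity.

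A cleaner way to organise the step, which I would try first, is a general lemma: for any $w\in F$ and any commutator $[x,y]$ with $x$ conjugate to $u$, the product $c^{2}[x,y]$ is a product of two commutators. Applying the lemma $m$ times, starting from the single commutator $c=[u,v]$, gives $c^{2m+1}$ as a product of $m+1$ commutators. I would prove the lemma by substituting the conjugated letters into the base identity.

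The main obstacle is finding and verifying the correct explicit identity that absorbs two powers of $c$ into one new commutator; the rest of the argument is bookkeeping. If a direct identity proves elusive, I will fall back on the surface-group picture. $c^{2m+1}$ is a product of $m+1$ commutators exactly when there is a map from a genus-$(m+1)$ surface with one boundary component to a wedge of two circles whose boundary word is $c^{2m+1}$. Such a surface can be built explicitly by gluing a $(2m+1)$-fold branched cover of the one-holed torus, and computing its Euler characteristic via Riemann--Hurwitz gives the genus $m+1$, matching the stated count $\lfloor k/2\rfloor+1$.
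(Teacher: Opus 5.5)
Your reductions are fine: passing to the free group $F(u,v)$ is valid, and so is reducing the even case to the odd case via $c^{2m}=c^{2m-1}c$. The gap is the odd case, which is the entire content of the statement. Note that the paper gives no proof here; it simply cites Culler.

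\textbf{The moving-commutator lemma is unsupported.} Your induction rests on the claim that $c^2[x,y]$ is a product of two commutators whenever $x$ is conjugate to $u$, and nothing you propose proves it. Substituting words for $u,v$ into $c^3=[a_1,b_1][a_2,b_2]$ changes $c$ along with everything else. An endomorphism $\phi$ only yields identities for a cube $[\phi(u),\phi(v)]^3$, never for $c^2[x,y]$ with $c$ fixed and $[x,y]\neq c$. The hypothesis is also not reproduced by your own base identity. Its leading commutator is $[uvu^{-1},\,v^{-1}uvu^{-2}]$, whose first entry is conjugate to $v$, not $u$. So the second application of the lemma, the one needed for $c^5$, already lacks its hypothesis, and you would have to re-establish it at every step. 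As written, the lemma is an unproved claim about commutator length in $F(u,v)$ for arbitrary $y$, and ``trying conjugates until it reduces'' is not an argument.

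\textbf{The surface fallback omits its key step.} The approach is the right one, but you skip the only step with real content. You need an \emph{unbranched} degree-$(2m+1)$ cover $\tilde T$ of the one-holed torus $T\simeq S^1\vee S^1$ in which the boundary circle has a \emph{single} preimage. Branching only lowers $\chi(\tilde T)$ and raises the genus. If the boundary splits, you get a surface with several boundary components rather than one word. Equivalently, you need a homomorphism $F(u,v)\to S_{2m+1}$ sending $[u,v]$ to a $(2m+1)$-cycle $\zeta$; the cover is then automatically connected. Such a homomorphism exists:
\begin{itemize}
\item $\eta=\zeta^{m+1}$ is again a $(2m+1)$-cycle, since $\gcd(m+1,2m+1)=1$, and $\eta^2=\zeta$;
\item choose $\sigma$ with $\sigma\eta\sigma^{-1}=\eta^{-1}$, which gives $[\eta,\sigma]=\eta^2=\zeta$;
\item so send $u\mapsto\eta$ and $v\mapsto\sigma$.
\end{itemize}
Only then does $\chi(\tilde T)=(2m+1)\chi(T)=-(2m+1)=1-2g$ give $g=m+1$, with the boundary mapping to a conjugate of $c^{2m+1}$. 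The same parity reasoning shows why this construction fails in even degree: a $2m$-cycle is an odd permutation and hence never a commutator. That is why the even case costs the extra commutator.
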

\begin{cor}[{\cite[\textit{Lemme} 1.3]{Bavard1991}}]\label{cor.cor1}
 Let $G$ be a group and $\{u_i,v_i\}_{i=1}^r$. Then $\big([u_1,v_1]\dots[u_r,v_r]\big)^k$ can be written as the product of $k(r-1)+\lfloor\frac{k}{2}\rfloor+1$ commutators.
\end{cor}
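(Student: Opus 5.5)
The idea is to reduce the product $w=[u_1,v_1]\cdots[u_r,v_r]$ to the single-commutator case of the preceding Theorem (Culler's Example 2.6). We do this by moving commutators past one another and letting the parts that do not fit into one ``block'' pile up as conjugates. The basic tool is the identity $x[a,b]x^{-1}=[xax^{-1},xbx^{-1}]$: a conjugate of a commutator is again a commutator, so commutators can be reordered freely at no cost, provided we conjugate them along the way.

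\textbf{Step 1.} Write
\[
w^k=\underbrace{[u_1,v_1]c_1\,[u_1,v_1]c_1\cdots[u_1,v_1]c_1}_{k\ \text{times}},\qquad c_1=[u_2,v_2]\cdots[u_r,v_r].
\]
Push every copy of $c_1$ to the right, conjugating the copies of $[u_1,v_1]$ it passes over. Concretely, $w^k=\prod_{j=0}^{k-1}\big(w^{j}[u_1,v_1]w^{-j}\big)\cdots$. A cleaner bookkeeping route is to induct on $k$, using $w^{k}=w^{k-1}\,[u_1,v_1]\,c_1$ and moving $c_1$ through.

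\textbf{Step 2.} After this rearrangement, $w^k$ becomes a product of $k$ conjugates of the $r-1$ commutators $[u_i,v_i]$ with $i\ge 2$. Each of these conjugates is a single commutator, which gives $k(r-1)$ commutators in total. It is multiplied by a product of $k$ conjugates of $[u_1,v_1]$.

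\textbf{Step 3.} Arrange the conjugations so that the $k$ copies of $[u_1,v_1]$ end up contiguous and carry a common conjugator, i.e.\ they form $g[u_1,v_1]^kg^{-1}$ for some $g$. To achieve this, collect the $[u_1,v_1]$ factors on the left and absorb all the varying conjugations into the $i\ge2$ factors, which we are free to conjugate arbitrarily. The preceding Theorem then writes $[u_1,v_1]^k$ as $\lfloor k/2\rfloor+1$ commutators, and conjugating by $g$ does not change this count. Adding the two contributions gives $k(r-1)+\lfloor k/2\rfloor+1$.

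\textbf{Main obstacle.} The delicate point is Step 3. Naively moving the $c_1$'s conjugates the $[u_1,v_1]$ copies by different elements, namely powers of $w$, and that would destroy the power structure $[u_1,v_1]^k$ needed for Culler's lemma. The way around this is to move the copies of $[u_1,v_1]$ leftward instead, past the $c_1$ blocks. Each such move uses $c\,a=a\,(a^{-1}ca)$, so it conjugates only the $c_1$ pieces and leaves the $[u_1,v_1]$ factors untouched. Applying this repeatedly gives
\[
w^k=[u_1,v_1]^k\prod_{j}\tilde c_j,
\]
where each $\tilde c_j$ is a conjugate of $c_1$. Each $\tilde c_j$ is a product of $r-1$ commutators, so the count follows. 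I would verify this identity by a short induction on $k$.
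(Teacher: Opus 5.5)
Your argument is correct. It is the standard argument behind Bavard's Lemme 1.3, which the paper only cites without proof. With $a=[u_1,v_1]$ and $c=[u_2,v_2]\cdots[u_r,v_r]$, the identity $(ac)^k=a^k\,(a^{-(k-1)}ca^{k-1})(a^{-(k-2)}ca^{k-2})\cdots(a^{-1}ca)\,c$ splits $(ac)^k$ into two parts: Culler's $\lfloor k/2\rfloor+1$ commutators for $a^k$, and $k$ conjugates of $c$, each a product of $r-1$ commutators.

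In a write-up, drop the false starts in Steps 1 and 3 (pushing $c_1$ rightward, or looking for a common conjugator $g$). Instead, just state this identity and verify it by induction via $(ac)^{k+1}=ac\,(ac)^k$ and $c\,a^k=a^k(a^{-k}ca^k)$.
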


\begin{thm}\label{thm. large n}
    Let $G$ be a discrete group with abelianization $G^{\rm ab}$. Given a set $T$ or representatives of $G^{\rm ab}$ and an element $g\in G-[G,G]$, then ${\rm cl}_G(g^n)$ is independent of the choice of $T$ in the $n\to\infty$ limit if the commutator subgroup is of finite index, $[G:[G,G]]<\infty$.
\end{thm}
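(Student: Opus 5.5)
The plan is to first make precise what ${\rm cl}^{(T)}_G(g^n)$ means, following the footnote around \eqref{eq. com len}. Given a set $T$ of representatives of $G^{\rm ab}$, any $h\in G$ factors as $h=t_h\,c_h$ with $t_h\in T$ the representative of $\overline{h}\in G^{\rm ab}$ and $c_h=t_h^{-1}h\in G'$. We then set ${\rm cl}^{(T)}_G(h):={\rm cl}_G(c_h)$.

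The statement to prove is that for two choices $T,T'$ the difference $|{\rm cl}^{(T)}_G(g^n)-{\rm cl}^{(T')}_G(g^n)|$ stays bounded uniformly in $n$. In particular the asymptotic growth, and hence quantities such as $\lim_n {\rm cl}_G(g^n)/n$, do not depend on $T$. This matches the $\mathsf{SL}(2,\mathbb{Z})$ example, where the choice of $a\in\{1,5,7,11\}$ only shifted the count by $\mathcal{O}(1)$.

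The key observation is that $[G:G']<\infty$ means $G^{\rm ab}$ is finite, so $T$ and $T'$ are finite sets.

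1. For each class $x\in G^{\rm ab}$, let $t_x\in T$ and $t'_x\in T'$ be its representatives. Then $d_x:=t_x^{-1}t'_x\in G'$, so it has finite commutator length ${\rm cl}_G(d_x)$.
2. Define $K:=\max_{x\in G^{\rm ab}}{\rm cl}_G(d_x)$. This is a maximum over finitely many elements, so $K<\infty$.
3. Write $x=\overline{g^n}$. Then
\begin{equation}
c^{(T)}_{g^n}=t_x^{-1}g^n=d_x\,t_x'^{-1}g^n=d_x\,c^{(T')}_{g^n}.
\end{equation}
4. Commutator length is subadditive, ${\rm cl}_G(ab)\le{\rm cl}_G(a)+{\rm cl}_G(b)$, and satisfies ${\rm cl}_G(a^{-1})={\rm cl}_G(a)$. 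Applying both gives $|{\rm cl}^{(T)}_G(g^n)-{\rm cl}^{(T')}_G(g^n)|\le K$ for all $n$.

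Dividing by $n$ and taking $n\to\infty$ then shows that the normalised lengths coincide whenever either limit exists. If instead the limit does not exist, the same bound shows that the $\liminf$ and $\limsup$ agree for $T$ and $T'$.

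To connect with the stable commutator length of $g$ itself, I would also note the following. The class $\overline{g}$ has some finite order $m$ dividing $|G^{\rm ab}|$, so $g^m\in G'$. Writing $n=qm+r$ with $0\le r<m$ gives $g^n=g^r(g^m)^q$, so for a fixed $T$ the remainders $c_{g^r}$ range over a finite set. By Corollary \ref{cor.cor1}, ${\rm cl}_G((g^m)^q)$ grows at most linearly in $q$. Hence ${\rm cl}^{(T)}_G(g^n)$ differs from ${\rm cl}_G(g^{mq})$ by a bounded amount, and the asymptotics are governed by ${\rm scl}_G(g^m)$ for every $T$.

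The main obstacle is not technical but definitional. The statement is loosely phrased: the literal value ${\rm cl}_G(g^n)$ need not be independent of $T$ at finite $n$. I would therefore state explicitly that "independent in the $n\to\infty$ limit" means independence up to an additive constant, equivalently equality of the normalised limits. The finiteness hypothesis is exactly what makes $K$ finite; without it, $\max_x{\rm cl}_G(d_x)$ could be infinite, since two choices of representatives can differ by elements of arbitrarily large commutator length.
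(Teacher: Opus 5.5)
\textbf{Verdict.} Your proof is correct, and it follows a genuinely different route from the paper.

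\textbf{The paper's route.} The paper works inside the cyclic subgroup $H\simeq\mathbb{Z}_k$ of $G^{\rm ab}$ generated by $[g]$. It encodes the choice of representatives as the choice of a generator exponent $a$, and writes $t^n=(t^a)^{a^{-1}n \bmod k}(t^k)^{m}$. It then uses Corollary~\ref{cor.cor1} to express $(t^k)^m$ with $m({\rm cl}_G(t^k)-1)+\lfloor m/2\rfloor+1$ commutators. From this it concludes that the count grows like $\lfloor n/k\rfloor({\rm cl}_G(t^k)-1)+\lfloor n/2k\rfloor$, whatever $a$ is.

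What this buys is explicit linear growth rates, such as the $n/12$ behaviour used for $\mathsf{SL}(2,\mathbb{Z})$. However, the argument has three weaknesses:
\begin{itemize}
\item it only produces an upper bound on the commutator length;
\item it analyses $t^n$ rather than $g^n=(t\alpha)^n$;
\item it asserts that ${\rm cl}_G(t^k)$ is independent of the choice of representatives, although different representatives of $[g]$ generally have different $k$-th powers.
\end{itemize}

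\textbf{Your route.} You compare the two decompositions directly through the finitely many transition elements $d_x=t_x^{-1}t'_x\in G'$. This sidesteps all three issues. Subadditivity together with ${\rm cl}_G(a^{-1})={\rm cl}_G(a)$ gives $|{\rm cl}^{(T)}_G(g^n)-{\rm cl}^{(T')}_G(g^n)|\le K$ uniformly in $n$. That is strictly stronger than asymptotic independence, and it does not need Corollary~\ref{cor.cor1}.

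Your argument also isolates exactly where the hypothesis enters: it makes $K$ finite. In fact, only finiteness of the order of $[g]$ is needed, since only the classes $\overline{g^n}$ appear in the maximum.

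\textbf{Your closing remark.} Splitting $g^n=g^r(g^m)^q$ is where you overlap with the paper's idea, and it correctly treats $g$ itself rather than its representative. One small refinement: $q\mapsto{\rm cl}_G(g^{mq})$ is subadditive. So the limit $\lim_n{\rm cl}^{(T)}_G(g^n)/n={\rm scl}_G(g^m)/m$ actually exists, and your $\liminf/\limsup$ caveat is unnecessary.
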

\begin{proof}
    Take $G$ discrete, so that, by the fundamental theorem of finitely generated abelian groups,
    \begin{equation}
        G^{\rm ab}=G/[G,G]\simeq\mathbb{Z}^n\oplus\bigoplus_{i=1}^t\mathbb{Z}_{q_i}\;,
    \end{equation}
    where $\{q_i\}_{i=1}^t$ is a set of distinct powers of (not necessarily different) primes. Take now $g\in G$ and two different set of representatives of $G^{\rm ab}$, $T$ and $\tilde T\leq G$, in such a way that
    \begin{equation}
        g=t\alpha=\tilde t \beta\,,\quad\text{with }\,\alpha,\,\beta\in [G,G]\quad\text{and \,}t\in T,\,\tilde t\in\tilde T\,.
    \end{equation}
It is clear then that $[g]=[t]=[\tilde t]\in G^{\rm ab}$, and thus generate the same subgroup $H\trianglelefteq G^{\rm ab}$. Then $H$ can be either isomorphic to $\mathbb{Z}$ or $\mathbb{Z}_k$ for some $k\in\mathbb{N}$. By assumption, since $[G:[G,G]]$ is finite, we must have $n=0$ and $G^{\rm ab}$ is pure torsion.

Take then $H\simeq\mathbb{Z}_k$, generated by $\{l\in\mathbb{Z}_k\,|\,(l,k)=1\}$. Now, since for any $n\in \mathbb{Z}$ and $a$ generator of $\mathbb{Z}_k$ we have that
\begin{equation}
    n=a(a^{-1}_{\mathbb{Z}_k}n\mod k)+k\left\lfloor \frac{n-a(a^{-1}_{\mathbb{Z}_k}n\mod k)}{k}  \right\rfloor\,,
\end{equation}
where $a^{-1}_{\mathbb{Z}_k}$ is the inverse of $a$ in $\mathbb{Z}_k$, we conclude that for any generator $t$ of $H\simeq \mathbb{Z}_k$ (so that $t^k$ is a product of commutators), then
\begin{equation}
    t^n=(t^ a)^{a^{-1}_{\mathbb{Z}_k}n\mod k}(t^k)^{\left\lfloor \frac{n-a(a^{-1}_{\mathbb{Z}_k}n\mod k)}{k}  \right\rfloor}\,,
\end{equation}
where $t^k\in[G,G]$. Let us write
\begin{equation}
    t^k=[p_1,q_1]\dots[p_{{\rm cl}_G(t^k)},q_{{\rm cl}_G(t^k)}]\,.
\end{equation}
From Corollary \ref{cor.cor1}, we can then write $(t^k)^{\left\lfloor \frac{n-a(a^{-1}_{\mathbb{Z}_k}n\mod k)}{k}  \right\rfloor}$ as a product of
\begin{equation}
    \left\lfloor \frac{n-a(a^{-1}_{\mathbb{Z}_k}n\mod k)}{k}  \right\rfloor\big({\rm cl}_G(t^k)-1\big)+\left\lfloor\frac{1}{2}\left\lfloor \frac{n-a(a^{-1}_{\mathbb{Z}_k}n\mod k)}{k}  \right\rfloor\right\rfloor+1\,.
\end{equation}
Irrespective of the choice of $a$ (this is, $T$ or $\tilde T$), we have that the number of commutators grows like $\mathcal{O}\left(\lfloor\frac{n}{k}\rfloor[{\rm cl}_G(t^k)-1]+\lfloor\frac{n}{2k}\rfloor\right)$, where ${\rm cl}_G(t^k)$ does not depend on the choice of representatives since $t^k$ is itself a product of commutators.
    
\end{proof}
Note that if in the above proof $t$ and $\tilde t$ generate $\mathbb{Z}$, then it is not possible to give a general relation of how the commutator width grows. Given $t=\tilde tc$, with $c$ some product of commutators, then we will find a mismatch of $n[{\rm cl}_G(c)-1]+\lfloor\frac{n}{2}\rfloor+1$ commutators in $t^n$ with respect to $\tilde t^n$, with $c$ not being universal. On the other hand, since there is a good order in ${\rm cl}_G:[G,G]\to\mathbb{Z}_{>0}$, for every $g\in G$ we can define an ``absolute'' commutator width
\begin{equation}
    {\rm cl}_G(g)=\min_T {\rm cl}_{G,T}(g)
\end{equation}
for all possible representatives $T$ of $G$. We will not need to use this notion along the main text.

\begin{thm}\label{th.inf rep}
    Let $G$ be a discrete and finitely generated infinite group with infinite commutator width, and let $T$ be a set generating the representatives of $G^{\rm ab}$. If we allow a maximum number of commutators by which the elements of $G$ can be expressed, then an infinite number of additional representatives are needed.
\end{thm}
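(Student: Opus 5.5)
We argue by contradiction, with a pigeonhole/covering argument. Fix a bound $N\in\mathbb{Z}_{>0}$ on the number of commutators allowed, and suppose that only a \emph{finite} set $R\subset G$ of defect elements (representatives, together with the original $T$) is available. Then every $g\in G$ is assumed to be writable as
\begin{equation}
    g = r\,[a_1,b_1]\cdots[a_m,b_m]\,,\qquad r\in R\,,\quad m\leq N\,.
\end{equation}
If we allow products of several defects, we take $r$ to range over words of bounded length in $R$. This set is still finite, or at least of bounded length, and the same argument applies. We want to show that this is impossible when ${\rm cl}(G)=\infty$. It then follows that $R$ must be infinite, and indeed for every $N$ infinitely many representatives are needed.

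The key step is to turn the statement into one purely about $G'$. Each $r\in R$ projects to a class $[r]\in G^{\rm ab}$. For every class $c$ in the image, fix one $r_c\in R$ with $[r_c]=c$. For any other $r\in R$ with $[r]=c$, the element $r_c^{-1}r$ lies in $G'$ and so has a finite commutator length $\ell_r:={\rm cl}_G(r_c^{-1}r)$. Since $R$ is finite, $L:=\max_{r\in R}\ell_r<\infty$.

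Now take any $h\in G'$. By hypothesis $h=r\,w$ with $m\leq N$ commutators in $w$. Then $[r]=[h]=0$, so $r\in G'$ and $r=r_0\cdot(r_0^{-1}r)$, where $r_0$ is the fixed representative of the trivial class. We may choose $r_0=e$ if $e\in R$, and otherwise $r_0$ has some fixed finite length $\ell_0$. Hence
\begin{equation}
{\rm cl}_G(h)\leq \ell_0+L+N
\end{equation}
uniformly in $h$, which contradicts ${\rm cl}(G)=\sup_{h\in G'}{\rm cl}_G(h)=\infty$. Finite generation enters only to guarantee that $G^{\rm ab}$ is finitely generated, so that the classes are well behaved. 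Infiniteness is automatic, since finite groups have finite commutator width, as remarked in the text.

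The main obstacle is the interpretation of ``representatives'' when the paper's extended quantity ${\rm cl}^{(R)}_G$ is used. If $R$ can contain several elements of the same $G^{\rm ab}$ class, and one may stack several of them, one has to check that the number of defect insertions is also bounded. Otherwise finitely many defects stacked arbitrarily often would generate everything, for instance all $[p,q]$ monodromies from two branes. I would first make this bounded-stacking assumption explicit, using the boundedness of $G^{\rm ab}$-charge as in the $\mathbb{Z}_{12}$ case. I would then handle the stacked words by enlarging $R$ to the finite set of words of length at most $K$ in $R$, which reduces to the argument above with $L$ replaced by the maximum over this finite set.
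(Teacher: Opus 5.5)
Your argument is correct and is essentially the paper's proof. The paper assumes a finite set $T_K$ of representatives, notes that those lying in $[G,G]$ have bounded commutator length, and concludes that every $h\in[G,G]$ satisfies ${\rm cl}(h)\leq\max_j{\rm cl}(\tau_j)+K$, contradicting ${\rm cl}(G)=\infty$; your detour through $r_c$ and $r_0$ reproduces this bound, and your point that the number of defect insertions must also be bounded is a fair clarification the paper leaves implicit by using a single representative times an element of $W_K$.
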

\begin{proof}
    Let 
    \begin{equation}
        W_K=\{x\in [G,G]\;:\;x=[g_1,h_1]\dots[g_K,h_K]\,,\text{with }g_i,\,h_i\in G\}\,,
    \end{equation}
    so that $\bigcup_{K=1}^\infty W_K=[G,G]$, while for every individual $W_K\neq [G,G]$ for finite $K$. Fix some $K\in \mathbb{Z}_{\geq 0}$, and for a given $x\in G$, write $x=r\cdot c$, with $c\in W_K$ and $r\in T_K$, some new set of representatives. We want to show that $T_K$ has to be infinite.\\
    By contradiction, assume then that $T_K=\{t_1,\dots,t_n\}$, so that one can write
    \begin{equation}
        G=\bigcup_{i=1}^nt_iW_K\,,\quad \text{and in particular}, \quad [G,G]\subseteq \bigcup_{i=1}^nt_iW_K\,.
    \end{equation}
    This then means that there is a subset $\{\tau_j\}_{j=1}^m\subset T_K$ such that $\tau_j\in[G,G]$, with commutator length ${\rm cl}(\tau_j)$. However, by assumption, for every $g\in [G,G]$, this means that
    \begin{equation}
        g=\mathsf{t}\cdot r\,,\quad\text{with}\; \mathsf{t}\in \{\tau_j\}_{j=1}^m\;\text{ and }\;r\in W_K\,,
    \end{equation}
    so that
    \begin{equation}
        {\rm cl}(g)\leq {\rm cl}(\mathsf{t})+{\rm cl}(r)\leq \max_j{\rm cl}(\tau_j)+K\,,
    \end{equation}
    where the r.h.s. of the inequality is clearly finite. This way there is an upper bound to the commutator length of elements in $[G,G]$, which clearly goes against assumptions. Them $T_K$ must be an infinite set of representatives.
\end{proof}

\begin{thm}\label{th.cl prod} Given two groups $A$ and $B$ with respective commutator widths ${\rm cl}(A)$ and ${\rm cl}(B)$, the commutator width of the direct product $A\times B$ fulfills
\begin{equation}
        \max\{{\rm cl}(A),{\rm cl}(B)\}\leq {\rm cl}(A\times B)\leq {\rm cl}(A)+{\rm cl}(B)\;,
    \end{equation}    
\end{thm}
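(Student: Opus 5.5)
The plan is to work componentwise, using two facts about the direct product: commutators split coordinatewise, and $(A\times B)'=A'\times B'$. For $(a,b),(a',b')\in A\times B$ one has $[(a,b),(a',b')]=([a,a'],[b,b'])$. Hence a product of $n$ commutators in $A\times B$ is exactly a pair (product of $n$ commutators in $A$, product of $n$ commutators in $B$). This gives $(A\times B)'\subseteq A'\times B'$. Conversely, $A'\times\{1\}$ and $\{1\}\times B'$ both lie in $(A\times B)'$, so equality holds. I will work with the plain commutator length (no choice of representatives $R$), since the statement concerns ${\rm cl}$ of the groups themselves. I will also allow the values $\infty$, with the obvious conventions, so that the inequalities remain meaningful when, e.g., one factor is $\mathsf{SL}(2,\mathbb{Z})$, as used for $G_U^8$.

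For the lower bound, take $a\in A'$ and view $(a,1)\in(A\times B)'$. If $(a,1)=\prod_{i=1}^n[(x_i,y_i),(x_i',y_i')]$, projecting onto the first factor gives $a=\prod_{i=1}^n[x_i,x_i']$. Thus ${\rm cl}_A(a)\le{\rm cl}_{A\times B}((a,1))\le{\rm cl}(A\times B)$. Taking the supremum over $a$ gives ${\rm cl}(A)\le{\rm cl}(A\times B)$, and symmetrically for $B$. More generally, the first-coordinate projection is a surjective homomorphism mapping commutators to commutators, so commutator length cannot increase under it.

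For the upper bound, take $(a,b)\in A'\times B'$ and write $a=\prod_{i=1}^{p}[x_i,x_i']$ and $b=\prod_{j=1}^{q}[y_j,y_j']$, with $p={\rm cl}_A(a)\le{\rm cl}(A)$ and $q={\rm cl}_B(b)\le{\rm cl}(B)$. Then
\begin{equation*}
(a,b)=\prod_{i=1}^{p}[(x_i,1),(x_i',1)]\prod_{j=1}^{q}[(1,y_j),(1,y_j')]
\end{equation*}
is a product of $p+q$ commutators. This yields ${\rm cl}(A\times B)\le{\rm cl}(A)+{\rm cl}(B)$.

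There is a sharper padding argument worth noting. Since $[1,1]=1$, a shorter word can be padded with trivial commutators, and pairing $([x_i,x_i'],[y_i,y_i'])=[(x_i,y_i),(x_i',y_i')]$ actually gives $\max\{p,q\}$ commutators. So the upper bound holds with room to spare, and ${\rm cl}(A\times B)=\max\{{\rm cl}(A),{\rm cl}(B)\}$. I would mention this but prove the stated inequality. No step here is a genuine obstacle; the only care needed is identifying $(A\times B)'=A'\times B'$ so that the supremum is taken over the right set.
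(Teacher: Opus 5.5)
Your proof is correct and takes essentially the paper's route: you identify $(A\times B)'=A'\times B'$, split an element into an $(a,1)$ part and a $(1,b)$ part for the upper bound, and use the embedded copies $A\times\{1\}$ and $\{1\}\times B$ for the lower bound. Your projection argument makes explicit the step the paper only gestures at. Your padding remark is also correct: pairing $[(x_i,y_i),(x_i',y_i')]=([x_i,x_i'],[y_i,y_i'])$ shows the stated upper bound is not sharp, and in fact ${\rm cl}(A\times B)=\max\{{\rm cl}(A),{\rm cl}(B)\}$.
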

\begin{proof}
    Notice that, due to factorization, $[A\times B,A\times B]=[A,A]\times[B,B]$. Being clear that $[(a,1_B),(a',1_B)]=([a,a'],1_B)$ and $[(1_A,b),(1_A,b')]=(1_A,[b,b'])$ one can always split
    \begin{equation}
        (x,y)=\left(\prod_{i=1}^{n_x}([a_i,a_i'],1_B)\right)\left(\prod_{j=1}^{n_y}(1_A,[b_,b_j'])\right)\,,\;\text{ with }\, n_x\leq{\rm cl}(A)\;\text{ and }\; n_y\leq{\rm cl}(B)\,,
    \end{equation}
thus obtaining the upper bound, while the lower one results from considering the elements $\{(a,1_B)\}_{a\in A}\cup\{(1_A,b)\}_{b\in B}\subset A\times B$.
\end{proof}
\begin{cor}\label{cor. fin prod}
    Given two groups $A$ and $B$, ${\rm cl}(A\times B)<\infty$ if and only if both ${\rm cl}(A)<\infty$ and ${\rm cl}(B)<\infty$.
\end{cor}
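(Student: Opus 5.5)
The corollary should follow directly from the two-sided bound of Theorem \ref{th.cl prod},
\begin{equation*}
\max\{{\rm cl}(A),{\rm cl}(B)\}\leq {\rm cl}(A\times B)\leq {\rm cl}(A)+{\rm cl}(B)\,,
\end{equation*}
read in $\mathbb{Z}_{\geq 0}\cup\{\infty\}$ with the usual conventions for $\infty$. The plan is to prove the two implications separately, each from one side of this bound.

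\emph{Sufficiency.} If ${\rm cl}(A)<\infty$ and ${\rm cl}(B)<\infty$, the upper bound gives ${\rm cl}(A\times B)\leq{\rm cl}(A)+{\rm cl}(B)<\infty$.

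\emph{Necessity, by contrapositive.} Suppose, say, ${\rm cl}(A)=\infty$. The lower bound gives ${\rm cl}(A\times B)\geq{\rm cl}(A)=\infty$, and the case ${\rm cl}(B)=\infty$ is symmetric.

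The one step that needs care, and the only real obstacle, is that the lower bound genuinely holds when one factor has infinite width. I would recheck it directly rather than quote it. An element $(a,1_B)$ with $a\in[A,A]$ lies in $[A\times B,A\times B]=[A,A]\times[B,B]$. Any expression of it as a product of $n$ commutators $[(a_i,b_i),(a_i',b_i')]$ projects, under $A\times B\to A$, to an expression of $a$ as a product of $n$ commutators $[a_i,a_i']$. Hence ${\rm cl}_{A\times B}((a,1_B))\geq{\rm cl}_A(a)$. Taking the supremum over $a\in[A,A]$ shows ${\rm cl}(A\times B)\geq{\rm cl}(A)$, including when this supremum is infinite.
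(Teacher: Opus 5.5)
Your proposal is correct and takes the paper's route: the corollary is read off directly from the two-sided bound $\max\{{\rm cl}(A),{\rm cl}(B)\}\leq{\rm cl}(A\times B)\leq{\rm cl}(A)+{\rm cl}(B)$ of Theorem~\ref{th.cl prod}. Your projection argument for the lower bound spells out what the paper's proof of that theorem only indicates by ``considering the elements $(a,1_B)$'', and it correctly covers the case where one factor has infinite width.
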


\begin{thm}\label{thm. finite index}
    Consider a group $G$ with finite commutator width, ${\rm cl}(G)<\infty$. Given a normal subgroup $H\trianglelefteq G$ with finite index, $[G:H]<\infty$, we have ${\rm cl}(H)<\infty$.
\end{thm}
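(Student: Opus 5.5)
The plan is to start from an element $x\in[H,H]$ and rewrite it as a product of commutators of elements of $H$, with the number of factors bounded independently of $x$. Set $n=[G:H]$ and fix a finite set of coset representatives $\{t_1,\dots,t_n\}$ of $G/H$. Since $[H,H]\subseteq[G,G]$ and ${\rm cl}(G)=c<\infty$, we can write $x=\prod_{k=1}^{c}[a_k,b_k]$ with $a_k,b_k\in G$. Decompose $a_k=t_{i_k}h_k$ and $b_k=t_{j_k}h'_k$ with $h_k,h'_k\in H$. Expanding each $[t_ih,t_jh']$ with the standard commutator identities $[xy,z]=x[y,z]x^{-1}[x,z]$ and $[x,yz]=[x,y]\,y[x,z]y^{-1}$ writes it as a product of a bounded number of $G$-conjugates of four kinds of factors: $[t_i,t_j]$, $[t_i,h']$, $[h,t_j]$ and $[h,h']$.

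The factors are then handled by kind.
\begin{itemize}
\item Since $H\trianglelefteq G$, the group $[H,H]$ is characteristic in $H$ and hence normal in $G$. A $G$-conjugate of $[h,h']$ is again a commutator of two elements of $H$, so these factors already have the required form.
\item The factors $[t_i,t_j]$ range over a finite set determined only by the chosen representatives.
\item The mixed factors $[t_i,h']$ lie in $H$ by normality but not necessarily in $[H,H]$; these are the real difficulty.
\end{itemize}

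For the mixed factors I would work modulo $[H,H]$. The identity $[t,hh']=[t,h]\,h[t,h']h^{-1}$ shows that, for fixed $t$, the map $h\mapsto[t,h][H,H]$ is a homomorphism $H\to H^{\rm ab}$. So modulo $[H,H]$ the total contribution of the mixed and finite factors is an element of the image of finitely many such homomorphisms combined with a finite set. Because $x\in[H,H]$, that total contribution must itself lie in $[H,H]$. I would then rewrite $x$ as (a bounded number of $H$-commutators) $\cdot\,y$, where $y$ is an explicitly controlled product of at most $Cc$ factors of the forms $[t_i,h]$, $[t_i,t_j]$ and their conjugates, with $C$ depending only on $n$, and where $y\in[H,H]$.

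The main obstacle is bounding ${\rm cl}_H(y)$ uniformly. My first attempt would be a transfer-type argument, replacing each $t_i$ by a suitable power lying in $H$. For instance, with $m$ the exponent of $G/H$ we have $t_i^m\in H$, and commutators $[t_i^m,h]$ are genuinely in $[H,H]$. I would then try to show that the products of mixed terms appearing in $y$ can be reorganised, at bounded cost, into such $H$-commutators times a bounded number of fixed elements of $[H,H]$, whose $H$-commutator lengths are finite constants. If that works, ${\rm cl}_H(x)$ is bounded by a constant depending only on $c$, $n$ and the chosen representatives, which gives ${\rm cl}(H)<\infty$. If the reorganisation fails in general, I would fall back to adding the hypothesis that $G$ is finitely generated, which is the case for all duality groups considered in the main text. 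Then $H$ is finitely generated and the relevant images in $H^{\rm ab}$ are finitely generated abelian groups, so the leftover contribution can be absorbed by a finite set of fixed elements of $[H,H]$, in the spirit of the argument in Theorem \ref{th.inf rep}.
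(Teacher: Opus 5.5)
You correctly isolate the step that matters, but it cannot be carried out. A uniform bound on ${\rm cl}_H(y)$ is impossible in general, because the statement is false.

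Take an odd prime $p$ and let $V=\bigoplus_{i\geq 1}\mathbb{F}_p x_i$. Let $H=V\times\Lambda^2V$ with product $(v,w)(v',w')=(v+v',\,w+w'+\tfrac12 v\wedge v')$. Then $[(v,w),(v',w')]=(0,v\wedge v')$. So a product of $n$ commutators of $H$ is $(0,\omega)$ with $\omega$ of rank (as a skew form) at most $2n$. Since $\sum_{i=1}^n x_{2i-1}\wedge x_{2i}$ has rank $2n$, ${\rm cl}(H)=\infty$.

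Choose a linear bijection $\lambda:V\to\Lambda^2V$ and set $\sigma(v,w)=(v,w+\lambda(v))$. This is an automorphism of $H$ with $\sigma^p={\rm id}$. In $G=H\rtimes\langle\sigma\rangle$ one computes $[\sigma,(v,0)]=(0,\lambda(v))$. Since $\sigma$ acts trivially on $H/[H,H]$, we get $[G,G]=\{0\}\times\Lambda^2V$, and every element of it is a single commutator. Hence ${\rm cl}(G)=1$, $H\trianglelefteq G$ has index $p$, and ${\rm cl}(H)=\infty$.

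The example also shows why neither of your mechanisms can close the gap.
\begin{itemize}
\item \emph{Working modulo $[H,H]$.} Here the mixed commutators $[\sigma,h]$ already lie in $[H,H]$, yet their ${\rm cl}_H$ is unbounded. The obstruction is therefore invisible in $H^{\rm ab}$ or any abelian quotient. Your leftover $y$ is trivial there, so finite generation of its images in $H^{\rm ab}$ says nothing about ${\rm cl}_H(y)$.
\item \emph{Absorbing into a finite set.} The map $h\mapsto[t,h]$ is a homomorphism only modulo $[H,H]$. Writing $h$ through finitely many lifts $k_j$ produces one conjugate of $[t,k_j]^{\pm1}$ per letter. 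The cost therefore grows with word length rather than being absorbed by a finite set. The covering argument of Theorem~\ref{th.inf rep} does not supply such a bound.
\item \emph{The power trick.} It is empty in this example: $\sigma^p=1$ makes every $[\sigma^p,h]$ trivial, while $[\sigma,h]$ is not.
\end{itemize}
The example is infinitely generated, so it does not refute a finitely generated version. But that version is a different statement, and your sketch does not prove it.

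The paper's own proof breaks at the same point. It counts ${}^{h_i}[t_i,h_i']$ and ${}^{h_i'}[h_i,t_i']$ as single commutators of $H$. It also effectively assumes $H\cap[G,G]\subseteq[H,H]$, which already fails for $A_3\trianglelefteq S_3$. Finally, it takes $T=s(G/H_0)$, which is a set of representatives of $G/H$ only when $G/H$ is abelian.
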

\begin{proof}
    Take a group $G$ and a normal subgroup $H\trianglelefteq G$ of finite index, $[G:H]<\infty$. It is straightforward that we have the following relations, 
\begin{equation}
    H\leq H \cap [G,G]\leq [G,G]\;,\quad [H,H]\trianglelefteq H\trianglelefteq G\quad \text{and} \quad [H,H]\trianglelefteq [G,G]\;.
\end{equation}
Take now $T\subseteq G$ to be a set of representatives of $G/H$, finite by assumption. We show now that we can take $T$ such that for all $t,\,t'\in T$, $[t,t']\in H\cap [G,G]$. While it is clear that we always have $[t,t']\in [G,G]$, in order to find $[t,t']\in H$, we consider the projection and section
\begin{equation}
    \pi:G\longrightarrow G/H\quad\text{and}\quad s:G/H\longrightarrow G\,,
\end{equation}
and define $H_0=\pi^{-1}([G/H,G/H])$, with $H\trianglelefteq H_0\trianglelefteq G$ and $H_0/H=[G/H,G/H]$, in such a way that $G/H_0\simeq (G/H)/[G/H,G/H]=(G/H)^{\rm ab}$, abelian.

Taking $T=s(G/H_0)\subseteq G$, we find thus for all $t,\;t'\in T$ we have $q,\,q'\in G/H_0$ such that $\pi([t,t'])=[\pi(t),\pi(t')]=[q,q']=1$, and thus $[t,t']\in H$. With this choice, $[t,t']\in H\cap [G,G]$ for all $t,\,t'\in T$.

On the other hand, since $[G:H]$ is finite, so are $[[G,G]:[H,H]]$ and $[H\cap[G,G]:[H,H]]$. Denote now
\begin{equation}
    F=\{[t,t']\,|\,t,\,t'\in T\}\subseteq H\cap [G,G]\,,
\end{equation}
which is crucially finite, since $T$ is a finite set. We then have that for all $[t,t']\in F$, $[t,t']\in\tilde{g}[H,H]$, with $\tilde{g}\in H\cap [G,G]$. Such $\tilde{g}$ will then be able to be written as a finite product of commutators in $H$, and thus the same can be said about $[t,t']$. Since $F$ is finite group, we can then take $N_F<\infty$ as the maximum of all the commutator lengths in $H$ of the elements in $F$. Note that $N_F$ depends only on $T=s(G/H_0)$, and thus is independent of the individual elements of $F$.

Now, for any $h\in [H,H]$ we have $h=\prod_i^ N[g_i,g_i']$, with $N\leq {\rm cl}(G)$ and $g_i=h_it_i,\,g_i'=h_i't_i'\in G$, where $h_i,\,h_i'\in H$ and $t_i,\, t_i'\in T$ representatives of $G/H$. Simple computation shows that 
    \begin{equation}
        [g_i,g_i']=[h_it_i,h_i't_i']={}^{h_i}[t_i,h_i']\,{}^{h_ih_i'}[t_i,t_i']\,{}^{}[h_i,h_i']\,{}^{h_i'}[h_i,t_i']\;,
    \end{equation}
    where we denote ${}^ax:=axa^{-1}$. From our previous results, it is clear that the four commutators above belong to $[H,H]$, since this is not affected by conjugation. Since conjugation also respects normal subgroups and the commutator length, we conclude that $[g_i,g_i']$ can be written as the product of \emph{at most} $3+N_F$ commutators in $H$. We thus conclude that ${\rm cl}(H)\leq (3+N_F) {\rm cl}(G)$, finite if ${\rm cl}(G)<\infty$.

\end{proof}
\begin{thm}\label{th. iso}
    Consider a group $G$ with finite commutator width, ${\rm cl}(G)<\infty$, related to another group $\tilde G$ through the isogeny
    \begin{equation}
        1\to Z\to\tilde G\xrightarrow{\pi}G\to1\,,
    \end{equation}
    with finite kernel $Z$. Then $\tilde G$ also has finite commutator width, ${\rm cl}(\tilde G)<\infty$.
\end{thm}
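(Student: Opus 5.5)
The plan is to lift commutator expressions from $G$ to $\tilde G$ and to absorb the resulting ambiguity, which lives in the finite kernel $Z$, into a uniformly bounded number of extra commutators. Write $\pi:\tilde G\to G$ for the surjection. Since $\pi$ is onto, $\pi([\tilde G,\tilde G])=[G,G]$. Take $x\in[\tilde G,\tilde G]$. Then $\pi(x)\in[G,G]$, so we can write $\pi(x)=\prod_{i=1}^N[a_i,b_i]$ with $N\le {\rm cl}(G)$. Choose arbitrary lifts $\tilde a_i,\tilde b_i\in\tilde G$. The element $y:=\prod_{i=1}^N[\tilde a_i,\tilde b_i]$ satisfies $\pi(y)=\pi(x)$, so $x=y\,z$ with $z=y^{-1}x\in Z$. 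Moreover $z$ is a product of commutators in $\tilde G$, since both $x$ and $y$ are. Hence $z\in Z\cap[\tilde G,\tilde G]$.

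The key observation is that $Z\cap[\tilde G,\tilde G]$ is a finite set, because $Z$ is finite. Every one of its elements is a product of finitely many commutators of $\tilde G$, so we may define
\begin{equation}
N_Z:=\max_{z\in Z\cap[\tilde G,\tilde G]}{\rm cl}_{\tilde G}(z)<\infty\,.
\end{equation}
This quantity depends only on $\tilde G$ and $Z$, not on $x$. Then ${\rm cl}_{\tilde G}(x)\le N+{\rm cl}_{\tilde G}(z)\le {\rm cl}(G)+N_Z$, and taking the supremum over $x$ gives ${\rm cl}(\tilde G)\le{\rm cl}(G)+N_Z<\infty$. This is the same mechanism as the finite set $F$ in the proof of Theorem \ref{thm. finite index}, where finiteness of a set of "correction" elements gave a uniform bound.

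The only delicate point is to make sure the correction term $z$ really lies in $[\tilde G,\tilde G]$ and not merely in $Z$. Otherwise its commutator length would be undefined and the bound would fail. This is settled above by writing $z=y^{-1}x$ with both factors products of commutators, and noting that the set of products of commutators is closed under inverses and products. Normality or centrality of $Z$ is not needed beyond $Z=\ker\pi$. No assumption on generation is required either, since finiteness of $Z$ alone makes the maximum defining $N_Z$ finite. For double covers, where $Z\simeq\mathbb{Z}_2$, this gives the explicit bound ${\rm cl}(\tilde G)\le{\rm cl}(G)+1$ whenever $-1\in[\tilde G,\tilde G]$ is a single commutator. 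This is presumably what Corollary \ref{cor.double} will record, together with the converse direction: ${\rm cl}(G)\le{\rm cl}(\tilde G)$, which follows by projecting commutator expressions through $\pi$.
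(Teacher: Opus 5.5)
Your proposal is correct and follows the paper's argument: you lift a commutator expression of $\pi(x)$ from $G$, observe that the discrepancy lies in the finite set $Z\cap[\tilde G,\tilde G]$, and obtain the same bound ${\rm cl}(G)\le{\rm cl}(\tilde G)\le{\rm cl}(G)+\max_{c\in Z\cap[\tilde G,\tilde G]}{\rm cl}_{\tilde G}(c)$. Your explicit step $z=y^{-1}x\in Z\cap[\tilde G,\tilde G]$ is a cleaner phrasing of the paper's case distinction over the lifts $c\tilde a$.
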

\begin{proof}
    Take the derived isogenous group, $[\tilde G,\tilde G]$, in such a way that for all $\tilde x,\,\tilde y\in [\tilde G,\tilde G]$, $\pi([\tilde x,\tilde y])=[\pi(\tilde x),\pi(\tilde y)]\in [G,G]$, so that $\pi([\tilde G,\tilde G])=[G,G]$ and $[\tilde G,\tilde G]\leq\pi^{-1}([G,G])$.
    
    On the other hand, given $\tilde g\in\tilde G$ with $\pi(\tilde g)=g$, we have that $g$ admits a finite number $|Z|$ of lifts, $\{c\tilde g\}_{c\in Z}$. Take then a specific lift $c \tilde G$ of $G$, and an element $\tilde a\in[\tilde G,\tilde G]$ with $\pi(\tilde a)=a\in[G,G]$. Only those lifts $c\tilde a$ with $c\in Z\cap [\tilde G,\tilde G]$ (this is, the elements of $Z$ that can be written as products of commutators of $\tilde G$), will belong to $[\tilde G,\tilde G]$. If this is not the case, then our element $c\tilde a$ will be $c$ times the lift of the commutator expansion of $a$ in $[G,G]$, which is at most ${\rm cl}_G(a)$.

    If $c\in Z\cap [\tilde G,\tilde G]$, then both $\tilde a$ and $c\tilde a$ are in $[\tilde G,\tilde G]$. This means that such element is either the lift of a set of commutators from $G$, or that multiplied by $c$, which has commutator length ${\rm cl}_{\tilde G}(c)<\infty$ in $\tilde G$.

    We thus conclude that
    \begin{equation}
        {\rm cl}(G)\leq{\rm cl}(\tilde G)\leq {\rm cl}(G)+\max_{c\in Z\cap[\tilde G,\tilde G]} {\rm cl}_{\tilde G}(c)<\infty\,,
    \end{equation}
    where again the fact that $Z$ is finite plays a crucial role. The lower bound is saturated when the lift is to $c=1_Z$ or $c\not\in Z\cap[\tilde G,\tilde G]$.
    \end{proof}
    \begin{cor}\label{cor.double}
        Given any group $G$ with finite commutator width, ${\rm cl}(G)<\infty$, any double cover
        \begin{equation}
            1\to\mathbb{Z}_2\to\tilde G\to G\to 1\,,
        \end{equation}
        will have finite commutator width, ${\rm cl}(\tilde G)$.
    \end{cor}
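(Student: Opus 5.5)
The plan is to read Corollary \ref{cor.double} as the special case $Z=\mathbb{Z}_2$ of Theorem \ref{th. iso}, so the work reduces to checking that the hypotheses of that theorem hold. A double cover of $G$ is by definition a central (in particular normal) extension
\begin{equation}
1\to\mathbb{Z}_2\to\tilde G\xrightarrow{\pi} G\to 1\,,
\end{equation}
i.e.\ an isogeny with kernel $Z\simeq\mathbb{Z}_2$, which is finite with $|Z|=2$. Together with ${\rm cl}(G)<\infty$, this is exactly the input of Theorem \ref{th. iso}, which then gives ${\rm cl}(\tilde G)<\infty$.

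To make the statement self-contained I would also write the explicit bound that the proof of Theorem \ref{th. iso} produces. We have $Z\cap[\tilde G,\tilde G]\subseteq\{1,c\}$ with $c$ the nontrivial element. If $c\notin[\tilde G,\tilde G]$, we get ${\rm cl}(\tilde G)={\rm cl}(G)$. Otherwise $c$ is a finite product of commutators, and
\begin{equation}
{\rm cl}(G)\leq{\rm cl}(\tilde G)\leq{\rm cl}(G)+{\rm cl}_{\tilde G}(c)\,.
\end{equation}
The key step behind this bound is the following. Take $\tilde a\in[\tilde G,\tilde G]$ and write $\pi(\tilde a)=[g_1,h_1]\cdots[g_N,h_N]$ with $N\leq{\rm cl}(G)$. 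Choose arbitrary lifts $\tilde g_i,\tilde h_i$. Since $\ker\pi$ is central, each commutator $[\tilde g_i,\tilde h_i]$ is independent of the choice of lift. Hence $\tilde a=z\prod_i[\tilde g_i,\tilde h_i]$ for some $z\in Z$, and $z$ must lie in $Z\cap[\tilde G,\tilde G]$.

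The only point needing care is that $Z$ need not a priori be central for a general isogeny. For a double cover, however, a normal subgroup of order two is automatically central, since conjugation must fix its unique nontrivial element. This is what makes the product of lifted commutators independent of the lift, and I expect it to be the sole (mild) obstacle. The result then applies to $\mathsf{Mp}(2,\mathbb{Z})$, $\mathsf{GL}^+(2,\mathbb{Z})$ and the Spin lifts $\widetilde G_U^D$ as used in the main text. The contrapositive direction used there (infinite width for $\mathsf{Mp}(2,\mathbb{Z})$) follows from the lower bound ${\rm cl}(G)\leq{\rm cl}(\tilde G)$, which holds because $\pi$ maps commutators onto commutators surjectively.
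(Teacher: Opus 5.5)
Your proposal is correct and is essentially the paper's argument: the corollary is Theorem~\ref{th. iso} with $Z=\mathbb{Z}_2$, giving ${\rm cl}(G)\le{\rm cl}(\tilde G)\le{\rm cl}(G)+\max_{c\in Z\cap[\tilde G,\tilde G]}{\rm cl}_{\tilde G}(c)$, and you rightly note that it is this lower bound, not the corollary itself, that yields ${\rm cl}(\mathsf{Mp}(2,\mathbb{Z}))=\infty$. One minor remark: centrality of $Z$ is not actually needed, because for any choice of lifts $\prod_i[\tilde g_i,\tilde h_i]$ has the same image as $\tilde a$, so the discrepancy automatically lies in $Z\cap[\tilde G,\tilde G]$ and the argument works for any finite normal kernel.
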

    \begin{thm}[{\cite[Corollary 3.3]{CULLER1981133}}]
        Free groups have infinite commutator width.
    \end{thm}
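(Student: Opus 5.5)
The plan is to realise a nonabelian free group as a retract of itself onto a smaller free group where commutator length can be read off directly. I would first reduce to the free group $F_2=\langle x,y\rangle$. For a free group $F_r$ with $r\geq 2$, the projection $F_r\to F_2$ that kills all but two generators is a surjective homomorphism. It sends commutators to commutators, so ${\rm cl}_{F_r}(g)\geq {\rm cl}_{F_2}(\pi(g))$. Moreover $\pi$ is onto $[F_2,F_2]$ via the inclusion section. Unbounded commutator length in $F_2$ therefore forces unbounded commutator length in $F_r$.

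The case $r=1$ is degenerate: there $[F_1,F_1]$ is trivial and the width is $0$. The statement should accordingly be read for nonabelian free groups, and I will note this explicitly.

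The core step is to exhibit explicit elements of $[F_2,F_2]$ whose commutator length grows without bound. I would take $g_n=[x,y]^n$ and aim for a linear lower bound ${\rm cl}(g_n)\geq c\,n$. The cleanest tool is a homogeneous quasimorphism $\phi:F_2\to\mathbb{R}$ that does not vanish on $[x,y]$. Brooks' counting quasimorphisms are the natural choice: $h_w(g)$ equals the number of copies of $w$ in the reduced word of $g$ minus the number of copies of $w^{-1}$. One checks that $h_w$ has bounded defect $D$, which is a standard combinatorial estimate on the cancellation when two reduced words are concatenated.

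Once a quasimorphism $\phi$ with defect $D$ is available, the argument runs as follows. Its homogenisation $\bar\phi$ is constant on conjugacy classes and bounded by $2D$ on commutators. Bavard's inequality (or the elementary estimate) gives $|\bar\phi(c_1\cdots c_k)|\leq (2k-1)\,2D$, hence ${\rm cl}(g)\geq (|\bar\phi(g)|+2D)/(4D)$. Choosing $w=xy$ or $w=[x,y]$ itself gives $\bar\phi([x,y])\neq 0$, and then $\bar\phi(g_n)=n\bar\phi([x,y])$ grows linearly in $n$. Combined with the upper bound ${\rm cl}(g_n)\leq\lfloor n/2\rfloor+1$ from \cite[Example 2.6]{CULLER1981133}, this shows ${\rm scl}([x,y])>0$. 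In particular ${\rm cl}(F_2)=\infty$, and by the reduction above ${\rm cl}(F_r)=\infty$ for every $r\geq2$.

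I expect the main obstacle to be verifying that the Brooks quasimorphism has bounded defect and computing $\bar\phi([x,y])\neq0$ rigorously. Both require careful bookkeeping of overlapping subword occurrences across cancellation. As an alternative I would follow Culler's original route: represent a product of $k$ commutators equal to $g$ as a map from a genus-$k$ surface with one boundary component to a wedge of circles. I would then bound $k$ from below by a combinatorial invariant of the reduced word $g_n$, using the fact that the surface-word (Wicks form) representation forces a genus of order $n$ for $[x,y]^n$.
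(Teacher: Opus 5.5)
Your proof is correct, but it takes a different route from the one the paper relies on. The paper gives no argument of its own and simply cites Culler. Culler's method is the one you list as a fallback: a product of $k$ commutators equal to $g$ is realised as a map from a genus-$k$ surface with one boundary circle into a wedge of circles. A combinatorial analysis of such maps (a pairing of the letters of the cyclic word $g$) then computes ${\rm cl}([x,y]^n)=\lfloor n/2\rfloor+1$ exactly.

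Your main argument instead extracts a linear lower bound from a single homogeneous quasimorphism. The Brooks function $h_{xy}$ has bounded defect $D$, and its homogenisation satisfies $\bar h_{xy}([x,y])=1$, since the cyclic word $xyx^{-1}y^{-1}$ contains one $xy$ and no $y^{-1}x^{-1}$. The estimate $|\bar\phi(c_1\cdots c_k)|\leq(2k-1)\,2D$ then gives ${\rm cl}([x,y]^n)\geq(n+2D)/(4D)$. The retraction $F_r\to F_2$ correctly transfers unboundedness to higher rank.

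What each approach buys:
\begin{itemize}
\item Culler's approach is sharp, giving ${\rm scl}([x,y])=\tfrac12$, but it is tailored to free groups.
\item Yours has a non-optimal constant but is more portable. Any group carrying a homogeneous quasimorphism that is not a homomorphism (equivalently, one not vanishing identically on $[G,G]$) has infinite commutator width. This handles non-elementary hyperbolic groups such as $\mathsf{SL}(2,\mathbb{Z})$ directly via Epstein--Fujiwara, without the paper's finite-index transfer.
\end{itemize}

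Two small points. First, positivity of ${\rm scl}([x,y])$ already follows from your lower bound, so the upper bound $\lfloor n/2\rfloor+1$ only shows ${\rm scl}([x,y])\leq\tfrac12$. Second, your caveat about rank one is a genuine correction: $F_1\simeq\mathbb{Z}$ has commutator width $0$, so the statement must be read for nonabelian free groups.
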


    \begin{cor}\label{cor. virt free}
        Virtually free groups (this is, with a free subgroup of finite index) have infinite commutator width.
    \end{cor}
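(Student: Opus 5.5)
The plan is to argue by contradiction, combining Theorem~\ref{thm. finite index} with the result of \cite[Corollary 3.3]{CULLER1981133} quoted just above, that free groups have infinite commutator width. First, a caveat on the hypothesis. The statement can only hold when the finite-index free subgroup is \emph{non-abelian}, i.e.\ free of rank at least $2$. Finite groups (with the trivial free subgroup) and $\mathbb{Z}$ are virtually free but have finite commutator width. So I will read ``virtually free'' as ``containing a free subgroup $F\simeq F_r$, $r\geq 2$, of finite index''. This is the situation for $\mathsf{SL}(2,\mathbb{Z})$, $\mathsf{GL}(2,\mathbb{Z})$ via $\Gamma(2)\simeq F_2$, and the groups in \eqref{eq.gamman2}. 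Culler's result must likewise be understood for non-abelian free groups.

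The first step is to make the free subgroup normal, since Theorem~\ref{thm. finite index} requires $H\trianglelefteq G$. Given $F\leq G$ with $[G:F]<\infty$, I pass to the normal core $N=\bigcap_{g\in G}gFg^{-1}$. Since $F$ has only finitely many conjugates, $N$ is a finite intersection of finite-index subgroups, so $N\trianglelefteq G$ and $[G:N]<\infty$ (indeed $[G:N]\leq[G:F]!$). As a subgroup of the free group $F$, $N$ is free by Nielsen--Schreier. As $[F:N]<\infty$, the Schreier index formula gives
\begin{equation}
    {\rm rk}(N)-1=[F:N]\,(r-1)\geq 1,
\end{equation}
so $N$ is again non-abelian free.

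The second step is the contradiction. Suppose ${\rm cl}(G)<\infty$. Theorem~\ref{thm. finite index}, applied to $N\trianglelefteq G$ of finite index, then gives ${\rm cl}(N)<\infty$. This contradicts \cite[Corollary 3.3]{CULLER1981133} for the non-abelian free group $N$, so ${\rm cl}(G)=\infty$.

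I expect the main obstacle to be the reliance on Theorem~\ref{thm. finite index} rather than any new computation. Its proof passes from commutators in $G$ to commutators in $H$ using a finite transversal and a finite set $F$ of commutators of representatives. I take it as given, but I would double-check that its constant $N_F$ is really finite. This needs each $[t,t']$ to lie in $[H,H]$ up to finitely many cosets of the finite group $(H\cap[G,G])/[H,H]$, with each coset representative chosen once.

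If one prefers to avoid that theorem, an alternative is to use stable commutator length. Pick $w\in[N,N]$ with ${\rm scl}_N(w)>0$, which holds for non-abelian free groups by Culler/Bavard. Then use the standard fact that ${\rm scl}$ in a finite-index overgroup is bounded below by a positive multiple of ${\rm scl}$ in the subgroup, for instance via Bavard duality and averaging quasimorphisms over cosets. This gives ${\rm scl}_G(w)>0$, hence ${\rm cl}(G)=\infty$, as remarked in Section~\ref{sec: other groups}.
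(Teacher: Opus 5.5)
Your main route is the paper's own proof: assume ${\rm cl}(G)<\infty$, apply Theorem~\ref{thm. finite index} to the free subgroup, and contradict \cite[Corollary 3.3]{CULLER1981133}. Your two additions are genuine corrections. First, the statement needs a free subgroup of rank $\geq 2$, since finite groups, $\mathbb{Z}$ and $D_\infty$ all have finite width. Second, the paper simply posits $F_r\trianglelefteq G$, which already fails in its own example. The index-$12$ free subgroup of $\mathsf{SL}(2,\mathbb{Z})$ generated by $\begin{psmallmatrix}1&2\\0&1\end{psmallmatrix}$ and $\begin{psmallmatrix}1&0\\2&1\end{psmallmatrix}$ is not normal: conjugating the latter by $T$ gives $\begin{psmallmatrix}3&-2\\2&-1\end{psmallmatrix}$, which is not of Sanov's form $\begin{psmallmatrix}1+4a&2b\\2c&1+4d\end{psmallmatrix}$. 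So the normal core is needed.

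The real gap is in the black box, and your suspicion about it is justified but understated. The proof of Theorem~\ref{thm. finite index} asserts that $[H\cap[G,G]:[H,H]]<\infty$, and that ${}^{h_i}[t_i,h_i']$ and ${}^{h_i'}[h_i,t_i']$ lie in $[H,H]$. Both fail in your application. Since $N\cap[G,G]$ has finite index in $N$, the quotient $(N\cap[G,G])/[N,N]$ is a finite-index subgroup of $\mathbb{Z}^{{\rm rk}\,N}$, hence infinite. Moreover $[t,h']$ lies in $N$ but in general not in $[N,N]$, so no bound on ${\rm cl}_N$ comes out.

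In fact the theorem is false for general groups. The group $H=\widetilde{\mathsf{SL}(2,\mathbb{R})}$ is perfect with ${\rm cl}(H)=\infty$, because the translation number is a homogeneous quasimorphism of defect at most $1$ with ${\rm rot}(z)=1$ on the central generator $z$. Let $\sigma$ be the lift of conjugation by ${\rm diag}(1,-1)$ and set $G=H\rtimes\langle\sigma\rangle$. Then $\sigma z\sigma^{-1}=z^{-1}$ gives $z^{2k}=[\sigma,z^{-k}]$. Every element of $\mathsf{PSL}(2,\mathbb{R})$ is a commutator, so every $h\in H=[G,G]$ can be written as $[\sigma,z^{-k}]\,[a,b]\,z^{m}$ with $|{\rm rot}(h)-2k|\leq 1$ and $|m|\leq 2$. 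Hence ${\rm cl}(G)<\infty$, and no general finite-index inheritance principle can give the corollary.

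Your fallback has the same defect: the inequality ${\rm scl}_G\geq c\,{\rm scl}_N$ is false even for virtually free groups. Take $G=F_2\rtimes\langle\sigma\rangle$ with $\sigma(a)=a^{-1}$ and $\sigma(b)=b$, and let $w=[a,b]$. The element $g=a\sigma$ satisfies $gwg^{-1}=w^{-1}$, so $w^{2k}=[w^k,g]$ and ${\rm scl}_G(w)=0$, whereas ${\rm scl}_{F_2}(w)=\tfrac12$. Averaging over cosets can annihilate a quasimorphism, and only $G$-invariant ones extend to $G$.

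What is missing is a homogeneous quasimorphism on $G$ itself that is not a homomorphism. Since $G$ contains $F_r$ with $r\geq 2$ as a finite-index subgroup, it is a non-elementary hyperbolic group. By Epstein--Fujiwara, its homogeneous quasimorphisms modulo ${\rm Hom}(G,\mathbb{R})$ therefore form an infinite-dimensional space. Such a $\phi$ cannot vanish on $[G,G]$: if it did, $(gh)^n\in g^nh^n[G,G]$ would force $\phi(gh)=\phi(g)+\phi(h)$. Pick $w\in[G,G]$ with $\phi(w)\neq 0$. The elementary bound $|\phi(w^n)|\leq(2\,{\rm cl}_G(w^n)-1)D(\phi)$, with $D(\phi)$ the defect, then gives ${\rm cl}_G(w^n)\to\infty$. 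This argues directly in $G$, so neither the normal core nor Theorem~\ref{thm. finite index} is needed.
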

    \begin{proof}
       By contradiction, if $G$ had finite commutator width and a free group $F_r\trianglelefteq G$ of finite index, by Theorem \ref{thm. finite index}, $F_r$ would have finite commutator width too. Since this is not the case, we conclude that ${\rm cl}(G)=\infty$.  
    \end{proof}

    \begin{thm}\label{th. semidirect}
        Given a semidirect product $G=N\rtimes H$, we have
\begin{equation}
    {\rm cl}(A\rtimes B)<\infty\,\Longleftrightarrow\,{\rm cl}(A)<\infty\;\text{ and }\; {\rm cl}(B)<\infty\;\text{ and }\; {\rm cl}([A,B])<\infty\,.
\end{equation}
    \end{thm}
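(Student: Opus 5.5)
The plan is to work throughout with the decomposition of the derived subgroup of $G=N\rtimes H$. Since $N\trianglelefteq G$ and the projection $\pi:G\to H$ has the inclusion $H\hookrightarrow G$ as a section, one has
\begin{equation}
[G,G]=\big([N,N]\,[N,H]\big)\rtimes[H,H]\,,
\end{equation}
where $[N,H]=\langle [n,h]\,:\,n\in N,\,h\in H\rangle\trianglelefteq G$. Every $x\in[G,G]$ then factors uniquely as $x=m\cdot k$ with $m\in[N,N][N,H]\le N$ and $k=\pi(x)\in[H,H]$. Throughout, ${\rm cl}([N,H])$ will be read as the supremum, over elements of $[N,H]$, of the minimal number of mixed commutators $[n,h]^{\pm1}$ needed to write them. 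This is the only reading under which the statement is meaningful for a subgroup that need not itself be generated by its own commutators. I will also use the commutator identity already employed in the proof of Theorem~\ref{thm. finite index}, namely the expansion of $[n h,n' h']$ into a product of conjugates of $[n,n']$, $[n,h']$, $[h,n']$ and $[h,h']$.

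\emph{Direction ``$\Leftarrow$''.} Take $x=m k\in[G,G]$. The factor $k\in[H,H]$ is a product of at most ${\rm cl}(H)$ commutators of $H\subset G$. Using normality of $[N,N]$ and $[N,H]$, I write $m=a\,b$ with $a\in[N,N]$ and $b\in[N,H]$. The factor $a$ needs at most ${\rm cl}(N)$ commutators of $N\subset G$. The factor $b$ is a product of at most ${\rm cl}([N,H])$ mixed commutators $[n,h]^{\pm1}$, and each of these is a commutator in $G$, since $[n,h]^{-1}=[h,n]$. Altogether this gives
\begin{equation}
{\rm cl}(G)\le {\rm cl}(N)+{\rm cl}([N,H])+{\rm cl}(H)\,.
\end{equation}

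\emph{Direction ``$\Rightarrow$'', the factor $H$.} This part is immediate. For $k\in[H,H]\subset[G,G]$, write $k=\prod_{i\le {\rm cl}(G)}[g_i,g_i']$ and apply $\pi$. Since $\pi$ is a homomorphism restricting to the identity on $H$, we get $k=\prod_i[\pi(g_i),\pi(g_i')]$, hence ${\rm cl}(H)\le{\rm cl}(G)$.

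\emph{Direction ``$\Rightarrow$'', the factors $N$ and $[N,H]$, which I expect to be the main obstacle.} Take $x\in[N,N]$, or $x\in[N,H]$, and write $x=\prod_{i\le{\rm cl}(G)}[n_ih_i,n_i'h_i']$. Expand each factor by the identity above. Then use normality of $N$ to push every pure-$H$ factor ${}^{(\cdot)}[h_i,h_i']$ to the right; this only conjugates the $N$-valued factors by elements of $H$. The accumulated right-hand $H$-part equals $\pi(x)=1$. What remains is $x$ written as a product of at most $c\,{\rm cl}(G)$ terms, for a universal constant $c$ (about $3$), each of which is an $H$-conjugate of a commutator in $N$ or of a mixed commutator. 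Moreover, ${}^{h}[n,n']=[{}^hn,{}^hn']$ and ${}^{h}[n,h']=[{}^hn,{}^hh']$, so conjugation preserves both types.

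The genuine difficulty is separating the two types. Write $x=a_1b_1a_2b_2\cdots$ with $a_j\in[N,N]$ and $b_j$ mixed. To bound ${\rm cl}(N)$ one must show that an $x\in[N,N]$ obtained this way can be rewritten using only $N$-commutators, with a number of them bounded in terms of ${\rm cl}(G)$. Symmetrically, to bound ${\rm cl}([N,H])$ one must remove the $a_j$.

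The first thing I would try is to reorder using $[b,a]\in[N,N]$ and the normality of $[N,N]$. This collects all mixed factors to the right, so that $x=a\,b$ with $a\in[N,N]$ and with $b$ a product of boundedly many mixed commutators. I would then project to the quotient $N/[N,N]$. This quotient is abelian and carries the induced $H$-action, and there each mixed commutator $[n,h]$ becomes $(1-h)\bar n$. For $x\in[N,N]$ this forces $\bar b=0$.

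I would then try to conclude by arguing that $b\in[N,N]\cap[N,H]$ has bounded $N$-length. This is exactly the step where a new input may be needed, for example finiteness or abelianity of $H$, as in the applications of the main text to $\mathbb{Z}_2^{\mathsf R}$. If the general case resists, I would state the theorem with the two counts merged, that is, bound the length of elements of $[N,N][N,H]$ by mixed-plus-$N$ commutators, which the argument above does yield with constant $c$.
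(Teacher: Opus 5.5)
Your proof of $\Leftarrow$ and of ${\rm cl}(H)\le{\rm cl}(G)$ are correct. The first uses $[G,G]=([N,N][N,H])\rtimes[H,H]$ to get ${\rm cl}(G)\le{\rm cl}(N)+{\rm cl}_{\rm mix}([N,H])+{\rm cl}(H)$, where ${\rm cl}_{\rm mix}$ is your width with respect to mixed commutators $[n,h]^{\pm1}$. The second follows by projecting along $\pi$.

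Your reinterpretation of ${\rm cl}([N,H])$ is in fact forced. If it meant the commutator width of the group $[N,H]$, then $\Leftarrow$ would already fail for $G=\mathbb{Z}[H]\rtimes H$ with $H=\bigoplus_{i\geq1}\mathbb{Z}t_i$. There $N$, $H$ and $[N,H]$ are all abelian. Each commutator has the form $(1-h')p-(1-h)q$, so a product of $k$ of them lies in the kernel of $\mathbb{Z}[H]\to\mathbb{Z}[H/H_0]$ for some $H_0$ of rank at most $2k$. Hence $\sum_{i=1}^m(1-t_i)$, whose image $m-\sum_i\bar t_i$ vanishes only if $t_1,\dots,t_m\in H_0$, needs at least $m/2$ commutators.

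The genuine gap is the remaining implication ${\rm cl}(G)<\infty\Rightarrow{\rm cl}(N)<\infty$, which you leave open. It cannot be closed, because it is false. The step you single out fails: a single mixed commutator lying in $[N,N]$ can have unbounded commutator length in $N$, even when $H$ is abelian.

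Take $G=F_2\wr\mathbb{Z}=N\rtimes\langle t\rangle$, with $N=\bigoplus_{i\in\mathbb{Z}}F_2$ and $t$ acting by the shift $\sigma$.

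\begin{itemize}
\item Since $[n,t]=n\,\sigma(n)^{-1}$ has coordinates $n_in_{i-1}^{-1}$, every $w\in N$ supported in $\{0,\dots,L\}$ with $w_L\cdots w_0=1$ equals $[n,t]$ for $n_i=w_i\cdots w_0$.
\item Let $c=[a_1,b_1]\cdots[a_L,b_L]\in[F_2,F_2]$, and put $y=(1,[a_1,b_1],\dots,[a_L,b_L])$, which is a single commutator in $N$. Then $(c,1,\dots)\,y^{-1}$ has ordered product $1$, so
\[
(c,1,1,\dots)=[n,t]\,y
\]
is a product of two commutators of $G$.
\item We have $[G,G]=\{w\in N:\ w_L\cdots w_0\in[F_2,F_2]\}$, after conjugating by a power of $t$, which does not change commutator length. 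Any such $w$ equals $[n',t]\,(w_L\cdots w_0,1,\dots)$, so ${\rm cl}(G)\le 3$.
\item Projecting to the $0$-th coordinate gives ${\rm cl}(N)={\rm cl}(F_2)=\infty$.
\end{itemize}

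So the $\Rightarrow$ direction of the statement is false. Abelian $H$ does not rescue it, since here $H=\mathbb{Z}$. The paper's proof only derives $[G,G]=[N,N][N,H][H,H]$ and calls both directions ``clear'', skipping exactly the separation problem you identified.

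What survives is what you actually proved: the $\Leftarrow$ bound with mixed-commutator width, and ${\rm cl}(H)\le{\rm cl}(G)$. The latter is all the main text needs for $\Gamma_{\rm h}$.
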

\begin{proof}
    Consider a semidirect product $G=N\rtimes H$, with $N\trianglelefteq G$. By definition, there is a group homomorphism $\varphi:H\to {\rm Aut}(N)$ given by $\varphi_h(n)=hnh^{-1}\in N$ for $h\in H$ and $n\in N$, what allows us to write $[h,n]=\varphi_h(n)h$. Since for all $g\in G$ we have some $n\in N$ and $h\in H$ such that $g=hn$, then we have that through some manipulation
\begin{equation}
                [g_1,g_2]=[n_1h_1,n_2h_2]=\underbrace{[\varphi_{h_1}(n_2),n_1]}_{\in [N,N]}\underbrace{[n_2,[n_1,h_1h_2h_1^{-1}]]}_{\in [N,N]}\underbrace{[n_1,h_1h_2h_1^{-1}]}_{\in[N,H]}\underbrace{[n_2,[h_1,h_2]}_{\in [N,H]}\underbrace{[h_1,h_2]}_{\in [H,H]}\,,
\end{equation}
so that we will have $[G,G]=[N,N]\cdot[N,H]\cdot[H,H]$, with further $[N,H]\leq N$ due to $N$ being a normal subgroup of $G$. This way all commutators in $G$ can be written as a product of commutators of the three above groups, with at least one commutator from each group for every commutator of $G$ before simplifying. It is clear that if the commutator lengths of $N$, $H$ or $[N,H]$ are infinite, there is no way that every element in $[G,G]$ can be written as a finite number of commutators, and vice-versa. 
\end{proof}

\subsection{Realization of monodromies in terms of commutators\label{App. real comm}}

We finally give an explicit explicit realization of the different $\mathsf{SL}(2,\mathbb{Z})$ elements depicted in Table \ref{tab.int},\footnote{We will only focus on the conjugacy class representatives from Table \ref{tab.int}. Since $g\left(\prod_i\mathcal{M}_i\right)g^{-1}=\prod_ig\mathcal{M}_ig^{-1}$ and $g[\mathcal{M}_1,\mathcal{M}_2]g^{-1}=[g\mathcal{M}_1g^{-1},g\mathcal{M}_2g^{-1}]$, the needed defect and gravitational solitons will be the same modulo conjugacy.} in terms of commutators (i.e. gravitational solitons) and defects associated to different monodromy groups.
 
 \subsection*{$\boxed{\mathsf{SL}(2,\mathbb{Z})}$}
 The monodromies from Table \ref{tab.int} are realized in terms of $\mathsf{SL}(2,\mathbb{Z})$ commutators and (anti-)D7 brane defects as
\begin{subequations}\label{eq.bordsl2z}
    \begin{align}
       \label{eq.bordsl2z-1} 
       \colorbox{yellow!15}{$\begin{pmatrix}
            1&n\\
            0&1
        \end{pmatrix}$}&=(\mathbf{T^{12}})^{\left\lfloor\frac{n}{12}\right\rfloor}T^{n\ {\rm mod}\, 12}\\
      \colorbox{red!15}{$\begin{pmatrix}
        n&nm-1\\1&m
    \end{pmatrix}$}&=\left\{\begin{array}{ll}
       T^{n+m-3\ {\rm mod}\, 12}(\mathbf{T^{12}})^{\left\lfloor\frac{n+m-3}{12}\right\rfloor}  \text{\scriptsize$\left[{\begin{pmatrix}
           1&1\\0&1
       \end{pmatrix},\begin{pmatrix}
           1-m&-m^2\\1&1+m
       \end{pmatrix}}\right]$}  &  n+m\geq 3\\
        \overline{T}^{3-n-m\ {\rm mod}\, 12}(\mathbf{\overline{T}^{12}})^{\left\lfloor\frac{3-n-m}{12}\right\rfloor}\text{\scriptsize$\left[{\begin{pmatrix}
           1&1\\0&1
       \end{pmatrix},\begin{pmatrix}
           1-m&-m^2\\1&1+m
       \end{pmatrix}}\right]$} &  n+m\leq 3
    \end{array}\right.\\
      \colorbox{blue!15}{$\begin{pmatrix}
        n&1-nm\\-1&m
    \end{pmatrix}$}&=\left\{\begin{array}{ll}
       T^{3-(n+m)\ {\rm mod}\, 12}(\mathbf{T^{12}})^{\left\lfloor\frac{3-(n+m)}{12}\right\rfloor}\text{\scriptsize$\left[{\begin{pmatrix}
           m-2&-(m-3)^2\\1&4-m
       \end{pmatrix},\begin{pmatrix}
           1&1\\0&1
       \end{pmatrix}}\right]$}  &  n+m\leq 3\\
        \overline{T}^{m+n-3\ {\rm mod}\, 12}(\mathbf{\overline{T}^{12}})^{\left\lfloor\frac{m+n-3}{12}\right\rfloor}\text{\scriptsize$\left[{\begin{pmatrix}
           m-2&-(m-3)^2\\1&4-m
       \end{pmatrix},\begin{pmatrix}
           1&1\\0&1
       \end{pmatrix}}\right]$}  &  n+m\geq 3
    \end{array}\right.\\
     \colorbox{green!15}{$ \begin{pmatrix}
        -1&n\\0&-1
    \end{pmatrix}$}&=\left\{\begin{array}{ll}
        \text{\scriptsize$\left[{ \begin{pmatrix}
            1&0\\1&1
        \end{pmatrix},\begin{pmatrix}
            1&1\\0&1
        \end{pmatrix}}\right]\left[{ \begin{pmatrix}
            4&-9\\1&-2
        \end{pmatrix},\begin{pmatrix}
            1&1\\0&1
        \end{pmatrix}}\right]$}(\mathbf{\overline{T}^{12}})^{\left\lfloor\frac{n-6}{12}\right\rfloor} \overline{T}^{n-6\ {\rm mod}\,12}& n\geq 6  \\
         \text{\scriptsize$\left[{ \begin{pmatrix}
            1&0\\1&1
        \end{pmatrix},\begin{pmatrix}
            1&1\\0&1
        \end{pmatrix}}\right]\left[{ \begin{pmatrix}
            4&-9\\1&-2
        \end{pmatrix},\begin{pmatrix}
            1&1\\0&1
        \end{pmatrix}}\right]$}(\mathbf{{T}^{12}})^{\left\lfloor\frac{6-n}{12}\right\rfloor} {T}^{6-n\ {\rm mod}\,12}& n\leq 6 
    \end{array}\right.
    \end{align}
\end{subequations}

where $ T={\begin{psmallmatrix}
    1&1\\0&1
\end{psmallmatrix}}$, $\overline{T}=T^{-1}={\begin{psmallmatrix}
    1&-1\\0&1
\end{psmallmatrix}}$,  $(\mathbf{T^{12}})^k$ is the set of $k+1$ commutators known to exist from \eqref{eq.kplus1comm} and $\mathbf{\overline{T}^{12}}$ is the inverse through $[A,B]^{-1}=[B,A]$, and we have used the following identities:
\begin{equation}
    \text{\small$\begin{array}{c}
        \begin{pmatrix}
            3&-1\\1&0
        \end{pmatrix}=\left[{ \begin{pmatrix}
            1&1\\0&1
        \end{pmatrix},\begin{pmatrix}
            1&0\\1&1
        \end{pmatrix}}\right]\;,\qquad \begin{pmatrix}
            -1&0\\0&-1
        \end{pmatrix}=\left[{ \begin{pmatrix}
            1&0\\1&1
        \end{pmatrix},\begin{pmatrix}
            1&1\\0&1
        \end{pmatrix}}\right] T^3\left[{ \begin{pmatrix}
            1&0\\1&1
        \end{pmatrix},\begin{pmatrix}
            1&1\\0&1
        \end{pmatrix}}\right]T^3\\\begin{pmatrix}
            n+m&-1\\1&0
        \end{pmatrix}=T^n\begin{pmatrix}
            m&-1\\1&0
        \end{pmatrix}
        \;,\quad \qquad T^n\begin{pmatrix}
            1&0\\1&1
        \end{pmatrix}\overline{T}^{n}=\begin{pmatrix}
            n+1&-n^2\\1&1-n
        \end{pmatrix}\,.
    \end{array}$}
\end{equation}
It is clear that for all one- (two-)parameter families labeled by $n$ ($n,m$), the number of commutators used grows like $\mathcal{O}\left(\frac{|n|}{6}\right)$ ($\mathcal{O}\left(\frac{|n-m|}{6}\right)$) in the $|n|\to\infty$ ($|n-m|\to \infty$) limit.

\subsection*{$\boxed{\mathsf{GL}(2,\mathbb{Z})}$}
Finally, we consider allow for $\mathsf{GL}(2,\mathbb{Z})$ along the 1-cycles of our gravitational solitons, such that
\begin{subequations}
    \begin{align}
          \colorbox{yellow!15}{$\begin{pmatrix}
            1&n\\
            0&1
        \end{pmatrix}$}&=\left\{
        \begin{array}{ll}
             T\text{\scriptsize$\left[{\begin{pmatrix}
                 1&0\\0&-1
             \end{pmatrix},\begin{pmatrix}
                 1&\tfrac{n-1}{2}\\0&-1
             \end{pmatrix}}\right]$}& n\text{ odd} \\
            \text{\scriptsize$\left[{\begin{pmatrix}
                 1&0\\0&-1
             \end{pmatrix},\begin{pmatrix}
                 1&\tfrac{n}{2}\\0&-1
             \end{pmatrix}}\right]$} & n\text{ even}
        \end{array}
        \right.\\
          \colorbox{green!15}{$\begin{pmatrix}
            -1&n\\
            0&-1
        \end{pmatrix}$}&=\left\{
        \begin{array}{ll}
             T\text{\scriptsize$\left[{\begin{pmatrix}
                 1&1\\0&-1
             \end{pmatrix},\begin{pmatrix}
                 n&\tfrac{n+1}{2}\\2&1
             \end{pmatrix}}\right]$}& n\text{ odd} \\
            \text{\scriptsize$\left[{\begin{pmatrix}
                 1&1\\0&-1
             \end{pmatrix},\begin{pmatrix}
                 n-1&\tfrac{n}{2}\\2&1
             \end{pmatrix}}\right]$} & n\text{ even}
        \end{array}
        \right.\\
          \colorbox{blue!15}{$ \begin{pmatrix}
            n&1-nm\\
            -1&m
        \end{pmatrix}$}&=\left\{ \begin{array}{ll}
             T\text{\scriptsize$\left[{ \begin{pmatrix}
                 1&0\\0&-1
             \end{pmatrix},\begin{pmatrix}
                1&-\tfrac{n+1}{2}\\0&-1
             \end{pmatrix}}\right]\left[{ \begin{pmatrix}
                 0&1\\1&0
             \end{pmatrix},\begin{pmatrix}
                \tfrac{m+1}{2}&\tfrac{m-1}{2}\\1&1
             \end{pmatrix}}\right]$}& n\text{ odd},\,m\text{ odd} \\
           \text{\scriptsize$\left[{ \begin{pmatrix}
                 1&0\\0&-1
             \end{pmatrix},\begin{pmatrix}
                1&-\tfrac{n}{2}\\0&-1
             \end{pmatrix}}\right]\left[{ \begin{pmatrix}
                 0&1\\1&0
             \end{pmatrix},\begin{pmatrix}
                \tfrac{m}{2}&\tfrac{m}{2}-1\\1&1
             \end{pmatrix}}\right]$}& n\text{ even},\,m\text{ odd} \\
             \text{\scriptsize$\left[{ \begin{pmatrix}
                 1&2\\0&-1
             \end{pmatrix},\begin{pmatrix}
                1&-\tfrac{n+5}{2}\\0&-1
             \end{pmatrix}}\right]\left[{ \begin{pmatrix}
                 1&0\\1&-1
             \end{pmatrix},\begin{pmatrix}
                \tfrac{m+3}{2}&-1\\1&0
             \end{pmatrix}}\right]$}& n\text{ odd},\,m\text{ even} \\
             \text{\scriptsize$\left[{ \begin{pmatrix}
                 1&0\\0&-1
             \end{pmatrix},\begin{pmatrix}
                1&-\tfrac{n}{2}\\0&-1
             \end{pmatrix}}\right]$} T\text{\scriptsize$\left[{ \begin{pmatrix}
                 0&1\\1&0
             \end{pmatrix},\begin{pmatrix}
                \tfrac{m}{2}&\tfrac{m}{2}-1\\1&1
             \end{pmatrix}}\right]$}& n\text{ even},\,m\text{ even} \\
        \end{array}\right.   \\
          \colorbox{red!15}{$\begin{pmatrix}
            n&mn-1\\
            1&m
        \end{pmatrix}$}&=\left\{ \begin{array}{ll}
            \text{\scriptsize$ \left[{ \begin{pmatrix}
                \tfrac{n+1}{2}&\tfrac{n-1}{2}\\1&1
             \end{pmatrix},\begin{pmatrix}
                 0&1\\1&0
             \end{pmatrix}}\right]\left[{ \begin{pmatrix}
                1&-\tfrac{m+1}{2}\\0&-1
             \end{pmatrix},\begin{pmatrix}
                 1&0\\0&-1
             \end{pmatrix}}\right]$}\overline{T}& n\text{ odd},\,m\text{ odd} \\
           \text{\scriptsize$\left[{ \begin{pmatrix}
                \tfrac{n}{2}&\tfrac{n}{2}-1\\1&1
             \end{pmatrix},\begin{pmatrix}
                 0&1\\1&0
             \end{pmatrix}}\right]\left[{ \begin{pmatrix}
                1&-\tfrac{m}{2}\\0&-1
             \end{pmatrix} \begin{pmatrix}
                 1&0\\0&-1
             \end{pmatrix}}\right]$}& n\text{ even},\,m\text{ odd} \\
           \text{\scriptsize$ \left[{ \begin{pmatrix}
                \tfrac{n+3}{2}&-1\\1&0
             \end{pmatrix},\begin{pmatrix}
                 1&0\\1&-1
             \end{pmatrix}}\right] \left[{ \begin{pmatrix}
                1&-\tfrac{m+5}{2}\\0&-1
             \end{pmatrix},\begin{pmatrix}
                 1&2\\0&-1
             \end{pmatrix}}\right]$}& n\text{ odd},\,m\text{ even} \\
             \text{\scriptsize$\left[{ \begin{pmatrix}
                \tfrac{n}{2}&\tfrac{n}{2}-1\\1&1
             \end{pmatrix},\begin{pmatrix}
                 0&1\\1&0
             \end{pmatrix}}\right]\overline{T}\left[{ \begin{pmatrix}
                1&-\tfrac{m}{2}\\0&-1
             \end{pmatrix},\begin{pmatrix}
                 1&0\\0&-1
             \end{pmatrix}}\right]$}& n\text{ even},\,m\text{ even} \\
        \end{array}\right.        
    \end{align}
\end{subequations}
We thus conclude that all the elements from Table \ref{tab.int} have at most commutator length 2 in $\mathsf{GL}(2,\mathbb{Z})$. While it could be the case that the two last two-parameter families can be expressed in terms of a single commutator (and possibly a $T$ D7-brane), and indeed such is the case for small values of $n$ and $m$, we have not been able to find a general expression with commutator length 1.

\vspace*{.5cm}
\bibliographystyle{JHEP2015}
\bibliography{ref}

@article{Makridou:2026jzy,
    author = "Makridou, Andriana and G{\'o}mez, Alejandro Javier Puga",
    title = "{Sharpened Dynamical Cobordism}",
    eprint = "2605.06793",
    archivePrefix = "arXiv",
    primaryClass = "hep-th",
    reportNumber = "IFT-UAM/CSIC-26-45",
    month = "5",
    year = "2026"
}

@article{Gvozdevsky2024VerbalWI,
    title={Verbal width in arithmetic Chevalley groups}, 
    author={Pavel Gvozdevsky},      
    journal={Journal of Algebra},  
    year={2024},  
    url={https://api.semanticscholar.org/CorpusID:266818514}
}

@inbook{Kodaira+2015+1142+1156,
url = {https://doi.org/10.1515/9781400869879-002},
title = {On Compact Analytic Surfaces},
booktitle = {Kunihiko Kodaira, Volume III},
author = {Kunihiko Kodaira},
publisher = {Princeton University Press},
address = {Princeton},
pages = {1142--1156},
doi = {doi:10.1515/9781400869879-002},
isbn = {9781400869879},
year = {2015},
lastchecked = {2026-01-08}
}

@article{Basile:2018irz,
    author = "Basile, I. and Mourad, J. and Sagnotti, A.",
    title = "{On Classical Stability with Broken Supersymmetry}",
    eprint = "1811.11448",
    archivePrefix = "arXiv",
    primaryClass = "hep-th",
    doi = "10.1007/JHEP01(2019)174",
    journal = "JHEP",
    volume = "01",
    pages = "174",
    year = "2019"
}

@article{Dudas:2004nd,
    author = "Dudas, E. and Pradisi, G. and Nicolosi, M. and Sagnotti, A.",
    title = "{On tadpoles and vacuum redefinitions in string theory}",
    eprint = "hep-th/0410101",
    archivePrefix = "arXiv",
    reportNumber = "CPTH-RR-057-0904, LPT-ORSAY-04-82, ROM2F-04-28",
    doi = "10.1016/j.nuclphysb.2004.11.028",
    journal = "Nucl. Phys. B",
    volume = "708",
    pages = "3--44",
    year = "2005"
}

@article{Dudas:2002dg,
    author = "Dudas, E. and Mourad, J. and Timirgaziu, Cristina",
    title = "{Time and space dependent backgrounds from nonsupersymmetric strings}",
    eprint = "hep-th/0209176",
    archivePrefix = "arXiv",
    reportNumber = "CPHT-RR-068-0902, LPT-ORSAY-02-136",
    doi = "10.1016/S0550-3213(03)00248-7",
    journal = "Nucl. Phys. B",
    volume = "660",
    pages = "3--24",
    year = "2003"
}

@article{Dudas:2000ff,
    author = "Dudas, E. and Mourad, J.",
    title = "{Brane solutions in strings with broken supersymmetry and dilaton tadpoles}",
    eprint = "hep-th/0004165",
    archivePrefix = "arXiv",
    reportNumber = "LPT-ORSAY-00-43, LPTM-00-25, LPT-00-43",
    doi = "10.1016/S0370-2693(00)00734-6",
    journal = "Phys. Lett. B",
    volume = "486",
    pages = "172--178",
    year = "2000"
}

@article{Mourad:2021roa,
    author = "Mourad, J. and Sagnotti, A.",
    title = "{On warped string vacuum profiles and cosmologies. Part II. Non-supersymmetric strings}",
    eprint = "2109.12328",
    archivePrefix = "arXiv",
    primaryClass = "hep-th",
    doi = "10.1007/JHEP12(2021)138",
    journal = "JHEP",
    volume = "12",
    pages = "138",
    year = "2021"
}

@article{Huertas:2023syg,
    author = "Huertas, Jes\'us and Uranga, Angel M.",
    title = "{Aspects of dynamical cobordism in AdS/CFT}",
    eprint = "2306.07335",
    archivePrefix = "arXiv",
    primaryClass = "hep-th",
    doi = "10.1007/JHEP08(2023)140",
    journal = "JHEP",
    volume = "08",
    pages = "140",
    year = "2023"
}

@article{Blumenhagen:2023abk,
    author = "Blumenhagen, Ralph and Kneissl, Christian and Wang, Chuying",
    title = "{Dynamical Cobordism Conjecture: solutions for end-of-the-world branes}",
    eprint = "2303.03423",
    archivePrefix = "arXiv",
    primaryClass = "hep-th",
    reportNumber = "MPP-2023-33",
    doi = "10.1007/JHEP05(2023)123",
    journal = "JHEP",
    volume = "05",
    pages = "123",
    year = "2023"
}

@article{Angius:2024pqk,
    author = "Angius, Roberta and Uranga, Angel Maria and Wang, Chuying",
    title = "{End of the world boundaries for chiral quantum gravity theories}",
    eprint = "2410.07322",
    archivePrefix = "arXiv",
    primaryClass = "hep-th",
    doi = "10.1007/JHEP03(2025)064",
    journal = "JHEP",
    volume = "03",
    pages = "064",
    year = "2025"
}

@article{Gaberdiel:1997ud,
    author = "Gaberdiel, Matthias R. and Zwiebach, Barton",
    title = "{Exceptional groups from open strings}",
    eprint = "hep-th/9709013",
    archivePrefix = "arXiv",
    reportNumber = "HUTP-97-A046, MIT-CTP-2670",
    doi = "10.1016/S0550-3213(97)00841-9",
    journal = "Nucl. Phys. B",
    volume = "518",
    pages = "151--172",
    year = "1998"
}

@article{Sen:1994fa,
    author = "Sen, Ashoke",
    title = "{Strong - weak coupling duality in four-dimensional string theory}",
    eprint = "hep-th/9402002",
    archivePrefix = "arXiv",
    reportNumber = "TIFR-TH-94-03",
    doi = "10.1142/S0217751X94001497",
    journal = "Int. J. Mod. Phys. A",
    volume = "9",
    pages = "3707--3750",
    year = "1994"
}

@article{Sen:1994wr,
    author = "Sen, Ashoke",
    title = "{Strong - weak coupling duality in three-dimensional string theory}",
    eprint = "hep-th/9408083",
    archivePrefix = "arXiv",
    reportNumber = "TIFR-TH-94-19",
    doi = "10.1016/0550-3213(94)00461-M",
    journal = "Nucl. Phys. B",
    volume = "434",
    pages = "179--209",
    year = "1995"
}

@article{Fukuda:2024pvu,
    author = "Fukuda, Masaki and Kobayashi, Shun K. and Watanabe, Kento and Yonekura, Kazuya",
    title = "{Black p-branes in heterotic string theory}",
    eprint = "2412.02277",
    archivePrefix = "arXiv",
    primaryClass = "hep-th",
    reportNumber = "TU-1247",
    doi = "10.1007/JHEP05(2025)043",
    journal = "JHEP",
    volume = "05",
    pages = "043",
    year = "2025"
}

@article{Kaidi:2024cbx,
    author = "Kaidi, Justin and Tachikawa, Yuji and Yonekura, Kazuya",
    title = "{On non-supersymmetric heterotic branes}",
    eprint = "2411.04344",
    archivePrefix = "arXiv",
    primaryClass = "hep-th",
    reportNumber = "TU-1248, KYUSHU-HET-294",
    doi = "10.1007/JHEP03(2025)211",
    journal = "JHEP",
    volume = "03",
    pages = "211",
    year = "2025"
}

@article{Kaidi:2023tqo,
    author = "Kaidi, Justin and Ohmori, Kantaro and Tachikawa, Yuji and Yonekura, Kazuya",
    title = "{Nonsupersymmetric Heterotic Branes}",
    eprint = "2303.17623",
    archivePrefix = "arXiv",
    primaryClass = "hep-th",
    doi = "10.1103/PhysRevLett.131.121601",
    journal = "Phys. Rev. Lett.",
    volume = "131",
    number = "12",
    pages = "121601",
    year = "2023"
}

@article{Cribiori:2021djm,
    author = "Cribiori, Niccol{\`o} and Junghans, Daniel and Van Hemelryck, Vincent and Van Riet, Thomas and Wrase, Timm",
    title = "{Scale-separated AdS4 vacua of IIA orientifolds and M-theory}",
    eprint = "2107.00019",
    archivePrefix = "arXiv",
    primaryClass = "hep-th",
    reportNumber = "UUITP-29/21",
    doi = "10.1103/PhysRevD.104.126014",
    journal = "Phys. Rev. D",
    volume = "104",
    number = "12",
    pages = "126014",
    year = "2021"
}

@article{VanHemelryck:2025qok,
    author = "Van Hemelryck, Vincent",
    title = "{Supersymmetric scale-separated AdS$_{3}$ orientifold vacua of type IIB}",
    eprint = "2502.04791",
    archivePrefix = "arXiv",
    primaryClass = "hep-th",
    reportNumber = "UUITP-04/25",
    doi = "10.1007/JHEP10(2025)109",
    journal = "JHEP",
    volume = "10",
    pages = "109",
    year = "2025"
}

@article{Caviezel:2008ik,
    author = "Caviezel, Claudio and Koerber, Paul and Kors, Simon and Lust, Dieter and Tsimpis, Dimitrios and Zagermann, Marco",
    title = "{The Effective theory of type IIA AdS(4) compactifications on nilmanifolds and cosets}",
    eprint = "0806.3458",
    archivePrefix = "arXiv",
    primaryClass = "hep-th",
    reportNumber = "MPP-2008-62, LMU-ASC-37-08",
    doi = "10.1088/0264-9381/26/2/025014",
    journal = "Class. Quant. Grav.",
    volume = "26",
    pages = "025014",
    year = "2009"
}

@article{Arboleya:2025jko,
    author = "Arboleya, {\'A}lvaro and Casagrande, Gabriele and Guarino, Adolfo and Morittu, Matteo",
    title = "{Taxonomy of type II orientifold flux vacua in 3D}",
    eprint = "2512.13433",
    archivePrefix = "arXiv",
    primaryClass = "hep-th",
    month = "12",
    year = "2025"
}

@article{Bomans:2021ara,
    author = "Bomans, Pieter and Cassani, Davide and Dibitetto, Giuseppe and Petri, Nicolo",
    title = "{Bubble instability of mIIA on $\mathrm{AdS}_4\times S^6$}",
    eprint = "2110.08276",
    archivePrefix = "arXiv",
    primaryClass = "hep-th",
    doi = "10.21468/SciPostPhys.12.3.099",
    journal = "SciPost Phys.",
    volume = "12",
    number = "3",
    pages = "099",
    year = "2022"
}

@article{Heckman:2024zdo,
    author = {Heckman, Jonathan J. and H{\"u}bner, Max},
    title = "{Celestial Topology, Symmetry Theories, and Evidence for a NonSUSY D3-Brane CFT}",
    eprint = "2406.08485",
    archivePrefix = "arXiv",
    primaryClass = "hep-th",
    doi = "10.1002/prop.202400270",
    journal = "Fortsch. Phys.",
    volume = "73",
    number = "4",
    pages = "2400270",
    year = "2025"
}

@article{Bandos:2023yyo,
    author = "Bandos, Igor and Blanco-Pillado, Jose J. and Sousa, Kepa and Urkiola, Mikel A.",
    title = "{Brane nucleation in supersymmetric models}",
    eprint = "2306.09412",
    archivePrefix = "arXiv",
    primaryClass = "hep-th",
    doi = "10.1007/JHEP10(2023)061",
    journal = "JHEP",
    volume = "10",
    pages = "061",
    year = "2023"
}

@article{Petri:2022yhy,
    author = "Petri, Nicol{\`o}",
    title = "{Bubbles of nothing and AdS instabilities}",
    eprint = "2205.00884",
    archivePrefix = "arXiv",
    primaryClass = "hep-th",
    doi = "10.22323/1.406.0170",
    journal = "PoS",
    volume = "CORFU2021",
    pages = "170",
    year = "2022"
}

@article{Draper:2021ujg,
    author = "Draper, Patrick and Garcia, Isabel Garcia and Lillard, Benjamin",
    title = "{Bubble of nothing decays of unstable theories}",
    eprint = "2105.08068",
    archivePrefix = "arXiv",
    primaryClass = "hep-th",
    doi = "10.1103/PhysRevD.104.L121701",
    journal = "Phys. Rev. D",
    volume = "104",
    number = "12",
    pages = "L121701",
    year = "2021"
}

@article{Dibitetto:2020csn,
    author = "Dibitetto, Giuseppe and Petri, Nicol{\`o} and Schillo, Marjorie",
    title = "{Nothing really matters}",
    eprint = "2002.01764",
    archivePrefix = "arXiv",
    primaryClass = "hep-th",
    doi = "10.1007/JHEP08(2020)040",
    journal = "JHEP",
    volume = "08",
    pages = "040",
    year = "2020"
}

@article{GarciaEtxebarria:2020xsr,
    author = "Garc{\'\i}a Etxebarria, I{\~n}aki and Montero, Miguel and Sousa, Kepa and Valenzuela, Irene",
    title = "{Nothing is certain in string compactifications}",
    eprint = "2005.06494",
    archivePrefix = "arXiv",
    primaryClass = "hep-th",
    doi = "10.1007/JHEP12(2020)032",
    journal = "JHEP",
    volume = "12",
    pages = "032",
    year = "2020"
}

@article{Blanco-Pillado:2011fcm,
    author = "Blanco-Pillado, Jose J. and Ramadhan, Handhika S. and Shlaer, Benjamin",
    title = "{Bubbles from Nothing}",
    eprint = "1104.5229",
    archivePrefix = "arXiv",
    primaryClass = "gr-qc",
    doi = "10.1088/1475-7516/2012/01/045",
    journal = "JCAP",
    volume = "01",
    pages = "045",
    year = "2012"
}

@article{Ooguri:2017njy,
    author = "Ooguri, Hirosi and Spodyneiko, Lev",
    title = "{New Kaluza-Klein instantons and the decay of AdS vacua}",
    eprint = "1703.03105",
    archivePrefix = "arXiv",
    primaryClass = "hep-th",
    reportNumber = "CALT-TH-2017-014, IPMU-17-0041",
    doi = "10.1103/PhysRevD.96.026016",
    journal = "Phys. Rev. D",
    volume = "96",
    number = "2",
    pages = "026016",
    year = "2017"
}

@article{Brown:2014rka,
    author = "Brown, Adam R.",
    title = "{Decay of hot Kaluza-Klein space}",
    eprint = "1408.5903",
    archivePrefix = "arXiv",
    primaryClass = "hep-th",
    doi = "10.1103/PhysRevD.90.104017",
    journal = "Phys. Rev. D",
    volume = "90",
    number = "10",
    pages = "104017",
    year = "2014"
}

@article{deAlwis:2013gka,
    author = "de Alwis, S. and Gupta, Rajesh and Hatefi, E. and Quevedo, F.",
    title = "{Stability, Tunneling and Flux Changing de Sitter Transitions in the Large Volume String Scenario}",
    eprint = "1308.1222",
    archivePrefix = "arXiv",
    primaryClass = "hep-th",
    reportNumber = "DAMTP-2013-42",
    doi = "10.1007/JHEP11(2013)179",
    journal = "JHEP",
    volume = "11",
    pages = "179",
    year = "2013"
}

@article{Brown:2011gt,
    author = "Brown, Adam R. and Dahlen, Alex",
    title = "{On 'nothing' as an infinitely negatively curved spacetime}",
    eprint = "1111.0301",
    archivePrefix = "arXiv",
    primaryClass = "hep-th",
    doi = "10.1103/PhysRevD.85.104026",
    journal = "Phys. Rev. D",
    volume = "85",
    pages = "104026",
    year = "2012"
}

@article{Dine:2004uw,
    author = "Dine, Michael and Fox, Patrick J. and Gorbatov, Elie",
    title = "{Catastrophic decays of compactified space-times}",
    eprint = "hep-th/0405190",
    archivePrefix = "arXiv",
    reportNumber = "SCIPP-2004-16",
    doi = "10.1088/1126-6708/2004/09/037",
    journal = "JHEP",
    volume = "09",
    pages = "037",
    year = "2004"
}

@article{Brown:2010mf,
    author = "Brown, Adam R. and Dahlen, Alex",
    title = "{Bubbles of Nothing and the Fastest Decay in the Landscape}",
    eprint = "1010.5240",
    archivePrefix = "arXiv",
    primaryClass = "hep-th",
    reportNumber = "PUPT-2357",
    doi = "10.1103/PhysRevD.84.043518",
    journal = "Phys. Rev. D",
    volume = "84",
    pages = "043518",
    year = "2011"
}

@article{Blanco-Pillado:2010vdp,
    author = "Blanco-Pillado, Jose J. and Ramadhan, Handhika S. and Shlaer, Benjamin",
    title = "{Decay of flux vacua to nothing}",
    eprint = "1009.0753",
    archivePrefix = "arXiv",
    primaryClass = "hep-th",
    doi = "10.1088/1475-7516/2010/10/029",
    journal = "JCAP",
    volume = "10",
    pages = "029",
    year = "2010"
}

@article{Blanco-Pillado:2010xww,
    author = "Blanco-Pillado, Jose J. and Shlaer, Benjamin",
    title = "{Bubbles of Nothing in Flux Compactifications}",
    eprint = "1002.4408",
    archivePrefix = "arXiv",
    primaryClass = "hep-th",
    doi = "10.1103/PhysRevD.82.086015",
    journal = "Phys. Rev. D",
    volume = "82",
    pages = "086015",
    year = "2010"
}

@article{Yang:2009wz,
    author = "Yang, I-Sheng",
    title = "{Stretched extra dimensions and bubbles of nothing in a toy model landscape}",
    eprint = "0910.1397",
    archivePrefix = "arXiv",
    primaryClass = "hep-th",
    doi = "10.1103/PhysRevD.81.125020",
    journal = "Phys. Rev. D",
    volume = "81",
    pages = "125020",
    year = "2010"
}

@article{Horowitz:2007pr,
    author = "Horowitz, Gary T. and Orgera, Jacopo and Polchinski, Joe",
    title = "{Nonperturbative Instability of AdS(5) x S**5/Z(k)}",
    eprint = "0709.4262",
    archivePrefix = "arXiv",
    primaryClass = "hep-th",
    doi = "10.1103/PhysRevD.77.024004",
    journal = "Phys. Rev. D",
    volume = "77",
    pages = "024004",
    year = "2008"
}

@article{Montero:2020icj,
    author = "Montero, Miguel and Vafa, Cumrun",
    title = "{Cobordism Conjecture, Anomalies, and the String Lamppost Principle}",
    eprint = "2008.11729",
    archivePrefix = "arXiv",
    primaryClass = "hep-th",
    doi = "10.1007/JHEP01(2021)063",
    journal = "JHEP",
    volume = "01",
    pages = "063",
    year = "2021"
}

@article{Demulder:2023vlo,
    author = "Demulder, Saskia and Lust, Dieter and Raml, Thomas",
    title = "{Topology change and non-geometry at infinite distance}",
    eprint = "2312.07674",
    archivePrefix = "arXiv",
    primaryClass = "hep-th",
    doi = "10.1007/JHEP06(2024)079",
    journal = "JHEP",
    volume = "06",
    pages = "079",
    year = "2024"
}

@article{Acharya:2004qe,
    author = "Acharya, Bobby Samir and Gukov, Sergei",
    title = "{M theory and singularities of exceptional holonomy manifolds}",
    eprint = "hep-th/0409191",
    archivePrefix = "arXiv",
    reportNumber = "HUTP-03-A053, RUNHETC-2003-26",
    doi = "10.1016/j.physrep.2003.10.017",
    journal = "Phys. Rept.",
    volume = "392",
    pages = "121--189",
    year = "2004"
}

@article{Greene:2000yb,
    author = "Greene, Brian R. and Schalm, Koenraad and Shiu, Gary",
    title = "{Dynamical topology change in M theory}",
    eprint = "hep-th/0010207",
    archivePrefix = "arXiv",
    reportNumber = "CU-TP-987, NIKHEF-00-024, YITP-SB-00-43, UPR-900-T, YITP-00-43",
    doi = "10.1063/1.1377038",
    journal = "J. Math. Phys.",
    volume = "42",
    pages = "3171--3187",
    year = "2001"
}

@article{Witten:1998zw,
	title        = {{Anti-de Sitter space, thermal phase transition, and confinement in gauge theories}},
	author       = {Witten, Edward},
	year         = 1998,
	journal      = {Adv. Theor. Math. Phys.},
	volume       = 2,
	pages        = {505--532},
	doi          = {10.4310/ATMP.1998.v2.n3.a3},
	editor       = {Bergstrom, L. and Lindstrom, U.},
	eprint       = {hep-th/9803131},
	archiveprefix = {arXiv},
	reportnumber = {IASSNS-HEP-98-21}
}

@article{Acharya:2019mcu,
	title        = {{Supersymmetry, Ricci Flat Manifolds and the String Landscape}},
	author       = {Acharya, Bobby Samir},
	year         = 2020,
	journal      = {JHEP},
	volume       = {08},
	pages        = 128,
	doi          = {10.1007/JHEP08(2020)128},
	eprint       = {1906.06886},
	archiveprefix = {arXiv},
	primaryclass = {hep-th},
	reportnumber = {KCL-2019-056}
}

@article{Bergman:1999km,
    author = "Bergman, Oren and Gaberdiel, Matthias R.",
    title = "{Dualities of type 0 strings}",
    eprint = "hep-th/9906055",
    archivePrefix = "arXiv",
    reportNumber = "CALT-68-2228, DAMTP-1999-74",
    doi = "10.1088/1126-6708/1999/07/022",
    journal = "JHEP",
    volume = "07",
    pages = "022",
    year = "1999"
}

@article{Harlow:2018fse,
    author = "Harlow, Daniel",
    title = "{TASI Lectures on the Emergence of Bulk Physics in AdS/CFT}",
    eprint = "1802.01040",
    archivePrefix = "arXiv",
    primaryClass = "hep-th",
    doi = "10.22323/1.305.0002",
    journal = "PoS",
    volume = "TASI2017",
    pages = "002",
    year = "2018"
}

@article{Hawking:1975vcx,
    author = "Hawking, S. W.",
    editor = "Gibbons, G. W. and Hawking, S. W.",
    title = "{Particle Creation by Black Holes}",
    doi = "10.1007/BF02345020",
    journal = "Commun. Math. Phys.",
    volume = "43",
    pages = "199--220",
    year = "1975",
    note = "[Erratum: Commun.Math.Phys. 46, 206 (1976)]"
}

@article{Bossard:2024mls,
    author = "Bossard, Guillaume and Casagrande, Gabriele and Dudas, Emilian",
    title = "{Twisted orientifold planes and S-duality without supersymmetry}",
    eprint = "2411.00955",
    archivePrefix = "arXiv",
    primaryClass = "hep-th",
    reportNumber = "CPHT-RR012.072024",
    doi = "10.1007/JHEP02(2025)062",
    journal = "JHEP",
    volume = "02",
    pages = "062",
    year = "2025"
}

@article{Fraiman:2025yrx,
    author = "Fraiman, Bernardo and Parra de Freitas, H{\'e}ctor",
    title = "{Symmetries and dualities in non-supersymmetric CHL strings}",
    eprint = "2511.01674",
    archivePrefix = "arXiv",
    primaryClass = "hep-th",
    reportNumber = "MPP-2025-79; IFT-UAM/CSIC-25-111",
    month = "11",
    year = "2025"
}

@article{Dudas:2001wd,
    author = "Dudas, E. and Mourad, J. and Sagnotti, A.",
    title = "{Charged and uncharged D-branes in various string theories}",
    eprint = "hep-th/0107081",
    archivePrefix = "arXiv",
    reportNumber = "LPT-ORSAY-01-56, ROM2F-01-18",
    doi = "10.1016/S0550-3213(01)00552-1",
    journal = "Nucl. Phys. B",
    volume = "620",
    pages = "109--151",
    year = "2002"
}

@article{Fabinger:2000jd,
    author = "Fabinger, Michal and Horava, Petr",
    title = "{Casimir effect between world branes in heterotic M theory}",
    eprint = "hep-th/0002073",
    archivePrefix = "arXiv",
    reportNumber = "CALT-68-2255, CITUSC-00-004, PRA-HEP-00-02",
    doi = "10.1016/S0550-3213(00)00255-8",
    journal = "Nucl. Phys. B",
    volume = "580",
    pages = "243--263",
    year = "2000"
}

@article{Blumenhagen:1999ad,
    author = "Blumenhagen, Ralph and Kumar, Alok",
    title = "{A Note on orientifolds and dualities of type 0B string theory}",
    eprint = "hep-th/9906234",
    archivePrefix = "arXiv",
    reportNumber = "HUB-EP-99-29, CERN-TH-99-190",
    doi = "10.1016/S0370-2693(99)01002-3",
    journal = "Phys. Lett. B",
    volume = "464",
    pages = "46--52",
    year = "1999"
}

@article{Kachru:1998yy,
    author = "Kachru, Shamit and Silverstein, Eva",
    title = "{Selfdual nonsupersymmetric type II string compactifications}",
    eprint = "hep-th/9808056",
    archivePrefix = "arXiv",
    reportNumber = "SLAC-PUB-7907, LBL-42139, LBNL-42139, UCB-PTH-98-40",
    doi = "10.1088/1126-6708/1998/11/001",
    journal = "JHEP",
    volume = "11",
    pages = "001",
    year = "1998"
}

@article{Blum:1997gw,
    author = "Blum, Julie D. and Dienes, Keith R.",
    title = "{Strong / weak coupling duality relations for nonsupersymmetric string theories}",
    eprint = "hep-th/9707160",
    archivePrefix = "arXiv",
    reportNumber = "IASSNS-HEP-97-80",
    doi = "10.1016/S0550-3213(97)00803-1",
    journal = "Nucl. Phys. B",
    volume = "516",
    pages = "83--159",
    year = "1998"
}

@article{Baykara:2025nnc,
    author = "Baykara, Zihni Kaan and Tomasiello, Alessandro and Vafa, Cumrun",
    title = "{Counting AdS Vacua}",
    eprint = "2512.04151",
    archivePrefix = "arXiv",
    primaryClass = "hep-th",
    month = "12",
    year = "2025"
}

@article{Douglas:2003um,
    author = "Douglas, Michael R.",
    title = "{The Statistics of string / M theory vacua}",
    eprint = "hep-th/0303194",
    archivePrefix = "arXiv",
    reportNumber = "RUNHETC-2003-09",
    doi = "10.1088/1126-6708/2003/05/046",
    journal = "JHEP",
    volume = "05",
    pages = "046",
    year = "2003"
}

@article{Font:1990gx,
    author = "Font, A. and Ibanez, Luis E. and Lust, D. and Quevedo, F.",
    title = "{Strong - weak coupling duality and nonperturbative effects in string theory}",
    reportNumber = "CERN-TH-5790-90",
    doi = "10.1016/0370-2693(90)90523-9",
    journal = "Phys. Lett. B",
    volume = "249",
    pages = "35--43",
    year = "1990"
}

@article{Taylor:2015xtz,
    author = "Taylor, Washington and Wang, Yi-Nan",
    title = "{The F-theory geometry with most flux vacua}",
    eprint = "1511.03209",
    archivePrefix = "arXiv",
    primaryClass = "hep-th",
    reportNumber = "MIT-CTP-4732",
    doi = "10.1007/JHEP12(2015)164",
    journal = "JHEP",
    volume = "12",
    pages = "164",
    year = "2015"
}

@article{Susskind:2003kw,
    author = "Susskind, Leonard",
    editor = "Carr, Bernard J.",
    title = "{The Anthropic landscape of string theory}",
    eprint = "hep-th/0302219",
    archivePrefix = "arXiv",
    pages = "247--266",
    month = "2",
    year = "2003"
}

@article{Bergman:1997rf,
    author = "Bergman, Oren and Gaberdiel, Matthias R.",
    title = "{A Nonsupersymmetric open string theory and S duality}",
    eprint = "hep-th/9701137",
    archivePrefix = "arXiv",
    reportNumber = "HUTP-97-A003, BRX-TH-402",
    doi = "10.1016/S0550-3213(97)00309-X",
    journal = "Nucl. Phys. B",
    volume = "499",
    pages = "183--204",
    year = "1997"
}

@article{Baykara:2026vdc,
    author = "Baykara, Zihni Kaan and Delgado, Matilda and Dudas, Emilian and De Freitas, Hector Parra and Vafa, Cumrun",
    title = "{A Duality Web for Non-Supersymmetric Strings}",
    eprint = "2604.07433",
    archivePrefix = "arXiv",
    primaryClass = "hep-th",
    month = "4",
    year = "2026"
}

@article{Baykara:2026gem,
    author = "Baykara, Zihni Kaan and Dudas, Emilian and Vafa, Cumrun",
    title = "{M-theory on $S^1\vee S^1$ as Type 0A}",
    eprint = "2603.13468",
    archivePrefix = "arXiv",
    primaryClass = "hep-th",
    month = "3",
    year = "2026"
}

@article{Altavista:2026evd,
    author = "Altavista, Chiara and Anastasi, Edoardo and Raucci, Salvatore and Uranga, Angel M. and Wang, Chuying",
    title = "{Ho{\v{r}}ava-Witten theory on ${\mathbf{S}}^1\vee{\mathbf{S}}^1$ as type 0 orientifold}",
    eprint = "2603.25786",
    archivePrefix = "arXiv",
    primaryClass = "hep-th",
    reportNumber = "IFT-UAM/CSIC-26-39",
    month = "3",
    year = "2026"
}

@article{Townsend:1995kk,
	title        = {{The eleven-dimensional supermembrane revisited}},
	author       = {Townsend, P. K.},
	year         = 1995,
	journal      = {Phys. Lett. B},
	volume       = 350,
	pages        = {184--187},
	doi          = {10.1016/0370-2693(95)00397-4},
	eprint       = {hep-th/9501068},
	archiveprefix = {arXiv},
	reportnumber = {DAMTP-R-95-2}
}

@inproceedings{Sen:1998kd,
    author = "Sen, Ashoke",
    title = "{Developments in superstring theory}",
    booktitle = "{29th International Conference on High-Energy Physics}",
    eprint = "hep-ph/9810356",
    archivePrefix = "arXiv",
    reportNumber = "MRI-PHY-981066",
    pages = "371--381",
    month = "7",
    year = "1998"
}

@article{Bavard1991,
  author  = {Bavard, Christophe},
  title   = {Longueur stable des commutateurs},
  journal = {L'Enseignement Math{\'e}matique. Revue Internationale. IIe S{\'e}rie},
  volume  = {37},
  year    = {1991},
  number  = {1-2},
  pages   = {109--150},
  issn    = {0013-8584},
  mrnumber = {1114911},
  zbl     = {0739.20019},
  url     = {http://doi.org/10.5169/seals-58742}
}

@article{Hellerman:2010dv,
    author = "Hellerman, Simeon and Kleban, Matthew",
    title = "{Dynamical Cobordisms in General Relativity and String Theory}",
    eprint = "1009.3277",
    archivePrefix = "arXiv",
    primaryClass = "hep-th",
    reportNumber = "IPMU-10-0156",
    doi = "10.1007/JHEP02(2011)022",
    journal = "JHEP",
    volume = "02",
    pages = "022",
    year = "2011"
}

@article{Etheredge:2024tok,
    author = "Etheredge, Muldrow and Heidenreich, Ben and Rudelius, Tom and Ruiz, Ignacio and Valenzuela, Irene",
    title = "{Taxonomy of infinite distance limits}",
    eprint = "2405.20332",
    archivePrefix = "arXiv",
    primaryClass = "hep-th",
    reportNumber = "ACFI-T24-04, CERN-TH-2024-067, IFT-UAM/CSIC-23-64",
    doi = "10.1007/JHEP03(2025)213",
    journal = "JHEP",
    volume = "03",
    pages = "213",
    year = "2025"
}

@inproceedings{Doran:2005gu,
    author = "Doran, Charles F. and Morgan, John W.",
    title = "{Mirror symmetry and integral variations of Hodge structure underlying one parameter families of Calabi-Yau threefolds}",
    booktitle = "{Workshop on Calabi-Yau Varieties and Mirror Symmetry}",
    eprint = "math/0505272",
    archivePrefix = "arXiv",
    pages = "517--537",
    month = "5",
    year = "2005"
}

@article{McNamara:2022lrw,
    author = "McNamara, Jacob and Reece, Matthew",
    title = "{Reflections on Parity Breaking}",
    eprint = "2212.00039",
    archivePrefix = "arXiv",
    primaryClass = "hep-th",
    month = "11",
    year = "2022"
}

@book{Blumenhagen:2013fgp,
    author = {Blumenhagen, Ralph and L{\"u}st, Dieter and Theisen, Stefan},
    title = "{Basic concepts of string theory}",
    doi = "10.1007/978-3-642-29497-6",
    isbn = "978-3-642-29496-9",
    publisher = "Springer",
    address = "Heidelberg, Germany",
    series = "Theoretical and Mathematical Physics",
    year = "2013"
}

@book{HatcherTOPOLOGY,
  address = {Cambridge},
  author = {Hatcher, Allen},
  isbn = {0-521-79160-X; 0-521-79540-0},
  mrclass = {55-01 (55-00)},
  mrnumber = {1867354 (2002k:55001)},
  mrreviewer = {Donald W. Kahn},
  pages = {xii+544},
  publisher = {Cambridge University Press},
  title = {Algebraic topology},
  username = {mwpb479},
  year = 2002
}

@article{Brav_2014,
   title={Thin monodromy in $\mathsf{Sp}(4)$},
   volume={150},
   ISSN={1570-5846},
   url={http://dx.doi.org/10.1112/S0010437X13007550},
   DOI={10.1112/s0010437x13007550},
   number={3},
   journal={Compositio Mathematica},
   publisher={Wiley},
   author={Brav, Christopher and Thomas, Hugh},
   year={2014},
   month=mar, pages={333–343} }

@article{HOFMANN_2015,
   title={Some monodromy groups of finite index in $\mathsf{Sp}(4)$},
   volume={99},
   ISSN={1446-8107},
   url={http://dx.doi.org/10.1017/S1446788715000014},
   DOI={10.1017/s1446788715000014},
   number={1},
   journal={Journal of the Australian Mathematical Society},
   publisher={Cambridge University Press (CUP)},
   author={Hofmann, J\"org and van Straten, Duco},
   year={2015},
   month=mar, pages={48–62} }

@misc{almkvist2010tablescalabiyauequations,
      title={Tables of Calabi--Yau equations}, 
      author={Gert Almkvist and Christian van Enckevort and Duco van Straten and Wadim Zudilin},
      year={2010},
      eprint={math/0507430},
      archivePrefix={arXiv},
      primaryClass={math.AG},
      url={https://arxiv.org/abs/math/0507430}, 
}

@article{Grieco:2025bjy,
    author = "Grieco, Alessandra and Ruiz, Ignacio and Valenzuela, Irene",
    title = "{EFT strings and dualities in 4d $\mathcal{N}=1$}",
    eprint = "2504.16984",
    archivePrefix = "arXiv",
    primaryClass = "hep-th",
    reportNumber = "IFT-UAM/CSIC-25-39, CERN-TH-2025-086",
    month = "4",
    year = "2025"
}

@article{Hazrat2012CommutatorWI,
    title={Commutator width in Chevalley groups},
    author={Roozbeh Hazrat and A. Stepanov and Nikolai Vavilov and Zuhong Zhang},
    journal={arXiv: Rings and Algebras},
    year={2012},
    volume={33},
    pages={139-170},
    url={https://api.semanticscholar.org/CorpusID:117154801} 
}

@misc{ConradSL2Z,
  author        = {Keith Conrad},
  title         = {$\rm SL_2(\mathbb{Z})$},
  month         = {December},
  year          = {2018},
  publisher={University of Connecticut, Math Department},
  url={\url{https://kconrad.math.uconn.edu/blurbs/grouptheory/SL(2,Z).pdf}}
}

@article{McNamara:2021cuo,
    author = "McNamara, Jacob",
    title = "{Gravitational Solitons and Completeness}",
    eprint = "2108.02228",
    archivePrefix = "arXiv",
    primaryClass = "hep-th",
    month = "8",
    year = "2021"
}

@article{Hamada:2025duq,
    author = "Hamada, Yu and Hamada, Yuta and Kimura, Hayate",
    title = "{Black String in the Standard Model}",
    eprint = "2501.05678",
    archivePrefix = "arXiv",
    primaryClass = "hep-th",
    reportNumber = "KEK-TH-2679, DESY-24-222",
    month = "1",
    year = "2025"
}

@article{Debray:2023yrs,
    author = "Debray, Arun and Dierigl, Markus and Heckman, Jonathan J. and Montero, Miguel",
    title = "{The Chronicles of IIBordia: Dualities, Bordisms, and the Swampland}",
    eprint = "2302.00007",
    archivePrefix = "arXiv",
    primaryClass = "hep-th",
    reportNumber = "LMU-ASC 06/23, IFT-UAM/CSIC-23-7",
    month = "1",
    year = "2023"
}

@article{Braeger:2025kra,
    author = "Braeger, Noah and Debray, Arun and Dierigl, Markus and Heckman, Jonathan J. and Montero, Miguel",
    title = "{Cobordism Utopia: U-Dualities, Bordisms, and the Swampland}",
    eprint = "2505.15885",
    archivePrefix = "arXiv",
    primaryClass = "hep-th",
    reportNumber = "CERN-TH-2025-103, IFT-025-51",
    month = "5",
    year = "2025"
}

@article{Schmid:1973cdo,
    author = "Schmid, Wilfried",
    title = "{Variation of hodge structure: The singularities of the period mapping}",
    doi = "10.1007/BF01389674",
    journal = "Invent. Math.",
    volume = "22",
    number = "3",
    pages = "211--319",
    year = "1973"
}

@article{Heckman:2025wqd,
    author = "Heckman, Jonathan J. and McNamara, Jacob and Parra-Martinez, Julio and Torres, Ethan",
    title = "{GSO Defects: IIA/IIB Walls and the Surprisingly Stable $\mathrm{R}7$-Brane}",
    eprint = "2507.21210",
    archivePrefix = "arXiv",
    primaryClass = "hep-th",
    reportNumber = "CERN-TH-2025-136",
    month = "7",
    year = "2025"
}

@article{Julia:1980gr,
    author = "Julia, B.",
    title = "{GROUP DISINTEGRATIONS}",
    reportNumber = "LPTENS-80-16",
    journal = "Conf. Proc. C",
    volume = "8006162",
    pages = "331--350",
    year = "1980"
}

@article{Avni_Meiri_2019,
    title={Words have bounded width in $\operatorname{SL}(n,\mathbb{Z})$},
    volume={155},
    DOI={10.1112/S0010437X19007334},
    number={7},
    journal={Compositio Mathematica},
    author={Avni, Nir and Meiri, Chen},
    year={2019},
    pages={1245–1258}}

@article{DENNIS1988150,
title = {On a question of M. Newman on the number of commutators},
journal = {Journal of Algebra},
volume = {118},
number = {1},
pages = {150-161},
year = {1988},
issn = {0021-8693},
doi = {https://doi.org/10.1016/0021-8693(88)90055-5},
url = {https://www.sciencedirect.com/science/article/pii/0021869388900555},
author = {R.K Dennis and L.N Vaserstein}
}

@article{Newman1987Unimodular,
  author    = {Morris Newman},
  title     = {Unimodular commutators},
  journal   = {Proceedings of the American Mathematical Society},
  volume    = {101},
  number    = {4},
  pages     = {605--609},
  year      = {1987},
  doi       = {10.1090/S0002-9939-1987-0911017-6},
  mrnumber  = {911017},
  issn      = {0002-9939},
}

@article{Cremmer1981,
  author  = {Cremmer, Eugène},
  title   = {Supergravities in 5 dimensions},
  journal = {In *Superspace and Supergravity*, edited by S.W. Hawking and M. Rocek},
  year    = {1981},
  publisher = {Cambridge University Press},
  address   = {Cambridge},
  pages   = {267--282},
  note    = {Based on lectures given at the Nuffield Workshop, Cambridge, June 16 - July 12, 1980}
}

@article{Dierigl:2023jdp,
    author = "Dierigl, Markus and Heckman, Jonathan J. and Montero, Miguel and Torres, Ethan",
    title = "{R7-branes as charge conjugation operators}",
    eprint = "2305.05689",
    archivePrefix = "arXiv",
    primaryClass = "hep-th",
    reportNumber = "LMU-ASC 16/23, IFT-UAM/CSIC-23-50",
    doi = "10.1103/PhysRevD.109.046004",
    journal = "Phys. Rev. D",
    volume = "109",
    number = "4",
    pages = "046004",
    year = "2024"
}

@article{Dierigl:2022reg,
    author = "Dierigl, Markus and Heckman, Jonathan J. and Montero, Miguel and Torres, Ethan",
    title = "{IIB string theory explored: Reflection 7-branes}",
    eprint = "2212.05077",
    archivePrefix = "arXiv",
    primaryClass = "hep-th",
    reportNumber = "LMU-ASC 31/22",
    doi = "10.1103/PhysRevD.107.086015",
    journal = "Phys. Rev. D",
    volume = "107",
    number = "8",
    pages = "086015",
    year = "2023"
}

@article{Andriot:2026lac,
    author = "Andriot, David",
    title = "{Dark energy from string theory: an introductory review}",
    eprint = "2603.25797",
    archivePrefix = "arXiv",
    primaryClass = "hep-th",
    month = "3",
    year = "2026"
}

@article{Dierigl:2026psj,
    author = "Dierigl, Markus and Ruiz, Ignacio",
    title = "{A missing link: Brane networks and the Cobordism Conjecture}",
    eprint = "2605.18952",
    archivePrefix = "arXiv",
    primaryClass = "hep-th",
    reportNumber = "CERN-TH-2026-108",
    month = "5",
    year = "2026"
}

@article{CULLER1981133,
title = {Using surfaces to solve equations in free groups},
journal = {Topology},
volume = {20},
number = {2},
pages = {133-145},
year = {1981},
issn = {0040-9383},
doi = {https://doi.org/10.1016/0040-9383(81)90033-1},
url = {https://www.sciencedirect.com/science/article/pii/0040938381900331},
author = {Marc Culler}
}

@article{Raman:2024fcv,
    author = "Raman, Sanjay and Vafa, Cumrun",
    title = "{Swampland and the Geometry of Marked Moduli Spaces}",
    eprint = "2405.11611",
    archivePrefix = "arXiv",
    primaryClass = "hep-th",
    month = "5",
    year = "2024"
}

@article{Rudelius:2024vmc,
    author = "Rudelius, Tom",
    title = "{A symmetry-centric perspective on the geometry of the string landscape and the swampland}",
    eprint = "2405.12980",
    archivePrefix = "arXiv",
    primaryClass = "hep-th",
    doi = "10.1142/S0218271824410037",
    journal = "Int. J. Mod. Phys. D",
    volume = "33",
    number = "15",
    pages = "2441003",
    year = "2024"
}

@article{Dierigl:2020lai,
    author = "Dierigl, Markus and Heckman, Jonathan J.",
    title = "{Swampland cobordism conjecture and non-Abelian duality groups}",
    eprint = "2012.00013",
    archivePrefix = "arXiv",
    primaryClass = "hep-th",
    doi = "10.1103/PhysRevD.103.066006",
    journal = "Phys. Rev. D",
    volume = "103",
    number = "6",
    pages = "066006",
    year = "2021"
}

@article{Kleban:2007kk,
    author = "Kleban, Matthew and Redi, Michele",
    title = "{Expanding F-Theory}",
    eprint = "0705.2020",
    archivePrefix = "arXiv",
    primaryClass = "hep-th",
    doi = "10.1088/1126-6708/2007/09/038",
    journal = "JHEP",
    volume = "09",
    pages = "038",
    year = "2007"
}

@article{Alexandrov:2008gh,
    author = "Alexandrov, Sergei and Pioline, Boris and Saueressig, Frank and Vandoren, Stefan",
    title = "{D-instantons and twistors}",
    eprint = "0812.4219",
    archivePrefix = "arXiv",
    primaryClass = "hep-th",
    reportNumber = "LPTA-08-073, ITP-UU-08-75, SPIN-08-58, IPHT-T08-203",
    doi = "10.1088/1126-6708/2009/03/044",
    journal = "JHEP",
    volume = "03",
    pages = "044",
    year = "2009"
}

@article{Alexandrov:2011va,
    author = "Alexandrov, Sergei",
    title = "{Twistor Approach to String Compactifications: a Review}",
    eprint = "1111.2892",
    archivePrefix = "arXiv",
    primaryClass = "hep-th",
    reportNumber = "L2C:11-255",
    doi = "10.1016/j.physrep.2012.09.005",
    journal = "Phys. Rept.",
    volume = "522",
    pages = "1--57",
    year = "2013"
}

@article{Tachikawa:2018njr,
    author = "Tachikawa, Yuji and Yonekura, Kazuya",
    title = "{Why are fractional charges of orientifolds compatible with Dirac quantization?}",
    eprint = "1805.02772",
    archivePrefix = "arXiv",
    primaryClass = "hep-th",
    reportNumber = "IPMU-18-0067",
    doi = "10.21468/SciPostPhys.7.5.058",
    journal = "SciPost Phys.",
    volume = "7",
    number = "5",
    pages = "058",
    year = "2019"
}

@article{Delgado:2024skw,
    author = "Delgado, Matilda and van de Heisteeg, Damian and Raman, Sanjay and Torres, Ethan and Vafa, Cumrun and Xu, Kai",
    title = "{Finiteness and the Emergence of Dualities}",
    eprint = "2412.03640",
    archivePrefix = "arXiv",
    primaryClass = "hep-th",
    reportNumber = "MPP-2024-224, CERN-TH-2024-204",
    month = "12",
    year = "2024"
}

@article{Weigand:2018rez,
    author = "Weigand, Timo",
    title = "{F-theory}",
    eprint = "1806.01854",
    archivePrefix = "arXiv",
    primaryClass = "hep-th",
    reportNumber = "CERN-TH-2018-126",
    journal = "PoS",
    volume = "TASI2017",
    pages = "016",
    year = "2018"
}

@article{Witten:1995im,
    author = "Witten, Edward",
    title = "{Bound states of strings and p-branes}",
    eprint = "hep-th/9510135",
    archivePrefix = "arXiv",
    reportNumber = "IASSNS-HEP-95-83",
    doi = "10.1016/0550-3213(95)00610-9",
    journal = "Nucl. Phys. B",
    volume = "460",
    pages = "335--350",
    year = "1996"
}

@article{Lee:2019wij,
    author = "Lee, Seung-Joo and Lerche, Wolfgang and Weigand, Timo",
    title = "{Emergent strings from infinite distance limits}",
    eprint = "1910.01135",
    archivePrefix = "arXiv",
    primaryClass = "hep-th",
    reportNumber = "CERN-TH-2019-159",
    doi = "10.1007/JHEP02(2022)190",
    journal = "JHEP",
    volume = "02",
    pages = "190",
    year = "2022"
}

@article{Lee:2019xtm,
    author = "Lee, Seung-Joo and Lerche, Wolfgang and Weigand, Timo",
    title = "{Emergent strings, duality and weak coupling limits for two-form fields}",
    eprint = "1904.06344",
    archivePrefix = "arXiv",
    primaryClass = "hep-th",
    reportNumber = "CERN-TH-2019-044",
    doi = "10.1007/JHEP02(2022)096",
    journal = "JHEP",
    volume = "02",
    pages = "096",
    year = "2022"
}

@misc{heuer2020computingcommutatorlengthhard,
      title={Computing commutator length is hard}, 
      author={Nicolaus Heuer},
      year={2020},
      eprint={2001.10230},
      archivePrefix={arXiv},
      primaryClass={math.GR},
      url={https://arxiv.org/abs/2001.10230}, 
}

@book{Calegari2009scl,
  author    = {Calegari, Danny},
  title     = {scl (stable commutator length)},
  series    = {MSJ Memoirs},
  volume    = {20},
  year      = {2009},
  publisher = {Mathematical Society of Japan},
  address   = {Tokyo},
  isbn      = {978-4-931469-53-2},
  mrnumber  = {2527220}
}

@article{Cota:2022yjw,
    author = "Cota, Cesar Fierro and Mininno, Alessandro and Weigand, Timo and Wiesner, Max",
    title = "{The asymptotic Weak Gravity Conjecture for open strings}",
    eprint = "2208.00009",
    archivePrefix = "arXiv",
    primaryClass = "hep-th",
    reportNumber = "ZMP-HH/22-15",
    doi = "10.1007/JHEP11(2022)058",
    journal = "JHEP",
    volume = "11",
    pages = "058",
    year = "2022"
}

@article{Harlow:2022ich,
    author = "Harlow, Daniel and Heidenreich, Ben and Reece, Matthew and Rudelius, Tom",
    title = "{Weak gravity conjecture}",
    eprint = "2201.08380",
    archivePrefix = "arXiv",
    primaryClass = "hep-th",
    reportNumber = "ACFI-T22-01",
    doi = "10.1103/RevModPhys.95.035003",
    journal = "Rev. Mod. Phys.",
    volume = "95",
    number = "3",
    pages = "035003",
    year = "2023"
}

@article{VanRiet:2023pnx,
    author = "Van Riet, Thomas and Zoccarato, Gianluca",
    title = "{Beginners lectures on flux compactifications and related Swampland topics}",
    eprint = "2305.01722",
    archivePrefix = "arXiv",
    primaryClass = "hep-th",
    doi = "10.1016/j.physrep.2023.11.003",
    journal = "Phys. Rept.",
    volume = "1049",
    pages = "1--51",
    year = "2024"
}

@article{Agmon:2022thq,
    author = "Agmon, Nathan Benjamin and Bedroya, Alek and Kang, Monica Jinwoo and Vafa, Cumrun",
    title = "{Lectures on the string landscape and the Swampland}",
    eprint = "2212.06187",
    archivePrefix = "arXiv",
    primaryClass = "hep-th",
    month = "12",
    year = "2022"
}

@article{vandeHeisteeg:2023dlw,
    author = "van de Heisteeg, Damian and Vafa, Cumrun and Wiesner, Max and Wu, David H.",
    title = "{Species scale in diverse dimensions}",
    eprint = "2310.07213",
    archivePrefix = "arXiv",
    primaryClass = "hep-th",
    doi = "10.1007/JHEP05(2024)112",
    journal = "JHEP",
    volume = "05",
    pages = "112",
    year = "2024"
}

@article{vandeHeisteeg:2023ubh,
    author = "van de Heisteeg, Damian and Vafa, Cumrun and Wiesner, Max",
    title = "{Bounds on Species Scale and the Distance Conjecture}",
    eprint = "2303.13580",
    archivePrefix = "arXiv",
    primaryClass = "hep-th",
    doi = "10.1002/prop.202300143",
    journal = "Fortsch. Phys.",
    volume = "71",
    number = "10-11",
    pages = "2300143",
    year = "2023"
}

@article{Blanco-Pillado:2016xvf,
    author = "Blanco-Pillado, Jose J. and Shlaer, Benjamin and Sousa, Kepa and Urrestilla, Jon",
    title = "{Bubbles of Nothing and Supersymmetric Compactifications}",
    eprint = "1606.03095",
    archivePrefix = "arXiv",
    primaryClass = "hep-th",
    doi = "10.1088/1475-7516/2016/10/002",
    journal = "JCAP",
    volume = "10",
    pages = "002",
    year = "2016"
}

@article{Ookouchi:2024tfz,
    author = "Ookouchi, Yutaka and Sato, Ryota and Tsukahara, Sohei",
    title = "{Decay of Kaluza-Klein vacuum via singular instanton}",
    eprint = "2404.13917",
    archivePrefix = "arXiv",
    primaryClass = "hep-th",
    doi = "10.1007/JHEP10(2025)040",
    journal = "JHEP",
    volume = "10",
    pages = "040",
    year = "2025"
}

@article{Long:2021jlv,
    author = "Long, Cody and Montero, Miguel and Vafa, Cumrun and Valenzuela, Irene",
    title = "{The desert and the swampland}",
    eprint = "2112.11467",
    archivePrefix = "arXiv",
    primaryClass = "hep-th",
    reportNumber = "IFT-UAM/CSIC-21-156",
    doi = "10.1007/JHEP03(2023)109",
    journal = "JHEP",
    volume = "03",
    pages = "109",
    year = "2023"
}

@article{Calderon-Infante:2023ler,
    author = "Calder\'on-Infante, Jos\'e and Castellano, Alberto and Herr\'aez, Alvaro and Ib\'a\~nez, Luis E.",
    title = "{Entropy bounds and the species scale distance conjecture}",
    eprint = "2306.16450",
    archivePrefix = "arXiv",
    primaryClass = "hep-th",
    doi = "10.1007/JHEP01(2024)039",
    journal = "JHEP",
    volume = "01",
    pages = "039",
    year = "2024"
}

@article{Obers:1998fb,
	Archiveprefix = {arXiv},
	Author = {Obers, N. A. and Pioline, B.},
	Doi = {10.1016/S0370-1573(99)00004-6},
	Eprint = {hep-th/9809039},
	Journal = {Phys. Rept.},
	Pages = {113--225},
	Reportnumber = {CERN-TH-98-282, CPHT-S639-0898},
	Title = {{U duality and M theory}},
	Volume = {318},
	Year = {1999}}

@article{Hull:1994ys,
	Archiveprefix = {arXiv},
	Author = {Hull, C. M. and Townsend, P. K.},
	Doi = {10.1016/0550-3213(94)00559-W},
	Eprint = {hep-th/9410167},
	Journal = {Nucl. Phys. B},
	Pages = {109--137},
	Reportnumber = {QMW-94-30, DAMTP-R-94-33},
	Title = {{Unity of superstring dualities}},
	Volume = {438},
	Year = {1995}}

@article{Greene:1995hu,
	Archiveprefix = {arXiv},
	Author = {Greene, Brian R. and Morrison, David R. and Strominger, Andrew},
	Doi = {10.1016/0550-3213(95)00371-X},
	Eprint = {hep-th/9504145},
	Journal = {Nucl. Phys.},
	Pages = {109-120},
	Primaryclass = {hep-th},
	Reportnumber = {CLNS-95-1335},
	Slaccitation = {%%CITATION = HEP-TH/9504145;%%},
	Title = {{Black hole condensation and the unification of string vacua}},
	Volume = {B451},
	Year = {1995}}

@article{Cordova:2022rer,
	Archiveprefix = {arXiv},
	Author = {Cordova, Clay and Ohmori, Kantaro and Rudelius, Tom},
	Eprint = {2202.05866},
	Month = {2},
	Primaryclass = {hep-th},
	Title = {{Generalized Symmetry Breaking Scales and Weak Gravity Conjectures}},
	Year = {2022}}

@article{Daus:2020vtf,
	Archiveprefix = {arXiv},
	Author = {Daus, Tristan and Hebecker, Arthur and Leonhardt, Sascha and March-Russell, John},
	Doi = {10.1016/j.nuclphysb.2020.115167},
	Eprint = {2002.02456},
	Journal = {Nucl. Phys. B},
	Pages = {115167},
	Primaryclass = {hep-th},
	Title = {{Towards a Swampland Global Symmetry Conjecture using weak gravity}},
	Volume = {960},
	Year = {2020}}

@article{Fichet:2019ugl,
	Archiveprefix = {arXiv},
	Author = {Fichet, Sylvain and Saraswat, Prashant},
	Doi = {10.1007/JHEP01(2020)088},
	Eprint = {1909.02002},
	Journal = {JHEP},
	Pages = {088},
	Primaryclass = {hep-th},
	Reportnumber = {CALT-TH-2019-32},
	Title = {{Approximate Symmetries and Gravity}},
	Volume = {01},
	Year = {2020}}

@article{Lanza:2022zyg,
    author = "Lanza, Stefano and Marchesano, Fernando and Martucci, Luca and Valenzuela, Irene",
    title = "{Large Field Distances from EFT strings}",
    eprint = "2205.04532",
    archivePrefix = "arXiv",
    primaryClass = "hep-th",
    doi = "10.22323/1.406.0169",
    journal = "PoS",
    volume = "CORFU2021",
    pages = "169",
    year = "2022"
}

@article{Marchesano:2022avb,
    author = "Marchesano, Fernando and Wiesner, Max",
    title = "{4d strings at strong coupling}",
    eprint = "2202.10466",
    archivePrefix = "arXiv",
    primaryClass = "hep-th",
    reportNumber = "IFT-UAM/CSIC-22-13",
    doi = "10.1007/JHEP08(2022)004",
    journal = "JHEP",
    volume = "08",
    pages = "004",
    year = "2022"
}

@article{Pantev:2016nze,
    author = "Pantev, T. and Sharpe, E.",
    title = "{Duality group actions on fermions}",
    eprint = "1609.00011",
    archivePrefix = "arXiv",
    primaryClass = "hep-th",
    doi = "10.1007/JHEP11(2016)171",
    journal = "JHEP",
    volume = "11",
    pages = "171",
    year = "2016"
}

@article{Chakrabhavi:2025bfi,
    author = "Chakrabhavi, Vivek and Debray, Arun and Dierigl, Markus and Heckman, Jonathan J.",
    title = "{Exploring Pintopia: Reflection Branes, Bordisms, and U-Dualities}",
    eprint = "2509.03573",
    archivePrefix = "arXiv",
    primaryClass = "hep-th",
    reportNumber = "CERN-TH-2025-180",
    month = "9",
    year = "2025"
}

@article{Alexandrov:2013yva,
    author = "Alexandrov, Sergei and Manschot, Jan and Persson, Daniel and Pioline, Boris",
    editor = "Donagi, Ron and Katz, Sheldon and Klemm, Albrecht and Morrison, David R.",
    title = "{Quantum hypermultiplet moduli spaces in N=2 string vacua: a review}",
    eprint = "1304.0766",
    archivePrefix = "arXiv",
    primaryClass = "hep-th",
    reportNumber = "CERN-PH-TH-2013-048",
    journal = "Proc. Symp. Pure Math.",
    volume = "90",
    pages = "181--212",
    year = "2015"
}

@article{Alexandrov:2010np,
    author = "Alexandrov, Sergei and Persson, Daniel and Pioline, Boris",
    title = "{On the topology of the hypermultiplet moduli space in type II/CY string vacua}",
    eprint = "1009.3026",
    archivePrefix = "arXiv",
    primaryClass = "hep-th",
    doi = "10.1103/PhysRevD.83.026001",
    journal = "Phys. Rev. D",
    volume = "83",
    pages = "026001",
    year = "2011"
}

@article{bridson2014actions,
  title={Actions of arithmetic groups on homology spheres and acyclic homology manifolds},
  author={Bridson, Martin R and Grunewald, Fritz and Vogtmann, Karen},
  journal={Mathematische Zeitschrift},
  volume={276},
  number={1},
  pages={387--395},
  year={2014},
  publisher={Springer}
}

@article{Martucci:2022krl,
    author = "Martucci, Luca and Risso, Nicolo and Weigand, Timo",
    title = "{Quantum gravity bounds on $ \mathcal{N} $ = 1 effective theories in four dimensions}",
    eprint = "2210.10797",
    archivePrefix = "arXiv",
    primaryClass = "hep-th",
    doi = "10.1007/JHEP03(2023)197",
    journal = "JHEP",
    volume = "03",
    pages = "197",
    year = "2023"
}

@article{Wiesner:2022qys,
    author = "Wiesner, Max",
    title = "{Light strings and strong coupling in F-theory}",
    eprint = "2210.14238",
    archivePrefix = "arXiv",
    primaryClass = "hep-th",
    doi = "10.1007/JHEP04(2023)088",
    journal = "JHEP",
    volume = "04",
    pages = "088",
    year = "2023"
}

@article{Marchesano:2022axe,
    author = "Marchesano, Fernando and Melotti, Luca",
    title = "{EFT strings and emergence}",
    eprint = "2211.01409",
    archivePrefix = "arXiv",
    primaryClass = "hep-th",
    doi = "10.1007/JHEP02(2023)112",
    journal = "JHEP",
    volume = "02",
    pages = "112",
    year = "2023"
}

@article{Martucci:2024trp,
    author = "Martucci, Luca and Risso, Nicol{\`o} and Valenti, Alessandro and Vecchi, Luca",
    title = "{Wormholes in the axiverse, and the species scale}",
    eprint = "2404.14489",
    archivePrefix = "arXiv",
    primaryClass = "hep-th",
    doi = "10.1007/JHEP07(2024)240",
    journal = "JHEP",
    volume = "07",
    pages = "240",
    year = "2024"
}

@article{Zeldovich:1976vq,
	Author = {Zeldovich, Ya.~B.},
	Doi = {10.1016/0375-9601(76)90783-0},
	Journal = {Phys. Lett. A},
	Pages = {254},
	Title = {{A New Type of Radioactive Decay: Gravitational Annihilation of Baryons}},
	Volume = {59},
	Year = {1976}}

@article{Zeldovich:1977be,
	Author = {Zeldovich, Ya.~B.},
	Journal = {Zh. Eksp. Teor. Fiz.},
	Pages = {18--21},
	Title = {{A Novel Type of Radioactive Decay: Gravitational Baryon Annihilation}},
	Volume = {72},
	Year = {1977}}

@article{McNamara:2019rup,
	Archiveprefix = {arXiv},
	Author = {McNamara, Jacob and Vafa, Cumrun},
	Eprint = {1909.10355},
	Month = {9},
	Primaryclass = {hep-th},
	Title = {{Cobordism Classes and the Swampland}},
	Year = {2019}}

@article{Yonekura:2022reu,
    author = "Yonekura, Kazuya",
    title = "{Heterotic global anomalies and torsion Witten index}",
    eprint = "2207.13858",
    archivePrefix = "arXiv",
    primaryClass = "hep-th",
    reportNumber = "TU-1163",
    doi = "10.1007/JHEP10(2022)114",
    journal = "JHEP",
    volume = "10",
    pages = "114",
    year = "2022"
}

@book{Brown1982,
  author    = {Brown, Kenneth S.},
  title     = {Cohomology of Groups},
  series    = {Graduate Texts in Mathematics},
  volume    = {87},
  year      = {1982},
  publisher = {Springer-Verlag},
  address   = {New York-Berlin},
  isbn      = {0-387-90688-6},
  doi       = {10.1007/978-1-4684-9327-6}
}

@book{Polchinski:1998rr,
	Author = {Polchinski, J.},
	Doi = {10.1017/CBO9780511618123},
	Isbn = {978-0-511-25228-0, 978-0-521-63304-8, 978-0-521-67228-3},
	Month = {12},
	Publisher = {Cambridge University Press},
	Series = {Cambridge Monographs on Mathematical Physics},
	Title = {{String theory. Vol. 2: Superstring theory and beyond}},
	Year = {2007}}

@article{Greene:1989ya,
    author = "Greene, Brian R. and Shapere, Alfred D. and Vafa, Cumrun and Yau, Shing-Tung",
    title = "{Stringy Cosmic Strings and Noncompact Calabi-Yau Manifolds}",
    reportNumber = "HUTP-89-A047, IASSNS-HEP-89-47",
    doi = "10.1016/0550-3213(90)90248-C",
    journal = "Nucl. Phys. B",
    volume = "337",
    pages = "1--36",
    year = "1990"
}

@article{Lanza:2020qmt,
	Archiveprefix = {arXiv},
	Author = {Lanza, Stefano and Marchesano, Fernando and Martucci, Luca and Valenzuela, Irene},
	Doi = {10.1007/JHEP02(2021)006},
	Eprint = {2006.15154},
	Journal = {JHEP},
	Pages = {006},
	Primaryclass = {hep-th},
	Title = {{Swampland Conjectures for Strings and Membranes}},
	Volume = {02},
	Year = {2021}}

@article{Banks:2010zn,
	Archiveprefix = {arXiv},
	Author = {Banks, Tom and Seiberg, Nathan},
	Doi = {10.1103/PhysRevD.83.084019},
	Eprint = {1011.5120},
	Journal = {Phys. Rev.},
	Pages = {084019},
	Primaryclass = {hep-th},
	Slaccitation = {%%CITATION = ARXIV:1011.5120;%%},
	Title = {{Symmetries and Strings in Field Theory and Gravity}},
	Volume = {D83},
	Year = {2011}}

@article{Vafa:2005ui,
	Archiveprefix = {arXiv},
	Author = {Vafa, Cumrun},
	Eprint = {hep-th/0509212},
	Primaryclass = {hep-th},
	Reportnumber = {HUTP-05-A043},
	Slaccitation = {%%CITATION = HEP-TH/0509212;%%},
	Title = {{The String landscape and the swampland}},
	Year = {2005}}

@article{Ooguri:2006in,
	Archiveprefix = {arXiv},
	Author = {Ooguri, Hirosi and Vafa, Cumrun},
	Doi = {10.1016/j.nuclphysb.2006.10.033},
	Eprint = {hep-th/0605264},
	Journal = {Nucl.Phys.},
	Pages = {21-33},
	Primaryclass = {hep-th},
	Reportnumber = {CALT-68-2600, HUTP-06-A017},
	Slaccitation = {%%CITATION = HEP-TH/0605264;%%},
	Title = {{On the Geometry of the String Landscape and the Swampland}},
	Volume = {B766},
	Year = {2007}}

@article{Banks:1988yz,
	Author = {Banks, Tom and Dixon, Lance J.},
	Doi = {10.1016/0550-3213(88)90523-8},
	Journal = {Nucl. Phys.},
	Pages = {93-108},
	Reportnumber = {PUPT-1086, SCIPP-8805},
	Slaccitation = {%%CITATION = NUPHA,B307,93;%%},
	Title = {{Constraints on String Vacua with Space-Time Supersymmetry}},
	Volume = {B307},
	Year = {1988}}

@article{Kallosh:1995hi,
	Archiveprefix = {arXiv},
	Author = {Kallosh, Renata and Linde, Andrei D. and Linde, Dmitri A. and Susskind, Leonard},
	Doi = {10.1103/PhysRevD.52.912},
	Eprint = {hep-th/9502069},
	Journal = {Phys.Rev.},
	Pages = {912-935},
	Primaryclass = {hep-th},
	Reportnumber = {SU-ITP-95-2},
	Slaccitation = {%%CITATION = HEP-TH/9502069;%%},
	Title = {{Gravity and global symmetries}},
	Volume = {D52},
	Year = {1995}}

@book{Ibanez:2012zz,
	Author = {Ib{\'a}{\~n}ez, Luis E. and Uranga, Angel M.},
	Isbn = {9780521517522, 9781139227421},
	Publisher = {Cambridge University Press},
	Slaccitation = {%%CITATION = INSPIRE-1112474;%%},
	Title = {{String theory and particle physics: An introduction to string phenomenology}},
	Url = {http://www.cambridge.org/de/knowledge/isbn/item6563092/?site_locale=de_DE},
	Year = {2012}}

@article{Brennan:2017rbf,
	Archiveprefix = {arXiv},
	Author = {Brennan, T. Daniel and Carta, Federico and Vafa, Cumrun},
	Booktitle = {{Proceedings, Theoretical Advanced Study Institute in Elementary Particle Physics: Physics at the Fundamental Frontier (TASI 2017): Boulder, CO, USA, June 5-30, 2017}},
	Doi = {10.22323/1.305.0015},
	Eprint = {1711.00864},
	Journal = {PoS},
	Pages = {015},
	Primaryclass = {hep-th},
	Reportnumber = {IFT-UAM-CSIC-17-105},
	Slaccitation = {%%CITATION = ARXIV:1711.00864;%%},
	Title = {{The String Landscape, the Swampland, and the Missing Corner}},
	Volume = {TASI2017},
	Year = {2017}}

@article{Grimm:2018ohb,
	Archiveprefix = {arXiv},
	Author = {Grimm, Thomas W. and Palti, Eran and Valenzuela, Irene},
	Doi = {10.1007/JHEP08(2018)143},
	Eprint = {1802.08264},
	Journal = {JHEP},
	Pages = {143},
	Primaryclass = {hep-th},
	Slaccitation = {%%CITATION = ARXIV:1802.08264;%%},
	Title = {{Infinite Distances in Field Space and Massless Towers of States}},
	Volume = {08},
	Year = {2018}}

@article{Draper:2021qtc,
    author = "Draper, Patrick and Garcia Garcia, Isabel and Lillard, Benjamin",
    title = "{De Sitter decays to infinity}",
    eprint = "2105.10507",
    archivePrefix = "arXiv",
    primaryClass = "hep-th",
    doi = "10.1007/JHEP12(2021)154",
    journal = "JHEP",
    volume = "12",
    pages = "154",
    year = "2021"
}

@article{Dvali:2007hz,
	Archiveprefix = {arXiv},
	Author = {Dvali, Gia},
	Doi = {10.1002/prop.201000009},
	Eprint = {0706.2050},
	Journal = {Fortsch. Phys.},
	Pages = {528-536},
	Primaryclass = {hep-th},
	Slaccitation = {%%CITATION = ARXIV:0706.2050;%%},
	Title = {{Black Holes and Large N Species Solution to the Hierarchy Problem}},
	Volume = {58},
	Year = {2010}}

@article{Dvali:2007wp,
	Archiveprefix = {arXiv},
	Author = {Dvali, Gia and Redi, Michele},
	Doi = {10.1103/PhysRevD.77.045027},
	Eprint = {0710.4344},
	Journal = {Phys. Rev.},
	Pages = {045027},
	Primaryclass = {hep-th},
	Slaccitation = {%%CITATION = ARXIV:0710.4344;%%},
	Title = {{Black Hole Bound on the Number of Species and Quantum Gravity at LHC}},
	Volume = {D77},
	Year = {2008}}

@article{vandeHeisteeg:2022btw,
    author = "van de Heisteeg, Damian and Vafa, Cumrun and Wiesner, Max and Wu, David H.",
    title = "{Moduli-dependent Species Scale}",
    eprint = "2212.06841",
    archivePrefix = "arXiv",
    primaryClass = "hep-th",
    doi = "10.4310/BPAM.2024.v1.n1.a1",
    month = "12",
    year = "2022"
}

@article{Castellano:2023aum,
    author = "Castellano, Alberto and Herr\'aez, Alvaro and Ib\'a\~nez, Luis E.",
    title = "{On the Species Scale, Modular Invariance and the Gravitational EFT expansion}",
    eprint = "2310.07708",
    archivePrefix = "arXiv",
    primaryClass = "hep-th",
    month = "10",
    year = "2023"
}

@article{Buratti:2021yia,
    author = "Buratti, Ginevra and Delgado, Matilda and Uranga, Angel M.",
    title = "{Dynamical tadpoles, stringy cobordism, and the SM from spontaneous compactification}",
    eprint = "2104.02091",
    archivePrefix = "arXiv",
    primaryClass = "hep-th",
    doi = "10.1007/JHEP06(2021)170",
    journal = "JHEP",
    volume = "06",
    pages = "170",
    year = "2021"
}

@article{Angius:2022mgh,
    author = "Angius, Roberta and Delgado, Matilda and Uranga, Angel M.",
    title = "{Dynamical Cobordism and the beginning of time: supercritical strings and tachyon condensation}",
    eprint = "2207.13108",
    archivePrefix = "arXiv",
    primaryClass = "hep-th",
    doi = "10.1007/JHEP08(2022)285",
    journal = "JHEP",
    volume = "08",
    pages = "285",
    year = "2022"
}

@article{Angius:2023uqk,
    author = "Angius, Roberta and Makridou, Andriana and Uranga, Angel M.",
    title = "{Intersecting end of the world branes}",
    eprint = "2312.16286",
    archivePrefix = "arXiv",
    primaryClass = "hep-th",
    doi = "10.1007/JHEP03(2024)110",
    journal = "JHEP",
    volume = "03",
    pages = "110",
    year = "2024"
}

@article{Blumenhagen:2022mqw,
    author = "Blumenhagen, Ralph and Cribiori, Niccol\`o and Kneissl, Christian and Makridou, Andriana",
    title = "{Dynamical cobordism of a domain wall and its companion defect 7-brane}",
    eprint = "2205.09782",
    archivePrefix = "arXiv",
    primaryClass = "hep-th",
    reportNumber = "MPP-2022-57",
    doi = "10.1007/JHEP08(2022)204",
    journal = "JHEP",
    volume = "08",
    pages = "204",
    year = "2022"
}

@article{Buratti:2021fiv,
    author = "Buratti, Ginevra and Calder\'on-Infante, Jos\'e and Delgado, Matilda and Uranga, Angel M.",
    title = "{Dynamical Cobordism and Swampland Distance Conjectures}",
    eprint = "2107.09098",
    archivePrefix = "arXiv",
    primaryClass = "hep-th",
    doi = "10.1007/JHEP10(2021)037",
    journal = "JHEP",
    volume = "10",
    pages = "037",
    year = "2021"
}

@article{Angius:2022aeq,
    author = "Angius, Roberta and Calder\'on-Infante, Jos\'e and Delgado, Matilda and Huertas, Jes\'us and Uranga, Angel M.",
    title = "{At the end of the world: Local Dynamical Cobordism}",
    eprint = "2203.11240",
    archivePrefix = "arXiv",
    primaryClass = "hep-th",
    reportNumber = "IFT-UAM/CSIC-22-31",
    doi = "10.1007/JHEP06(2022)142",
    journal = "JHEP",
    volume = "06",
    pages = "142",
    year = "2022"
}

@article{vanBeest:2021lhn,
    author = "van Beest, Marieke and Calder\'on-Infante, Jos\'e and Mirfendereski, Delaram and Valenzuela, Irene",
    title = "{Lectures on the Swampland Program in String Compactifications}",
    eprint = "2102.01111",
    archivePrefix = "arXiv",
    primaryClass = "hep-th",
    doi = "10.1016/j.physrep.2022.09.002",
    journal = "Phys. Rept.",
    volume = "989",
    pages = "1--50",
    year = "2022"
}

@article{Palti:2019pca,
    author = "Palti, Eran",
    title = "{The Swampland: Introduction and Review}",
    eprint = "1903.06239",
    archivePrefix = "arXiv",
    primaryClass = "hep-th",
    reportNumber = "MPP-2019-53",
    doi = "10.1002/prop.201900037",
    journal = "Fortsch. Phys.",
    volume = "67",
    number = "6",
    pages = "1900037",
    year = "2019"
}

@article{Lanza:2021udy,
    author = "Lanza, Stefano and Marchesano, Fernando and Martucci, Luca and Valenzuela, Irene",
    title = "{The EFT stringy viewpoint on large distances}",
    eprint = "2104.05726",
    archivePrefix = "arXiv",
    primaryClass = "hep-th",
    doi = "10.1007/JHEP09(2021)197",
    journal = "JHEP",
    volume = "09",
    pages = "197",
    year = "2021"
}

@article{korkmaz2004stable,
  title={Stable commutator length of a Dehn twist},
  author={Korkmaz, Mustafa},
  journal={Michigan Mathematical Journal},
  volume={52},
  number={1},
  pages={23--31},
  year={2004},
  publisher={University of Michigan, Department of Mathematics}
}

@article{Blum:1997cs,
    author = "Blum, Julie D. and Dienes, Keith R.",
    title = "{Duality without supersymmetry: The Case of the SO(16) x SO(16) string}",
    eprint = "hep-th/9707148",
    archivePrefix = "arXiv",
    reportNumber = "IASSNS-HEP-97-67",
    doi = "10.1016/S0370-2693(97)01172-6",
    journal = "Phys. Lett. B",
    volume = "414",
    pages = "260--268",
    year = "1997"
}

@article{Horava:1996ma,
    author = "Horava, Petr and Witten, Edward",
    title = "{Eleven-dimensional supergravity on a manifold with boundary}",
    eprint = "hep-th/9603142",
    archivePrefix = "arXiv",
    reportNumber = "IASSNS-HEP-96-17, PUPT-1597",
    doi = "10.1016/0550-3213(96)00308-2",
    journal = "Nucl. Phys. B",
    volume = "475",
    pages = "94--114",
    year = "1996"
}

@article{Horava:1995qa,
    author = "Horava, Petr and Witten, Edward",
    title = "{Heterotic and type I string dynamics from eleven-dimensions}",
    eprint = "hep-th/9510209",
    archivePrefix = "arXiv",
    reportNumber = "IASSNS-HEP-95-86, PUPT-1571A",
    doi = "10.1016/0550-3213(95)00621-4",
    journal = "Nucl. Phys. B",
    volume = "460",
    pages = "506--524",
    year = "1996"
}

@article{Witten:1995ex,
    author = "Witten, Edward",
    title = "{String theory dynamics in various dimensions}",
    eprint = "hep-th/9503124",
    archivePrefix = "arXiv",
    reportNumber = "IASSNS-HEP-95-18",
    doi = "10.1016/0550-3213(95)00158-O",
    journal = "Nucl. Phys. B",
    volume = "443",
    pages = "85--126",
    year = "1995"
}

@article{Grimm:2022sbl,
    author = "Grimm, Thomas W. and Lanza, Stefano and Li, Chongchuo",
    title = "{Tameness, Strings, and the Distance Conjecture}",
    eprint = "2206.00697",
    archivePrefix = "arXiv",
    primaryClass = "hep-th",
    doi = "10.1007/JHEP09(2022)149",
    journal = "JHEP",
    volume = "09",
    pages = "149",
    year = "2022"
}

@article{Grana:2021zvf,
    author = "Gra\~na, Mariana and Herr\'aez, Alvaro",
    title = "{The Swampland Conjectures: A Bridge from Quantum Gravity to Particle Physics}",
    eprint = "2107.00087",
    archivePrefix = "arXiv",
    primaryClass = "hep-th",
    doi = "10.3390/universe7080273",
    journal = "Universe",
    volume = "7",
    number = "8",
    pages = "273",
    year = "2021"
}

@article{Romans:1985tz,
    author = "Romans, L. J.",
    editor = "Salam, A. and Sezgin, E.",
    title = "{Massive N=2a Supergravity in Ten-Dimensions}",
    reportNumber = "NSF-ITP-85-148",
    doi = "10.1016/0370-2693(86)90375-8",
    journal = "Phys. Lett. B",
    volume = "169",
    pages = "374",
    year = "1986"
}

@article{Howe:1997qt,
    author = "Howe, Paul S. and Lambert, N. D. and West, Peter C.",
    title = "{A New massive type IIA supergravity from compactification}",
    eprint = "hep-th/9707139",
    archivePrefix = "arXiv",
    reportNumber = "KCL-TH-97-46",
    doi = "10.1016/S0370-2693(97)01199-4",
    journal = "Phys. Lett. B",
    volume = "416",
    pages = "303--308",
    year = "1998"
}

@article{PhysRevD.30.325,
  title = {$N=2$ supergravity in ten dimensions},
  author = {Giani, F. and Pernici, M.},
  journal = {Phys. Rev. D},
  volume = {30},
  issue = {2},
  pages = {325--333},
  numpages = {0},
  year = {1984},
  month = {Jul},
  publisher = {American Physical Society},
  doi = {10.1103/PhysRevD.30.325},
  url = {https://link.aps.org/doi/10.1103/PhysRevD.30.325}
}

@article{CAMPBELL1984112,
title = {N = 2, D = 10 non-chiral supergravity and its spontaneous compactification},
journal = {Nuclear Physics B},
volume = {243},
number = {1},
pages = {112-124},
year = {1984},
issn = {0550-3213},
doi = {https://doi.org/10.1016/0550-3213(84)90388-2},
url = {https://www.sciencedirect.com/science/article/pii/0550321384903882},
author = {I.C.G. Campbell and P.C. West}
}

@article{Huq:1983im,
    author = "Huq, M. and Namazie, M. A.",
    editor = "Salam, A. and Sezgin, E.",
    title = "{{Kaluza-Klein} Supergravity in Ten-dimensions}",
    reportNumber = "IC/83/210",
    doi = "10.1088/0264-9381/2/3/007",
    journal = "Class. Quant. Grav.",
    volume = "2",
    pages = "293",
    year = "1985",
    note = "[Erratum: Class.Quant.Grav. 2, 597 (1985)]"
}

@article{Montero:2022prj,
    author = "Montero, Miguel and Vafa, Cumrun and Valenzuela, Irene",
    title = "{The dark dimension and the Swampland}",
    eprint = "2205.12293",
    archivePrefix = "arXiv",
    primaryClass = "hep-th",
    doi = "10.1007/JHEP02(2023)022",
    journal = "JHEP",
    volume = "02",
    pages = "022",
    year = "2023"
}

@article{Strominger:1995cz,
    author = "Strominger, Andrew",
    title = "{Massless black holes and conifolds in string theory}",
    eprint = "hep-th/9504090",
    archivePrefix = "arXiv",
    doi = "10.1016/0550-3213(95)00287-3",
    journal = "Nucl. Phys. B",
    volume = "451",
    pages = "96--108",
    year = "1995"
}

@article{anderson1967structure,
  title={The structure of the {Spin} cobordism ring},
  author={Anderson, Donald W and Brown Jr, Edgar H and Peterson, Franklin P},
  journal={Annals of Mathematics},
  volume={86},
  number={2},
  pages={271--298},
  year={1967},
  publisher={JSTOR},
  doi={10.2307/1970690}
}

@article{Alonso-Alberca:2002pbb,
    author = "Alonso-Alberca, Natxo and Ortin, Tomas",
    title = "{Gauged / massive supergravities in diverse dimensions}",
    eprint = "hep-th/0210011",
    archivePrefix = "arXiv",
    reportNumber = "IFT-UAM-CSIC-02-18",
    doi = "10.1016/S0550-3213(02)01125-2",
    journal = "Nucl. Phys. B",
    volume = "651",
    pages = "263--290",
    year = "2003"
}

@article{GarciadelMoral:2016ffr,
    author = "Garcia del Moral, Maria Pilar and Pena, Joselen M. and Restuccia, Alvaro",
    title = "{Classification of M2-brane 2-torus bundles, U-duality invariance and type II gauged supergravities}",
    eprint = "1604.02579",
    archivePrefix = "arXiv",
    primaryClass = "hep-th",
    doi = "10.1103/PhysRevD.100.026005",
    journal = "Phys. Rev. D",
    volume = "100",
    number = "2",
    pages = "026005",
    year = "2019"
}

@article{Hull:1998vy,
    author = "Hull, C. M.",
    title = "{Massive string theories from M theory and F theory}",
    eprint = "hep-th/9811021",
    archivePrefix = "arXiv",
    reportNumber = "QMW-PH-98-36, LPTENS-98-32",
    doi = "10.1088/1126-6708/1998/11/027",
    journal = "JHEP",
    volume = "11",
    pages = "027",
    year = "1998"
}

@article{Tachikawa:2026jaj,
    author = "Tachikawa, Yuji",
    title = "{On the trivalent junction of three non-tachyonic heterotic string theories}",
    eprint = "2603.28133",
    archivePrefix = "arXiv",
    primaryClass = "hep-th",
    month = "3",
    year = "2026"
}

@article{Anastasi:2026cus,
    author = "Anastasi, Edoardo and Montero, Miguel and Uranga, Angel M. and Wang, Chuying",
    title = "{What IIB looks IIA string: String Cobordisms via Non-Compact CFTs}",
    eprint = "2603.00225",
    archivePrefix = "arXiv",
    primaryClass = "hep-th",
    reportNumber = "IFT-26-016",
    month = "2",
    year = "2026"
}

@article{Altavista:2026edv,
    author = "Altavista, Chiara and Anastasi, Edoardo and Angius, Roberta and Uranga, Angel M.",
    title = "{The Art of Branching: Cobordism Junctions of 10d String Theories}",
    eprint = "2603.24667",
    archivePrefix = "arXiv",
    primaryClass = "hep-th",
    month = "3",
    year = "2026"
}

@article{Cavusoglu:2026xiv,
    author = "{\c{C}}avu{\c{s}}o{\u{g}}lu, Atakan and Cveti{\v{c}}, Mirjam and Heckman, Jonathan J. and Kuntz, Jeffrey and Murdia, Chitraang",
    title = "{Gravitational Background of Alice-Vortices and R7-Branes}",
    eprint = "2602.13196",
    archivePrefix = "arXiv",
    primaryClass = "hep-th",
    month = "2",
    year = "2026"
}

@article{Torres:2026vxx,
    author = "Torres, Ethan",
    title = "{A Matrix Theory Construction of the IIA/IIB Wall}",
    eprint = "2603.02199",
    archivePrefix = "arXiv",
    primaryClass = "hep-th",
    reportNumber = "CERN-TH-2026-030",
    month = "3",
    year = "2026"
}

@article{Delgado:2023uqk,
    author = "Delgado, Matilda",
    title = "{The bubble of nothing under T-duality}",
    eprint = "2312.09291",
    archivePrefix = "arXiv",
    primaryClass = "hep-th",
    reportNumber = "IFT-UAM/CSIC-23-146",
    doi = "10.1007/JHEP05(2024)333",
    journal = "JHEP",
    volume = "05",
    pages = "333",
    year = "2024"
}

@article{Green,
author = {Smart, W.M and Green, R.M},
title = {Spherical astronomy.},
journal = {Astronomische Nachrichten},
volume = {309},
number = {4},
pages = {280-280},
doi = {10.1002/asna.2113090422},
url = {https://onlinelibrary.wiley.com/doi/abs/10.1002/asna.2113090422},
eprint = {https://onlinelibrary.wiley.com/doi/pdf/10.1002/asna.2113090422},
year = {1988}
}

@article{AbouZeid ,
      author         = "Abou-Zeid, Mohab and de Wit, Bernard and Lust, Dieter and
                        Nicolai, Hermann",
      title          = "{Space-time supersymmetry, IIA / B duality and M theory}",
      journal        = "Phys. Lett.",
      volume         = "B466",
      year           = "1999",
      pages          = "144-152",
      doi            = "10.1016/S0370-2693(99)01114-4",
      eprint         = "hep-th/9908169",
      archivePrefix  = "arXiv",
      primaryClass   = "hep-th",
      reportNumber   = "AEI-1999-13, HUB-EP-99-43, THU-99-22",
      SLACcitation   = "%%CITATION = HEP-TH/9908169;%%"
}

@article{Andriot1,
    author = "Andriot, David and Goi, Enrico and Minasian, Ruben and Petrini, Michela",
    title = "{Supersymmetry breaking branes on solvmanifolds and de Sitter vacua in string theory}",
    eprint = "1003.3774",
    archivePrefix = "arXiv",
    primaryClass = "hep-th",
    reportNumber = "IPHT-T10-022",
    doi = "10.1007/JHEP05(2011)028",
    journal = "JHEP",
    volume = "05",
    pages = "028",
    year = "2011"
}

@article{Andriot2,
    author = "Andriot, David",
    title = "{New supersymmetric vacua on solvmanifolds}",
    eprint = "1507.00014",
    archivePrefix = "arXiv",
    primaryClass = "hep-th",
    doi = "10.1007/JHEP02(2016)112",
    journal = "JHEP",
    volume = "02",
    pages = "112",
    year = "2016"
}

@article{Aschieri,
   title={Topological T-Duality for Twisted Tori},
   ISSN={1815-0659},
   url={http://dx.doi.org/10.3842/SIGMA.2021.012},
   DOI={10.3842/sigma.2021.012},
   journal={Symmetry, Integrability and Geometry: Methods and Applications},
   publisher={SIGMA (Symmetry, Integrability and Geometry: Methods and Application)},
   author={Aschieri, Paolo and Szabo, Richard J.},
   year={2021},
   month=feb }

@article{Bergshoeff5,
      author         = "Bergshoeff, E. and de Wit, T. and Gran, U. and Linares,
                        R. and Roest, D.",
      title          = "{(Non)Abelian gauged supergravities in nine-dimensions}",
      journal        = "JHEP",
      volume         = "10",
      year           = "2002",
      pages          = "061",
      doi            = "10.1088/1126-6708/2002/10/061",
      eprint         = "hep-th/0209205",
      archivePrefix  = "arXiv",
      primaryClass   = "hep-th",
      reportNumber   = "UG-02-41",
      SLACcitation   = "%%CITATION = HEP-TH/0209205;%%"
}

@article{Bergshoeff8,
    author = "Bergshoeff, E. and de Roo, M. and Green, Michael B. and Papadopoulos, G. and Townsend, P. K.",
    title = "{Duality of type II 7 branes and 8 branes}",
    eprint = "hep-th/9601150",
    archivePrefix = "arXiv",
    reportNumber = "DAMTP-R-95-55-REV, UG-15-95",
    doi = "10.1016/0550-3213(96)00171-X",
    journal = "Nucl. Phys. B",
    volume = "470",
    pages = "113--135",
    year = "1996"
}

@article{Bergshoeff9,
    author = "Bergshoeff, E. and Gran, U. and Roest, D.",
    title = "{Type IIB seven-brane solutions from nine-dimensional domain walls}",
    eprint = "hep-th/0203202",
    archivePrefix = "arXiv",
    reportNumber = "UG-02-38",
    doi = "10.1088/0264-9381/19/15/321",
    journal = "Class. Quant. Grav.",
    volume = "19",
    pages = "4207--4226",
    year = "2002"
}

@article{Bock,
   title={On low-dimensional solvmanifolds},
   volume={20},
   ISSN={1945-0036},
   url={http://dx.doi.org/10.4310/AJM.2016.v20.n2.a1},
   DOI={10.4310/ajm.2016.v20.n2.a1},
   number={2},
   journal={Asian Journal of Mathematics},
   publisher={International Press of Boston},
   author={Bock, Christoph},
   year={2016},
   pages={199–262} }

@article{Cowdall,
    author = "Cowdall, P. M.",
    title = "{Novel domain wall and Minkowski vacua of D = 9 maximal SO(2) gauged supergravity}",
    eprint = "hep-th/0009016",
    archivePrefix = "arXiv",
    reportNumber = "KCL-TH-00-51",
    doi = "10.1016/S0550-3213(01)00043-8",
    journal = "Nucl. Phys. B",
    volume = "600",
    pages = "81",
    year = "2001"
}

@article{Han:2004wt,
    author = "Han, Tao and Willenbrock, Scott",
    title = "{Scale of quantum gravity}",
    eprint = "hep-ph/0404182",
    archivePrefix = "arXiv",
    reportNumber = "MADPH-04-1367, ILL-TH-04-3, NSF-KITP-04-38",
    doi = "10.1016/j.physletb.2005.04.040",
    journal = "Phys. Lett. B",
    volume = "616",
    pages = "215--220",
    year = "2005"
}

@article{Nevoa:2025xiq,
    author = "Nevoa, Vinicius and Raman, Sanjay and Vafa, Cumrun",
    title = "{Elementary Constituents Conjecture}",
    eprint = "2511.13813",
    archivePrefix = "arXiv",
    primaryClass = "hep-th",
    month = "11",
    year = "2025"
}

@book{bridson_haefliger_1999,
  author    = {Bridson, Martin R. and Haefliger, Andr{\'e}},
  title     = {Metric Spaces of Non-Positive Curvature},
  series    = {Grundlehren der mathematischen Wissenschaften},
  volume    = {319},
  publisher = {Springer-Verlag},
  address   = {Berlin, Heidelberg},
  year      = {1999},
  doi       = {10.1007/978-3-662-12494-9},
  isbn      = {978-3-540-64324-1}
}

@article{Israel:1966rt,
    author = "Israel, W.",
    title = "{Singular hypersurfaces and thin shells in general relativity}",
    doi = "10.1007/BF02710419",
    journal = "Nuovo Cim. B",
    volume = "44S10",
    pages = "1",
    year = "1966",
    note = "[Erratum: Nuovo Cim.B 48, 463 (1967)]"
}

@article{Anber:2011ut,
    author = "Anber, Mohamed M. and Donoghue, John F.",
    title = "{On the running of the gravitational constant}",
    eprint = "1111.2875",
    archivePrefix = "arXiv",
    primaryClass = "hep-th",
    doi = "10.1103/PhysRevD.85.104016",
    journal = "Phys. Rev. D",
    volume = "85",
    pages = "104016",
    year = "2012"
}

@book{KatokSvetlana1992Fg,
series = {Chicago lectures in mathematics},
publisher = {University of Chicago Press},
isbn = {0226425827},
year = {1992},
title = {Fuchsian groups},
language = {eng},
address = {Chicago},
author = {Katok, Svetlana},
lccn = {92006535},
}

@article{Hassfeld:2023kpu,
    author = "Hassfeld, Bjoern and Hebecker, Arthur and Walcher, Johannes",
    title = "{Cobordism and bubbles of anything in the string landscape}",
    eprint = "2310.06021",
    archivePrefix = "arXiv",
    primaryClass = "hep-th",
    doi = "10.1007/JHEP02(2024)127",
    journal = "JHEP",
    volume = "02",
    pages = "127",
    year = "2024"
}

@article{Witten:1981gj,
    author = "Witten, Edward",
    title = "{Instability of the Kaluza-Klein Vacuum}",
    reportNumber = "PRINT-81-0441 (PRINCETON)",
    doi = "10.1016/0550-3213(82)90007-4",
    journal = "Nucl. Phys. B",
    volume = "195",
    pages = "481--492",
    year = "1982"
}

@phdthesis{Castellano:2024bna,
    author = "Castellano, Alberto",
    title = "{The Quantum Gravity Scale and the Swampland}",
    eprint = "2409.10003",
    archivePrefix = "arXiv",
    primaryClass = "hep-th",
    school = "U. Autonoma, Madrid (main)",
    year = "2024"
}

@article{Bedroya:2024ubj,
    author = "Bedroya, Alek and Mishra, Rashmish K. and Wiesner, Max",
    title = "{Density of states, black holes and the Emergent String Conjecture}",
    eprint = "2405.00083",
    archivePrefix = "arXiv",
    primaryClass = "hep-th",
    doi = "10.1007/JHEP01(2025)144",
    journal = "JHEP",
    volume = "01",
    pages = "144",
    year = "2025"
}

@article{Calderon-Infante:2025ldq,
    author = "Calder{\'o}n-Infante, Jos{\'e} and Castellano, Alberto and Herr{\'a}ez, Alvaro",
    title = "{The double EFT expansion in quantum gravity}",
    eprint = "2501.14880",
    archivePrefix = "arXiv",
    primaryClass = "hep-th",
    reportNumber = "CERN-TH-2025-009, EFI-25-1, MPP-2025-6",
    doi = "10.21468/SciPostPhys.19.4.096",
    journal = "SciPost Phys.",
    volume = "19",
    number = "4",
    pages = "096",
    year = "2025"
}

@article{Caron-Huot:2024lbf,
    author = "Caron-Huot, Simon and Li, Yue-Zhou",
    title = "{Gravity and a universal cutoff for field theory}",
    eprint = "2408.06440",
    archivePrefix = "arXiv",
    primaryClass = "hep-th",
    doi = "10.1007/JHEP02(2025)115",
    journal = "JHEP",
    volume = "02",
    pages = "115",
    year = "2025"
}

@article{ValeixoBento:2025bmv,
    author = "Valeixo Bento, Bruno and Melo, Jo{\~a}o F.",
    title = "{EFT {\&} species scale: friends or foes?}",
    eprint = "2501.08230",
    archivePrefix = "arXiv",
    primaryClass = "hep-th",
    doi = "10.1007/JHEP05(2025)212",
    journal = "JHEP",
    volume = "05",
    pages = "212",
    year = "2025"
}

@article{Grana,
      author         = "Grana, Mariana and Minasian, Ruben and Petrini, Michela
                        and Tomasiello, Alessandro",
      title          = "{A Scan for new N=1 vacua on twisted tori}",
      journal        = "JHEP",
      volume         = "05",
      year           = "2007",
      pages          = "031",
      doi            = "10.1088/1126-6708/2007/05/031",
      eprint         = "hep-th/0609124",
      archivePrefix  = "arXiv",
      primaryClass   = "hep-th",
      reportNumber   = "SPHT-T06-094, SU-ITP-06-23",
      SLACcitation   = "%%CITATION = HEP-TH/0609124;%%"
}

@article{Grana2,
    author = "Grana, Mariana",
    title = "{Flux compactifications in string theory: A Comprehensive review}",
    eprint = "hep-th/0509003",
    archivePrefix = "arXiv",
    reportNumber = "LPTENS-05-26, CPHT-RR-049-0805",
    doi = "10.1016/j.physrep.2005.10.008",
    journal = "Phys. Rept.",
    volume = "423",
    pages = "91--158",
    year = "2006"
}

@article{Hull,
      author         = "Hull, C. M. and Townsend, P. K.",
      title          = "{Unity of superstring dualities}",
      journal        = "Nucl. Phys.",
      volume         = "B438",
      year           = "1995",
      pages          = "109-137",
      doi            = "10.1016/0550-3213(94)00559-W",
      note           = "[,236(1994)]",
      eprint         = "hep-th/9410167",
      archivePrefix  = "arXiv",
      primaryClass   = "hep-th",
      reportNumber   = "QMW-94-30, DAMTP-R-94-33",
      SLACcitation   = "%%CITATION = HEP-TH/9410167;%%"
}

@article{Hull4,
      author         = "Hull, C. M.",
      title          = "{Gauged D=9 supergravities and Scherk-Schwarz reduction}",
      journal        = "Class. Quant. Grav.",
      volume         = "21",
      year           = "2004",
      number         = "2",
      pages          = "509-516",
      doi            = "10.1088/0264-9381/21/2/014",
      eprint         = "hep-th/0203146",
      archivePrefix  = "arXiv",
      primaryClass   = "hep-th",
      reportNumber   = "QMW-PH-02-05",
      SLACcitation   = "%%CITATION = HEP-TH/0203146;%%"
}

@article{Hull5,
      author         = "Dabholkar, Atish and Hull, Chris",
      title          = "{Duality twists, orbifolds, and fluxes}",
      journal        = "JHEP",
      volume         = "09",
      year           = "2003",
      pages          = "054",
      doi            = "10.1088/1126-6708/2003/09/054",
      eprint         = "hep-th/0210209",
      archivePrefix  = "arXiv",
      primaryClass   = "hep-th",
      reportNumber   = "TIFR-TH-01-24, QMUL-PH-02-16",
      SLACcitation   = "%%CITATION = HEP-TH/0210209;%%"
}

@article{Hull8,
      author         = "Hull, C. M.",
      title          = "{Massive string theories from M theory and F theory}",
      journal        = "JHEP",
      volume         = "11",
      year           = "1998",
      pages          = "027",
      doi            = "10.1088/1126-6708/1998/11/027",
      eprint         = "hep-th/9811021",
      archivePrefix  = "arXiv",
      primaryClass   = "hep-th",
      reportNumber   = "QMW-PH-98-36, LPTENS-98-32",
      SLACcitation   = "%%CITATION = HEP-TH/9811021;%%"
}

@article{Kaloper2,
    author = "Kaloper, Nemanja and Khuri, Ramzi R. and Myers, Robert C.",
    title = "{On generalized axion reductions}",
    eprint = "hep-th/9803066",
    archivePrefix = "arXiv",
    reportNumber = "NSF-ITP-98-023, SU-ITP-98-08, MCGILL-98-04, QMW-PH-98-08",
    doi = "10.1016/S0370-2693(98)00409-2",
    journal = "Phys. Lett. B",
    volume = "428",
    pages = "297--302",
    year = "1998"
}

@article{Melgarejo,
      author         = "Fernandez-Melgarejo, J. J. and Ortin, T. and
                        Torrente-Lujan, E.",
      title          = "{The general gaugings of maximal d=9 supergravity}",
      journal        = "JHEP",
      volume         = "10",
      year           = "2011",
      pages          = "068",
      doi            = "10.1007/JHEP10(2011)068",
      eprint         = "1106.1760",
      archivePrefix  = "arXiv",
      primaryClass   = "hep-th",
      reportNumber   = "IFT-UAM-CSIC-11-18, UM-TH-11-09",
      SLACcitation   = "%%CITATION = ARXIV:1106.1760;%%"
}

@article{Meessen,
    author = "Gheerardyn, J. and Meessen, P.",
    title = "{Supersymmetry of massive D = 9 supergravity}",
    eprint = "hep-th/0111130",
    archivePrefix = "arXiv",
    reportNumber = "KUL-TF-01-21, SISSA-83-01-EP",
    doi = "10.1016/S0370-2693(01)01429-0",
    journal = "Phys. Lett. B",
    volume = "525",
    pages = "322--330",
    year = "2002"
}

@article{Aspinwall:1993yb,
	title        = {{Multiple mirror manifolds and topology change in string theory}},
	author       = {Aspinwall, Paul S. and Greene, Brian R. and Morrison, David R.},
	year         = 1993,
	journal      = {Phys. Lett. B},
	volume       = 303,
	pages        = {249--259},
	doi          = {10.1016/0370-2693(93)91428-P},
	eprint       = {hep-th/9301043},
	archiveprefix = {arXiv},
	reportnumber = {IASSNS-HEP-93-4}
}

@article{Atiyah:2000zz,
	title        = {{An M theory flop as a large N duality}},
	author       = {Atiyah, Michael and Maldacena, Juan Martin and Vafa, Cumrun},
	year         = 2001,
	journal      = {J. Math. Phys.},
	volume       = 42,
	pages        = {3209--3220},
	doi          = {10.1063/1.1376159},
	eprint       = {hep-th/0011256},
	archiveprefix = {arXiv},
	reportnumber = {HUTP-00-A045}
}

@article{Giveon:1994fu,
	title        = {{Target space duality in string theory}},
	author       = {Giveon, Amit and Porrati, Massimo and Rabinovici, Eliezer},
	year         = 1994,
	journal      = {Phys. Rept.},
	volume       = 244,
	pages        = {77--202},
	doi          = {10.1016/0370-1573(94)90070-1},
	eprint       = {hep-th/9401139},
	archiveprefix = {arXiv},
	reportnumber = {RI-1-94, NYU-TH-94-01-01}
}

@article{Greene:1990ud,
	title        = {{Duality in {Calabi-Yau} Moduli Space}},
	author       = {Greene, Brian R. and Plesser, M. R.},
	year         = 1990,
	journal      = {Nucl. Phys. B},
	volume       = 338,
	pages        = {15--37},
	doi          = {10.1016/0550-3213(90)90622-K},
	reportnumber = {HUTP-89-A043A, HUTMP-89-B245}
}

@article{Garcia-Etxebarria:2018ajm,
	title        = {{Dai-Freed anomalies in particle physics}},
	author       = {Garc\'\i{}a-Etxebarria, I\~naki and Montero, Miguel},
	year         = 2019,
	journal      = {JHEP},
	volume       = {08},
	pages        = {003},
	doi          = {10.1007/JHEP08(2019)003},
	eprint       = {1808.00009},
	archiveprefix = {arXiv},
	primaryclass = {hep-th},
	reportnumber = {MPP-2018-188}
}

@article{Giddings:1987cg,
	title        = {{Axion Induced Topology Change in Quantum Gravity and String Theory}},
	author       = {Giddings, Steven B. and Strominger, Andrew},
	year         = 1988,
	journal      = {Nucl. Phys. B},
	volume       = 306,
	pages        = {890--907},
	doi          = {10.1016/0550-3213(88)90446-4},
	reportnumber = {HUTP-87-A067}
}

@article{Witten:1993yc,
	title        = {{Phases of N=2 theories in two-dimensions}},
	author       = {Witten, Edward},
	year         = 1993,
	journal      = {Nucl. Phys. B},
	volume       = 403,
	pages        = {159--222},
	doi          = {10.1016/0550-3213(93)90033-L},
	editor       = {Greene, B. and Yau, Shing-Tung},
	eprint       = {hep-th/9301042},
	archiveprefix = {arXiv},
	reportnumber = {IASSNS-HEP-93-3}
}

@article{Pfaffle2000,
  author  = {Pf\"affle, Frank},
  title   = {The {D}irac spectrum of {B}ieberbach manifolds},
  journal = {Journal of Geometry and Physics},
  volume  = {35},
  number  = {4},
  pages   = {367--385},
  year    = {2000},
  doi     = {10.1016/S0393-0440(00)00008-8},
  publisher = {Elsevier}
}

@article{Montero,
    author = "Garc\'\i{}a Etxebarria, Iñaki and Montero, Miguel and Sousa, Kepa and Valenzuela, Irene",
    title = "{Nothing is certain in string compactifications}",
    eprint = "2005.06494",
    archivePrefix = "arXiv",
    primaryClass = "hep-th",
    doi = "10.1007/JHEP12(2020)032",
    journal = "JHEP",
    volume = "12",
    pages = "032",
    year = "2020"
}

@article{Mostow,
      author         = "Mostow, G.D.",
      title          = "{Factor spaces of solvable groups}",
      journal        = "Annals of Mathematics.",
      volume         = "60",
      year           = "1954",
      pages          = "1-27",
      doi            = "https://doi.org/10.2307/1969700",
}

@article{mpgm2,
      author         = "Garcia del Moral, M. P. and Pena, J. M. and Restuccia,
                        A.",
      title          = "{Supermembrane origin of type II gauged supergravities in
                        9D}",
      journal        = "JHEP",
      volume         = "09",
      year           = "2012",
      pages          = "063",
      doi            = "10.1007/JHEP09(2012)063",
      eprint         = "1203.2767",
      archivePrefix  = "arXiv",
      primaryClass   = "hep-th",
      reportNumber   = "FPAUO-12-02",
      SLACcitation   = "%%CITATION = ARXIV:1203.2767;%%"
}

@article{mpgm7,
      author         = "Garcia del Moral, Maria Pilar and Pena, Joselen M. and
                        Restuccia, Alvaro",
      title          = "{Classification of M2-brane 2-torus bundles, U-duality
                        invariance and type II gauged supergravities}",
      journal        = "Phys. Rev.",
      volume         = "D100",
      year           = "2019",
      number         = "2",
      pages          = "026005",
      doi            = "10.1103/PhysRevD.100.026005",
      eprint         = "1604.02579",
      archivePrefix  = "arXiv",
      primaryClass   = "hep-th",
      SLACcitation   = "%%CITATION = ARXIV:1604.02579;%%"
}

@article{mpgm10,
    author = "Garcia del Moral, M. P. and Las Heras, C. and Leon, P. and Pena, J. M. and Restuccia, A.",
    title = "{Fluxes, twisted tori, monodromy and $U(1)$ supermembranes}",
    eprint = "2005.06397",
    archivePrefix = "arXiv",
    primaryClass = "hep-th",
    doi = "10.1007/JHEP09(2020)097",
    journal = "JHEP",
    volume = "09",
    pages = "097",
    year = "2020"
}

@article{mpgm23,
    author = "del Moral, Maria Pilar Garcia and las Heras, Camilo and Restuccia, Alvaro",
    title = "{Type IIB parabolic (p, q)-strings from M2-branes with fluxes}",
    eprint = "2201.04896",
    archivePrefix = "arXiv",
    primaryClass = "hep-th",
    doi = "10.1007/JHEP03(2023)143",
    journal = "JHEP",
    volume = "03",
    pages = "143",
    year = "2023"
}

@article{mpgm24,
    author = "del Moral, Maria Pilar Garcia and las Heras, Camilo and Restuccia, Alvaro",
    title = "{U-duality in quantum M2-branes and gauged supergravities}",
    eprint = "2405.02775",
    archivePrefix = "arXiv",
    primaryClass = "hep-th",
    reportNumber = "IFT-UAM/CSIC-24-68",
    doi = "10.1007/JHEP12(2024)163",
    journal = "JHEP",
    volume = "12",
    pages = "163",
    year = "2024"
}

@article{Neron,
     author = {N\'eron, Andr\'e},
     title = {Mod\`eles minimaux des vari\'et\'es ab\'eliennes sur les corps locaux et globaux},
     journal = {Publications Math\'ematiques de l'IH\'ES},
     pages = {5--128},
     publisher = {Institut des Hautes \'Etudes Scientifiques},
     volume = {21},
     year = {1964},
     zbl = {0132.41403},
     mrnumber = {31 #3423},
     language = {fr},
     url = {http://www.numdam.org/item/PMIHES_1964__21__5_0/}
}

@article{Ortin,
    author = "Meessen, Patrick and Ortin, Tomas",
    title = "{An Sl(2,Z) multiplet of nine-dimensional type II supergravity theories}",
    eprint = "hep-th/9806120",
    archivePrefix = "arXiv",
    reportNumber = "IFT-UAM-CSIC-98-3",
    doi = "10.1016/S0550-3213(98)00780-9",
    journal = "Nucl. Phys. B",
    volume = "541",
    pages = "195--245",
    year = "1999"
}

@article{Pope,
	doi = {10.1088/0264-9381/15/8/008},
	url = {https://doi.org/10.1088%2F0264-9381%2F15%2F8%2F008},
	year = 1998,
	month = {aug},
	publisher = {{IOP} Publishing},
	volume = {15},
	number = {8},
	pages = {2239--2256},
	author = {I V Lavrinenko and H Lü and C N Pope},
	title = {Fibre bundles and generalized dimensional reductions},
	journal = {Classical and Quantum Gravity},
}

@article{Ruiz:2024jiz,
    author = "Ruiz, Ignacio",
    title = "{Morse-Bott inequalities, topology change and cobordisms to nothing}",
    eprint = "2410.21372",
    archivePrefix = "arXiv",
    primaryClass = "hep-th",
    reportNumber = "IFT-UAM/CSIC-24-154",
    doi = "10.1007/JHEP06(2025)030",
    journal = "JHEP",
    volume = "06",
    pages = "030",
    year = "2025"
}

@article{Samtleben,
      author         = "Samtleben, Henning",
      title          = "{Lectures on Gauged Supergravity and Flux
                        Compactifications}",
      booktitle      = "{Strings, supergravity and gauge theories. Proceedings,
                        European RTN Winter School, CERN, Geneva, Switzerland,
                        January 21-25, 2008}",
      journal        = "Class. Quant. Grav.",
      volume         = "25",
      year           = "2008",
      pages          = "214002",
      doi            = "10.1088/0264-9381/25/21/214002",
      eprint         = "0808.4076",
      archivePrefix  = "arXiv",
      primaryClass   = "hep-th",
      reportNumber   = "ENSL-00315624",
      SLACcitation   = "%%CITATION = ARXIV:0808.4076;%%"
}

@article{Schwarz5,
      author         = "Scherk, Joel and Schwarz, John H.",
      title          = "{How to Get Masses from Extra Dimensions}",
      journal        = "Nucl. Phys.",
      volume         = "B153",
      year           = "1979",
      pages          = "61-88",
      doi            = "10.1016/0550-3213(79)90592-3",
      note           = "[,79(1979)]",
      reportNumber   = "LPTENS-79-2",
      SLACcitation   = "%%CITATION = NUPHA,B153,61;%%"
}

@article{Schwarz6,
    author = "Schwarz, John H.",
    title = "{An SL(2,Z) multiplet of type IIB superstrings}",
    eprint = "hep-th/9508143",
    archivePrefix = "arXiv",
    reportNumber = "CALT-68-2013",
    doi = "10.1016/0370-2693(95)01405-5",
    journal = "Phys. Lett. B",
    volume = "360",
    pages = "13--18",
    year = "1995",
    note = "[Erratum: Phys.Lett.B 364, 252 (1995)]"
}

@article{Silverstein,
   title={Simple de Sitter solutions},
   volume={77},
   ISSN={1550-2368},
   url={http://dx.doi.org/10.1103/PhysRevD.77.106006},
   DOI={10.1103/physrevd.77.106006},
   number={10},
   journal={Physical Review D},
   publisher={American Physical Society (APS)},
   author={Silverstein, Eva},
   year={2008},
   month=may }

@article{Zwiebach,
    author = {DeWolfe, Oliver and Hauer, Tamas and Iqbal, Amer and Zwiebach, Barton"},
    title = "{Uncovering the symmetries on [p,q] seven-branes: Beyond the Kodaira classification}",
    eprint = "hep-th/9812028",
    archivePrefix = "arXiv",
    reportNumber = "MIT-CTP-2804",
    doi = "10.4310/ATMP.1999.v3.n6.a5",
    journal = "Adv. Theor. Math. Phys.",
    volume = "3",
    pages = "1785--1833",
    year = "1999"
}

@article{Zwiebach2,
    author = "DeWolfe, Oliver and Hauer, Tamas and Iqbal, Amer and Zwiebach, Barton",
    title = "{Uncovering infinite symmetries on [p, q] 7-branes: Kac-Moody algebras and beyond}",
    eprint = "hep-th/9812209",
    archivePrefix = "arXiv",
    reportNumber = "MIT-CTP-2808",
    doi = "10.4310/ATMP.1999.v3.n6.a6",
    journal = "Adv. Theor. Math. Phys.",
    volume = "3",
    pages = "1835--1891",
    year = "1999"
}

\end{document}